\documentclass[11pt]{article}
\usepackage{graphicx}
\usepackage{amssymb,amsmath,epsfig,mathrsfs}
\usepackage[all,cmtip]{xy}
\usepackage{float}
\usepackage[normalem]{ulem}
\usepackage{amsthm}
\usepackage{color}
\usepackage{authblk}
\usepackage[utf8]{inputenc}
\usepackage[T1]{fontenc}
\usepackage{tikz-cd}
\usepackage{etoolbox}
\patchcmd{\footnotemark}{\stepcounter{footnote}}{\refstepcounter{footnote}}{}{}
\usepackage[colorlinks=true,urlcolor=red,citecolor=blue,hyperfootnotes=true]{hyperref}
\usepackage{setspace}
\usepackage{soul}

\newcommand{\be}{\begin{equation}}
\newcommand{\ee}{\end{equation}}
\def\beqa{\begin{eqnarray}}
\def\eeqa{\end{eqnarray}}
\def\bean{\begin{eqnarray*}}
\def\eean{\end{eqnarray*}}

\newcommand{\C}{\mathbb{C}}

\newcommand{\N}{\mathbb{N}}
\newcommand{\Hil}{\mathcal{H}}
\newcommand{\cD}{\mathcal{D}}
\newcommand{\Hind}{\mathcal{H}_{\Sigma}^{\rm ind}}
\newcommand{\Hphys}{\mathcal{H}_{\Sigma}^{\rm phys}}

\newcommand{\del}{\partial}
\newcommand{\Tr}[1]{\:{\rm Tr}\,#1}

\newcommand{\h}{\mathcal{H}}
\newcommand{\bigslant}[2]{{\raisebox{.2em}{$#1$}\left/\raisebox{-.2em}{$#2$}\right.}}

\usepackage{authblk}
\usepackage{mathtools}
\usepackage{braket}
\usepackage{stmaryrd}
\usepackage{caption}
\usepackage{subcaption}
\usepackage{bbm}
\usepackage{cite}
\usepackage{comment}
\usepackage[most]{tcolorbox}
\usepackage{tikz}

\makeatletter
\def\th@plain{%
  \thm@notefont{}
  \itshape 
}
\def\th@definition{%
  \thm@notefont{}
  \normalfont 
}
\makeatother

\newtheorem{theorem}{Theorem}[section]
\newtheorem{proposition}{Proposition}[section]
\newtheorem{corollary}[proposition]{Corollary}
\newtheorem{lemma}[proposition]{Lemma}
\newtheorem{conjecture}[proposition]{Conjecture}

\newtheorem{definition}{Definition}[section]
\theoremstyle{definition}
\newtheorem{remark}[proposition]{Remark}
\newtheorem{example}[proposition]{Example}
\AtEndEnvironment{example}{\null\hfill\ensuremath{\diamondsuit}}

\definecolor{crimson}{rgb}{0.7, 0.08, 0.24}
\definecolor{green(munsell)}{rgb}{0.0, 0.66, 0.47}
\definecolor{Blu}{rgb}{0.0, 0.25, 0.85}
\definecolor{mediumorchid}{rgb}{0.73, 0.33, 0.83}

\newcounter{lst}

\DeclarePairedDelimiter{\norm}{\lVert}{\rVert}

\hypersetup{
	pdftitle={},%
	pdfauthor={},%
	pdfsubject={},%
	pdfkeywords={},%
	colorlinks=true,%
	linkcolor=Blu,%
	linkbordercolor=crimson,
	citecolor=Blu,%
	linktocpage=true,%
    urlcolor=Blu,
	pageanchor=true
}

\numberwithin{equation}{section}

\title{\textbf{The Structure of the Continuum Limit of Spin Foams}}

\author[1,2]{Matteo Bruno\footnote{\texttt{bruno@cpt.univ-mrs.fr }}}
\author[3,4]{Eugenia Colafranceschi\footnote{\texttt{ecolafra@ucm.es}}}
\author[5,4]{Fabio M. Mele\footnote{\texttt{fmele1@lsu.edu}}}
\author[1,6]{Carlo Rovelli\footnote{\texttt{crovelli@uwo.ca}}}

\affil[1]{\textit{\normalsize{Aix-Marseille Univ, Universit\'e de Toulon, CNRS, CPT, Marseille, France}}}
\affil[2]{\textit{\normalsize{Physics Department, Sapienza University of Rome, P.za Aldo Moro 5, Rome, 00185, , Italy}}}
\affil[3]{\textit{\normalsize{Departamento de Física Teórica, Universidad Complutense de Madrid,
Plaza de las Ciencias 1, 28040 Madrid, Spain}}}
\affil[4]{\textit{\normalsize{Department of Physics \& Astronomy, Western University, N6A3K7, London ON, Canada}}}
\affil[5]{\textit{\normalsize{Department of Physics and Astronomy, Louisiana State University, Baton Rouge, LA 70803, USA}}}
\affil[6]{\textit{\normalsize{Perimeter Institute, 31 Caroline Street N, Waterloo ON, N2L2Y5, Canada}}}

\begin{document}
\include{epsf}
\maketitle
\vspace{-0.5cm}
\begin{abstract}
The Spin Foam approach to quantum gravity aims at providing a covariant path-integral formulation of canonical Loop Quantum Gravity.~Since spin foam amplitudes are defined through discretisations of spacetime, understanding the continuum limit of the theory remains a central open problem.~In this work, we investigate the structural aspects of this limit in a model-independent manner.

We introduce an axiomatic framework for spin foam amplitudes inspired by Atiyah’s formulation of Topological Quantum Field Theories (TQFTs).~In this setting, Hilbert spaces and amplitudes are assigned to combinatorial and topological data associated with triangulated manifolds.~By equipping the set of triangulations with suitable orders, this framework provides a precise notion of continuum limit and allows us to analyse its properties independently of any specific model.

We systematically investigate how the specifics of the limit procedure allow to go beyond TQFT in the continuum.~Under natural assumptions on the convergence of spin foam amplitudes, we establish a no-go result:~sufficiently strong notions of convergence necessarily lead to a topological theory.~Motivated by this obstruction, we weaken the notion of convergence and consider the continuum limit of spin foam amplitudes in a distributional sense, in the spirit of Refined Algebraic Quantisation.~Under this assumption, the amplitude associated with the cylinder defines a rigging map, yielding a canonical construction of the physical Hilbert space.~The resulting continuum amplitudes act as well-defined distributions on this space of physical states, characterising this formulation of the gravitational path integral as physical in a precise sense.
\end{abstract}

\newpage
\noindent\makebox[\linewidth]{\rule{\textwidth}{1pt}} 
\begin{spacing}{0.85}
\tableofcontents
\end{spacing}
\noindent\makebox[\linewidth]{\rule{\textwidth}{1pt}} 

\section{Introduction}\label{Sec:intro}

A central challenge in quantum gravity is to formulate a non-perturbative and background-independent description of quantum spacetime dynamics.~The Spin Foam approach does so by defining covariant transition amplitudes for spacetime regions from combinatorial data associated with a discretisation of geometry \cite{Rovelli_Vidotto_2014} (see also \cite{Baez:1999sr,Perez_2013,Bianchi:2017hjl,Engle_Speziale_2023} for reviews).~Understanding the continuum structure underlying these amplitudes is a central open problem in this approach, to which considerable effort has been devoted over the years (see e.g.~\cite{Reisenberger_Rovelli_1997,Reisenberger:2000fy,Dittrich:2008pw,Bahr:2010cq,Bahr:2011uj,Rovelli_Smerlak_2012,Dittrich:2012jq,Dittrich:2013xwa,Oriti:2014yla,Dittrich:2014ala,Rovelli_2022,Asante_Dittrich_Steinhaus_2023,Han:2025emp,Han:2026hyq} for a sample of the literature on the topic).

Here we analyse the structural aspects of the spin foam theory focusing on two closely related questions: \textit{What should the non-perturbative structure of quantum gravity look like when formulated in terms of transition amplitudes?} and, \textit{as spin foam models rely on transition amplitudes approximated by triangulations, what does it mean to take the limit of these approximations?}

These questions are tightly connected: the continuum structure that emerges depends crucially on how this limit is defined. 

Concerning the first point, a background-independent theory of gravity can be  expected to assign to each (closed) spatial boundary $\Sigma$ a Hilbert space $\Hil_\Sigma$ possibly depending on its topology, and to spacetime regions $M$ transition amplitudes $Z_M$ that are diffeomorphism invariant and do not depend on any bulk discretisation.~This  structure closely resembles that of a Topological Quantum Field Theory (TQFT) in the sense of Atiyah~\cite{Atiyah_1988}.~Topological quantum field theories were first introduced by Witten~\cite{Witten_1988a,Witten_1988b,Witten_1989} and subsequently axiomatised by Atiyah, whose framework assigns a Hilbert space to each closed $(d-1)$-dimensional manifold and a linear map to each $d$-dimensional manifold interpolating between such boundaries, in a way compatible with gluing and invariant under diffeomorphisms.~This thus provides in principle a clear mathematical framework for formulating a quantum theory of gravity, in which Hilbert spaces associated with closed Cauchy surfaces depend only on their topology, transition amplitudes depend only on boundary data and are diffeomorphism invariant, and composition is implemented by the gluing of spacetime regions.

However, quantum gravity cannot exactly satisfy Atiyah's axioms, for two related reasons.~First, Atiyah's framework leads to finite-dimensional boundary Hilbert spaces, whereas a physical theory of quantum gravity is expected to involve infinitely many independent degrees of freedom.~Accommodating infinite-dimensional spaces therefore requires relaxing some aspect of Atiyah's structure \cite{Rovelli_1993,Crane_1994,Barrett_1995}; which aspect must be modified will become clear through our analysis.

Second, in Atiyah’s framework $\Hil_\Sigma$ represents the physical Hilbert space of diffeomorphism-invariant states, and the transition amplitude associated with the cylinder over $\Sigma$ is required to act as the identity on it.~In other words, the cylinder does not change boundary states.~In canonical gravity, however, boundary states are initially defined in a kinematical Hilbert space and are subject to constraints.~Physical states are those annihilated by the Hamiltonian constraint.~Formally, one would like to obtain them by projecting kinematical states onto the kernel of the Wheeler–DeWitt operator \cite{DeWitt_67}.~If this operator has continuous spectrum and the value zero lies in its continuous (rather than point) spectrum, as generally expected, normalisable solutions do not exist and the physical Hilbert space cannot be realised as a subspace of the kinematical Hilbert space.~In this situation, there is no ordinary orthogonal projector onto physical states.

The appropriate framework is Refined Algebraic Quantisation (RAQ)~\cite{Giulini_Marolf_1999a,Giulini_Marolf_1999b} (or generalisations thereof; see e.g.~Ch.~30 in \cite{Thiemann_2007}).~In this framework, one starts from a dense domain $\mathcal D$ in the kinematical Hilbert space $\Hil_{\text{kin}}$ and constructs the physical Hilbert space in the algebraic dual $\mathcal D'$ via a rigging map, $\eta : \mathcal D \to \mathcal D'$, which implements a generalised projection onto solutions of the constraints.~The physical inner product is then defined through this map.~The ``projector'' onto physical states is therefore not an endomorphism on the kinematical Hilbert space, but a distributional-valued map defined within a Gelfand triple $\mathcal D \subset \Hil_{\text{kin}} \subset \mathcal D'$.

From this perspective, the cylinder amplitude in quantum gravity cannot act as the identity on a pre-existing physical Hilbert space. Rather, it must implement precisely this generalised projection: the transition amplitude associated with the cylinder should realise the rigging map, mapping kinematical boundary states to distributional physical states. 

The spin foam framework assumes that the exact transition amplitudes can be {\em approximated} by truncated amplitudes $Z_\Delta$ defined on a two-complex $\Delta$; here we take $\Delta$ to be specifically given by a triangulation.~Making precise the assumption that the discrete amplitudes  approximate a physical amplitude  requires specifying how a sequence (or, more generally, a net) of such amplitudes should converge, and in which space this convergence is to be understood.~This is precisely the second question raised above.

In a TQFT, a suitable construction of the amplitudes $Z_\Delta$ already provides the final answer; in particular, the amplitude associated with the cylinder is the identity.~In a theory of quantum gravity in dimension greater than three, however, the truncated amplitudes do not define the generalised projection required to implement the constraints.~For this reason, one must take a continuum limit.

In this work we show that the truncated amplitudes can satisfy a set of axioms structurally analogous to Atiyah’s framework, though not fulfilling all of its requirements.~We then identify the notion of convergence under the continuum limit that is required in order to obtain a continuum theory consistent with the structural properties expected of quantum gravity.~In particular, we show that, given our axioms at the truncated level, a mathematically rigorous continuum limit for $Z_\Delta$ can be defined.~This limit yields a well-defined amplitude $Z(\Sigma \times I)$ for the cylinder $\Sigma \times I$, which realises the rigging map.~Moreover, for a generic spacetime region $M$, the continuum amplitude $Z(M)$ defines physical states in a mathematically precise sense.

The derivation of our results therefore takes Atiyah’s TQFT framework as both a conceptual guide and a reference point for generalising from a topological theory to a non-topological one suitable for gravity in dimension greater than three. This  generalisation led us to explore several strategies for defining the continuum limit, in particular different notions of triangulation refinement, different refinement schemes (such as bulk refinement with fixed boundary, or coordinated refinement of bulk and boundary), and different spaces in which convergence could be formulated. Each of these attempts provides insight on the direction to pursue and on the difficulties that arise when attempting to extend Atiyah’s framework to non-topological quantum gravity. Because of the conceptual and pedagogical value of these intermediate strategies, we have chosen to present the work following the same path that led us to the final formulation, namely as a sequence of progressively more informed steps.

We begin in Section~\ref{Sec:TQFT} by reviewing Atiyah’s mathematical formulation of Topological Quantum Field Theories. In this framework, a TQFT is defined as a map $Z$ assigning algebraic structures (Hilbert spaces and linear maps) to topological ones (manifolds), subject to a set of axioms ensuring the consistent definition and composition of the transition amplitudes of the theory. A particularly important axiom for our discussion is the Identity axiom, which states that the amplitude associated with the cylinder over $\Sigma$ acts as the identity on $\mathcal H_\Sigma$. Together with the remaining axioms, this condition implies that the Hilbert spaces $\mathcal H_\Sigma$ are finite-dimensional. 

The basic ingredients of the spin foam approach to Loop Quantum Gravity are recalled in Section~\ref{Sec:SFQG}. The starting point is the observation that four-dimensional General Relativity can be written as a constrained topological field theory, namely BF theory supplemented by the simplicity constraint. Upon introducing a spatial hypersurface $\Sigma$, the choice of a timelike normal selects a preferred frame and breaks the internal gauge symmetry $SL(2,\mathbb{C})$ down to $SU(2)$. After quantisation, and upon considering a truncation of the degrees of freedom of the gravitational field encoded in a graph on $\Sigma$, one obtains a Hilbert space of quantum geometries of the hypersurface spanned by colourings of the graph ($SU(2)$ representations on links and intertwiners on nodes), namely spin network states.
The covariant dynamics of the theory is formulated in terms of transition amplitudes between such boundary states. These amplitudes are defined by assigning to a two-complex interpolating between boundary graphs, typically dual to a triangulation $\Delta$ of the four-dimensional region of spacetime bounded by $\Sigma$, a truncated amplitude $Z(\Delta)$ given as a product of local contributions associated with the vertices of the complex. The specification of the vertex amplitude specifies the dynamics.

Building on this structure, we present an axiomatic definition of spin foam models at the discrete level.~We assigns a Hilbert space $\mathcal H_\delta$ to a boundary triangulation $\delta$ and a vector $Z(\Delta)\in\mathcal H_\delta$ to a bulk triangulation $\Delta$ with boundary $\partial\Delta=\delta$.~We specify a set of properties analogous to those appearing in Atiyah’s formulation of TQFTs.~A crucial difference is the absence of an Identity axiom: at the truncated level there is no triangulation of the cylinder $\Sigma\times I$ whose amplitude acts as the identity on the boundary Hilbert space.~As we shall argue, the emergence of such a structure is expected to arise only in the continuum limit.

In Section~\ref{Sec:Continuum1} we take a first step towards the continuum by defining a limit procedure that removes the bulk regulator while keeping the boundary triangulation fixed.~We introduce a simple ordering on the set of bulk triangulations compatible with a given boundary triangulation $\delta$, based on the number of simplices.~This ``weak order'' turns the collection of truncated amplitudes into a net in the boundary Hilbert space.~Assuming that this net admits a limit in $\mathcal H_\delta$, we study its properties and show that the resulting continuum theory is necessarily topological in the sense of Atiyah.~This result provides a clear indication that bulk refinement at fixed boundary, defined simply by increasing the number of simplices, is too weak to capture propagating local degrees of freedom.

To obtain a notion of refinement that better reflects the physical intuition of resolving additional degrees of freedom, in Section~\ref{Sec:Continuum2} we instead consider refinements based on subdivisions of triangulations.~Since such refinements generally modify the boundary triangulation, the truncated amplitudes $Z(\Delta)$ no longer take values in a single Hilbert space.~To address this issue, we introduce the inductive limit of the family of boundary Hilbert spaces associated with different triangulations of a fixed boundary hypersurface.~This construction provides a common Hilbert space in which amplitudes corresponding to different boundary discretisations can be compared, and allows us to define a refined net of amplitudes $\tilde Z_\bullet$ associated with each manifold.

We analyse the properties of this net and study the consequences of assuming the existence of a continuum limit within the inductive Hilbert space.~We show that this assumption again leads to a Topological Quantum Field Theory.~This yields a no-go theorem:~if, for every manifold $M$, the net $\tilde Z_\bullet$ converges to an element of the inductive Hilbert space, the resulting assignment of Hilbert spaces and amplitudes defines a TQFT.~Since this cannot describe four-dimensional quantum gravity, this result indicates that the continuum amplitudes cannot, in general, exist as normalisable elements of the inductive Hilbert space.

For this reason, we further generalise the notion of continuum limit by allowing the amplitudes to converge only in a distributional sense.~More precisely, we introduce a Gelfand triple by selecting a dense domain $\mathcal D_\Sigma$ in the inductive Hilbert space and considering limits in its algebraic dual $\mathcal D_\Sigma'$.~Within this framework, the amplitude associated with the cylinder $\Sigma\times I$ defines a rigging map in the sense of Refined Algebraic Quantisation.~This construction allows us to interpret the continuum amplitudes $Z(M)$ as distributional objects acting on the domain $\mathcal D_\Sigma$.

In Section~\ref{Sec:HphysCLQG} we use the rigging map obtained from the distributional continuum limit to construct the physical Hilbert space following the standard prescription of RAQ. The rigging map defines a positive semidefinite Hermitian form on the domain $\mathcal D_\Sigma$, and the physical Hilbert space is obtained by the completion of the quotient of $\mathcal D_\Sigma$ by the kernel of this form. 

Finally, we study the role of gluing in this framework. Since the continuum amplitudes $Z(M)$ take values in the algebraic dual rather than in a Hilbert space, the Atiyah-type gluing property cannot hold in its usual form.~We therefore propose a weaker convolution property involving the rigging map, which plays the role of the propagator between boundary components.~Under this prescription, the amplitudes $Z(M)$ define linear functionals on the physical Hilbert space, providing a precise sense in which they can be interpreted as physical states.

We conclude in Section~\ref{Sec:conclusion} with a summary of the main results and an outline of future research directions.

The presentation is complemented by several appendices, which contain further technical background material, as well as proofs and additional discussion of the results in the main text.

\section{A Primer of Topological Quantum Field Theories}\label{Sec:TQFT}

Topological Quantum Field Theories (TQFTs) form a special class of quantum field theories defined on a differential manifold and whose partition function depends solely on its topology.~These theories provide a clean relation between topological and algebraic (quantum) structures.~They have been extensively investigated over the years, both from a mathematical and a physical standpoints.~They define a well-posed quantisation in both covariant and canonical approaches (see e.g.~\cite{Atiyah_1990} and references therein).~In this section, we  recall the mathematical formulation of TQFTs.~As anticipated in Sec.~\ref{Sec:intro}, this will serve as a reference framework throughout the paper.

A mathematical formulation of TQFTs can be given in the language of category theory~\cite{Baez_Dolan_1995,Bakalov_Kirillov_2001}. A TQFT is a dagger monoidal functor $Z:\mathsf{Cob}_d^{\rm {or}}\to\mathsf{FHilb}$ from the category of $d$-dimensional oriented cobordisms to the category of finite-dimensional Hilbert spaces.\footnote{The original formulation of Atiyah's axioms~\cite{Atiyah_1988} refers to a TQFT over a generic base ring $\Lambda$ defined as a functor from the category of topological spaces to the category of finitely generated $\Lambda$-modules.~As emphasised by Atiyah himself, however, in physical examples, the fixed base ring is typically a field as e.g.~$\mathbb R$, $\mathbb C$, or $\mathbb Z_p$.~In this work, it is sufficient to restrict ourselves to TQFTs over $\mathbb C$.} Despite this mathematical elegance and conciseness, for the purposes of this work it is convenient to dissect this definition and spell out the properties that the map $Z$ must satisfy in order to define a TQFT, following Atiyah's original proposal~\cite{Atiyah_1988}. These are collected in the Definition \ref{def:TQFT} below, where by manifold we indicate a compact oriented smooth manifold.

\begin{definition}[Topological Quantum Field Theory]
\label{def:TQFT}
    A $d$-dimensional (unitary) TQFT is a map $Z$ which
    \begin{itemize}
        \item to any closed $(d-1)$-dimensional manifold $\Sigma$, up to orientation-preserving diffeomorphisms, assigns a Hilbert space $\mathcal H_{\Sigma}\coloneqq Z(\Sigma)$; and
        \item to the equivalence class of diffeomorphic $d$-dimensional manifolds $M$ preserving the boundary identification assigns a vector $Z(M)$ in the Hilbert space $\mathcal H_{\partial M}\coloneqq Z(\del M)$.
    \end{itemize}
    These data satisfy the following properties:
    \begin{enumerate}
        \item \textbf{Involution:} $\mathcal H_{\Sigma^*}\simeq \mathcal H_{\Sigma}^*$, where $\Sigma^*$ denotes the manifold with opposite orientation and $\mathcal H_{\Sigma}^*$ denotes the strong continuous dual space of $\mathcal H_{\Sigma}$.
        \item \textbf{Factorisation:} $\mathcal H_{\Sigma_1\sqcup\Sigma_2}\simeq \mathcal H_{\Sigma_1}\otimes\mathcal H_{\Sigma_2}$, with $\otimes$ denoting the tensor product of Hilbert spaces.
        \item \textbf{Normalisation:} When $\Sigma=\varnothing$, then $\mathcal H_{\varnothing}\simeq \C$.
        \item \textbf{Dagger:} $Z(M^*)=R_{\Sigma}(Z(M))$, where $\Sigma=\del M$ and $R_{\Sigma}:\mathcal H_{\Sigma}\to \mathcal H_{\Sigma}^*$ is the Riesz map.
        \item \textbf{Gluing:} For $M\simeq M_1\cup_{\Sigma} M_2$, with $\del M_1=\Sigma_1^*\sqcup \Sigma$ and $\del M_2=\Sigma^*\sqcup \Sigma_2$, then $Z(M)=\braket{Z(M_1),Z(M_2)}_{\Sigma}$, where $\braket{\cdot,\cdot}_\Sigma$ is the natural pairing between $\mathcal H_{\Sigma}$ and $\mathcal H_{\Sigma}^*$.
        \item \textbf{Identity:} For $M=\Sigma\times I$ with $I$ an interval, so that $\del(\Sigma\times I)=\Sigma^*\sqcup\Sigma$, then $Z(\Sigma\times I)=\mathbbm{1}_{\mathcal H_{\Sigma}}$ in $\mathcal H_{\Sigma}^*\otimes \mathcal H_{\Sigma}$.
        \item \textbf{Normalisation:} When $M=\varnothing$, then $Z(\varnothing)=1\in \C$.
    \end{enumerate}
\end{definition}
A few comments on the consequences and interpretation of the above properties are in order. First, when $\partial M = \varnothing$, that is, when $M$ is a closed manifold, the Normalisation property implies that $Z(M)$ is a complex number and a topological invariant.~Physically, it can be interpreted as the partition function of the theory on the manifold $M$, namely the vacuum-to-vacuum amplitude associated with the process represented by $M$.

Recall that, given two Hilbert spaces $\mathcal{H}_1$ and $\mathcal{H}_2$, the tensor product $\mathcal{H}_1^* \otimes \mathcal{H}_2$ is canonically and isometrically isomorphic to the space of Hilbert–Schmidt operators $\mathcal{B}_{\mathrm{HS}}(\mathcal{H}_1,\mathcal{H}_2)$. This isomorphism allows us to associate to each $d$-dimensional manifold a linear operator between the Hilbert spaces assigned to its boundary components. Consequently, the natural pairing $\langle Z(M_1), Z(M_2) \rangle_{\Sigma}$ corresponds to the composition $Z(M_2) \circ Z(M_1)$ of the associated operators. The Gluing property therefore asserts that the composition of cobordisms corresponds to the composition of linear maps. As a result, $Z(M)$ can be computed by decomposing $M$ into simpler pieces, giving rise to the so-called ``Lego game'': manifolds and their invariants are constructed from elementary building blocks. This is one of the central features of TQFTs. It is worth noting that there is generally no unique way to decompose $M$, and hence $Z(M)$ can be computed in multiple equivalent ways. As a particular case of Axiom~5 with $\Sigma_1 = \varnothing = \Sigma_2$, the invariant $Z(M)$ associated with a closed $d$-manifold $M = M_1 \cup_{\Sigma} M_2$, obtained by gluing $M_1$ and $M_2$ along $\Sigma$, is given by the transition amplitude
\[
Z(M) = \langle Z(M_1) \mid Z(M_2) \rangle_{\mathcal{H}_{\Sigma}},
\]
between the states $Z(M_1)$ and $Z(M_2)$, as schematically illustrated in Fig.~\ref{fig:tqftgluing}.
\begin{figure}[t!]
    \centering
    \includegraphics[width=0.875\textwidth]{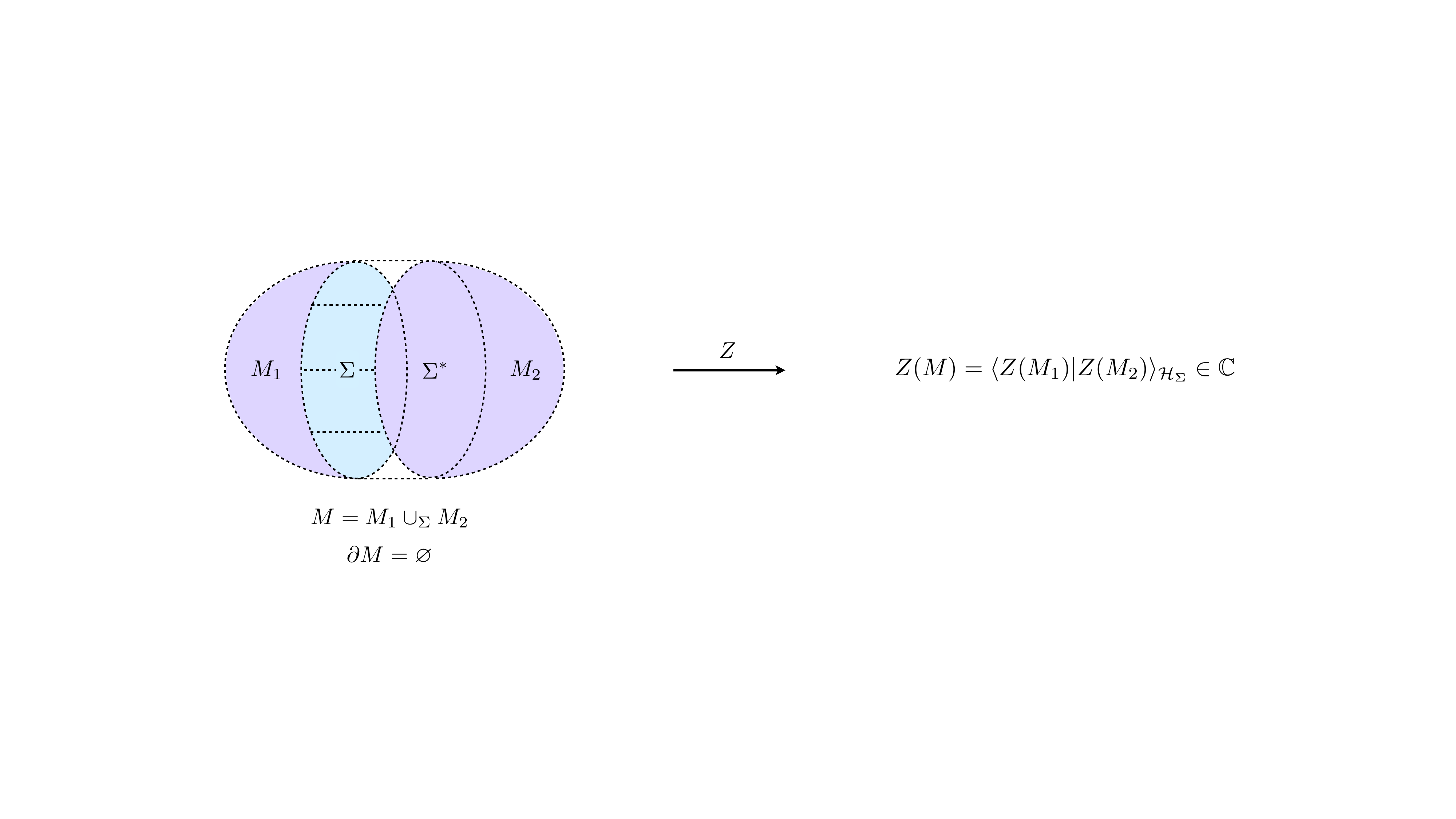}
    \caption{In a TQFT, the gluing of two $d$-manifolds $M_1$ and $M_2$ along their boundary $\Sigma$ is associated to a numerical invariant $Z(M)$, which only depends on the topology of the closed manifold $M=M_1\cup_{\Sigma}M_2$ and is given by the transition amplitude between the quantum states $Z(M_1)$ and $Z(M_2)$ in $\mathcal H_{\Sigma}$ respectively associated to $M_1$ and $M_2$.}
    \label{fig:tqftgluing}
\end{figure}

The Dagger property is related to the adjoint operation.~Indeed, if $\partial M = \Sigma_1^* \sqcup \Sigma_2$, then $Z(M^*) \in \mathcal{H}_{\Sigma_2}^* \otimes \mathcal{H}_{\Sigma_1}$, viewed as an element of $\mathcal{B}_{\mathrm{HS}}(\mathcal{H}_{\Sigma_2},\mathcal{H}_{\Sigma_1})$, is the adjoint of $Z(M) \in \mathcal{B}_{\mathrm{HS}}(\mathcal{H}_{\Sigma_1},\mathcal{H}_{\Sigma_2})$.~In particular, for closed $M$, the invariant $Z(M)$ becomes its complex conjugate under orientation reversal, so that these complex-valued invariants are sensitive to orientation.

The Identity axiom forces the Hilbert spaces $\mathcal{H}_{\Sigma}$ to be finite-dimensional.~This is crucial for the assignment of cobordisms to linear operators and, in particular, for $Z(\Sigma \times I)$ to be well defined.~Indeed, the identity operator belongs to $\mathcal{H}_{\Sigma}^* \otimes \mathcal{H}_{\Sigma} \simeq \mathcal{B}_{\mathrm{HS}}(\mathcal{H}_{\Sigma},\mathcal{H}_{\Sigma})$ if and only if $\mathcal{H}_{\Sigma}$ is finite-dimensional.~A further consistency check is provided by the torus: by identifying the two ends of the cylinder, one obtains
\[
Z(\Sigma \times S^1) = \operatorname{Tr}(\mathbbm{1}_{\mathcal{H}_{\Sigma}}) = \dim \mathcal{H}_{\Sigma},
\]
where $\operatorname{Tr}$ denotes the Hilbert space trace.~The finite dimensionality of $\mathcal{H}_{\Sigma}$ reflects the absence of local, propagating degrees of freedom in a TQFT, where the degrees of freedom are global variables.~Because of that, as shown in~\cite{Atiyah_1988,Bakalov_Kirillov_2001}, the Hilbert space $\mathcal{H}_\Sigma$  is the physical Hilbert space of the theory whose states are invariant under diffeomorphisms in the identity component (i.e. isotopic to the identity).

A TQFT can therefore be seen as a tool to derive topological invariants and their properties.~This feature, among others, has stimulated considerable mathematical interest in the subject.~By assigning algebraic structures (Hilbert spaces) to boundary manifolds and complex numbers to closed manifolds, TQFTs provide a powerful framework for translating topological problems into algebraic ones.~This approach has proved particularly successful in $d=2$ and $d=3$ dimensions~\cite{Abrams:1996ty,Kock_2004,Reshetikhin:1991tc,Witten_1989,Turaev_book} (see also~\cite{Barenz:2016nzn,Meusburger_2025} for developments in four dimensions).

The above set of axioms provides the conceptual starting point and primary source of inspiration for the present work.~Throughout the remainder of the paper, we will frequently refer to these axioms, adopting the notation and terminology introduced in this section.~They will serve both as a guiding framework for the constructions and results that follow and as a reference point for comparison with more general or less restrictive settings.

\section{Axioms for Spin Foams}\label{Sec:SFQG}

Taking inspiration from Atiyah’s axioms for TQFTs, we shall now propose a set of axioms for spin foam models of quantum gravity.~These will refine the viewpoint of spin foam models as combinatorial relatives of TQFTs whose embryonic form can be traced back to~\cite{Rovelli:2010vv,Rovelli_Smerlak_2012,Rovelli:2011eq} (for earlier ideas and related point of views, see also~\cite{Baez_1998,Baez:1999sr,Crane_1994,Oeckl_2003a}).~To this aim, let us start by briefly recalling the basics of the spin foam formulation of loop quantum gravity (for a more extensive presentation, we refer to~\cite{Rovelli_Vidotto_2014} for a textbook reference, and to~\cite{Baez:1999sr,Perez_2013,Bianchi:2017hjl,Engle_Speziale_2023} for reviews).~While Sec.~\ref{Sec:SFreview} may serve as an introduction to spin foams for the uninitiated, the acquainted reader may skip it and continue with Sec.~\ref{Sec:truncatedaxioms}.

\subsection{The Spin Foam Formulation of Loop Quantum Gravity}\label{Sec:SFreview}

The relation between gravity and topological field theories appears already in the Lorentzian 4-dimensional classical theory.~In its first-order formulation, General Relativity can in fact be written as a constrained topological field theory.~Specifically, a BF theory, with action ($16\pi G=1$)
    \begin{equation}
        S[e,\omega]= \int_M \Tr{B \wedge F[\omega]},
    \end{equation}
    together with the constraint
    \begin{equation}
    \label{eq:simpl}
        B = \star(e \wedge e) + \frac{1}{\gamma} (e \wedge e).
    \end{equation}
    Here, $e$ is the tetrad $1$-form, $B$ is a $\mathfrak{sl}(2,\C)$-valued $2$-form, $F$ is the curvature of the Lorentz connection $\omega$, $\star$ is the Hodge dual on internal indices, and $\gamma$ is the Barbero-Immirzi parameter. The constraint~\eqref{eq:simpl} is often called the \emph{simplicity constraint}, and $B$ is called $\gamma$-simple.~For a $4$-dimensional compact region $M$ bounded by a spacelike hypersurface $\Sigma$, the time-like normal to the hypersurface fixes a preferred frame, which is invariant under the action of the little group $SU(2)$ inside $SL(2,\C)$.~Correspondingly, the $B$ field splits into the generators of boosts $\vec{K}$ and rotations $\vec{L}$, and the simplicity constraint~\eqref{eq:simpl} then becomes
    \begin{equation}\label{eq:simpl2}
        \vec{K}=\gamma \vec{L}\;.
    \end{equation}
    As the simplicity constraint encodes, through $e$, the information about the local gravitational degrees of freedom not present in BF theory, the relation~\eqref{eq:simpl2} encodes the dynamical nature of General Relativity.~Moreover, the thermodynamics features of General Relativity, which are intimately related to the Einstein's equations~\cite{Jacobson_1995}, can be derived using solely the simplicity constraint, and vice versa~\cite{Smolin:2012ys,Smolin_2017} (see also~\cite{Bianchi:2012ui}).

    At the quantum level, the states of the theory encode the quantum geometry of the boundary hypersurface.~Owing to the construction of the spacelike hypersurface discussed above, they are objects in representations of $SU(2)$.~More specifically, one considers a truncation to a finite number of degrees of freedom of the gravitational field encoded into a graph $\Gamma$ on the boundary, to which it is associated a Hilbert space
    \begin{equation}\label{eq:bdyHkin}
    \mathcal{H}_\Gamma = L^2\!\left(SU(2)^{L}/SU(2)^N\right),
    \end{equation}
    where $L$ and $N$ denote the numbers of links and nodes of $\Gamma$, respectively.~A basis of $\mathcal H_{\Gamma}$ is provided by spin-network states labelled by the colourings of the graph $\Gamma$, namely a collection of $SU(2)$ spins $j$, one for each link, and intertwiners, i.e.\ projection onto the $SU(2)$-invariant subspace, one for each node.~These are the same states as in canonical loop quantum gravity where the spins define quanta of area, while the intertwiners define quanta of space volume~\cite{Rovelli_2004, Thiemann_2007, Rovelli_Vidotto_2014}. 

    The covariant dynamics of the theory describes then transition amplitudes between boundary quantum geometries.~This is encoded in a CW $2$-complex representing the four-dimensional region of quantum spacetime whose boundary is precisely $\Gamma$ (this $2$-complex is often the $2$-skeleton of the dual complex to a triangulation $\Delta$ of the bulk region, though this need not to be necessarily the case; see e.g.~\cite{Kamiński_Kisielowski_Lewandowski_2010}).~The $2$-complex, whose faces are labelled by the $SU(2)$ irreducible representations and whose edges are labelled by the intertwiners, is called a \emph{foam} and can be interpreted as the history of the spin network, which can generate new links and nodes~\cite{Reisenberger:1994aw,Reisenberger_Rovelli_1997,Baez_1998}~(see Fig.~\ref{fig:spinfoam} for a pictorial representation).~To each foam, one associates an amplitude given by
    \begin{equation}\label{eq:SFampl}
        W_\Delta(\{j_{f_{\rm bdy}}\})=\sum_{\{j_f\},\{i_e\}}\prod_{f\in F(\Delta)} (2j_f+1)\prod_{v\in V(\Delta)}A_v(j_{f_v},i_{e_v}),
    \end{equation}
    where $F(\Delta)$ and $V(\Delta)$ are the sets of the faces (2-cells) and vertices (0-cells) of the $2$-complex dual to $\Delta$, respectively. Here, $j_{f_v}$ and $i_{e_v}$ are the spin numbers and intertwiners associated with a face $f$ and a edge $e$ attached to the vertex $v$. Here, $\{j_{f_{\rm bdy}}\}$ is the collection of spins for faces that intersect the boundary.
    Note that this is essentially equivalent to a linear functional $Z_\Delta$ on the Hilbert space $\mathcal{H}_\Gamma$, where $\Delta$ denotes the triangulation mentioned above. In particular, $W_\Delta(\{j_{f_{\rm bdy}}\})$ is the integration kernel of the operator $Z_\Delta$ in the basis provided by the Peter-Weyl theorem.
    \begin{figure}[t!]
    \centering
    \includegraphics[scale=0.275]{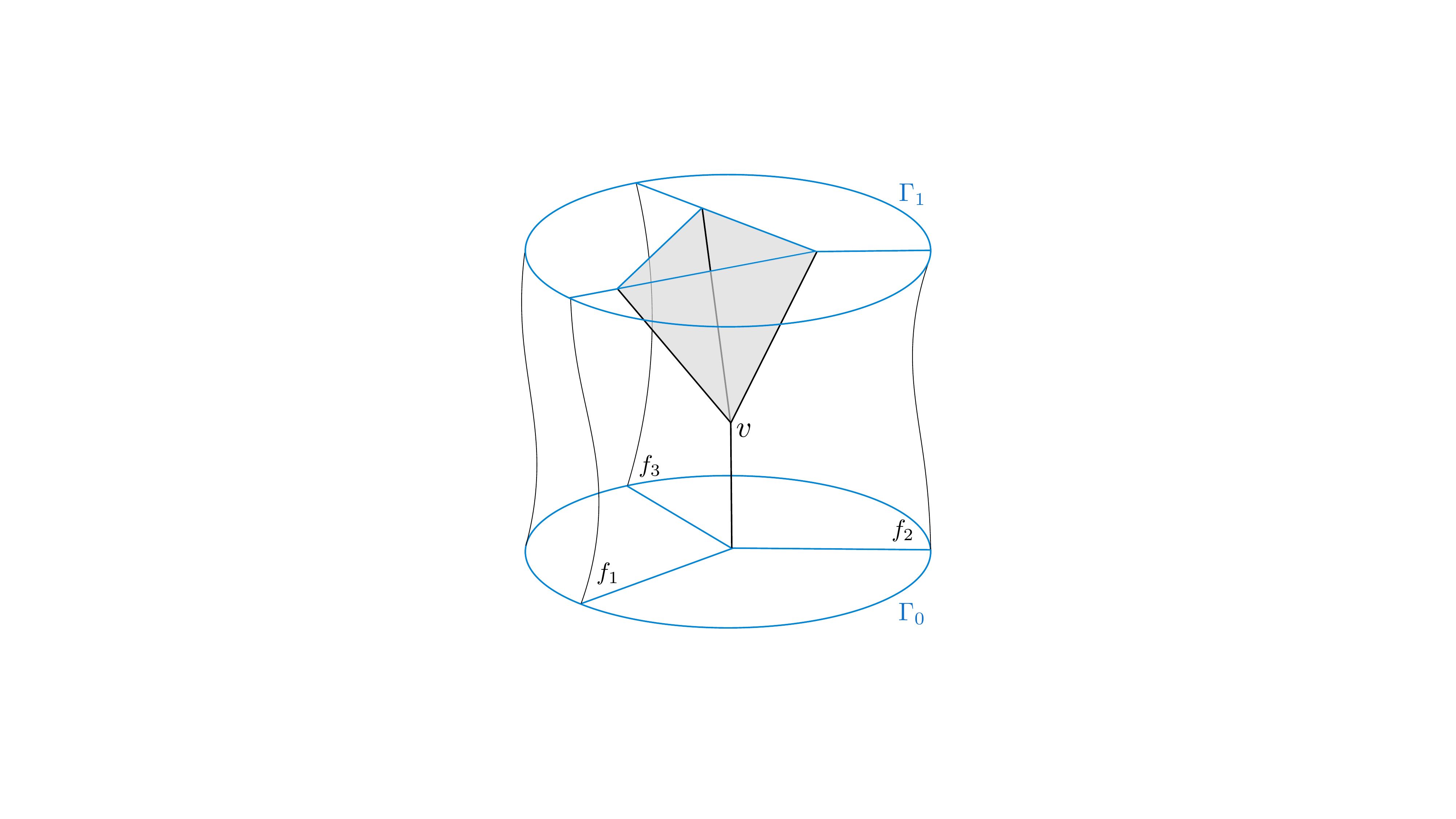}
    \caption{Graphical representation of a spin foam.~The boundary spin network states associated to the graphs $\Gamma_0$ and $\Gamma_1$ (in blue) encode the quantum geometry of the initial and final hyspersurfaces.~The 2-complex interpolating between them represents a quantum spacetime history (in black).~Vertices $v$ can generate new links and nodes as highlighted in grey.}
    \label{fig:spinfoam}
    \end{figure}

The structure of the amplitude~\eqref{eq:SFampl} reflects the locality principle, namely that the amplitudes of a story is the product of individual local amplitudes associated to separate regions of spacetime. The actual dynamics is encoded into the vertex amplitude $A_v$, whose explicit form specifies the different spin-foam models.~A leading proposal for the vertex amplitude of 4d Lorentzian quantum gravity is the so-called EPRL-FK model~\cite{Engle_Livine_Pereira_Rovelli_2008,Freidel_Krasnov_2008} which is derived by enforcing the simplicity constraints at the quantum level.~This is given as a linear functional on a spin network state $\psi$ around the vertex $v$
    \begin{equation}\label{eq:EPRL}
    A_v(\psi)=(P_{SL(2,\C)}Y_\gamma\,\psi)(\mathbbm{1})\;.
    \end{equation}
    Here $P_{SL(2,\C)}$ is the projector onto $SL(2,\C)$-invariant states, $Y_{\gamma}$ is a map embedding $SU(2)$ representations into $SL(2,\mathbb C)$ ones, and $(\mathbbm{1})$ denotes the evaluation of the spin-network state on the identity.~More specifically, the map $Y_{\gamma}$ is given by
    \begin{equation}
    \begin{matrix}
    Y_{\gamma}:&V^j&\to &V^{(p=\gamma j,k=j)}\,;\\
    &| j;m \rangle& \mapsto & |\gamma j,j;j,m \rangle\,,
    \end{matrix}
    \end{equation}
    where $p$ is a positive real number and $k$ is a non-negative half-integer characterising the principal series of unitary representations of $SL(2,\C)$, so that the carrier space $V^{(p,k)}$ of the $SL(2,\mathbb C)$-representations decomposes into irreducible representations of $SU(2)$ as
    \begin{equation}
    V^{(p,k)}=\bigoplus_{j\in\{ k+n\}_{n\in\mathbb{N}}}V^j\;.
    \end{equation}
    States in the image of $Y_{\gamma}$ satisfy the simplicity constraint in the weak sense, namely that $\langle \vec{K}\rangle=\gamma \langle \vec{L}\rangle$ holds in the large-$j$ limit.~Without the map $Y_{\gamma}$ in Eq.~\eqref{eq:EPRL}, Ooguri's quantization of BF theory is recovered~\cite{Ooguri_1992b}.~In other words, the map $Y_{\gamma}$ ensures that the relation~\eqref{eq:simpl2}, which turns BF theory into General Relativity, is implemented at the semiclassical level.~In the large-$j$ semiclassical regime, the vertex amplitude has been shown to approach the exponential of the Regge action~\cite{Regge_1961,Ponzano_Regge_1969}, the discretised version of the Einstein-Hilbert action, thus supporting the idea that the dynamics of discrete General Relativity is correctly captured by the model~\cite{Conrady:2008ea,Bianchi:2010mw,Magliaro:2011dz,Magliaro:2011qm,Han:2011re,Han:2021kll}.

    \smallskip
    
    The discretisation introduced by the bulk 2-complex and the boundary graph represents a truncation of the degrees of freedom of the theory which allows to specify the amplitude in a well-defined manner and, eventually, to compute it.~The full theory --- its Hilbert space $\mathcal H_{\partial M}$ and amplitude $Z \colon \mathcal{H}_{\partial M} \to \mathbb{C}$ --- should therefore be obtained from the truncated theory via a suitable continuum limit in which diffeomorphism invariance is restored.~The sense in which these truncated amplitudes can define the full amplitudes of the theory is the problem on which we focus in this paper.~Different strategies have been proposed in the literature, based on either summing over all possible complexes (see e.g.~\cite{Reisenberger_Rovelli_1997,Reisenberger:2000fy,Oriti:2014yla,Han:2025emp}) or refining the discretisation in various ways (see e.g.~\cite{Rovelli:2011eq,Rovelli_Smerlak_2012, Dittrich:2012jq,Dittrich:2013xwa,Dittrich:2014ala,Asante_Dittrich_Steinhaus_2023,Rovelli_2022}).\footnote{In~\cite{Rovelli_Smerlak_2012}, the possibility that, under certain conditions, summing and refinement may coincide in quantum gravity has been put forward.}
    
\subsection{Axioms in the Discrete}\label{Sec:truncatedaxioms}

The discretisation introduced by spin foam models in order to compute the amplitudes is encoded in the fact that the theory can be defined on \emph{abstract simplicial complexes}. Every abstract simplicial complex admits a geometric realisation as a topological space.~In the following we restrict our attention to those abstract simplicial complexes that define a topological manifold, or more precisely a piecewise-linear (PL) manifold.~Such complexes will be referred to as \emph{triangulations} according to the following definition~\cite{Hudson_1969}.
\begin{definition}[Triangulation]
A $d$-dimensional triangulation is an abstract simplicial complex of dimension $d$ such that the link of each vertex is a $(d-1)$-sphere.
\end{definition}
For details about the definition of triangulation, their properties, and related notation, we refer the reader to Appendix~\ref{App:Triang}.~We note that every smooth manifold admits a triangulation and therefore a PL structure~\cite{Whitehead_1940}.~Moreover, in dimension four and lower, every PL manifold is smoothable, making this framework sufficiently general for the purposes of this work.~With this premise, the relation between bulk and boundary discrete structures and algebraic (quantum) data in spin foam models discussed in Sec.~\ref{Sec:SFreview}, can be formalised as follows.

\begin{definition}[Spin foam model] \label{def:truncSF} A $d$-dimensional spin foam model is a map which
\begin{itemize}
    \item to a triangulation $\delta$, whose geometric realisation $|\delta|$ is a closed, oriented, smooth $(d{-}1)$-manifold $\Sigma$, assigns a (possibly infinite-dimensional) Hilbert space $\mathcal{H}_\delta$;~and
    \item to a triangulation $\Delta$, whose geometric realisation $|\Delta|$ is a compact, oriented, smooth $d$-manifold $M$ with boundary $\partial M = \Sigma$, and whose boundary $\partial \Delta = \delta$ is a triangulation of $\Sigma$, assigns a vector $Z(\Delta) \in \mathcal{H}_\delta$.
\end{itemize}
These data satisfy the following properties:
\begin{enumerate}
    \item \textbf{Involution:} For $\delta^*$ the $(d-1)$-triangulation with opposite orientation than $\delta$, i.e.~such that $|\delta^*|=|\delta|^*$, then $\mathcal{H}_{\delta^*} = \mathcal{H}_\delta^{\,*}$ with $\mathcal{H}_\delta^{\,*}$ the strong continuous dual space of $\mathcal{H}_\delta$;
    
    \item \textbf{Factorisation:} $\mathcal{H}_{\delta_1 \sqcup \delta_2} = \mathcal{H}_{\delta_1} \otimes \mathcal{H}_{\delta_2}$;
    
    \item \textbf{Normalisation:} $\mathcal{H}_{\delta = \varnothing} = \mathbb{C}$ and $Z(\Delta=\varnothing) = 1$;
    
    \item \textbf{Gluing:} If $\Delta \simeq \Delta_1 \cup_{\delta} \Delta_2$, with $\partial \Delta_1 = \delta_1^* \sqcup \delta$ and $\partial \Delta_2 = \delta^* \sqcup \delta_2$, then $Z(\Delta) = \langle Z(\Delta_1), Z(\Delta_2) \rangle_\delta$ with $\langle \cdot , \cdot \rangle_\delta$ the natural pairing between $\mathcal{H}_\delta$ and $\mathcal{H}_\delta^*$;
    
    \item \textbf{Dagger:} $Z(\Delta^*) = R_{\partial \Delta}(Z(\Delta))$, where $R_{\partial \Delta}$ is the Riesz map $R_{\partial \Delta} \colon \mathcal{H}_{\partial \Delta} \to \mathcal{H}_{\partial \Delta}^*$.
\end{enumerate}
\end{definition}

\begin{figure}[t!]
    \centering
    \includegraphics[scale=0.2]{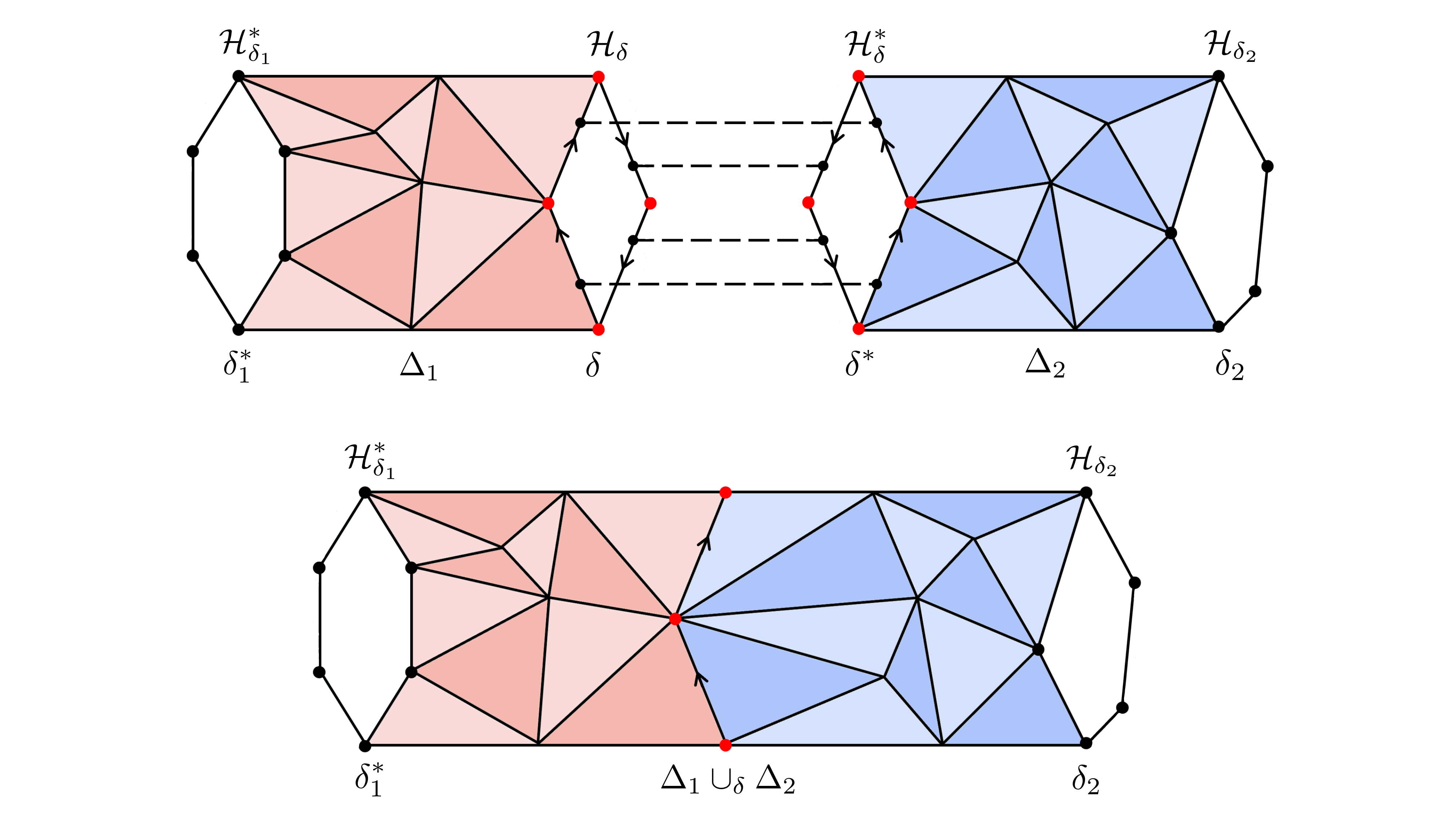}
    \caption{Gluing of triangulated manifolds in a spin foam model.}
    \label{fig:gluing_truncated}
\end{figure}

The list above shares similarities with the TQFT axioms (cfr.~Def.~\ref{def:TQFT} in Sec.~\ref{Sec:TQFT}).~However, there are some important differences.~The first obvious difference is that the theory so-defined depends, in principle, not only on the topological data but also on the combinatorial data of the triangulations.~Compatibly with the gluing of their geometric realisations, the gluing of triangulations is therefore defined through the identification face by face along a common boundary triangulation as depicted in Fig.~\ref{fig:gluing_truncated} (see Appendix~\ref{App:Triang} for a precise definition).~Moreover, we note that the above Definition~\ref{def:truncSF} does \emph{not} include an identity axiom.~Such an axiom would require the existence, for every closed
$(d{-}1)$-triangulation $\delta$, of a $d$-triangulation $\Delta_{\mathbbm{1}}$ homeomorphic to the cylinder $|\Delta_{\mathbbm{1}}|=\Sigma \times I$, such that $\partial\Delta_{\mathbbm{1}}=\delta^*\sqcup\delta$ and that acts as a unit under gluing.~That is, $\Delta \cup_\delta \Delta_{\mathbbm{1}} = \Delta$ for all $\Delta$ with $\partial\Delta={\delta'}^* \sqcup \delta$, for some $\delta'$, and $Z(\Delta_{\mathbbm{1}}) = \mathbbm{1}_{\Hil_\delta}$.~The natural candidate for such a unit is $\delta \times I$, but this object contains no bulk vertices and is therefore \emph{not} a triangulation of $\Sigma \times I$.~Including such an object would require to extend the framework beyond abstract simplicial complexes by admitting more general combinatorial two-complexes, which need not be homeomorphic to a manifold.~This was the case for instance in~\cite{Rovelli_Smerlak_2012} where a comparison with Atiyah's framework was also sketched.~Doing this, however, would weaken the correspondence between topological and combinatorial data that characterises spin foam models and, in particular, would be decremental for our later purposes of studying the recovery of TQFTs or the departure from them in the continuum.

Finally, the Hilbert space associated to boundary triangulations is not the physical Hilbert space of the theory but rather the kinematical Hilbert space, and it needs not to be finite-dimensional.~In concrete spin foam models, this is often taken to be the Hilbert space of the dual spin network graphs given in Eq.~\eqref{eq:bdyHkin}.~To keep our analysis model-independent, in what follows, we do not assume any specific model or form of the dynamics.~The framework of Definition \ref{def:truncSF} describes only the kinematical regularisation of the path integral and may thus apply equally well to discretised theories of gravity and topological theories.

\section{A First Step in the Continuum}\label{Sec:Continuum1}

Although the discrete structure allows for the explicit computation of transition amplitudes, it represents a truncation of the theory to finitely many degrees of freedom.~For this reason, we are ultimately interested in the continuum limit of the theory, which should capture its full physical content.~In close analogy with lattice QCD, this amounts to removing the regulator by means of a suitable limiting procedure.~In the present setting, however, there is no independent refinement parameter, analogous to the lattice spacing in lattice gauge theory, that can be tuned to a critical value or sent to zero.~Instead, the regulator is the abstract simplicial complex itself.~Notice that the same is true in classical Regge calculus, which defines a truncation of the number of degrees of freedom of General Relativity without introducing a scale~\cite{Regge_1961}.~This possibility, characteristic of theories formulated in a reparametrisation invariant language, has been discussed in~\cite{Rovelli_2022}.~Continuing the analogy with lattice gauge theory, where the continuum limit is obtained by sending the lattice spacing to zero while the number of lattice sites diverges accordingly, here only the latter aspect survives:~the continuum limit is realised through an increasingly refined simplicial complex.~The crucial issue is therefore not merely to increase the number of simplices, but to do so in a way that correctly captures the physical degrees of freedom.~In this section, we show that a refinement procedure based solely on the number of simplices is insufficient for this purpose and does not recover the full physical content of the theory.

In this context it is still possible to define a continuum limit by employing the mathematical notion of a net, which generalises the concept of a sequence (see Appendix~\ref{App:Triang} for details).~This allows us to define the continuum limit of transition amplitudes in a precise manner.~Our aim is to study how the structure of the continuum theory is affected by the specifics of how the limit is taken, particularly, the interplay between the removal of the bulk and and boundary regulators.~In this section, we start from considering the case in which the truncated boundary Hilbert space is held fixed.~In analogy with particle physics, this may be thought of as keeping the Hilbert space associated with a finite number of particles, while considering the amplitude computed at arbitrary order.~We define a specific limit procedure that keeps the boundary triangulation fixed but then show that this procedure leads to a topological theory.

\subsection{Weak Order}\label{Sec:weakorder}

Let $\mathcal{T}_M^\delta$ be the set of triangulations $\Delta$ homeomorphic to a $d$-manifold $M$ and such that the induced boundary triangulation satisfies $\partial\Delta=\delta$. Starting from the amplitudes $Z(\Delta)\in\mathcal H_\delta$ for $\Delta \in \mathcal{T}_M^\delta$, we aim to construct a vector $Z_{M;\delta}\in \mathcal{H}_\delta$ representing the transition amplitude associated with the boundary $\delta$ and independent of the bulk triangulation. To this end, we introduce a preorder on $\mathcal{T}_M^\delta$ based on the number of $d$-simplices. This preorder makes $\mathcal{T}_M^\delta$ into a directed set, thereby turning the collection of amplitudes $Z(\Delta)$ into a net. We then define $Z_{M;\delta}$ as the limit of this net.

Specifically, denoting by $N_d(\Delta)$ the number of $d$-simplices in a triangulation $\Delta$, we define
\begin{equation}\label{eq:weakorder}
    \Delta_1 \le \Delta_2\quad \text{if and only if}\quad N_d(\Delta_1)\le N_d(\Delta_2)\,.
\end{equation}
We refer to this preorder as the \textit{weak order}. This is schematically represented in Fig.~\ref{fig:weak}. We emphasise that the increment in the number of simplices does not need to be distributed evenly over all of the bulk.
\begin{figure}[t]
    \centering
    \includegraphics[width=0.75\textwidth]{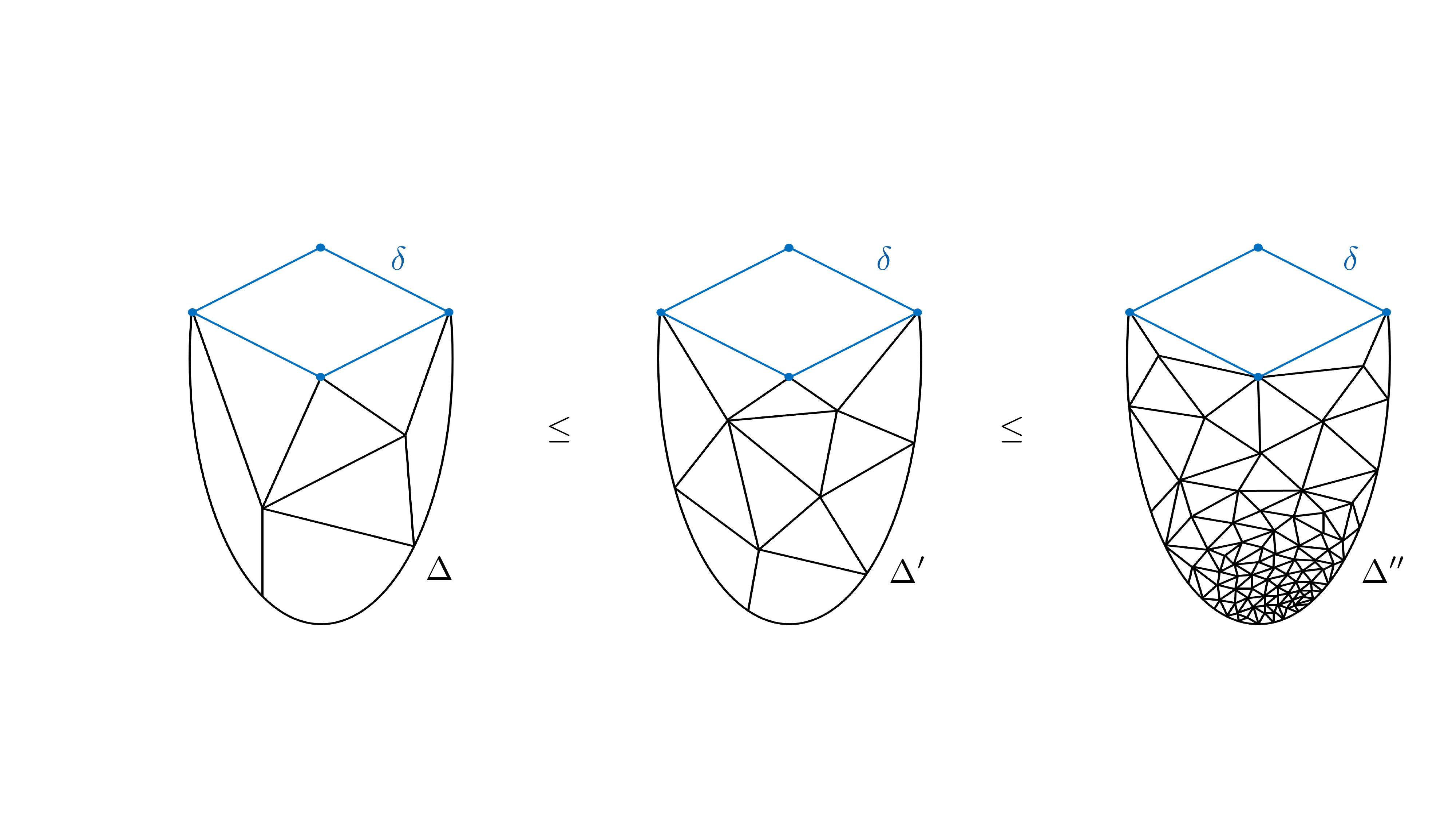}
    \caption{Graphic representation of the weak order. Notice that the increment of simplices could be localised.}
    \label{fig:weak}
\end{figure}

The set $\mathcal{T}_M^\delta$ equipped with the weak order is a directed set.\footnote{This uses the fact that $\mathcal{T}_M^\delta$ is non-empty for any triangulation $\delta$ of $\partial M$; see Lemma~\ref{lem:nonemptyT} and Prop.~\ref{prop:w-ord} in Appendix~\ref{App:Triang}.} We can then define a net for each $M$ and $\delta$:
\begin{equation}
\begin{split}
    Z_\bullet:~\mathcal{T}_M^\delta~&\to~\Hil_\delta \\ \Delta~~&\mapsto~Z_\Delta \coloneqq  Z(\Delta)
\end{split}
\end{equation}
Since $\mathcal{H}_\delta$ is a Hausdorff topological vector space, the limit, if it exists, is unique. We denote it by 
\begin{equation}
\label{eq:Z_Md}
    Z_{M;\delta} \coloneqq  \lim_{\Delta \in \mathcal{T}_M^\delta} Z_\Delta
\end{equation}
We assume the existence of this limit in $\h_\delta$, and proceed to study its properties and the consequences of such an assumption. 

\subsection{Properties of the Limit and TQFT}\label{Sec:weaklimTQFT}

The properties of the limit~\eqref{eq:Z_Md} constrain the resulting continuum theory. As we shall see, the states $Z_{M;\delta}$ exhibit several remarkable features; however, these features are in tension with the expected physical content of the theory. This tension can be traced back to the choice of weak order, which does not fully capture the physical intuition behind refinement of the discretisation. Despite this limitation, we are able to prove a robust result that does not rely on model-specific details, and therefore applies to a broad class of spin foam models.

\smallskip

For the rest of the section, we assume that the limit~\eqref{eq:Z_Md} exists in $\mathcal H_\delta$ for any $d$-manifold $M$ and any boundary triangulation $\delta$. We first need a technical lemma, proved in Appendix~\ref{App:Proofs1}.
\begin{lemma}
    \label{lem:subnet}
    Let $ M\simeq N_1\cup_{\Sigma}N_2$ be a $d$-manifold with $\partial N_1=\Sigma_1^*\sqcup\Sigma$, $\partial N_2=\Sigma^*\sqcup\Sigma_2$ and $\partial  M=\Sigma_1^*\sqcup \Sigma_2$. The subset of $\mathcal{T}_{M}^{\delta_1^* \sqcup \delta_2}$ consisting of triangulations that can be written as the gluing of a triangulation of $N_1$ and a triangulation of $N_2$ along some triangulation $\delta$ of $\Sigma$ (see Fig.~\ref{fig:gluing_truncated}) is the image of the map
    \[
    \begin{matrix}
        \gamma: & \mathcal{T}_{N_1}^{\delta_1^* \sqcup \delta}\times \mathcal{T}_{N_2}^{\delta^* \sqcup \delta_2} & \to & \mathcal{T}_{M}^{\delta_1^* \sqcup \delta_2};\\
        & (\Delta_1,\Delta_2) & \mapsto & \Delta_1\cup_\delta \Delta_2 \,.
    \end{matrix}
    \]
    The map $\gamma$ is order-preserving and cofinal. As a consequence, $Z_{\gamma(\bullet)}$ is a subnet of $Z_{\bullet}$.
\end{lemma}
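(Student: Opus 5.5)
The plan is to treat the three claims in turn: that $\gamma$ is well defined and has the stated image, that it is order-preserving, and that it is cofinal. The subnet property then follows from the definition of a subnet. Throughout I fix a triangulation $\delta$ of $\Sigma$ and equip the domain $\mathcal{T}_{N_1}^{\delta_1^*\sqcup\delta}\times\mathcal{T}_{N_2}^{\delta^*\sqcup\delta_2}$ with the product preorder. Since each factor is directed by the weak order (Prop.~\ref{prop:w-ord}), the product is a directed set.

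\emph{Well-definedness.} Given $\Delta_1,\Delta_2$, I glue them face by face along $\delta$ as in Appendix~\ref{App:Triang}. Two things must then be checked.
\begin{enumerate}
\item The result is again an abstract simplicial complex in which the link of every vertex is a $(d-1)$-sphere. For vertices not in $\delta$ this is inherited unchanged from $\Delta_1$ or $\Delta_2$. For a vertex $v\in\delta$, its links in $\Delta_1$ and $\Delta_2$ are $(d-1)$-balls whose common boundary is the link of $v$ in $\delta$, a $(d-2)$-sphere. Gluing two PL balls along their common boundary sphere gives a PL sphere.
\item The geometric realisation $|\Delta_1\cup_\delta\Delta_2|=|\Delta_1|\cup_{|\delta|}|\Delta_2|$ is homeomorphic to $N_1\cup_\Sigma N_2\simeq M$, and its boundary is $\delta_1^*\sqcup\delta_2$.
\end{enumerate}
Hence $\gamma$ lands in $\mathcal{T}_M^{\delta_1^*\sqcup\delta_2}$. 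That its image is exactly the described subset holds by definition, taking the union over all $\delta$ if $\delta$ is allowed to vary.

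\emph{Order preservation.} Gluing identifies only $(d-1)$-simplices lying in $\delta$, never $d$-simplices. Hence
\[
N_d(\Delta_1\cup_\delta\Delta_2)=N_d(\Delta_1)+N_d(\Delta_2).
\]
If $(\Delta_1,\Delta_2)\le(\Delta_1',\Delta_2')$ componentwise, adding the two inequalities gives $\gamma(\Delta_1,\Delta_2)\le\gamma(\Delta_1',\Delta_2')$ in the weak order~\eqref{eq:weakorder}.

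\emph{Cofinality.} Given any $\Delta\in\mathcal{T}_M^{\delta_1^*\sqcup\delta_2}$, I need a pair whose image has at least $N_d(\Delta)$ top simplices.
\begin{enumerate}
\item Lemma~\ref{lem:nonemptyT} provides some $\Delta_1\in\mathcal{T}_{N_1}^{\delta_1^*\sqcup\delta}$ and $\Delta_2\in\mathcal{T}_{N_2}^{\delta^*\sqcup\delta_2}$.
\item I then increase $N_d(\Delta_1)$ without changing its boundary. Apply a stellar subdivision at an interior point of a $d$-simplex $\sigma$: replace $\sigma$ by the cone from a new vertex over $\partial\sigma$. This keeps the result a triangulation of $N_1$ (a stellar move is a PL homeomorphism), leaves every boundary face untouched, and raises $N_d$ by $d$.
\item Iterate this move until $N_d(\Delta_1)+N_d(\Delta_2)\ge N_d(\Delta)$; then $\gamma(\Delta_1,\Delta_2)\ge\Delta$.
\end{enumerate}
Because $\gamma$ is also monotone, every pair above this one maps above $\Delta$. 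This is exactly the subnet condition, so $Z_{\gamma(\bullet)}$ is a subnet of $Z_\bullet$.

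The main obstacle is the well-definedness step: verifying that the glued complex is a genuine PL triangulation of $M$, in particular the vertex-link condition along $\delta$. The counting and cofinality arguments are routine once the stellar subdivision move is available.
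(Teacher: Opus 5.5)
Your monotonicity and cofinality arguments are the same as the paper's. Additivity $N_d(\Delta_1\cup_\delta\Delta_2)=N_d(\Delta_1)+N_d(\Delta_2)$ gives order preservation. Subdividing a $d$-simplex of $\Delta_1$ at an interior point is exactly the $1\to d+1$ Pachner move the paper uses; it raises $N_d$ by $d$ without touching the boundary, which gives cofinality. The paper's proof does not address well-definedness at all; it simply assumes that the face-by-face gluing of Appendix~\ref{App:Triang} produces an element of $\mathcal{T}_M^{\delta_1^*\sqcup\delta_2}$.

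You single out well-definedness as the main obstacle, but your argument for it does not go through, and the claim is in fact false for some pairs. Two of your assertions presuppose that no simplex of $\Delta_1$ outside $\delta$ has the same vertex set as a simplex of $\Delta_2$ outside $\delta$: that the links of $v\in\delta$ in $\Delta_1$ and $\Delta_2$ meet exactly in $\mathrm{lk}(v,\delta)$, and that $|\Delta_1\cup_\delta\Delta_2|=|\Delta_1|\cup_{|\delta|}|\Delta_2|$.

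A counterexample: take $N_1=N_2=B^d$, $\Sigma_1=\Sigma_2=\varnothing$, $\delta=\partial\sigma$ for a $d$-simplex $\sigma$, and $\Delta_1=\Delta_2=\sigma$. The face-by-face quotient has two $d$-simplices on the same vertex set, so it is not an abstract simplicial complex. The abstract union, meanwhile, is just $\sigma$: a ball rather than $S^d$, with every vertex link a ball. The same defect occurs whenever interior edges of $\Delta_1$ and $\Delta_2$ join the same two vertices of $\delta$. So $\gamma$ is not defined on the whole product.

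The repair is cheap.
\begin{itemize}
\item If $\delta$ is a full subcomplex of $\Delta_1$, no such coincidence can occur, and your link argument becomes correct.
\item Fullness can always be reached without changing $\partial\Delta_1$ or decreasing $N_d$: stellarly subdivide the interior simplices all of whose vertices lie in $\delta$. Each such move only creates simplices containing the new interior vertex, so the number of offending simplices strictly drops.
\item The pairs with $\delta$ full in $\Delta_1$ form a cofinal subset of the product. Restrict $\gamma$ to them; your counting and cofinality arguments then apply verbatim, since the $1\to d+1$ moves preserve fullness.
\end{itemize}
The subnet conclusion, and its use in Prop.~\ref{prop:w-gluing}, are unaffected by this restriction.
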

In fact, this Lemma allows us to consider in the limit just the triangulations that split nicely on the two manifolds $N_1$ and $N_2$. This has clear consequences on the limit of the gluing.
\begin{proposition} \label{prop:w-gluing}
    Let $M=N_1\cup_\Sigma N_2$, with $N_1=\Sigma_1^*\sqcup\Sigma$ and $N_2=\Sigma^*\sqcup\Sigma_2$. Given triangulations $\delta,\delta_1,\delta_2$ of $\Sigma,\Sigma_1,\Sigma_2$, respectively, the following holds:
    \begin{equation}
    \label{eq:gluing}
        Z_{N_1\cup_\Sigma N_2; \delta_1^* \sqcup \delta_2}= \langle Z_{N_1; \delta_1^* \sqcup \delta}, Z_{N_2; \delta^* \sqcup \delta_2} \rangle_\delta\,.
    \end{equation}
\end{proposition}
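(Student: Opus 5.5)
The plan is to compute the limit on the left-hand side of \eqref{eq:gluing} along the subnet provided by Lemma~\ref{lem:subnet}, where the discrete Gluing axiom of Definition~\ref{def:truncSF} applies term by term. Throughout I fix the triangulation $\delta$ of $\Sigma$ appearing in the statement and use the map $\gamma$ of Lemma~\ref{lem:subnet} with this $\delta$. By that lemma, $Z_{\gamma(\bullet)}$ is a subnet of the net $Z_\bullet$ on $\mathcal{T}_M^{\delta_1^*\sqcup\delta_2}$. By our standing assumption, $Z_\bullet$ converges to $Z_{M;\delta_1^*\sqcup\delta_2}$. Every subnet of a convergent net converges to the same limit, and limits are unique in the Hausdorff space $\mathcal H_{\delta_1^*\sqcup\delta_2}$. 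So it suffices to show that the net
\[
(\Delta_1,\Delta_2)\mapsto Z_{\Delta_1\cup_\delta\Delta_2}
\]
on the directed set $\mathcal{T}_{N_1}^{\delta_1^*\sqcup\delta}\times\mathcal{T}_{N_2}^{\delta^*\sqcup\delta_2}$, with the componentwise preorder, converges to $\langle Z_{N_1;\delta_1^*\sqcup\delta},Z_{N_2;\delta^*\sqcup\delta_2}\rangle_\delta$.

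First, the discrete Gluing axiom gives, for each pair, $Z_{\Delta_1\cup_\delta\Delta_2}=\langle Z_{\Delta_1},Z_{\Delta_2}\rangle_\delta$. Next, I project the product net onto its components. The projection $(\Delta_1,\Delta_2)\mapsto\Delta_1$ is order-preserving and surjective, hence cofinal, so $Z_{\Delta_1}$ viewed as a net on the product still converges to $Z_{N_1;\delta_1^*\sqcup\delta}$. The same holds for $Z_{\Delta_2}$ and $Z_{N_2;\delta^*\sqcup\delta_2}$. Concretely: given $\epsilon>0$, choose $\bar\Delta_i$ beyond which $\|Z_{\Delta_i}-Z_{N_i;\cdot}\|<\epsilon$; then every $(\Delta_1,\Delta_2)\ge(\bar\Delta_1,\bar\Delta_2)$ satisfies both bounds at once.

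The remaining step is joint continuity of the pairing
\[
\langle\cdot,\cdot\rangle_\delta:\ (\mathcal H_{\delta_1}^*\otimes\mathcal H_\delta)\times(\mathcal H_\delta^*\otimes\mathcal H_{\delta_2})\to\mathcal H_{\delta_1}^*\otimes\mathcal H_{\delta_2},
\]
which uses Factorisation and Involution. I expect this to be the main point needing care, because the spaces may be infinite-dimensional. Identifying both factors with Hilbert--Schmidt operators, as in Sec.~\ref{Sec:TQFT}, the pairing is the composition $B\circ A$. The estimate
\[
\|BA\|_{\rm HS}\le\|B\|_{\rm op}\|A\|_{\rm HS}\le\|B\|_{\rm HS}\|A\|_{\rm HS}
\]
shows it is a bounded bilinear map, hence jointly continuous. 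The usual splitting
\[
\langle a,b\rangle-\langle a_0,b_0\rangle=\langle a-a_0,b\rangle+\langle a_0,b-b_0\rangle,
\]
together with eventual boundedness of the convergent net $b$, then gives convergence of $\langle Z_{\Delta_1},Z_{\Delta_2}\rangle_\delta$ to $\langle Z_{N_1;\delta_1^*\sqcup\delta},Z_{N_2;\delta^*\sqcup\delta_2}\rangle_\delta$.

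Combining these steps with uniqueness of limits yields \eqref{eq:gluing}. The degenerate cases where $\Sigma_1$ or $\Sigma_2$ is empty are covered by the Normalisation axiom.
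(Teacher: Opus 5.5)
Your proposal is correct and follows the paper's proof. You restrict to the subnet of Lemma~\ref{lem:subnet}, apply the discrete Gluing axiom termwise, and pass the limit through the pairing via the product-net argument of Lemma~\ref{lem:contProd}, whose proof you essentially reproduce with the cofinal projections. Your Hilbert--Schmidt estimate $\|BA\|_{\rm HS}\le\|B\|_{\rm HS}\|A\|_{\rm HS}$ adds something: it justifies the joint continuity of the pairing on possibly infinite-dimensional spaces, which the paper simply asserts.
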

The proof is in Appendix~\ref{App:Proofs1} and uses, together with the previous Lemma~\ref{lem:subnet}, Lemma~\ref{lem:contProd}.~The basic idea behind the proof consists of the following two main steps.~First, owing to Lemma~\ref{lem:subnet}, the LHS of \eqref{eq:gluing} can be written as a limit of pairings $\langle Z_{\Delta_1},Z_{\Delta_2}\rangle_{\delta}$ over $\mathcal{T}_{N_1}^{\delta_1^* \sqcup \delta}\times \mathcal{T}_{N_2}^{\delta^* \sqcup \delta_2}$.~Second,  Lemma~\ref{lem:contProd} allows us to take the limit of each argument of the pairing separately.~It is worth noting that the same result can in fact be obtained by refining the triangulation on only one side of the decomposition.~More precisely, fixing a triangulation of $N_1$ and refining only the triangulation of $N_2$ already yields the limit state $Z_{N_1\cup_\Sigma N_2;\delta_1^*\sqcup\delta_2}$; the same holds with the roles of $N_1$ and $N_2$ exchanged.~This follows from Lemma~\ref{lem:subnet}, which ensures that the corresponding families of triangulations define subnets of $Z_\bullet$.~The technical argument is given in Remark~\ref{rmk:loc-ref} of Appendix~\ref{App:Proofs1}. 

This behaviour has a clear conceptual interpretation: refining the triangulation only on one side of the decomposition, i.e.\ capturing more and more degrees of freedom just in a localised region, is sufficient to recover the same limiting amplitude for the glued manifold. As we discuss below, this feature indicates that the resulting theory cannot support propagating local degrees of freedom.

An immediate consequence of the gluing property is that the limit of the cylinder defines a projector.
\begin{corollary}\label{cor:w-proj}
    The state $Z_{\Sigma\times I;\delta^* \sqcup \delta}\in \h_\delta^*\otimes\h_\delta$, identified as an operator in $\mathcal{B}_{\rm HS}(\h_\delta,\h_\delta)$, is a projector
    \begin{equation}
    \label{eq:w-proj}
      \langle Z_{\Sigma\times I;\delta^* \sqcup \delta} , Z_{\Sigma\times I;\delta^* \sqcup \delta} \rangle_\delta = Z_{\Sigma\times I;\delta^* \sqcup \delta}.
    \end{equation}
\end{corollary}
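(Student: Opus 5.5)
The plan is to apply Proposition~\ref{prop:w-gluing} to the decomposition of the cylinder into two cylinders. Fix a collar parameter and write $\Sigma\times I \simeq (\Sigma\times[0,\tfrac12])\cup_{\Sigma}(\Sigma\times[\tfrac12,1])$. Both pieces are again diffeomorphic to $\Sigma\times I$, relative to the boundary identification, with boundaries $\Sigma^*\sqcup\Sigma$. We set $N_1=\Sigma\times[0,\tfrac12]$ and $N_2=\Sigma\times[\tfrac12,1]$, and choose the triangulations $\delta_1=\delta_2=\delta$ of the outer boundary components. For the gluing interface we also choose the \emph{same} triangulation $\delta$ of the middle copy of $\Sigma$. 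Proposition~\ref{prop:w-gluing} allows any triangulation of $\Sigma$ there, and this choice is what makes the two factors equal to the original state.

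Proposition~\ref{prop:w-gluing} then gives
\[
Z_{\Sigma\times I;\delta^*\sqcup\delta}=\langle Z_{N_1;\delta^*\sqcup\delta},Z_{N_2;\delta^*\sqcup\delta}\rangle_\delta .
\]
Next I would argue that $Z_{N_i;\delta^*\sqcup\delta}=Z_{\Sigma\times I;\delta^*\sqcup\delta}$ for $i=1,2$. The limit $Z_{M;\delta}$ is defined as a limit over $\mathcal T_M^\delta$, and this set depends only on the homeomorphism class of $M$ with its boundary identification. More precisely, a PL homeomorphism $N_i\to\Sigma\times I$ restricting to the identity on the boundary components induces an order-preserving bijection $\mathcal T_{N_i}^{\delta^*\sqcup\delta}\to\mathcal T_{\Sigma\times I}^{\delta^*\sqcup\delta}$. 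It preserves $N_d$, and it preserves the amplitudes, because $Z(\Delta)$ depends only on the abstract simplicial complex. Hence the two nets coincide and so do their limits.

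Finally, I would translate the pairing into operator language. Under the identification $\mathcal H_\delta^*\otimes\mathcal H_\delta\simeq\mathcal B_{\rm HS}(\mathcal H_\delta,\mathcal H_\delta)$ recalled in Sec.~\ref{Sec:TQFT}, the pairing over the middle $\delta$ is composition of the two operators. The displayed identity is then exactly $P\circ P=P$ with $P=Z_{\Sigma\times I;\delta^*\sqcup\delta}$, i.e.\ \eqref{eq:w-proj}.

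The main obstacle is the orientation and boundary bookkeeping. We must check that the interface appears as $\delta$ in $\partial N_1$ and as $\delta^*$ in $\partial N_2$, as Proposition~\ref{prop:w-gluing} requires, and that the diffeomorphism $N_i\simeq\Sigma\times I$ respects which end is $\Sigma^*$ and which is $\Sigma$. Both follow from the product orientation on $\Sigma\times I$, which I would verify carefully. The proof only establishes idempotency, not self-adjointness. If an orthogonal projector is intended, I would additionally use the Dagger axiom: reversing the orientation of $\Sigma\times I$ is realised by the flip $t\mapsto 1-t$, which gives $Z(\Delta^*)=R(Z(\Delta))$ net-wise and hence $P^\dagger=P$ in the limit, by continuity of the Riesz map.
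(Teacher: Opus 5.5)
Your proposal is correct and is essentially the paper's proof. The paper applies Proposition~\ref{prop:w-gluing} with $N_1=N_2=\Sigma\times I$ and interface triangulation $\delta$, then uses $(\Sigma\times I)\cup_\Sigma(\Sigma\times I)\simeq\Sigma\times I$ to identify the glued limit with $Z_{\Sigma\times I;\delta^*\sqcup\delta}$; this is your splitting of the cylinder read in the opposite direction. Your explicit argument that the limit depends only on the PL-homeomorphism class spells out a step the paper leaves implicit, and your self-adjointness remark via the Dagger axiom goes beyond the stated idempotency and is not needed.
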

\noindent
To prove that $Z_{\Sigma\times I;\delta^* \sqcup \delta}$ is a projection, 
we use the gluing property~\eqref{eq:gluing} with 
$N_1 = N_2 = \Sigma\times I$:
\begin{equation}
Z_{(\Sigma\times I)\cup_\Sigma(\Sigma\times I);\delta^* \sqcup \delta}
=
\big\langle 
Z_{\Sigma\times I;\delta^* \sqcup \delta},\,
Z_{\Sigma\times I;\delta^* \sqcup \delta}
\big\rangle_\delta.    
\end{equation}
Next, observe that $(\Sigma\times I)\cup_\Sigma (\Sigma\times I) \simeq \Sigma\times I$, which implies
\begin{equation}
  Z_{(\Sigma\times I)\cup_\Sigma(\Sigma\times I);\delta^* \sqcup \delta}
= 
Z_{\Sigma\times I;\delta^* \sqcup \delta}.  
\end{equation}
Combining the two equalities yields the desired formula. We now prove the last property, already stated in~\cite{Hellmann_2011}, which is pivotal for the physical interpretation.
\begin{proposition}
    \label{prop:w-range} The ranges of the operators $Z_{\Sigma\times I;\delta^* \sqcup \delta}:\mathcal{H}_{\delta}\to\mathcal{H}_{\delta}$ associated to different triangulations $\delta$ of $\Sigma$ are all isomorphic.~That is,  
   \begin{equation}
   {\rm Ran}(Z_{\Sigma\times I;\delta^* \sqcup \delta})\simeq {\rm Ran}(Z_{\Sigma\times I;\delta'^* \sqcup \delta'})\;,
   \end{equation}
   for every pair $\delta,\delta'$ of triangulations of $\Sigma$.
\end{proposition}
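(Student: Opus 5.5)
The plan is to build explicit maps between the two ranges from "mixed" cylinder amplitudes, and then use the gluing property of Proposition~\ref{prop:w-gluing} to show that these maps are mutually inverse on the ranges. For two triangulations $\delta,\delta'$ of $\Sigma$, write
\[
P_\delta \coloneqq Z_{\Sigma\times I;\delta^*\sqcup\delta},\qquad P_{\delta'} \coloneqq Z_{\Sigma\times I;\delta'^*\sqcup\delta'}.
\]
By Corollary~\ref{cor:w-proj} both are projectors. The key ingredient is that Lemma~\ref{lem:nonemptyT} guarantees $\mathcal{T}_{\Sigma\times I}^{\delta^*\sqcup\delta'}$ is non-empty, because $\delta^*\sqcup\delta'$ triangulates $\partial(\Sigma\times I)$. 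Under the standing assumption that all limits exist, we can therefore define
\[
A \coloneqq Z_{\Sigma\times I;\delta^*\sqcup\delta'}\in \Hil_\delta^*\otimes\Hil_{\delta'}, \qquad B \coloneqq Z_{\Sigma\times I;\delta'^*\sqcup\delta}\in\Hil_{\delta'}^*\otimes\Hil_\delta .
\]
We regard these as Hilbert--Schmidt operators $A:\Hil_\delta\to\Hil_{\delta'}$ and $B:\Hil_{\delta'}\to\Hil_\delta$.

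Next comes the algebraic step. We apply Proposition~\ref{prop:w-gluing} with $N_1=N_2=\Sigma\times I$, using $(\Sigma\times I)\cup_\Sigma(\Sigma\times I)\simeq\Sigma\times I$ and choosing the boundary and interface triangulations appropriately. Reading pairings as compositions, as in Section~\ref{Sec:TQFT}, this yields
\[
BA = P_\delta,\qquad AB=P_{\delta'},\qquad AP_\delta = A = P_{\delta'}A,\qquad BP_{\delta'}=B=P_\delta B .
\]
For instance, $BA=P_\delta$ is the gluing of the pieces with boundaries $\delta^*\sqcup\delta'$ and $\delta'^*\sqcup\delta$ along $\delta'$. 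The relation $AP_\delta=A$ comes from gluing $\delta^*\sqcup\delta$ with $\delta^*\sqcup\delta'$ along $\delta$. One must take care with the convention for which factor acts first, but each identity is just one instance of \eqref{eq:gluing}.

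From these identities the conclusion follows. The relation $P_{\delta'}A=A$ shows that $A$ maps $\Hil_\delta$, and in particular ${\rm Ran}(P_\delta)$, into ${\rm Ran}(P_{\delta'})$; similarly $B$ maps ${\rm Ran}(P_{\delta'})$ into ${\rm Ran}(P_\delta)$. For $v\in{\rm Ran}(P_\delta)$ we have $BAv=P_\delta v=v$, and for $w\in{\rm Ran}(P_{\delta'})$ we have $ABw=w$. Hence $A|_{{\rm Ran}P_\delta}$ is a linear bijection onto ${\rm Ran}(P_{\delta'})$ with inverse $B|_{{\rm Ran}P_{\delta'}}$. Both are bounded, being Hilbert--Schmidt, so this is an isomorphism of Hilbert spaces in the topological-linear sense. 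If unitarity is wanted, one can additionally use the Dagger axiom together with orientation reversal of the cylinder to get $B=A^\dagger$. Then $A^\dagger A=P_\delta$ makes $A$ an isometry on ${\rm Ran}(P_\delta)$.

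I expect the main obstacle to be justifying the mixed-boundary gluing identities, rather than the linear algebra. Proposition~\ref{prop:w-gluing} is stated for arbitrary boundary triangulations, so the issue is only that the diffeomorphism $(\Sigma\times I)\cup_\Sigma(\Sigma\times I)\simeq\Sigma\times I$ must preserve the boundary identification. Only then does the limit over $\mathcal{T}_{\Sigma\times I}^{\delta^*\sqcup\delta'}$ coincide with the limit for the glued manifold. I would handle this by noting that the limit depends only on the set of triangulations with the given boundary, and this set is invariant under PL homeomorphisms that restrict to the identity on the boundary. A collar argument provides such a homeomorphism.
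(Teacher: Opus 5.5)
Your proposal is correct and follows essentially the same route as the paper. The paper also introduces the mixed cylinder amplitudes $F=Z_{\Sigma\times I;\delta^*\sqcup\delta'}$ and $H=Z_{\Sigma\times I;\delta'^*\sqcup\delta}$, obtains $HF=P$, $FH=P'$ and $FP=P'F$ from Proposition~\ref{prop:w-gluing}, and then checks injectivity and surjectivity of $F$ on ${\rm Ran}(P)$ exactly as you do. Your extra points are sound refinements of details the paper leaves implicit: non-emptiness via Lemma~\ref{lem:nonemptyT}, the explicit check that the mixed maps send each range into the other, boundedness, the boundary-identification remark, and the optional unitarity via the Dagger axiom.
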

The proof is in Appendix~\ref{App:Proofs1}, and it is a simple exercise of linear algebra.

\smallskip

The significance of Proposition~\ref{prop:w-range} is that, combined with the gluing property and the cylinder projector, it implies that any spin foam model admitting such a limit necessarily defines a TQFT. Indeed, one may define the Hilbert space by
\[
\mathcal{H}_\Sigma \coloneqq \mathrm{Ran}(Z_{\Sigma \times I;\,\delta^* \sqcup \delta}),
\]
which is independent of the choice of triangulation $\delta$.~Furthermore, by Proposition~\ref{prop:w-gluing}, the state $Z_{M;\delta}$ always lies in $\mathcal{H}_\Sigma \coloneqq\mathrm{Ran}(Z_{\Sigma \times I;\,\delta^* \sqcup \delta})$.~Moreover, the states $Z_{M;\delta}$ associated with different boundary triangulations can be identified via the isomorphisms induced by $Z_{\Sigma \times I;\,\delta^* \sqcup \delta'}$, and are thus \emph{de facto} independent of the choice of boundary discretisation. In particular, the gluing property of Proposition~\ref{prop:w-gluing} becomes precisely the gluing axiom of a TQFT.~Therefore, together with the identifications of Proposition~\ref{prop:w-range}, the theory satisfies Atiyah's axioms and thereby defines a TQFT.~The physical intuition behind this is that not only, as discussed after Proposition~\ref{prop:w-gluing}, the amplitudes can be glued across any given intermediate triangulation, but also the resulting amplitude is independent of the chosen triangulation.~The theory is thus insensitive to the truncation of degrees of freedom on the intermediate hypersurface.~This independence from the intermediate triangulation implies that no local information is carried across the gluing hypersurface and, consequently, the theory cannot encode local propagating degrees of freedom.

We have thus established that, if a spin foam model admits a continuum limit with respect to the weak order, then it necessarily defines a TQFT.~Clearly, this cannot be the case for a model aiming to describe four-dimensional quantum gravity.~As illustrated in Sec.~\ref{Sec:examplePRTVO} below, this is instead the case for 3d quantum gravity, which is widely known to be successfully described by a topological spin foam model.~In this respect, we also notice that if, in particular, the net $Z_\bullet \colon \mathcal{T}_M^\delta \to \mathcal{H}_\delta$ is constant, namely independent of the choice of bulk triangulation, then the continuum limit exists trivially.~Since any two triangulations in $\mathcal{T}_M^\delta$ are related by a finite sequence of Pachner moves, invariance under Pachner moves implies constancy of the net, thus yielding the following corollary.
\begin{corollary}\label{cor:PachnerInv}
    If a $d$-dimensional spin foam model is invariant under $d$-dimensional Pachner moves, it is a Topological Quantum Field Theory.
\end{corollary}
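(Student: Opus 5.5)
The plan is to reduce the statement to the result already established for the weak order: a spin foam model whose nets $Z_\bullet\colon\mathcal{T}_M^\delta\to\mathcal{H}_\delta$ converge for every $M$ and $\delta$ defines a TQFT, by Propositions~\ref{prop:w-gluing} and~\ref{prop:w-range} and the discussion after them. So it suffices to show that Pachner invariance makes every such net constant, hence trivially convergent.

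\textbf{Step 1 (constancy).} I fix $M$ and a boundary triangulation $\delta$ of $\partial M$, and take $\Delta,\Delta'\in\mathcal{T}_M^\delta$. Both triangulate the same PL manifold and induce the same boundary triangulation $\delta$. By Pachner's theorem, in the version for manifolds with boundary that fixes the boundary triangulation, they are related by a finite sequence of bulk (interior) bistellar moves. The invariance hypothesis says each move leaves $Z$ unchanged, so $Z(\Delta)=Z(\Delta')$.

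This is the step I expect to be the main obstacle. Two points need care.
\begin{itemize}
\item The classical Pachner theorem concerns closed PL manifolds. For a fixed boundary I would appeal to the relative version: Pachner's theorem for manifolds with boundary, combined with the observation that boundary moves can be avoided when $\delta$ is fixed. If needed, one can cone off or double along $\delta$ to reduce to the closed case.
\item ``Homeomorphic to $M$'' must be understood PL, which is harmless in $d\le4$ by the smoothability remark earlier in the paper.
\end{itemize}
I would state explicitly that ``invariant under Pachner moves'' means invariance under moves supported in the interior, compatible with the identification of $\partial\Delta$ with $\delta$.

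\textbf{Step 2 (limit).} A constant net in the Hausdorff space $\mathcal{H}_\delta$ converges to its constant value. Hence $Z_{M;\delta}=Z(\Delta)$ for any $\Delta\in\mathcal{T}_M^\delta$, and this limit exists for all $M$ and $\delta$. The standing assumption of Sec.~\ref{Sec:weaklimTQFT} is therefore satisfied.

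\textbf{Step 3 (conclusion).} I then invoke Proposition~\ref{prop:w-gluing}, Corollary~\ref{cor:w-proj} and Proposition~\ref{prop:w-range}. These show that:
\begin{itemize}
\item the spaces $\mathcal{H}_\Sigma\coloneqq\mathrm{Ran}(Z_{\Sigma\times I;\delta^*\sqcup\delta})$ are independent of $\delta$;
\item each $Z_{M;\delta}$ lies in this range and is identified across boundary triangulations via the cylinder maps;
\item gluing holds in Atiyah's form, and the cylinder acts as the identity on $\mathcal{H}_\Sigma$.
\end{itemize}
The remaining axioms (Involution, Factorisation, Dagger, Normalisation) are inherited from Definition~\ref{def:truncSF} by restricting to the ranges. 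Here I use that the cylinder projector is compatible with orientation reversal via the Dagger axiom, so that it is an orthogonal projector. This yields a TQFT, and as a by-product $Z(M)$ is computed by any single triangulation.
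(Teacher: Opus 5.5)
Your proposal is correct and follows the paper's own argument. Pachner invariance, together with the fact that any two elements of $\mathcal{T}_M^\delta$ are related by boundary-fixing Pachner moves, makes the net $Z_\bullet$ constant, so the weak-order limit exists trivially; Propositions~\ref{prop:w-gluing} and~\ref{prop:w-range} with Corollary~\ref{cor:w-proj} then yield the TQFT. Your extra care (orthogonality of the cylinder projector via the Dagger axiom, the relative Pachner theorem) only makes explicit what the paper asserts, though your fallback of coning off or doubling along $\delta$ would not by itself produce interior moves, so the relative version of Pachner's theorem is what the argument actually relies on.
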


\subsection{Example: The Ponzano-Regge Model}\label{Sec:examplePRTVO}

We now illustrate the general construction of Section~\ref{Sec:Continuum1} in a concrete setting. In particular, we show that the Ponzano-Regge model~\cite{Ponzano_Regge_1969}, in its Turaev-Viro regularised form~\cite{Turaev_Viro_1992}, provides an explicit example of a spin foam model for which the continuum limit with respect to the weak order exists and leads to a TQFT.

In this example, the transition amplitudes are independent of the bulk triangulation.~As a consequence, the net of amplitudes over $\mathcal T_M^\delta$ is constant, and the limit~\eqref{eq:Z_Md} exists trivially.~The structure derived abstractly in Sec.~\ref{Sec:Continuum1} is therefore realised explicitly.

We first introduce the model; the reader already familiar with it can skip to Sec.~\ref{sec:invariants}, where we introduce the form of the transition amplitudes relevant for the example and fix the notation used in the subsequent discussion, which follows the presentation of~\cite{Turaev_Viro_1992}.

\subsubsection{The Model}

The Ponzano-Regge model is a definition of a quantum gravity theory on a triangulated (non-oriented) three-dimensional manifold.

The model assigns an irreducible representation of $SU(2)$ to each edge of the triangulation, labelled by a spin $j\in \frac{1}{2}\mathbb{Z}$, interpreted as specifying the metric length of that edge. There is a quantum amplitude for each assignment of spins to the triangulation:
\begin{equation}
\label{eq:PR_weight}
W(j_1, j_2, \dots) =\prod_{\text{interior edges}}(-1)^{2j}(2j+1)\prod_{\text{interior triangles}}(-1)^{j_1+j_2+j_3} \prod_{\text{tetrahedra}}\left\{\begin{matrix}j_1 & j_2 & j_3 \\ j_4 & j_5 & j_6 \end{matrix}\right\}
\end{equation}
where a $6j$-symbol is assigned to each \textit{colored} tetrahedron, namely a tetrahedron together with an assignment of spins to its edges. This quantity is defined without choosing an orientation for the tetrahedron. The weight $W(j_1, j_2, \dots)$ is non-zero only if the following admissibility conditions are satisfied: $j_1+j_2+j_3 \in \mathbb{Z}$ and $j_1+j_2-j_3 \ge 0$ up to permutations. These conditions ensure that each triangle carries a non-degenerate Euclidean geometry.

The partition function is given by the sum over all possible spin assignments (colourings) in the interior of the triangulated manifold:
\begin{equation}
\label{eq:PR}
Z=\sum_{j_1, j_2, \dots}W(j_1, j_2, \dots).
\end{equation}

In the semiclassical (large-spin) limit, the $6j$ symbol relates to the Regge action for a single tetrahedron. The weight $W(j_1, j_2, \dots)$ can therefore be interpreted as a discrete path-integral weight for three-dimensional Einstein-Hilbert gravity written in Regge form. The model~\eqref{eq:PR} can thus be interpreted as a discrete model for three-dimensional Einstein gravity, where the sum runs over metrics in the interior of the manifold subject to fixed boundary data.

However, this is not the full story. First, the set of irreducible representations is infinite, so the sum may diverge and a regularisation is required to make it well defined. Second, the partition function should be independent of the triangulation of the manifold: if $M$ is closed, $Z$ should depend only on the topology of $M$; if $M$ has a boundary, $Z$ should depend only on the topology of $M$ and the boundary data (which are fixed and constitute the physical boundary conditions of the problem). Note that these requirements correspond precisely to the existence of a well-defined amplitude $Z_{M;\delta}$ independent of the bulk triangulation, as assumed in Sec.~\ref{Sec:Continuum1}.

Several regularisation strategies have been proposed. Ponzano and Regge used a simple cutoff procedure, namely an upper bound on the spin variables, which is then removed by a suitable rescaling. However, the resulting limit is not well defined in every case (see e.g.~\cite{Barrett_Naish-Guzman_2008}), and therefore this simple cutoff regularisation does not in general lead to invariance under changes of triangulation.

Another strategy is to view the Ponzano-Regge model as a limit of the Turaev-Viro model, where $SU(2)$ is replaced by its quantum deformation $U_q(\mathfrak{sl}(2,\C))$. When the deformation parameter $q$ is a root of unity, only finitely many irreducible representations appear and the partition function is always well defined. The regularisation of the Ponzano-Regge model then consists in taking the limit $q\to 1$.\footnote{Also, in~\cite{Barrett_Naish-Guzman_2008} Barrett and Naish-Guzman gave a reformulation of the Ponzano-Regge model (with observables) in terms of $SU(2)$ group variables on the dual edges, where regularisation is defined by suitably removing excess delta functions. They provide a topological criterion for the partition function to be well defined; when satisfied, the partition function can be expressed in terms of a topological invariant, and so is independent on the triangulation (as well as on the regularisation procedure).}

\subsubsection{State Sum Invariants}\label{sec:invariants}

For the purpose of illustrating our general results, we work with the Turaev-Viro regularisation of the Ponzano-Regge model and therefore consider a finite set of spins $I$. Given a triangulation $\Delta$ of the manifold, a colouring is a map assigning to each edge of $\Delta$ an element of $I$. For an edge $e$, the weight associated with its coloured version $c(e)$ is denoted by $w_{c(e)}$; for a tetrahedron $\tau$, the $6j$ symbol associated with its colouring $c(\tau)$ is denoted by $|6j|_{c(\tau)}$.

Let $M$ be a closed triangulated three-dimensional manifold, and let $v$ denote the number of its vertices. The state sum for a closed manifold is defined by
\begin{equation}
\label{eq:TV_closed}
Z_M = \sum_{c(\Delta)} w^{-2v} \prod_{e} w_{c(e)}^2 \prod_{\tau} |6j|_{c(\tau)} .
\end{equation}
In contrast to Eq.~\eqref{eq:PR_weight}, there is no weight associated to triangles, since this contribution is absorbed into the definition of $|6j|$.

Turaev and Viro proved that the quantity $Z_M$ does not depend on the choice of triangulation of $M$. In the language of Sec.~\ref{Sec:Continuum1}, this means that the net of amplitudes over $\mathcal T_M^\delta$ is constant, so that the continuum limit~\eqref{eq:Z_Md} exists trivially.

Let now $M$ be a compact triangulated three-dimensional manifold with boundary triangulation $\delta$. Let $v$ denote the number of vertices of $M$, and let $b$ denote the number of vertices lying on the boundary. The colouring of the boundary triangulation $\delta$ is denoted by $\alpha$. The state sum for a manifold with boundary is defined by
\begin{equation}
\label{eq:TV_open}
Z_{M;\delta}(\alpha) =
\sum_{c}
w^{-2v+b}
\prod_{\text{boundary } e} w_{c(e)}
\prod_{\text{bulk } e} w_{c(e)}^2
\prod_{\tau} |6j|_{c(\tau)},
\end{equation}
where the sum is taken over admissible colourings extending the boundary colouring $\alpha$.

Turaev and Viro proved that, for any compact three-manifold $M$ with triangulated boundary and any admissible colouring $\alpha$ of $\delta$, all extensions of the boundary triangulation to the bulk yield the same value of $Z_{M;\delta}(\alpha)$. This provides an explicit realisation of the assumption made in Sec.~\ref{Sec:Continuum1} that $Z_{M;\delta}$ is independent of the bulk triangulation.

For $\partial M=\Sigma_+ \sqcup \Sigma_-$ with $\delta=\delta_+ \sqcup \delta_-$ we introduce the notation $Z_{M;\delta_+,\delta_-}(\alpha,\beta)\coloneqq Z_{M;\delta_+ \sqcup \delta_-}(\alpha \sqcup\beta)$, where $\alpha$ is a colouring on $\delta_+$ and $\beta$ is a colouring on $\delta_-$.

\subsubsection{Functional Nature of the Invariants}

Let $\Sigma$ be a closed surface and $\delta$ a triangulation of $\Sigma$. 
One defines a complex vector space $C(\Sigma,\delta)$ generated by admissible colourings of $(\Sigma,\delta)$ and equips it with the inner product for which admissible colourings form an orthonormal basis. 
Upon completion with respect to the induced norm, this defines a Hilbert space $\mathcal H_{\delta}$.
In the case $\Sigma=\varnothing$, one sets $\mathcal H_\varnothing=\mathbb{C}$. This construction realises concretely the assignment $\delta\mapsto\mathcal H_\delta$ of the definition of a spin foam model given in Section~\ref{Sec:truncatedaxioms}. 
We now describe how the amplitudes $Z_{M;\delta}$ naturally define linear maps between these boundary Hilbert spaces. 
It is convenient to express this structure in the language of cobordisms, which provides a compact formulation of the gluing property and of the cylinder projector discussed in Sec.~\ref{Sec:weaklimTQFT}.

A cobordism $W=\{M;(\Sigma_+,\delta_+),(\Sigma_-,\delta_-)\}$ between $(\Sigma_+,\delta_+)$ and $(\Sigma_-,\delta_-)$ consists of a compact $3$-manifold $M$ together with embeddings of $(\Sigma_\pm,\delta_\pm)$ into $\partial M$. 
To such a cobordism one associates a $\mathbb{C}$-linear homomorphism
\begin{equation}
\Phi_W : \mathcal H_{\Sigma_+,\delta_+} \to \mathcal H_{\Sigma_-,\delta_-},
\end{equation}
defined by
\begin{equation}
\label{eq:PhiQ}
\Phi_W(\alpha)=\sum_\beta Z_{M;\delta_+,\delta_-}(\alpha,\beta)\,\beta.
\end{equation}

A closed manifold $M$ can be viewed as a cobordism $\varnothing \to \varnothing$, in which case the corresponding map $\Phi_W:\mathbb{C}\to \mathbb{C}$ is given by multiplication by the invariant $Z_M$.

Because the state sum $Z_{M;\delta}$ is independent of the extension of the boundary triangulation to the bulk, the homomorphism $\Phi_W$ depends only on the boundary data and is therefore well defined.

We now show that these maps are compatible with composition, which reproduces the gluing property~\eqref{eq:gluing}. 
Consider two cobordisms $W_1=\{M_1;(\Sigma_1,\delta_1),(\Sigma_2,\delta_2)\}$ and $W_2=\{M_2;(\Sigma_2,\delta_2),(\Sigma_3,\delta_3)\}$. Their composition is the cobordism $W_2\circ W_1=\{M_1\cup M_2;(\Sigma_1,\delta_1),(\Sigma_3,\delta_3)\}$ obtained by gluing along $\Sigma_2$. 
The associated maps satisfy, as follows from the definition,
\begin{equation}
\Phi_{W_2\circ W_1}=\Phi_{W_2}\circ \Phi_{W_1},
\end{equation}
which is precisely the gluing property~\eqref{eq:gluing}.

Consider next the unit cobordism $\mathrm{id}_{\Sigma,\delta}=\{\Sigma\times[0,1];(\Sigma,\delta),(\Sigma,\delta)\}$. From compatibility with composition and the identity 
$\mathrm{id}_{\Sigma,\delta}\circ \mathrm{id}_{\Sigma,\delta}=\mathrm{id}_{\Sigma,\delta}$ 
it follows that
\begin{equation}
\Phi_{\mathrm{id}_{\Sigma,\delta}}\circ \Phi_{\mathrm{id}_{\Sigma,\delta}}
=\Phi_{\mathrm{id}_{\Sigma,\delta}},
\end{equation}
so that $\Phi_{\mathrm{id}_{\Sigma,\delta}}$ is an idempotent endomorphism of $\mathcal H_{\delta}$. This reproduces the cylinder projector property~\eqref{eq:w-proj}. The assignments
\begin{equation}
(\Sigma,\delta)\mapsto\mathcal H_{\delta},
\qquad
W\mapsto\Phi_W
\end{equation}
therefore respect composition. 
However, for the unit cobordism one may have
\begin{equation}
\Phi_{\mathrm{id}_{\Sigma,\delta}}\neq \mathbbm{1}_{\mathcal H_{\delta}},
\end{equation}
so that these assignments define only a semifunctor.~To restore functoriality, we first analyse the kernel of the cylinder operator. Elements in the kernel of $\Phi_{\mathrm{id}_{\Sigma,\delta}}$ are annihilated by every cobordism map. Indeed, for any cobordism $W$ one has
\begin{equation}
\Phi_W\circ \Phi_{\mathrm{id}_{\Sigma,\delta}}=\Phi_W,
\end{equation}
so that $\Phi_W(\alpha)=0$ whenever $\Phi_{\mathrm{id}_{\Sigma,\delta}}(\alpha)=0$.

One therefore defines the quotient
\begin{equation}
Q_{\delta}:=\mathcal H_{\delta}/\ker(\Phi_{\mathrm{id}_{\Sigma,\delta}}),
\end{equation}
which coincides with the image of the idempotent $\Phi_{\mathrm{id}_{\Sigma,\delta}}$. 
On $Q_{\delta}$ the cylinder acts as the identity.

Each homomorphism $\Phi_W$ then induces a well-defined $\mathbb{C}$-linear map
\begin{equation}
\Psi_W:Q_{\delta_+}\to Q_{\delta_-},
\end{equation}
since $\Phi_W$ maps $\ker(\Phi_{\mathrm{id}_{\Sigma_+,\delta_+}})$ into $\ker(\Phi_{\mathrm{id}_{\Sigma_-,\delta_-}})$.
The assignments
\begin{equation}
(\Sigma,\delta)\mapsto Q_{\delta},
\qquad
W\mapsto\Psi_W
\end{equation}
satisfy
\begin{equation}
\Psi_{W_2\circ W_1}=\Psi_{W_2}\circ \Psi_{W_1},
\qquad
\Psi_{\mathrm{id}_{\Sigma,\delta}}=\mathbbm{1}_{Q_{\delta}},
\end{equation}
and therefore define a functor from the category of cobordisms of triangulated surfaces to the category of $\mathbb{C}$-modules.

Lastly, although $Q_{\delta}$ is defined using a triangulation $\delta$, it depends only on the topology of $\Sigma$. 
Given two triangulations $\delta_0$ and $\delta_1$ of $\Sigma$, one can choose a triangulation of the cylinder $\Sigma\times[0,1]$ restricting to $\delta_0$ and $\delta_1$ on the two boundary components. 
This defines a cobordism
\begin{equation}
W=\{\Sigma\times[0,1];(\Sigma,\delta_0),(\Sigma,\delta_1)\}
\end{equation}
and hence an induced map
\begin{equation}
\Psi_W:Q_{\delta_0}\to Q_{\delta_1}.
\end{equation}
By triangulation invariance this map is independent of the chosen interpolation and is an isomorphism. 
Thus, all $Q_{\delta}$ are canonically isomorphic and are identified with a single space $\mathcal H_\Sigma$.\footnote{This space has been shown in~\cite{Ooguri:1991ib, Ooguri_1992a} to be isomorphic to the space of gauge-invariant functions on the moduli space of flat $SU(2)$ connections over the 2-dimensional boundary surface, thus providing the equivalence of the Turae-Viro state sum model with the quantisation of 3d gravity in the continuum by Witten~\cite{Witten_1988b}.} 
This provides us with a concrete realisation of Proposition~\ref{prop:w-range}.

One therefore obtains a functor
\begin{equation}
\Sigma\longmapsto\mathcal H_\Sigma,
\qquad
W\longmapsto\Psi_W,
\end{equation}
from the category of cobordisms of topological surfaces to the category of $\mathbb{C}$-modules. As such, the theory so obtained is a TQFT (cfr.~Sec.~\ref{Sec:TQFT} and corollary \ref{cor:PachnerInv} in Sec.~\ref{Sec:weaklimTQFT}).

\section{Refinement and Continuum Limit}\label{Sec:Continuum2}

The limit procedure defined in Sec.~\ref{Sec:Continuum1} led us to find a TQFT in the continuum.~The reason for this is that a refinement procedure based on the weak order~\eqref{eq:weakorder}, while keeping the boundary triangulation fixed, forces the theory to be insensitive to the local truncation of degrees of freedom on the boundary hypersuface which we glue along (cfr.~Prop.~\ref{prop:w-gluing} and Cor.~\ref{cor:w-proj}).~In this Section, we aim to provide a  notion of refinement that is more closely aligned to the physical intuition of capturing more degrees of freedom in a sensitive way.~In particular, we connect it with the mathematical concept of \textit{stellar subdivision}, as already suggested e.g.~in \cite{Dittrich_Geiller_2015}.

\subsection{Subdivision and Inductive Limit}\label{Sec:SubdivIndLim}

Pachner moves are not the only option to modify a triangulation without changing the underlying manifold. When a triangulation is, in some suitable sense (cfr.~Def.~\ref{def:subdiv}), finer than a given one, we refer to it as a \textit{subdivision}.~There are many ways to produce subdivisions. When a subdivision is obtained through a sequence of refining Alexander moves, it is called a \textit{stellar subdivision} (see App.~\ref{App:Triang} for details).~This will be our notion of refinement.~Notice that, in contrast to Pachner moves, Alexander moves can alter the boundary triangulation.~Furthermore, they induce a partial order $\preceq$ on the set $\mathcal{T}_M$ of triangulations diffeomorphic to a manifold $M$, thereby turning it into a partially ordered, directed set (for a proof, see Prop.~\ref{refin:poset}).\footnote{We shall sometimes refer to the partial order $\preceq$ as the \emph{strong order} to distinguish it from the weak order of Sec.~\ref{Sec:weakorder}.}

Since the Alexander moves change the boundary triangulation, the map $Z$ does not provide a suitable definition for a net.~Indeed, the map $Z_{\bullet}:\mathcal{T}_M\to ?$ given by $Z_\Delta=Z(\Delta)$ jumps from a Hilbert space to another, so that the target space of $Z_{\bullet}$ is not defined.~To sidestep this issue, we shall consider the inductive limit of a suitable inductive family of boundary Hilbert spaces.~To this aim, let us first recall the definition of inductive limit. 
\begin{definition}[Inductive limit]\label{def:inductive}
    A inductive family $(\{A_i\}_{i\in I},\{\iota_{ji}\}_{j\geq i\in I})$ is a collection of algebraic spaces $A_i$, with $i$ index in some directed poset $I$, together with a collection of injective maps $\iota_{ji}:A_i\to A_j,\ j\geq i\in I$, such that
    \begin{enumerate}
        \item $\iota_{ii}=\mathbbm{1}_{A_i}$,
        \item $\iota_{kj}\circ\iota_{ji}=\iota_{ki},\ i\leq j\leq k\in I$.
    \end{enumerate}
    The inductive limit is the space defined by
    \[A^{\rm ind}=\lim_{\longrightarrow}A_i=\bigsqcup_{i\in I}A_i/\sim,\]
    where $(x_i,i)\sim (y_j,j)$, for $x_i\in A_i$, $y_j\in A_j$, iff there exists $k\geq i,j$ such that $\iota_{ki}(x_{i})=\iota_{kj}(y_{j})$. With an inductive limit, we obtain also canonical injections $\iota_i:A_i\to A^{\rm ind}$ given by $x_i\mapsto [x_i,i]$, with $[x_i,i]$ the equivalence class of $x_i$.
\end{definition}
The inductive family of our interest is $(\{\mathcal H\}_{\delta\in\mathcal T_{\Sigma}},\{\iota_{\delta'\delta}\}_{\delta'\succeq\delta})$, consisting of the collection of boundary Hilbert spaces $\Hil_\delta$ with $\delta\in\mathcal{T}_{\Sigma}$, for a fixed connected closed $(d-1)$-dimensional manifold $\Sigma$, together with isometric linear injective maps $\iota_{\delta'\delta}:\Hil_{\delta}\to\Hil_{\delta'}$, $\delta'\succeq\delta\in\mathcal{T}_{\Sigma}$.~The inductive limit $\h_{\Sigma}^{\rm ind}=\lim_{\delta\in\mathcal{T}_{\Sigma}}\h_{\delta}$ is a Hilbert space (see Lemma~\ref{rm:lim} in App.~\ref{App:Proofs2} for a proof).~For the boundary spaces given by~\eqref{eq:bdyHkin}, a typical example of the inductive maps is provided by the embeddings of spin networks into bigger graphs by addition of trivial $SU(2)$ representations on the extra links.~In this case, the inductive limit is obtained by imposing an equivalence relation according to which two states are equivalent if they can be related by this identification, or if they are mapped into each other by the group of the automorphisms of the graph.

Before introducing the net associated to each $d$-dimensional manifold $M$, which will be the object of study in the remainder of the section, let us discuss how the properties of the truncated theory given in Sec.~\ref{Sec:truncatedaxioms} can be translated at the level of $\h_{\Sigma}^{\rm ind}$.~First of all, it is easy to show that an inductive limit is defined also in the case in which $\Sigma$ has finite connected components and the limit factorizes into the tensor product of the inductive limits associated to each component:
\begin{proposition}
\label{prop:factor}
    $\h_{\Sigma_1\sqcup\Sigma_2}^{\rm ind}$ and $\h_{\Sigma_1}^{\rm ind}\otimes\h_{\Sigma_2}^{\rm ind}$ are isomorphic as Hilbert spaces.
\end{proposition}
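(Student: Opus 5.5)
The plan is to reduce the statement to two facts. First, the directed set $\mathcal{T}_{\Sigma_1\sqcup\Sigma_2}$ is naturally order-isomorphic to the product $\mathcal{T}_{\Sigma_1}\times\mathcal{T}_{\Sigma_2}$ with the product order. Second, inductive limits of Hilbert spaces commute with (completed) tensor products over product index sets.

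\textbf{Step 1: the index set.} A triangulation $\delta$ of $\Sigma_1\sqcup\Sigma_2$ is a disjoint union $\delta_1\sqcup\delta_2$ with $\delta_i\in\mathcal{T}_{\Sigma_i}$. An Alexander move acts on a single simplex and its star, and stars lie entirely in one connected component. Hence a stellar subdivision $\delta'\succeq\delta$ is exactly a pair of stellar subdivisions $\delta'_i\succeq\delta_i$. So $\delta\mapsto(\delta_1,\delta_2)$ is an isomorphism of directed posets. If $\Sigma_1,\Sigma_2$ are themselves disconnected, the same argument applies component by component. This requires the inductive maps for a disconnected boundary to be defined compatibly with Factorisation, namely $\mathcal{H}_{\delta_1\sqcup\delta_2}=\mathcal{H}_{\delta_1}\otimes\mathcal{H}_{\delta_2}$ and $\iota_{\delta'\delta}=\iota_{\delta_1'\delta_1}\otimes\iota_{\delta_2'\delta_2}$. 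I take this as the definition of the family in the disconnected case, which is what "it is easy to show an inductive limit is defined also in the case of finitely many components" means. The tensor product of isometries is an isometry, so the family is again an isometric inductive family.

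\textbf{Step 2: the algebraic isomorphism.} Following Lemma~\ref{rm:lim}, $\mathcal{H}^{\rm ind}_\Sigma$ is the completion of the pre-Hilbert space $V_\Sigma=\bigcup_\delta\iota_\delta(\mathcal{H}_\delta)$, whose inner product is computed in any common refinement. Define $U$ on algebraic tensors by
\[
U\big(\iota_{\delta_1}(x)\otimes\iota_{\delta_2}(y)\big)=\iota_{\delta_1\sqcup\delta_2}(x\otimes y),
\]
and extend linearly to $V_{\Sigma_1}\odot V_{\Sigma_2}$. There are two checks.

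Well-definedness. If $\iota_{\delta_1}(x)=\iota_{\delta_1'}(x')$, pass to a common refinement. Compatibility of the maps, $\iota_{\delta'\delta}=\iota_{\delta_1'\delta_1}\otimes\iota_{\delta_2'\delta_2}$, then shows the images agree.

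Isometry. Any finite family of algebraic tensors can be lifted to a single level $\delta_1\times\delta_2$, by directedness of the product order. At that level the inner products agree because the Factorisation identification is unitary.

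\textbf{Step 3: density and extension.} The image of $U$ contains $\iota_{\delta_1\sqcup\delta_2}(\mathcal{H}_{\delta_1}\odot\mathcal{H}_{\delta_2})$, which is dense in $\iota_{\delta}(\mathcal{H}_\delta)$. By Step 1 every $\delta$ has this form, so the image is dense in $V_{\Sigma_1\sqcup\Sigma_2}$ and hence in $\mathcal{H}^{\rm ind}_{\Sigma_1\sqcup\Sigma_2}$. Meanwhile $V_{\Sigma_1}\odot V_{\Sigma_2}$ is dense in $\mathcal{H}^{\rm ind}_{\Sigma_1}\otimes\mathcal{H}^{\rm ind}_{\Sigma_2}$. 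So $U$ extends to a unitary, which proves the claim.

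The main obstacle is Step 1, specifically showing that stellar subdivisions of a disjoint union are exactly products of subdivisions of the components, and that the inductive maps factorise accordingly. The rest is routine density bookkeeping. For Step 1 I would argue directly from the locality of Alexander moves in App.~\ref{App:Triang}.
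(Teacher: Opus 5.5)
Your proof is correct and follows the paper's route: the paper first establishes the order isomorphism $\mathcal{T}_{\Sigma_1\sqcup\Sigma_2}\simeq\mathcal{T}_{\Sigma_1}\times\mathcal{T}_{\Sigma_2}$ with tensor-product inductive maps (Lemma~\ref{connIndF}), then defines the same map $[\psi_{\delta_1},\delta_1]\otimes[\phi_{\delta_2},\delta_2]\mapsto[\psi_{\delta_1}\otimes\phi_{\delta_2},\delta_1\sqcup\delta_2]$ on decomposable tensors, shows it is an isometry with dense range, and extends it to a unitary. You are slightly more careful than the paper in two places: you treat $\mathcal{H}^{\rm ind}_\Sigma$ explicitly as the completion of $\bigcup_\delta\iota_\delta(\mathcal{H}_\delta)$, and you justify the poset isomorphism by the locality of Alexander moves, where the paper only asserts that it follows from the components being disconnected.
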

For the details of the construction of the inductive family and the proof of the Prop.~\ref{prop:factor}, see App.~\ref{App:Proofs2}.\\
To take into account the dagger structure of the truncated level, we require the compatibility condition $\iota_{\delta_2^*\delta_1^*}\circ R_{\delta_1}=R_{\delta_2}\circ \iota_{\delta_2\delta_1}$, where $R_\delta$ is the Riesz map $R_\delta:\h_\delta\to \h^*_\delta$.~This allows us to establish the following result, whose proof is given in App.~\ref{App:Proofs2}.
\begin{proposition}
\label{prop:dual}
    $\h_{\Sigma^*}^{\rm ind}= (\h_{\Sigma}^{\rm ind})^*$, and the following diagram
    \begin{equation}
    \label{dualcd}
    \begin{tikzcd} 
& \h_{\delta_1} \arrow{ld}[swap]{\iota_{\delta_1}} \arrow[r, "R_{\delta_1}"] &  \h_{\delta_1}^* \arrow[rd, "\iota_{\delta_1^*}"] & \\ 
      \h_{\Sigma}^{\rm ind} \arrow{rrr}[swap]{R_{\Sigma}}& & & (\h_{\Sigma}^{\rm ind})^*\\
       & \h_{\delta_2} \arrow[from=uu, crossing over, "\iota_{\delta_2\delta_1}" near start]\arrow[lu, "\iota_{\delta_2}"]\arrow{r}[swap]{R_{\delta_2}} &  \h_{\delta_2}^* \arrow{ru}[swap]{\iota_{\delta_2^*}} \arrow[from=uu, crossing over, "\iota_{\delta_2^*\delta_1^*}" near start]&  
    \end{tikzcd}
\end{equation}
commutes.~Furthermore, $\h_{\varnothing}^{\rm ind}\simeq \C$.
\end{proposition}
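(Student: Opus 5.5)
The plan is to work at the level of the inductive families and then pass to completions. By the Involution axiom of Definition~\ref{def:truncSF}, $\h_{\delta^*}=\h_\delta^*$. Reversing orientation is a bijection $\mathcal{T}_\Sigma\to\mathcal{T}_{\Sigma^*}$, $\delta\mapsto\delta^*$, and since Alexander moves do not see orientation, it preserves $\preceq$. Hence the family defining $\h_{\Sigma^*}^{\rm ind}$ is $(\{\h_\delta^*\}_{\delta\in\mathcal{T}_\Sigma},\{\iota_{\delta'^*\delta^*}\})$. The compatibility condition $\iota_{\delta_2^*\delta_1^*}\circ R_{\delta_1}=R_{\delta_2}\circ\iota_{\delta_2\delta_1}$ says that the Riesz maps $R_\delta$ form a morphism of inductive families. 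Each $R_\delta$ is an antilinear isometric bijection, so we obtain a well-defined antilinear isometric bijection between the algebraic inductive limits:
\[
\textstyle\bigsqcup_\delta\h_\delta/\!\sim\;\longrightarrow\;\bigsqcup_\delta\h_\delta^*/\!\sim,\qquad [x,\delta]\mapsto[R_\delta x,\delta^*].
\]
The compatibility condition is exactly what makes this map respect $\sim$, and injectivity of the $\iota$'s together with bijectivity of $R_\delta$ makes it a bijection. By Lemma~\ref{rm:lim}, this map extends by continuity to the completions, giving an antilinear isometric bijection $\h_\Sigma^{\rm ind}\to\h_{\Sigma^*}^{\rm ind}$.

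The main step, and the one I expect to be the real obstacle, is to identify $\h_{\Sigma^*}^{\rm ind}$ with $(\h_\Sigma^{\rm ind})^*$, so that the composed map is the Riesz map $R_\Sigma$. For this I would define a pairing between $\iota_{\delta^*}(\phi)$, $\phi\in\h_\delta^*$, and $\iota_{\delta'}(x)$, $x\in\h_{\delta'}$. Choose a common refinement $\delta''\succeq\delta,\delta'$, which exists by directedness (Prop.~\ref{refin:poset}), and set
\[
\langle \iota_{\delta''^*\delta^*}\phi,\;\iota_{\delta''\delta'}x\rangle_{\delta''}.
\]
Independence of $\delta''$ follows from the compatibility condition together with isometry of the $\iota$'s. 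Writing $\phi=R_\delta y$, we get $\iota_{\delta''^*\delta^*}\phi=R_{\delta''}\iota_{\delta''\delta}y$, so the pairing equals the inner product $(\iota_{\delta''\delta}y,\iota_{\delta''\delta'}x)_{\delta''}$, which is stable under further refinement because the $\iota$'s are isometries. This shows that the algebraic limit of the duals embeds isometrically into $(\h_\Sigma^{\rm ind})^*$, and the embedding coincides with $R_\Sigma$ restricted to the dense subspace $\bigcup_\delta\iota_\delta(\h_\delta)$. Since $R_\Sigma$ is an isometric bijection and the image is dense, taking the completion gives $\h_{\Sigma^*}^{\rm ind}=(\h_\Sigma^{\rm ind})^*$.

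With this identification, commutativity of~\eqref{dualcd} is a matter of checking the faces. The two triangles $\iota_{\delta_2}\circ\iota_{\delta_2\delta_1}=\iota_{\delta_1}$ and its dual version hold by construction of the inductive limit. The back face is the assumed compatibility condition. The top and bottom faces, $\iota_{\delta^*}\circ R_\delta=R_\Sigma\circ\iota_\delta$, are exactly the defining formula above.

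Finally, $\h_\varnothing^{\rm ind}\simeq\C$ holds because $\mathcal{T}_\varnothing$ consists only of the empty triangulation. Normalisation gives $\h_\varnothing=\C$ with $\iota=\mathbbm{1}$, so the inductive limit is $\C$.
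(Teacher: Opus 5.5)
Your proof is correct and follows essentially the same route as the paper: you pair $[R_\delta y,\delta^*]$ with $[x,\delta']$ on a common refinement and use the compatibility $\iota_{\delta_2^*\delta_1^*}\circ R_{\delta_1}=R_{\delta_2}\circ\iota_{\delta_2\delta_1}$ to recover the inner product of $\h_\Sigma^{\rm ind}$, which identifies $\h_{\Sigma^*}^{\rm ind}$ with $(\h_\Sigma^{\rm ind})^*$ via $R_\Sigma$ and gives commutativity of the diagram. You are in fact somewhat more careful than the paper in two places: you handle the completion of the algebraic inductive limit explicitly, and you read $\mathcal{T}_\varnothing$ as $\{\varnothing\}$ rather than as the empty set, which is what the claim $\h_\varnothing^{\rm ind}\simeq\C$ actually requires.
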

We are now ready to define a net associated to each $d$-dimensional manifold $M$.~This is given by
\begin{equation}
\label{eq:ind-net}
\begin{matrix}
    \tilde{Z}_\bullet:& \mathcal{T}_M &\to & \Hil^{\rm ind}_{\del M}\\
     & \Delta &\mapsto &\iota_{\del\Delta}\left(Z(\Delta)\right)\,,
\end{matrix}
\end{equation}
and satisfies the following properties:
\begin{enumerate}
    \item \textbf{Normalisation:} $\tilde{Z}_{\varnothing}=1$;
    \item \textbf{Dagger:} $\tilde{Z}_{\Delta^*}=R_{\Sigma}(\tilde{Z}_{\Delta})$, with $R_{\Sigma}$ the Riesz map $R_{\Sigma}:\Hind\to(\Hind)^*$;
    \item \textbf{Gluing:} Let $\Delta_1\in\mathcal{T}_{M_1}$ and $\Delta_2\in\mathcal{T}_{M_2}$ be such that $\partial\Delta_1=\delta_1^*\sqcup\delta$ and $\partial\Delta_2=\delta^*\sqcup\delta_2$, with $|\delta|\simeq \Sigma$.~Then,
    $$
    \tilde{Z}_{\Delta_1\cup_{\delta}\Delta_2}=\braket{\tilde{Z}_{\Delta_1},\tilde{Z}_{\Delta_2}}_{\Sigma}\;,
    $$
    where $\langle\cdot,\cdot\rangle_{\Sigma}$ denotes the dual pairing between $\h_{\Sigma}^{\rm ind}$ and $(\h_{\Sigma}^{\rm ind})^*$.
\end{enumerate}

Property 1 follows directly from the normalisation property of the truncated theory (cfr.~Def.~\ref{def:truncSF}).~As for property 2, we have
$$
\tilde{Z}_{\Delta^*}\overset{\eqref{eq:ind-net}}{=}\iota_{\partial\Delta^*}(Z(\Delta^*))=\iota_{\partial\Delta^*}\circ R_{\partial\Delta}(Z(\Delta))=R_{\Sigma}\circ\iota_{\partial\Delta}({Z}(\Delta))\overset{\eqref{eq:ind-net}}{=}R_{\Sigma}(\tilde{Z}_{\Delta})\;,
$$
where, in the second equality, we used the dagger property of the truncated theory (cfr.~Def.~\ref{def:truncSF}) and, in the third equality, we used~\eqref{dualcd}.\footnote{We recall from Sec.~\ref{Sec:TQFT} (see also property 5 in Def.~\ref{def:truncSF} of Sec.~\ref{Sec:truncatedaxioms}) that the property is referred to as \emph{dagger property} because, interpreting $Z(\Delta)$ as an operator, it is related to the adjoint operation, namely $Z(\Delta^*)=Z(\Delta)^\dagger$.} Finally, property 3 follows from Prop.~\ref{prop:factor} together with the gluing property at the truncated level (cfr.~Def.~\ref{def:truncSF}).~Namely,
\begin{equation}
    \begin{split}
        \tilde{Z}_{\Delta_1\cup_{\delta}\Delta_2}&=\iota_{\delta_1^*}\otimes\iota_{\delta_2}({Z}(\Delta_1\cup_{\delta}\Delta_2))\\
        &=\iota_{\delta_1^*}\otimes\iota_{\delta_2}(\braket{Z(\Delta_1),Z(\Delta_2)}_{\delta})\\
        &=\braket{\iota_{\delta_1^*}\otimes\iota_{\delta}(Z(\Delta_1)),\iota_{\delta^*}\otimes\iota_{\delta_2}(Z(\Delta_2))}_{\Sigma}.
    \end{split}
\end{equation}
    
\subsection{A No-Go Theorem}\label{Sec:NoGo}

For given $\Delta$ such that $|\Delta|=M$, the states $\tilde{Z}_{\Delta}=\iota_{\del\Delta}\left(Z(\Delta)\right)\in\h_{\del M}^{\rm ind}$ represent the truncation of the field modes as seen in the continuum.~One might therefore be hoping that a limit procedure based on the improved notion of refinement, which affects both the bulk and the boundary, would not result in a TQFT.~However, some further care is still required in that the details of the convergence of the nets in~\eqref{eq:ind-net} will also matter.~We can, in fact, put a bound on the good behaviour of the convergence in order for the theory to include non-topological properties.~Specifically, the following theorem holds.
\begin{theorem}[No-go theorem]
\label{Thm:no-go}
    Suppose that, for each $M$, the net $\tilde{Z}_\bullet: \mathcal{T}_M\to \Hil^{\rm ind}_{\del M}$ converges to an element
    \begin{equation}
        Z(M)=\lim_{\Delta\in\mathcal{T}_M}\tilde{Z}_{\Delta}\in\Hil^{\rm ind}_{\del M}.
    \end{equation}
    Then, the map 
    \begin{equation*}
        \begin{matrix}
            \Sigma\mapsto\Hind,\\
            M\mapsto Z(M)
        \end{matrix}
    \end{equation*}
    defines a Topological Quantum Field Theory.
\end{theorem}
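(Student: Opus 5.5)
We need to verify Atiyah's axioms of Definition~\ref{def:TQFT} for $\Sigma\mapsto\Hind$, $M\mapsto Z(M)$. Involution, Factorisation and the first Normalisation hold already at the level of the inductive spaces, by Propositions~\ref{prop:factor} and~\ref{prop:dual}. Dagger and the second Normalisation follow by passing to the limit in properties 1 and 2 of the net $\tilde Z_\bullet$. For Dagger, observe that $\Delta\mapsto\Delta^*$ is an order isomorphism $\mathcal T_M\to\mathcal T_{M^*}$, so $\tilde Z_{\bullet^*}$ is the net for $M^*$. Since the Riesz map $R_\Sigma$ is an (antilinear) isometry, hence continuous, we get $Z(M^*)=R_{\Sigma}(Z(M))$. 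Invariance under diffeomorphisms is automatic, because $\mathcal T_M$ depends only on the PL/smooth type of $M$.

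\textbf{Gluing.} Let $M\simeq N_1\cup_\Sigma N_2$. I would prove an analogue of Lemma~\ref{lem:subnet} for the strong order: the map $\gamma:(\Delta_1,\Delta_2)\mapsto\Delta_1\cup_\delta\Delta_2$ is defined on the subset of $\mathcal T_{N_1}\times\mathcal T_{N_2}$ of pairs with matching boundary on $\Sigma$, and I would show that this subset is directed and that $\gamma$ is order preserving and cofinal in $\mathcal T_M$.

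1. Order preservation holds because Alexander moves performed on the two sides, with matching moves on the shared boundary, glue to Alexander moves of the union.
2. For cofinality, given $\Delta\in\mathcal T_M$, one stellarly subdivides it until $\Sigma$ becomes a subcomplex. This is possible because $\Sigma$ is PL-embedded and stellar subdivisions are cofinal among subdivisions (Prop.~\ref{refin:poset} and App.~\ref{App:Triang}). The result then splits as $\Delta_1\cup_\delta\Delta_2$.

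Granting this, $\tilde Z_{\gamma(\bullet)}$ is a subnet converging to $Z(M)$. By the gluing property 3 it equals $\braket{\tilde Z_{\Delta_1},\tilde Z_{\Delta_2}}_\Sigma$. The pair $(\Delta_1,\Delta_2)$ runs over a set projecting cofinally onto each of $\mathcal T_{N_1}$ and $\mathcal T_{N_2}$, so the two factors are subnets converging to $Z(N_1)$ and $Z(N_2)$. The pairing is jointly continuous on $\Hind$, by Lemma~\ref{lem:contProd} applied to the Hilbert tensor product, so $Z(M)=\braket{Z(N_1),Z(N_2)}_\Sigma$.

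\textbf{Identity.} This is the crux. Gluing gives $P\coloneqq Z(\Sigma\times I)=\braket{P,P}_\Sigma$, so $P$ is a Hilbert–Schmidt projector on $\Hind$. It is self-adjoint by Dagger, since $(\Sigma\times I)^*\simeq\Sigma\times I$, and therefore it has finite rank. Moreover, every $Z(M)$ with a boundary component $\Sigma$ satisfies $Z(M)=\braket{Z(M),P}$, because $M\cup_\Sigma(\Sigma\times I)\simeq M$. The Hilbert space on which Identity holds is then $\mathcal H_\Sigma\coloneqq\mathrm{Ran}(P)\subset\Hind$, exactly as after Proposition~\ref{prop:w-range}. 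It is finite dimensional, all amplitudes lie in it, and $P$ restricts to $\mathbbm 1_{\mathcal H_\Sigma}$.

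I would interpret the theorem's statement "$\Sigma\mapsto\Hind$" up to this restriction to the physical range. One must also check that Factorisation and Involution descend to the ranges, using $P_{\Sigma_1\sqcup\Sigma_2}=P_{\Sigma_1}\otimes P_{\Sigma_2}$, which follows from Gluing with disjoint cylinders, and $P_{\Sigma^*}=R P_\Sigma R^{-1}$.

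The main obstacle I expect is the cofinality step in the gluing argument. It requires showing that any triangulation of $M$ admits a \emph{stellar} subdivision in which $\Sigma$ is a subcomplex, and that common stellar refinements on the two sides can be matched along $\Sigma$. I would try first to reduce this to the known results that any PL subdivision is dominated by a stellar one and that the gluing of stellar moves is stellar.
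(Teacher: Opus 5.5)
Your proposal is correct and follows the paper's own proof step by step. The axioms for $\Hind$ come from Propositions~\ref{prop:factor} and~\ref{prop:dual}; Normalisation and Dagger follow by continuity of $R_\Sigma$; Gluing is obtained by restricting to the cofinal subposet of boundary-matching pairs and using joint continuity of the pairing (Lemma~\ref{lem:contProd}); and Identity by passing to $\mathrm{Ran}\,Z(\Sigma\times I)$. The cofinality step you flag is the paper's Lemma~\ref{lem:cutsubnet}, which avoids making $\Sigma$ a subcomplex of a given $\Delta$: it takes a common refinement of $\Delta$ with any split triangulation (Prop.~\ref{refin:poset}) and notes that refinements of split triangulations stay split (Lemma~\ref{lem:guess}.1), while Lemma~\ref{lem:guess}.2 gives directedness and cofinal projections. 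Your extra checks in the Identity step (finite rank, self-adjointness, descent of Factorisation, the last of which uses gluing along $\varnothing$, i.e.\ the multiplicativity later isolated in Def.~\ref{def:split}) go beyond what the paper spells out.
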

The proof can be found in Appendix~\ref{App:Proofs2}. The idea of the proof is the following.~Owing to convergence, we may restrict our attention to the subnet of triangulations that split appropriately under the chosen cut.~Norm convergence for every manifold then allows us to rewrite the limit over such triangulations as two independent limits, one for each of the two pieces of the manifold.

A more physical intuition arises from interpreting the states $\tilde{Z}_{\Delta}$ as truncations of the continuum degrees of freedom.~From this perspective, the crucial step in the proof is that truncating the degrees of freedom at the interface, gluing the resulting states via the inner product, and then taking the continuum limit is equivalent to first taking the continuum limit of the truncated states and only afterwards performing the gluing.~This interchange is possible precisely because the inductive Hilbert space provides a consistent embedding of all finite truncations into the continuum, so that its elements encode the same continuum state at different levels of truncation.~As a consequence, the gluing operation is insensitive to the particular truncation used at the interface and depends only on the topological decomposition of the manifold, thereby recovering the characteristic gluing behaviour of a TQFT.

This is a no-go theorem in the sense that a model that aims to describe gravity in four dimensions cannot admit such a limit.~In other words, a necessary condition to have a theory that it is not topological is that at least one net $\tilde{Z}_{\bullet}:\mathcal{T}_M\to \Hil^{\rm ind}_{\del M}$ is not a Cauchy net in $\Hil^{\rm ind}_{\del M}$. We refer to Appendix~\ref{App:Triang} for the definition of a Cauchy net.~This motivates us to consider more general cases in which, for some $M$, the limit does not converge in the inductive Hilbert space.~More specifically, in the next subsection, we consider the possibility of non-normalisable (distributional) states as result of the limit.~This will allow us to make contact with the well-known tools from Refined Algebraic Quantisation~\cite{Giulini_Marolf_1999a,Giulini_Marolf_1999b}.

\subsection{Distributional Continuum Limit and Rigging Map}

\noindent
In order to break the hypothesis of Theorem~\ref{Thm:no-go}, and still have the limit well-defined for each manifold, we generalise the notion of convergence using the concept of \emph{Gelfand triple}.~Without loss of generality, we suppose there exists a topological vector space $\mathcal{D}_{\Sigma}$ as a dense subset of $\mathcal{H}_{\Sigma}^{\rm ind}$ such that the limit of the nets \eqref{eq:ind-net} converges in its algebraic dual $\mathcal{D}_{\Sigma}'$.~More specifically, the net $\tilde{Z}_{\bullet}$ can be seen as a net in the continuous dual $R_{\Sigma}(\tilde{Z}_{\Delta})\in(\mathcal{H}_{\Sigma}^{\rm ind})^*$ equipped with the norm (or strong) topology.~However, every functional on $\mathcal{H}_{\Sigma}^{\rm ind}$ is also a functional on $\mathcal{D}_{\Sigma}$, thus there exists a set-theoretic inclusion $(\mathcal{H}_{\Sigma}^{\rm ind})^*\hookrightarrow\mathcal{D}_{\Sigma}'$, which is also a topological immersion since the topology on the strong dual $(\mathcal{H}_{\Sigma}^{\rm ind})^*$ is finer than the pointwise topology on the algebraic dual $\mathcal{D}_{\Sigma}'$.~This defines the Gelfand triple
\[\mathcal{D}_{\Sigma}\subset\Hind\simeq(\Hind)^*\subset\mathcal{D}'_{\Sigma}.\]
While the net lies in $(\mathcal{H}_{\Sigma}^{\rm ind})^*$, there is no obstruction to ask that the limit exists in $\mathcal{D}_{\Sigma}'$.~For each $M$, let us denote the limit by $Z(M)\in\mathcal{D}_{\Sigma}'$.~That is, for all $\psi\in\mathcal{D}_{\Sigma}$,
\begin{equation}
\label{dLim}
    Z(M)[\psi]=\lim_{\Delta\in\mathcal{T}_M}\braket{\tilde{Z}_{\Delta}|\psi}_{\mathcal H_{\Sigma}^{\rm ind}}\;.
\end{equation}
In what follows, we shall make a couple of assumptions on the choice of the dense subset $\mathcal D_\Sigma$.
\begin{enumerate}
    \item\label{Assumption1} \textbf{Inner product space:} $\mathcal{D}_{\Sigma}$ is a topological vector space and a dense subset of $\mathcal{H}_{\Sigma}^{\rm ind}$.~It is equipped with the inner product induced by $\mathcal{H}_{\Sigma}^{\rm ind}$ and inherits its topology.~Its algebraic dual $\mathcal{D}_{\Sigma}'$, with the topology of pointwise convergence, is a Hausdorff topological vector space.
    \item\label{Assumption2}\textbf{Factorisation:} $\mathcal{D}_{\bigsqcup_{i}\Sigma_i}=\bigotimes_i\mathcal{D}_i$, with $\{\Sigma_i\}$ a finite collection of closed connected $(d-1)$-manifolds. $\mathcal{D}_{\bigsqcup_{i}\Sigma_i}$ is endowed naturally with an inner product, hence with a topology, derived by the tensor product structure.
    \item\label{Assumption3}\textbf{Involution:} $\mathcal{D}_{\Sigma}\simeq\mathcal{D}_{\Sigma^*}$.~More precisely, $\mathcal{D}_{\Sigma}$ and $\mathcal{D}_{\Sigma^*}$ are anti-isomorphic via Riesz map.~Namely, given the Riesz map $R_{\Sigma}:\mathcal{H}_{\Sigma}^{\rm ind}\to(\mathcal{H}_{\Sigma}^{\rm ind})^*$, the two dense subspaces satisfy $R_{\Sigma}(\mathcal{D}_{\Sigma})=\mathcal{D}_{\Sigma^*}$.
\end{enumerate}

\noindent
In Loop Quantum Gravity, $\cD_\Sigma$ is typically taken to be the span (i.e., the space of finite linear combinations) of the spin-network states (or s-knots, after imposing the quantum spatial diffeomorphism constraint).

Owing to the assumptions \ref{Assumption1}{\color{blue}.}-\ref{Assumption3}{\color{blue}.}, the functional $Z(\Sigma\times I)\in\mathcal D_{\Sigma^*\sqcup\Sigma}'$ associated to the cylinder can be thought of as a map $Z(\Sigma\times I):\cD_{\Sigma^*}\otimes\cD_{\Sigma}\to\mathbb C$, so that, for any $\phi\in\mathcal D_{\Sigma}$, it yields a functional on $\cD_{\Sigma}$ given by $Z(\Sigma\times I)[R_{\Sigma}(\phi)\otimes\,\bullet\,]\in\cD_{\Sigma}'$.~This allows us to define a so-called \emph{rigging map} from $\cD_{\Sigma}$ to $\cD_{\Sigma}'$ as demonstrated by the Theorem~\ref{Thm:riggingmap} below, whose proof is given in App.~\ref{App:Proofs2}.

\begin{theorem}[Rigging map]\label{Thm:riggingmap}
Let $\Sigma$ be a closed oriented $(d-1)$-dimensional manifold.~The map $P_{\Sigma}:\cD_{\Sigma}\to\cD_{\Sigma}'$ given by
\be\label{def:PSigma}
    P_{\Sigma}(\phi)[\psi]\coloneqq Z(\Sigma\times I)[R_{\Sigma}(\phi)\otimes\psi]\;,\qquad \phi,\psi\in\cD_{\Sigma}\;,
\ee
defines a rigging map.~That is, it satisfies the following properties:
\begin{enumerate}
    \item[i.]\emph{Anti-linearity:} $P_{\Sigma}(\alpha\phi_1+\beta\phi_2)[\psi]=\left(\overline{\alpha}P_{\Sigma}(\phi_1)+\overline{\beta}P_{\Sigma}(\phi_2)\right)[\psi]$, for any $\phi_1,\phi_2,\psi\in\cD_{\Sigma}$ and $\alpha,\beta\in\mathbb C$;
    \item[ii.]\emph{Reality:} $P_{\Sigma}(\phi)[\psi]=\overline{P_{\Sigma}(\psi)[\phi]}$, for any $\phi,\psi\in\cD_{\Sigma}$;
    \item[iii.]\emph{Positive semidefiniteness:} $P_{\Sigma}(\psi)[\psi]\geq0$, for any $\psi\in\cD_{\Sigma}$.
\end{enumerate}
\end{theorem}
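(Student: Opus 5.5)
Each property will be checked at the truncated level, where $P_\Sigma(\phi)[\psi]$ is a net of explicit numbers. It then passes to the limit \eqref{dLim}, since pointwise limits preserve linear identities, complex conjugation and closed half-lines. By Assumptions \ref{Assumption2}--\ref{Assumption3} and \eqref{dLim}, for $\phi,\psi\in\cD_\Sigma$ we have
\[
P_\Sigma(\phi)[\psi]=\lim_{\Delta\in\mathcal T_{\Sigma\times I}}\big\langle \tilde Z_\Delta \,\big|\, R_\Sigma(\phi)\otimes\psi\big\rangle .
\]
We will interpret $\tilde Z_\Delta\in(\Hind)^*\otimes\Hind$ as a Hilbert--Schmidt operator $T_\Delta$ on $\Hind$. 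The truncated amplitude then reads $\langle\phi, T_\Delta\psi\rangle$, up to the fixed convention on which slot is conjugated; we will choose the convention making the map antilinear in $\phi$, as in \eqref{def:PSigma}.

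\emph{Anti-linearity} (i) follows directly from the antilinearity of the Riesz map $R_\Sigma$, the linearity of $Z(\Sigma\times I)$ as an element of the algebraic dual, and multilinearity of the tensor product.

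\emph{Reality} (ii) will use the Dagger property of the net, $\tilde Z_{\Delta^*}=R(\tilde Z_\Delta)$. Reversing the orientation of the cylinder $\Sigma\times I$ and exchanging its two boundary components gives a manifold orientation-preservingly diffeomorphic to $\Sigma\times I$ again, via the flip $t\mapsto 1-t$. Hence $\Delta\mapsto\Delta^*$, composed with this flip, is an order isomorphism of $\mathcal T_{\Sigma\times I}$ onto itself. The nets $\Delta\mapsto\tilde Z_\Delta$ and $\Delta\mapsto\tilde Z_{\Delta^*}$ (read with the boundary factors swapped) therefore converge to the same limit $Z(\Sigma\times I)$. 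At the truncated level the Dagger property says $T_{\Delta^*}=T_\Delta^\dagger$, i.e.\ $\langle\phi,T_{\Delta^*}\psi\rangle=\overline{\langle\psi,T_\Delta\phi\rangle}$. Passing to the limit and using continuity of complex conjugation gives $P_\Sigma(\phi)[\psi]=\overline{P_\Sigma(\psi)[\phi]}$.

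\emph{Positivity} (iii) is the main step. I would use Lemma~\ref{lem:subnet} (its strong-order analogue, as used in the proof of Theorem~\ref{Thm:no-go}) with the decomposition $\Sigma\times I\simeq(\Sigma\times[0,\tfrac12])\cup_\Sigma(\Sigma\times[\tfrac12,1])$. The net restricted to triangulations of the form $\Delta_1\cup_\delta\Delta_2$ is a subnet, so it has the same limit. Restrict further to the cofinal family of symmetric gluings $\Delta_1\cup_\delta\Delta_1^{*}$, where $\Delta_1^*$ denotes the orientation-reversed mirror copy. For these, the truncated Gluing and Dagger properties give
\[
T_{\Delta_1\cup\Delta_1^*}=T_{\Delta_1}^\dagger T_{\Delta_1},
\]
up to the placement of the dagger. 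Hence $\langle\psi,T\psi\rangle=\|T_{\Delta_1}\psi\|^2\ge0$ for every $\psi\in\cD_\Sigma$. The limit of a net of nonnegative reals is nonnegative, so $P_\Sigma(\psi)[\psi]\ge0$.

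The obstacle is showing that the symmetric gluings are cofinal in $(\mathcal T_{\Sigma\times I},\preceq)$. Given an arbitrary triangulation $\Delta$, one must refine it by stellar moves to something of the form $\Delta_1\cup\Delta_1^*$. I would first refine $\Delta$ so that $\Sigma\times\{\tfrac12\}$ is a subcomplex, as in Lemma~\ref{lem:subnet}, getting $\Delta_1\cup\Delta_2$. Then I would take a common stellar refinement of the two halves, using directedness (Prop.~\ref{refin:poset}) applied to $\Delta_1$ and the mirror of $\Delta_2$, and reflect it. If exact cofinality of the symmetric family fails, I would fall back on a weaker statement. Even without symmetry, $T=T_{\Delta_2}T_{\Delta_1}$, and refining one half while keeping the other fixed gives the same limit (Remark~\ref{rmk:loc-ref}). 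Taking a common refinement of the two halves through reflection should then still suffice.
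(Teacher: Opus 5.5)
Your proposal is correct, and for (i) and (iii) it is the paper's proof. Positivity is read off the cofinal subnet of symmetric gluings $\Delta\cup_\delta\Delta^*$, on which the truncated operator is $Z(\Delta)^\dagger Z(\Delta)$. The cofinality you flag as the obstacle is exactly Corollary~\ref{diagonalcut}, and it is proved by the argument you sketch: take a common refinement of $\Delta_1$ and the mirror of $\Delta_2$, reflect it, and combine with the order-preserving, cofinal gluing map of Lemma~\ref{lem:cutsubnet}. So your fallback is not needed. As stated it would also not work: Remark~\ref{rmk:loc-ref} belongs to the weak-order, fixed-boundary setting, and under stellar refinement the family obtained by holding one half fixed and refining the other is not cofinal in $\mathcal T_{\Sigma\times I}$.

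Where you differ is (ii). The paper proves reality on that same diagonal subnet, using that each $Z(\Delta\cup_\delta\Delta^*)$ is self-adjoint, so one subnet serves both (ii) and (iii). You instead reindex the full net by the order automorphism $\Delta\mapsto\Delta^*$ of $\mathcal T_{\Sigma\times I}$ (read through the flip $t\mapsto 1-t$) and use $T_{\Delta^*}=T_\Delta^\dagger$. This needs only that orientation reversal commutes with stellar subdivision, not the diagonal cofinality, and it is the same reindexing the paper uses in Lemma~\ref{lem:Psigmastar}.

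Either way, (ii) is actually redundant. The form is sesquilinear by (i) and the linearity of $Z(\Sigma\times I)$, and a sesquilinear form on a complex space whose diagonal is real (here nonnegative by (iii)) is Hermitian by polarisation.
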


\noindent
Having a rigging map at our disposal allows us to define a positive semi-definite Hermitian form on $\cD_{\Sigma}$
\be\label{eq:prodphys}
(\phi,\psi)_{\Sigma}\coloneqq P_{\Sigma}(\phi)[\psi]\qquad\forall\,\phi,\psi\in\cD_{\Sigma}\;,
\ee
and, as customary in refined algebraic quantization (RAQ)~\cite{Giulini_Marolf_1999a,Giulini_Marolf_1999b,Thiemann_2007}, a physical Hilbert space.~More details on this will be provided in the next section where the properties of $P_{\Sigma}$ in relation with the physical Hilbert space are also studied.

Under the change of orientation of $\Sigma$, compatibly with the above assumption \ref{Assumption3}, the rigging map $P_{\Sigma^*}:\cD_{\Sigma^*}\to\cD_{\Sigma^*}'$, associated to $\Sigma^*\times I$, is related to $P_{\Sigma}:\cD_{\Sigma}\to\cD_{\Sigma}'$, associated to $\Sigma\times I$, via the Riesz map $R_{\Sigma}$.~More specifically, we have
\be\label{rigging:Sigmastar}
P_{\Sigma}(\phi)[\psi]=\overline{P_{\Sigma^*}\left(R_{\Sigma}(\phi)\right)\left[R_{\Sigma}(\psi)\right]}\;,
\ee
where the bar on the RHS denotes complex conjugation.~The formula \eqref{rigging:Sigmastar} is contained in Lemma~\ref{lem:Psigmastar} in App.~\ref{App:Proofs2} to which we refer for details.

In Canonical Quantum Gravity, the rigging map is a useful tool to provide the states in the kernel of quantum constraints.~Moreover, the rigging map intertwines the observables on the auxiliary Hilbert space with the actual physical observables.~A formal proposal to construct the (generalised) projector onto the kernel of the Hamiltonian constraint within the framework of spin foams and to compute the expectation values of physical observables was put forward in \cite{Rovelli:1998dx}.~This was rigorously realised in 3-dimensional loop quantum gravity with zero cosmological constant later on in \cite{Noui_Perez_2005}.~Our construction of the rigging map \eqref{def:PSigma} based on the cylinder $\Sigma\times I$ is compatible with these works, however, in our present framework, since we do not subscribe to any specific model, neither an explicit relation with the constraints nor a specified algebra of physical observables can be exhibited.~Nevertheless, we can use the compatibility of observables with the rigging map as a prescription to define physical observables.
\begin{definition}[Physical observables]
    Let $\operatorname{A}\in \mathcal{L}(\Hind)$ be such that $\cD_{\Sigma}\subset\mathrm{Dom}(\operatorname{A})\cap\mathrm{Dom}(\operatorname{A}^{\dagger})$ and $\operatorname{A}(\cD_{\Sigma})\subset\cD_{\Sigma},\,\operatorname{A}^{\dagger}(\cD_{\Sigma})\subset\cD_{\Sigma}$.~Let $\operatorname{A}$ be compatible with the rigging map, i.e.~$\operatorname{A}(\ker(P_{\Sigma}))\subset\ker(P_{\Sigma})$ and $\operatorname{A}^{\dagger}(\ker(P_{\Sigma}))\subset\ker(P_{\Sigma})$.~Then, $\operatorname{A}$ is called a \emph{physical observable} if \[P_{\Sigma}(\operatorname{A}\psi)[\phi]=P_{\Sigma}(\psi)[\operatorname{A}^\dagger\phi]\quad,\quad\  \forall\;\psi,\phi\in\cD_{\Sigma}\;.\]
\end{definition}
This definition ensures that the adjoint on $\Hind$ is also the physical adjoint. Notice that the request to be compatible with the rigging map is just to be well-defined on the image on $P_{\Sigma}$.~Indeed, that condition is equivalent to $P_{\Sigma}(\operatorname{A}(\psi+\phi))=P_{\Sigma}(\operatorname{A}\psi)$ for all $\phi\in\ker(P_{\Sigma})$.

\section{Physical Hilbert Space, Physical States, and their Properties}\label{Sec:HphysCLQG}

As anticipated in the previous section, the rigging map $P_{\Sigma}$ defined in Eq.~\eqref{def:PSigma} allows us to define the \emph{physical Hilbert space} of the theory as typically done in RAQ~\cite{Giulini_Marolf_1999a,Giulini_Marolf_1999b,Thiemann_2007}.~That is
\be\label{eq:Hphys}
\Hphys\coloneqq \overline{\bigslant{\cD_{\Sigma}}{\ker(P_{\Sigma})}}^{\scriptscriptstyle \norm{\cdot}}\;,
\ee
where the closure is taken with respect to the norm induced by the positive semi-definite Hermitian form~\eqref{eq:prodphys}. Indeed, since the left and right radical of the Hermitian form~\eqref{eq:prodphys} are both equal to $\ker(P_{\Sigma})$\footnote{The two radicals are equal because the form is Hermitian.~Furthermore, by Eq.~\eqref{eq:prodphys}, clearly $\ker(P_{\Sigma})$ is contained in the left radical.~Moreover, if $\phi$ is in the left radical, that is $(\phi,\psi)_{\Sigma}=P_{\Sigma}(\phi)[\psi]=0$ for any $\psi\in\cD_{\Sigma}$, then $P_{\Sigma}(\phi)=0\in\cD_{\Sigma}'$ so that $\phi\in\ker(P_{\Sigma})$.}, the form passes to the quotient, defining a inner product
\begin{equation}\label{def:physinnprod}
    \braket{\phi_{\mathrm{phys}}|\psi_{\mathrm{phys}}}_{\Hphys}\coloneqq P_{\Sigma}(\phi)[\psi],
\end{equation}
where $\phi_{\mathrm{phys}},\psi_{\mathrm{phys}}$ are the equivalence classes of $\phi,\psi\in\cD_{\Sigma}$, respectively. This construction induces a inner product also on $\mathrm{Ran}(P_{\Sigma})$ simply by
\begin{equation}\label{def:RanPinnprod}
    \braket{P_{\Sigma}(\phi)|P_{\Sigma}(\psi)}_{\mathrm{Ran}(P_{\Sigma})}=P_{\Sigma}(\psi)[\phi],
\end{equation}
and $P_{\Sigma}$ turns out to be an isometric anti-isomorphism from $\Hphys$ to $\overline{\mathrm{Ran}(P_{\Sigma})}^{\scriptscriptstyle \norm{\cdot}}$.~The Hilbert space $\Hphys$ is the physical Hilbert space of the theory: in quantum gauge theories, this is the space of gauge-invariant states \cite{Halliwell:1990qr}; in quantum gravity, it is given by the space of solutions to the Wheeler--DeWitt equation~\cite{Hartle_Hawking_1983,Halliwell:1990qr,Rovelli:1998dx,Noui_Perez_2005} (see also \cite{Alesci_Thiemann_Zipfel_2012,Thiemann_Zipfel_2014}).

In what follows, we aim to study the properties of $\Hphys$ as defined in Eq.~\eqref{eq:Hphys}, i.e.~involution, normalisation, factorisation, and gluing.~In particular, we aim to investigate whether or not the objects $Z(M)$ can be regarded as physical states as one might expect from formal considerations about the gravitational path integral \cite{Hartle_Hawking_1983,Halliwell:1990qr}, and concretely realised in 3-dimensional quantum gravity (cfr.~Sec.~\ref{Sec:examplePRTVO} and the discussion below Def.~\ref{def:TQFT} in Sec.~\ref{Sec:TQFT}).~As we shall see below, even though $\Hphys$ shares some properties with the Hilbert spaces arising in TQFTs, a major difference emerges in the gluing property, which can be formulated only in a weaker form, namely as a convolution property.~This in turn leads to a different, yet crucial, interpretation of the objects $Z(M)$.

\subsection{Involution and Factorisation}\label{Sec:HphysInvFact}

Let us start with the involution property.~The following proposition, whose proof is given in App.~\ref{App:Proofs3}, shows that it holds also at the level of the physical Hilbert space.  
\begin{proposition}[Involution]
\label{prop:physdual} Let $\Sigma$ be a closed oriented $(d-1)$-manifold, $\Sigma^*$ the manifold with opposite orientation, and $\Hphys$ as defined in~\eqref{eq:Hphys}.~Then,
    $\Hil^{\rm phys}_{\Sigma^*}\simeq(\Hphys)^*.$
\end{proposition}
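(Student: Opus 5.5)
The plan is to build an explicit anti-unitary correspondence between $\Hphys$ and $\Hil^{\rm phys}_{\Sigma^*}$ from the Riesz map $R_\Sigma$, and then compose with the canonical Riesz anti-isomorphism of the Hilbert space $\Hil^{\rm phys}_{\Sigma^*}$ onto its dual. Assumption~\ref{Assumption3} gives $R_\Sigma(\cD_\Sigma)=\cD_{\Sigma^*}$, with $R_\Sigma$ anti-linear and bijective on these domains. The key identity is the relation~\eqref{rigging:Sigmastar} from Lemma~\ref{lem:Psigmastar}:
\be
P_{\Sigma}(\phi)[\psi]=\overline{P_{\Sigma^*}\left(R_{\Sigma}(\phi)\right)\left[R_{\Sigma}(\psi)\right]},\qquad \phi,\psi\in\cD_\Sigma .
\ee

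\textbf{Step 1: kernels.} I first show that $R_\Sigma(\ker P_\Sigma)=\ker P_{\Sigma^*}$. If $P_\Sigma(\phi)=0$, the identity above gives $P_{\Sigma^*}(R_\Sigma\phi)[R_\Sigma\psi]=0$ for every $\psi\in\cD_\Sigma$. Since $R_\Sigma\psi$ ranges over all of $\cD_{\Sigma^*}$, this means $R_\Sigma\phi\in\ker P_{\Sigma^*}$. The converse follows by reading the identity backwards. Alternatively, both kernels are the radicals of the corresponding Hermitian forms (see the footnote after~\eqref{eq:Hphys}), and the identity shows these radicals correspond.

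\textbf{Step 2: isometry and extension.} Consequently $R_\Sigma$ descends to an anti-linear bijection
\be
\hat R:\cD_\Sigma/\ker P_\Sigma\to\cD_{\Sigma^*}/\ker P_{\Sigma^*}.
\ee
By~\eqref{def:physinnprod} and the identity above it satisfies $\langle\hat R\phi,\hat R\psi\rangle_{\Sigma^*}=\overline{\langle\phi,\psi\rangle_\Sigma}$. It is therefore anti-isometric, so it is uniformly continuous for the physical norms and extends uniquely to an anti-unitary map $\Hphys\to\Hil^{\rm phys}_{\Sigma^*}$ between the completions. Surjectivity holds because the image contains a dense subspace and is closed, being the isometric image of a complete space.

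\textbf{Step 3: composition with the Riesz map.} I then compose with the Riesz anti-unitary $\Hil^{\rm phys}_{\Sigma^*}\to(\Hil^{\rm phys}_{\Sigma^*})^*$, or equivalently with the Riesz map of $\Hphys$. This yields a linear unitary isomorphism $\Hil^{\rm phys}_{\Sigma^*}\simeq(\Hphys)^*$. Explicitly, the class of $R_\Sigma\phi$ is sent to the functional $[\psi]\mapsto P_\Sigma(\phi)[\psi]$ up to conjugation conventions, which I will fix so that the map is linear.

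\textbf{Main obstacle.} The hardest point is Lemma~\ref{lem:Psigmastar} itself, i.e.\ the identity above. If it must be reproved, I would derive it from the Dagger property of $\tilde Z_\bullet$ applied to the cylinder, using $(\Sigma\times I)^*\simeq\Sigma^*\times I$ with the boundary components swapped. The conjugation then passes through the pointwise limit~\eqref{dLim}, since complex conjugation is continuous. Some care is also needed with the factorisation and involution of $\cD_{\Sigma^*\sqcup\Sigma}$ (Assumptions~\ref{Assumption2}--\ref{Assumption3}) so that the arguments land in the correct tensor slots.
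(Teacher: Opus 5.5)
Your proposal is correct and follows the paper's proof. The paper's map $B_\Sigma(\nu_{\rm phys})=\mathfrak{R}_\Sigma(R_\Sigma^{-1}(\nu)_{\rm phys})$ is exactly your composite of the descended inverse of $R_\Sigma$ with the physical Riesz map, and it likewise uses Lemma~\ref{lem:Psigmastar} both to identify the kernels and to prove the isometry. Your version is slightly more careful: it keeps track of the complex conjugation in that identity, and it obtains surjectivity by density and extension to the completions, whereas the paper asserts that every functional on $\Hphys$ comes from some $\phi\in\cD_\Sigma$.
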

\noindent
As for the characterisation of the empty manifold $M=\varnothing$, the following results are a simple consequence of Eqs.~\eqref{dLim} and \eqref{def:PSigma}, together with the properties $\mathcal{T}_\varnothing=\varnothing$, $\tilde{Z}_\varnothing=1$, and $\Hil^{\rm ind}_{\varnothing}=\C$ (cfr.~Sec.~\ref{Sec:SubdivIndLim}).
\begin{proposition}[Empty manifold and normalisation]
    Let $M=\varnothing$ be the empty manifold.~Then, the following statements hold true:
    \begin{enumerate}
        \item $Z(\varnothing)=1\in\C$;
        \item $P_{\varnothing}(z_1)[z_2]=\bar{z}_1z_2,$ for all $z_1,z_2\in\cD_{\varnothing}=\C$;
        \item $\Hil^{\rm phys}_{\varnothing}=\C$.
    \end{enumerate}
\end{proposition}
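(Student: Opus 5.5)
The statement has three parts, and I will prove them in order. Each reduces to unwinding the definitions \eqref{dLim}, \eqref{def:PSigma} and \eqref{eq:Hphys} in the degenerate case $\Sigma=\varnothing$. Throughout I will use $\Hil^{\rm ind}_\varnothing\simeq\C$ from Prop.~\ref{prop:dual}, together with the normalisation property $\tilde Z_\varnothing=1$ of the net \eqref{eq:ind-net}.

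\emph{Step 1: $Z(\varnothing)=1$.} The dense subspace $\cD_\varnothing\subset\Hil^{\rm ind}_\varnothing\simeq\C$ is a nonzero subspace of a one-dimensional space, so $\cD_\varnothing=\C$ and $\cD'_\varnothing\simeq\C$. The set of triangulations of the empty manifold contains only the empty triangulation. I read the remark $\mathcal T_\varnothing=\varnothing$ in this sense: the directed set is a singleton $\{\varnothing\}$, and this is what makes the limit meaningful. The net \eqref{eq:ind-net} is therefore constant and equal to $\tilde Z_\varnothing=1$. Hence, for every $z\in\C$,
\[
Z(\varnothing)[z]=\langle 1|z\rangle_{\C}=z .
\]
Identifying $\cD'_\varnothing$ with $\C$ through the Riesz map, this functional is $1$. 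The one point to be careful about is the convention for the empty index set: a limit over an empty directed set would be ill-defined, so the singleton reading is the one I will state explicitly.

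\emph{Step 2: $P_\varnothing(z_1)[z_2]=\bar z_1z_2$.} Here I use the fact that $\varnothing\times I=\varnothing$ and $\varnothing^*\sqcup\varnothing=\varnothing$. By Factorisation, $\cD_{\varnothing^*\sqcup\varnothing}=\C\otimes\C\simeq\C$, and the Riesz map on $\C$ is complex conjugation, $R_\varnothing(z_1)=\bar z_1$. Then \eqref{def:PSigma} together with Step~1 gives
\[
P_\varnothing(z_1)[z_2]=Z(\varnothing)[\bar z_1\otimes z_2]=\bar z_1z_2 .
\]
The only thing to check is that the tensor identification $\C\otimes\C\to\C$ is multiplication, which it is.

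\emph{Step 3: $\Hil^{\rm phys}_\varnothing=\C$.} By Step~2, $P_\varnothing(z)[z]=|z|^2$, so $\ker P_\varnothing=\{0\}$ and the induced form on $\C$ is the standard inner product. The quotient in \eqref{eq:Hphys} is therefore $\C$ itself. It is already complete, so $\Hil^{\rm phys}_\varnothing=\C$ isometrically.

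No step poses a real obstacle. The only subtlety is the bookkeeping convention for the empty triangulation in Step~1.
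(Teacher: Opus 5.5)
Your proposal is correct and follows the paper's own argument, which likewise just unwinds \eqref{dLim}, \eqref{def:PSigma} and \eqref{eq:Hphys} using $\tilde Z_\varnothing=1$ and $\Hil^{\rm ind}_\varnothing\simeq\C$. Your reading of $\mathcal T_\varnothing=\varnothing$ as the singleton directed set whose only element is the empty triangulation matches how the paper actually uses it, for instance in the Normalisation step of the proof of Theorem~\ref{Thm:no-go}.
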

Coming now to the factorisation of the physical Hilbert space, clearly, it depends on the dynamics and, in particular, on the properties of the amplitudes $Z((\Sigma_1\sqcup\Sigma_2)\times I)$ of two disconnected cylinders. To spell this out explicitly, we have to go back for a moment to the theory in the discrete as defined in Sec.~\ref{Sec:truncatedaxioms}, and specify the behaviour of the truncated spin foam amplitudes for disconnected manifolds.~As we will comment on shortly below, motivated by the behaviour of known spin foam models, as well as by the Hamiltonian constraint constructed in canonical Loop Quantum Gravity, let us introduce the notion of canonical spin foam models according to the following definition.
\begin{definition}[Canonical spin foam model]
    \label{def:split}
    A $d$-dimensional spin foam model is said to be \emph{canonical} if, in addition to the properties given in \emph{Def.~\ref{def:truncSF}}, the discrete amplitudes satisfy
    \begin{equation}\label{eq:canSF}
        Z(\Delta_1\sqcup\Delta_2)=Z(\Delta_1)\otimes Z(\Delta_2)\in\h_{\partial\Delta_1}\otimes\h_{\partial\Delta_2}\;,
    \end{equation}
    for any $d$-dimensional triangulations $\Delta_1,\Delta_2$ of compact oriented manifolds $M_1,M_2$.
\end{definition}
This is typically the case for the spin foam models whose amplitudes are given by \eqref{eq:SFampl} in which, as discussed in Sec.~\ref{Sec:SFreview}, the product of individual local discretised amplitudes associated to separate (disjoint) components reflects the locality principle.~It is worth noticing that, however, the requirement \eqref{eq:canSF} is not completely harmless as it forces us to consider the whole spacetime to be consisting of solely the two disconnected components. Indeed, owing to the factorisation property of the inductive family of the boundary Hilbert spaces (cfr.~Lemma~\ref{connIndF}), the requirement \eqref{eq:canSF} implies an analogous decomposition of the net \eqref{eq:ind-net} for disconnected $\Delta_1,\Delta_2$, namely $\tilde{Z}_{\Delta_1\sqcup\Delta_2}=\tilde{Z}_{\Delta_1}\otimes\tilde{Z}_{\Delta_2}$. As a consequence, for a canonical spin foam model, the rigging map in the continuum decomposes according to the following lemma proved in Appendix~\ref{App:Proofs3}.
\begin{lemma}\label{lem:factphys}
    For a canonical $d$-dimensional spin foam model as given in \emph{Def.~\ref{def:split}}, the rigging map, as constructed in \eqref{def:PSigma}, decomposes as
    \begin{equation}
        P_{\Sigma_1\sqcup\Sigma_2}(\phi)[\psi]=P_{\Sigma_1}(\phi_1)[\psi_1]P_{\Sigma_2}(\phi_2)[\psi_2]\;,
    \end{equation}
    for any closed $(d-1)$-manifolds $\Sigma_1$ and $\Sigma_2$, and any decomposable tensors $\phi=\phi_1\otimes\phi_2,\,\psi=\psi_1\otimes\psi_2\in\cD_{\Sigma_1\sqcup\Sigma_2}=\cD_{\Sigma_1}\otimes\cD_{\Sigma_2}$.
\end{lemma}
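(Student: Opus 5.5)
The plan is to work directly from the definition \eqref{def:PSigma} together with the distributional limit \eqref{dLim}, and to reduce the statement to a factorisation of nets. First, note that the cylinder over a disjoint union is a disjoint union of cylinders: $(\Sigma_1\sqcup\Sigma_2)\times I=(\Sigma_1\times I)\sqcup(\Sigma_2\times I)$. By the factorisation of the inductive family (Prop.~\ref{prop:factor} and Lemma~\ref{connIndF}), together with the factorisation of the Riesz map on tensor products, we have $R_{\Sigma_1\sqcup\Sigma_2}(\phi_1\otimes\phi_2)=R_{\Sigma_1}(\phi_1)\otimes R_{\Sigma_2}(\phi_2)$. After a reordering of tensor factors, the test vector $R(\phi)\otimes\psi$ therefore becomes $\big(R_{\Sigma_1}(\phi_1)\otimes\psi_1\big)\otimes\big(R_{\Sigma_2}(\phi_2)\otimes\psi_2\big)$, which lies in $\cD_{\Sigma_1^*\sqcup\Sigma_1}\otimes\cD_{\Sigma_2^*\sqcup\Sigma_2}$ by Assumptions \ref{Assumption2}.--\ref{Assumption3}.

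Next, I will identify the directed set. A triangulation of $(\Sigma_1\times I)\sqcup(\Sigma_2\times I)$ is exactly a disjoint union $\Delta_1\sqcup\Delta_2$, with $\Delta_i\in\mathcal T_{\Sigma_i\times I}$. Stellar subdivision acts componentwise, so $\mathcal T_{M_1\sqcup M_2}\cong\mathcal T_{M_1}\times\mathcal T_{M_2}$ as directed sets with the product order. For a canonical model, the remark after Def.~\ref{def:split} gives $\tilde Z_{\Delta_1\sqcup\Delta_2}=\tilde Z_{\Delta_1}\otimes\tilde Z_{\Delta_2}$. Pairing this with a decomposable test vector gives
\[
\braket{\tilde Z_{\Delta_1\sqcup\Delta_2}|\chi_1\otimes\chi_2}=\braket{\tilde Z_{\Delta_1}|\chi_1}\,\braket{\tilde Z_{\Delta_2}|\chi_2},
\]
where $\chi_i=R_{\Sigma_i}(\phi_i)\otimes\psi_i$.

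Finally, I will take the limit over the product directed set. Each factor is a net of complex numbers converging, by hypothesis, to $Z(\Sigma_i\times I)[\chi_i]=P_{\Sigma_i}(\phi_i)[\psi_i]$. The product of two convergent complex nets indexed by a product directed set converges to the product of the limits, by continuity of multiplication on $\C$ (cf.\ Lemma~\ref{lem:contProd}). The left-hand side converges to $Z((\Sigma_1\sqcup\Sigma_2)\times I)[R(\phi)\otimes\psi]=P_{\Sigma_1\sqcup\Sigma_2}(\phi)[\psi]$, and uniqueness of limits in the Hausdorff space $\C$ then yields the claim.

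The main obstacle is the identification $\mathcal T_{M_1\sqcup M_2}\cong\mathcal T_{M_1}\times\mathcal T_{M_2}$ as \emph{ordered} sets. We need that a stellar subdivision of a disjoint union is a disjoint union of stellar subdivisions. This holds because each Alexander move is local to a single simplex, hence to a single component. If that identification were only cofinal rather than an isomorphism, I would instead argue that the product set is cofinal in $\mathcal T_{M_1\sqcup M_2}$ via an order-preserving map, as in Lemma~\ref{lem:subnet}, and conclude by passing to a subnet. A secondary bookkeeping point is to check that the reordering of tensor factors between $\cD_{(\Sigma_1\sqcup\Sigma_2)^*}\otimes\cD_{\Sigma_1\sqcup\Sigma_2}$ and $\cD_{\Sigma_1^*\sqcup\Sigma_1}\otimes\cD_{\Sigma_2^*\sqcup\Sigma_2}$ is the canonical unitary compatible with the inductive maps.
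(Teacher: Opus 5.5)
Your proposal is correct and follows the paper's proof. Both arguments factorise the Riesz map and the cylinder, identify $\mathcal T_{(\Sigma_1\times I)\sqcup(\Sigma_2\times I)}$ with the product poset $\mathcal T_{\Sigma_1\times I}\times\mathcal T_{\Sigma_2\times I}$, split the pairing on decomposable tensors via $\tilde Z_{\Delta_1\sqcup\Delta_2}=\tilde Z_{\Delta_1}\otimes\tilde Z_{\Delta_2}$, and conclude with Lemma~\ref{lem:contProd} and uniqueness of limits; your explicit justification of the poset identification (which the paper only asserts) and the subnet fallback are useful extra care rather than a different route.
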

Lemma \ref{lem:factphys} allows us to characterise $P_{\Sigma_1\sqcup\Sigma_2}:\cD_{\Sigma_1}\otimes\cD_{\Sigma_2}\to\cD_{\Sigma_1\sqcup\Sigma_2}'$ as a tensor product of operators
\begin{equation}
    P_{\Sigma_1\sqcup\Sigma_2}=P_{\Sigma_1}\otimes P_{\Sigma_2}\;,
\end{equation}
from which it follows that $\ker P_{\Sigma_1\sqcup\Sigma_2}=\ker P_{\Sigma_1}\otimes \cD_{\Sigma_2}+\cD_{\Sigma_1}\otimes\ker P_{\Sigma_2}$.~This is crucial to prove the factorisation of the physical Hilbert space obtained from a canonical spin foam model.
\begin{proposition}[Factorisation]\label{prop:factphys}
    The physical Hilbert space resulting from a canonical $d$-dimensional spin foam model factorises
    \begin{equation*}
        \h^{\rm phys}_{\Sigma_1\sqcup\Sigma_2}\simeq\h^{\rm phys}_{\Sigma_1}\otimes \h^{\rm phys}_{\Sigma_2}\;,
    \end{equation*}
    for any closed $(d-1)$-manifolds $\Sigma_1,\Sigma_2$.
\end{proposition}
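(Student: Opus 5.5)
The plan is to build the isomorphism at the level of the pre-Hilbert spaces $\cD_{\Sigma}/\ker(P_{\Sigma})$ and then extend it to the completions. By assumption~\ref{Assumption2}, $\cD_{\Sigma_1\sqcup\Sigma_2}=\cD_{\Sigma_1}\otimes\cD_{\Sigma_2}$ as an algebraic tensor product. Lemma~\ref{lem:factphys} gives $P_{\Sigma_1\sqcup\Sigma_2}=P_{\Sigma_1}\otimes P_{\Sigma_2}$. On decomposable tensors, the physical form~\eqref{eq:prodphys} therefore factorises as
\[
(\phi_1\otimes\phi_2,\psi_1\otimes\psi_2)_{\Sigma_1\sqcup\Sigma_2}=(\phi_1,\psi_1)_{\Sigma_1}(\phi_2,\psi_2)_{\Sigma_2}.
\]
By sesquilinearity this extends to all finite sums. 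So $(\cdot,\cdot)_{\Sigma_1\sqcup\Sigma_2}$ is exactly the tensor-product Hermitian form built from the two semidefinite forms.

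First I define the map
\[
U:\ \bigslant{\cD_{\Sigma_1}}{\ker P_{\Sigma_1}}\otimes_{\rm alg}\bigslant{\cD_{\Sigma_2}}{\ker P_{\Sigma_2}}\to\bigslant{(\cD_{\Sigma_1}\otimes\cD_{\Sigma_2})}{\ker P_{\Sigma_1\sqcup\Sigma_2}}
\]
by $[\phi_1]\otimes[\phi_2]\mapsto[\phi_1\otimes\phi_2]$. It is well defined because $\ker P_{\Sigma_1\sqcup\Sigma_2}=\ker P_{\Sigma_1}\otimes\cD_{\Sigma_2}+\cD_{\Sigma_1}\otimes\ker P_{\Sigma_2}$, as noted after Lemma~\ref{lem:factphys}. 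This kernel identity is exactly the kernel of the canonical surjection $\cD_{\Sigma_1}\otimes\cD_{\Sigma_2}\to(\cD_{\Sigma_1}/K_1)\otimes(\cD_{\Sigma_2}/K_2)$, a standard linear-algebra fact. Hence $U$ is a linear bijection.

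The factorised form shows that $U$ preserves inner products, where the domain carries the tensor-product inner product of the two physical inner products~\eqref{def:physinnprod}. That tensor inner product is positive definite on the algebraic tensor product of two inner-product spaces. This is consistent with injectivity, and injectivity also follows independently from positivity.

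I expect the main obstacle to be the kernel identity itself, if it is not taken as already established. To prove it, I would take $\chi=\sum_k a_k\otimes b_k$ with $P_{\Sigma_1\sqcup\Sigma_2}(\chi)=0$. I would choose the $b_k$ so that their classes $[b_k]$ are orthonormal in $\cD_{\Sigma_2}/\ker P_{\Sigma_2}$ modulo kernel elements, which is Gram–Schmidt with respect to the semidefinite form. Pairing $\chi$ against $\psi_1\otimes b_l$ then gives $P_{\Sigma_1}(a_l)[\psi_1]=0$ for all $\psi_1$, so each $a_l\in\ker P_{\Sigma_1}$. The remaining components lie in $\cD_{\Sigma_1}\otimes\ker P_{\Sigma_2}$.

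Finally, an isometric isomorphism between dense subspaces extends uniquely to a unitary between completions. The completion of the algebraic tensor product of the dense pre-Hilbert spaces is the Hilbert tensor product $\h^{\rm phys}_{\Sigma_1}\otimes\h^{\rm phys}_{\Sigma_2}$. The completion of the right-hand side is $\h^{\rm phys}_{\Sigma_1\sqcup\Sigma_2}$ by~\eqref{eq:Hphys}. This yields the claimed isomorphism.
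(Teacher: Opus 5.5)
Your proposal is correct and follows the paper's proof. The paper likewise identifies $[\psi_1]\otimes[\psi_2]$ with $[\psi_1\otimes\psi_2]$ between the dense pre-Hilbert spaces, using Lemma~\ref{lem:factphys} and the kernel decomposition $\ker P_{\Sigma_1\sqcup\Sigma_2}=\ker P_{\Sigma_1}\otimes\cD_{\Sigma_2}+\cD_{\Sigma_1}\otimes\ker P_{\Sigma_2}$, writes down both the map $U$ and its inverse $\hat U$, and extends to the completions. Your Gram--Schmidt argument, or the positivity argument you mention, supplies the nontrivial inclusion of that kernel identity, which the paper only asserts.
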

The proof of Proposition \ref{prop:factphys} is given in Appendix~\ref{App:Proofs3}.~Notice that this is something we expect to hold also in canonical Loop Quantum Gravity, where the gauge-invariant Hilbert space factorises into factors, one for each connected component of $\Sigma$, and the quantum Hamiltonian operator is a decomposable operator, providing that the full solution is the tensor product of solutions on each component.~In fact, the Thiemann's regularised Hamiltonian operator acts locally on the vertices of a spin-network state.~As a consequence, when a spin network consists of two disjoint graphs, the action of the Hamiltonian operator on one factor does not affect the other. In other words, the operator acts independently on each connected component of the spin network, reflecting the fact that its definition involves only local structures around the vertices \cite{Thiemann_1998a,Thiemann_1998b,Thiemann_1998c,Thiemann_2007}.

\subsection{A Proposal for the Gluing}\label{Sec:GluingProposal}
 
Let us now focus on the gluing property.~First of all, we notice that, owing to the requirement of the limit of the nets $\tilde{Z}_{\bullet}$ to belong to the algebraic dual $\cD_{\Sigma}'$, the objects $Z(M)$ in Eq.~\eqref{dLim} do not take values in an inner product space.~Therefore, it is not possible to introduce a notion of gluing as in TQFT (cfr.~property \emph{5} in Def.~\ref{def:TQFT} and Fig.~\ref{fig:tqftgluing}).~However, this is a feature, not a bug.~Indeed, we do not expect such a structure to hold in our setting.~A gluing property that associates the dual pairing of inner product spaces to the gluing of manifolds as in TQFT, is a very strong assumption; indeed, as discussed in Sec.~\ref{Sec:TQFT}, it ultimately forces the Hilbert space of the theory to be finite-dimensional.~In a non-topological quantum theory of gravity, the more we refine the bulk triangulations of the $d$-dimensional manifolds we glue together, the more we refine the triangulation of the $(d-1)$-dimensional boundary manifold along which we glue, and the elements of the net depend non trivially on the truncations of degrees of freedom captured by the discretisations.~In analogy with the convolution of the path integral propagators in ordinary quantum theories, where the composition amounts to summing over all of the intermediate kinematical degrees of freedom, we expect that, in the context of the spin foam nets \eqref{eq:ind-net}, the gluing $M\cup_{\Sigma}N$ of two manifolds $M$ and $N$ along a common boundary $\Sigma$ will amount to summing over the local geometric degrees of freedom encoded into the boundary states.~We therefore propose the following convolution-type gluing at the continuum level.
\begin{conjecture}[Gluing]\label{conj:gluing}
    For any d-dimensional manifolds $M,N$ such that $\partial M=\Sigma_1^*\sqcup \Sigma$ and $\partial N=\Sigma^*\sqcup \Sigma_2$; and, for any decomposable vector $\psi_1\otimes \psi_2\in\mathcal{D}_{\Sigma_1^*\sqcup\Sigma_2}$, we have
    \begin{equation}\label{eq:ZPZgluingconv}
        Z(M\cup_\Sigma N)[\psi_1\otimes\psi_2]=\sum_{\alpha,\beta\in\mathscr I}Z(M)[\psi_1\otimes \varphi_\alpha]P_{\Sigma}(\varphi_\alpha)[\varphi_\beta]Z(N)[R_\Sigma(\varphi_\beta)\otimes\psi_2]\;,
    \end{equation}
    where $\{\varphi_\alpha\}_{\alpha\in\mathscr I}$ is an orthonormal basis of $\mathcal{D}_{\Sigma}$.~Moreover, for every $(d-1)$-dimensional closed manifold $\Sigma$, the rigging map \eqref{def:PSigma} satisfies a convolution property
    \be\label{eq:PSigmaconvolution}
    P_{\Sigma}(\phi)[\psi]=\sum_{\alpha\in\mathscr I}P_{\Sigma}(\phi)[\varphi_\alpha]P_{\Sigma}(\varphi_\alpha)[\psi]\;,
    \ee
    for all $\phi,\psi\in\cD_{\Sigma}$,
\end{conjecture}
\begin{figure}[t!]
\noindent\makebox[\textwidth]{%
        \includegraphics[width=1.1\textwidth]{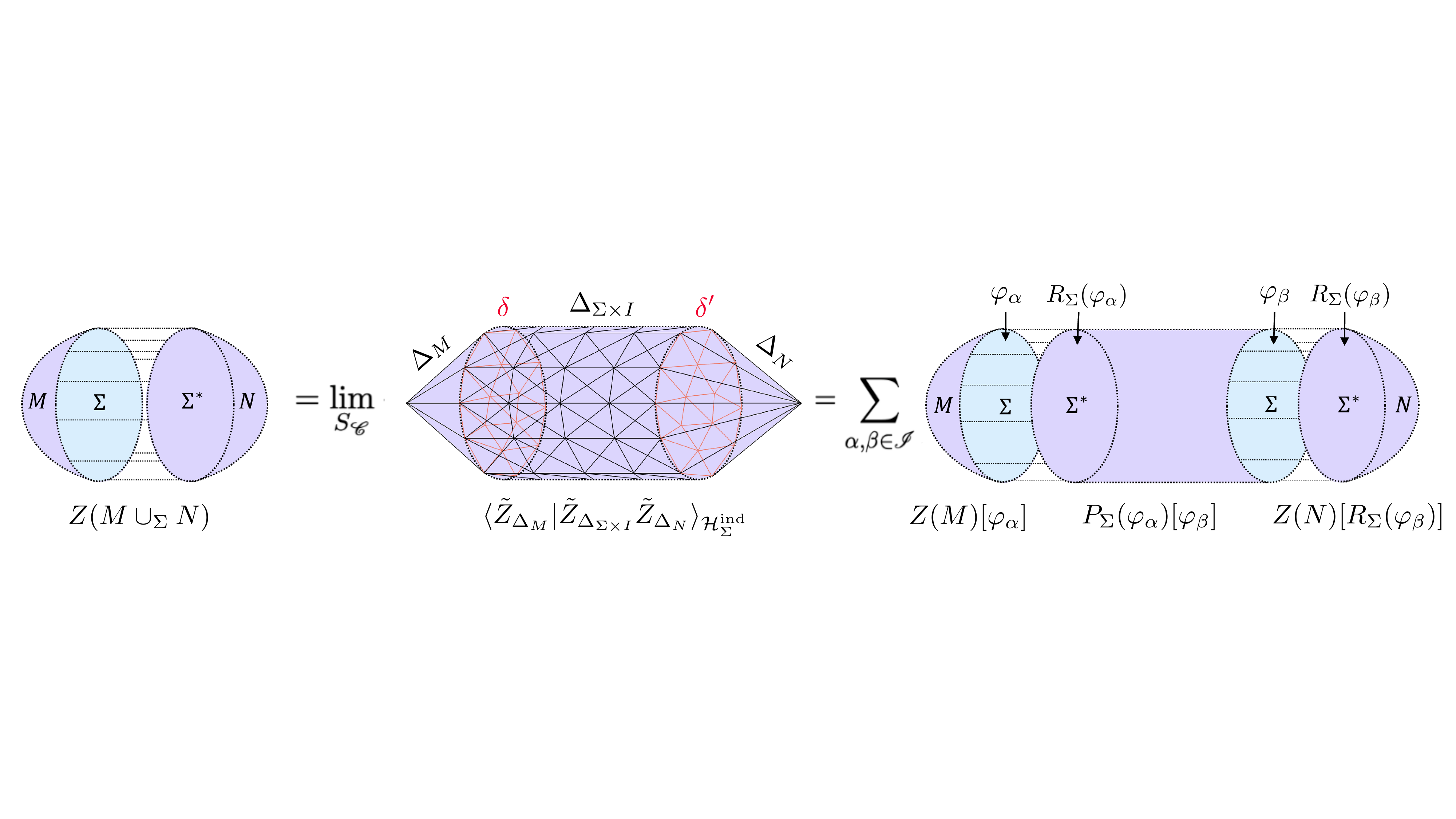}}
         \caption{Illustration of the property \eqref{eq:ZPZgluingconv} in the case $\Sigma_1=\varnothing=\Sigma_2$ and of the key technical step discussed in Appendix~\ref{AppSec:gluing} consisting of the restriction to the subnet defined over the directed set $S_\mathscr{C}$ of triangulations of $M\cup_\Sigma N$ that split in a suitable manner.~The second equality illustrates the propagator-like convolution gluing \eqref{eq:ZPZgluingconv}.}
         \label{Fig:ZPZcomp}
\end{figure}

A possible strategy to prove the above conjecture is discussed in Appendix~\ref{AppSec:gluing}.~This relies on two main technical steps schematically illustrated in Fig.~\ref{Fig:ZPZcomp}.~The first step consists of restricting to the subnet over the directed set of $d$-dimensional triangulations that share the same boundary triangulations along the gluing surface.~The existence of such a subnet is proved in Lemma~\ref{lem:gen_cutsubnet}.~The second step consists of interchanging the limit with the infinite sum over the index $\alpha$, which contains at most countably many non-zero elements as the $\tilde{Z}_{\Delta}$, entering the definitions \eqref{dLim} and \eqref{def:PSigma} of $Z$ and $P_{\Sigma}$, are Hilbert-Schmidt operators (see Appendix~\ref{AppSec:gluing} for details).~However, to the best of our knowledge, unlike sequences for which there exist known mathematical results such as the dominated convergence theorem, there is no established theorem or set of conditions guaranteeing that this interchange is valid in general for nets.\footnote{Note that, if the Hilbert spaces are finite-dimensional, the sum is finite and can be trivially interchanged with the limit.} Therefore, strictly speaking, the propagator-like convolution properties in Eqs.~\eqref{eq:ZPZgluingconv} and \eqref{eq:PSigmaconvolution} hold the status of a conjecture.

In a similar spirit to the dominated convergence theorem for sequences, where pointwise convergence by itself is not enough and the existence of an integrable (or summable) majorant is required to interchange sums and limits, the pointwise convergence of the nets $\tilde{Z}_{\bullet}$ in the algebraic dual is not enough so that we expect additional assumptions to be required to guarantee sufficient control over the spin foam amplitudes, which should converge in a suitably regular manner.~From this perspective, as more internal structures are created under refinement of the triangulations, this issue may be related to model-specific problems discussed in the literature, such as the control of bubble divergences.

As a consistency check for the convolution property \eqref{eq:ZPZgluingconv} (and similarly for \eqref{eq:PSigmaconvolution}), we notice that, when the discrete amplitudes are invariant under refinement of the (bulk) triangulations, the limits are trivial and Eq.~\eqref{eq:ZPZgluingconv} reduces to the gluing product of a TQFT given in Sec.~\ref{Sec:TQFT}.~This is precisely what happens in the case of the Ponzano-Regge model as discussed in Example~\ref{Ex:PRTVO2} below, which also illustrates the restriction to the subnet mentioned earlier.

\begin{example}\label{Ex:PRTVO2}
For the sake of brevity, we follow the original discussion by Ooguri in \cite{Ooguri:1991ib,Ooguri_1992a}, slightly adapted to match some notation already introduced in Sec.~\ref{Sec:examplePRTVO}.~For the explicit reformulation in terms of a rigging map between the dense subset of cylindrical spin network functionals and its algebraic dual, we refer to \cite{Noui_Perez_2005}.~Let $M \cup_{\Sigma} (\Sigma\times[0,1]) \cup_{\Sigma} N$ be a decomposition of a closed three-dimensional manifold $M\cup_{\Sigma}N$ into three parts, where $\Sigma$ is a closed surface and $\partial M = \partial N = \Sigma$.~Since the partition function of the Ponzano-Regge model is independent of the bulk triangulations, one may choose a triangulation such that $M$, $N$, and $\Sigma\times[0,1]$ do not share tetrahedra.~The partition function then factorises as
\begin{equation}\label{ex:subnetPRTVO}
Z_{M \cup_{\Sigma} N}=
\sum_{\alpha,\beta}
Z_{M;\delta}(\alpha)\,
Z_{\Sigma\times I;\delta,\delta'}(\alpha,\beta)\,
Z_{N;\delta'}(\beta),
\end{equation}
where $\alpha$ and $\beta$ denote boundary colorings.~Eq.~\eqref{ex:subnetPRTVO} is an explicit example of the restriction to the subnet over the directed set $S_\mathscr{C}$ of triangulations adapted to the decomposition of the underlying manifold (cfr.~Lemma~\ref{lem:gen_cutsubnet} in App.~\ref{AppSec:gluing} and Fig.~\ref{Fig:ZPZcomp}), where the limits are trivial owing to triangulation independence.

Since $Z_{\Sigma\times I;\delta,\delta'}$ is independent of the triangulation of the interior of $\Sigma\times I$, it satisfies the composition property
\begin{equation}
\sum_{\beta}
Z_{\Sigma\times I;\delta,\delta''}(\alpha,\beta)
Z_{\Sigma\times I;\delta'',\delta'}(\beta,\gamma)
=
Z_{\Sigma\times I;\delta,\delta'}(\alpha,\gamma).
\end{equation}
One can therefore define an operator $\Phi_{\mathrm{id}_{\Sigma,\delta}}$ acting on functions of the boundary colorings by
\begin{equation}
\Phi_{\mathrm{id}_{\Sigma,\delta}}[f](\alpha)
=
\sum_{\beta}
Z_{\Sigma\times I;\delta,\delta}(\alpha,\beta)\,f(\beta).
\end{equation}
This operator acts as a projection operator, i.e.~$\Phi_{\mathrm{id}_{\Sigma,\delta}}\circ \Phi_{\mathrm{id}_{\Sigma,\delta}}=\Phi_{\mathrm{id}_{\Sigma,\delta}}$.~Eq.~\eqref{ex:subnetPRTVO} can therefore be written as
\begin{equation}\label{ex:glcvPRTVO}
Z_{M \cup_{\Sigma} N}=
\sum_{\alpha,\beta}
\Phi_{\mathrm{id}_{\Sigma,\delta}}[Z_{M_1;\delta}](\alpha)\,
Z_{\Sigma\times I;\delta,\delta'}(\alpha,\beta)\,
\Phi_{\mathrm{id}_{\Sigma,\delta'}}[Z_{M_2;\delta'}](\beta)\;.
\end{equation}
Eq.~\eqref{ex:glcvPRTVO} is the analogue of \eqref{eq:ZPZgluingconv} for the Ponzano-Regge model.~More specifically, it shows that, in the Ponzano-Regge model, the states propagating from $M$ to $N$ via $\Sigma\times I$ are projected by the operator $\Phi_{\mathrm{id}_{\Sigma,\delta}}$.~One can therefore define the physical Hilbert space associated to $\Sigma$ as the subspace selected by the projector $\Phi_{\mathrm{id}_{\Sigma,\delta}}$, namely $f\in\Hil_{\delta} \Leftrightarrow f=\Phi_{\mathrm{id}_{\Sigma,\delta}}[f]$, with inner product given by
\begin{equation}
\braket{f_1|f_2}_{\Hil_{\delta}}=
\sum_{\alpha,\beta}
f_1(\alpha)\,
Z_{\Sigma\times I;\delta,\delta}(\alpha,\beta)\,
f_2(\beta)\;.
\end{equation}
As discussed in Sec.~\ref{Sec:Continuum1}, the boundary spaces $\Hil_{\delta}$ and $\Hil_{\delta'}$ for any $\delta$ and $\delta'$ are isomorphic through $Z_{\Sigma\times I;\delta,\delta'}$, so that the final Hilbert space $\Hil_{\Sigma}$ is identified with the range of the latter and the partition function of $M\cup_{\Sigma}N$ is given by the inner product on it, as it should since the Ponzano-Regge model is a TQFT.  
\end{example}

In light of the above discussion, the gluing rules \eqref{eq:ZPZgluingconv} and \eqref{eq:PSigmaconvolution} can be thought of as providing us with a weaker form of the gluing property in TQFT which generalises it to the case in which the Hilbert spaces are not necessarily finite dimensional and in which the convolution with the amplitude associated to the cylinder propagates the local dynamical degrees of freedom encoded into the boundary states.\footnote{The fact that a TQFT-type of gluing is recovered with triangulation independence is in principle also compatible with the possibility that the theory might approach a topological phase in certain regimes where quantum gravity could be studied as a perturbation of a TQFT.~This was suggested for example in \cite{Rovelli_2022} by studying simple reparametrisation invariant systems and, more recently, in the context of stacked spin foam models \cite{Han:2025emp,Han:2026hyq}.~A detailed analysis is however beyond the scope of the present work.}   

As we shall discuss in the next subsection, outside of the TQFT setting, the gluing Conjecture~\ref{conj:gluing} \emph{does} lead to a crucial difference compared to the case of a TQFT like the Ponzano-Regge model:~unlike the latter case, where the amplitudes $Z(M)$ are elements of the physical Hilbert space defined by the cylinder,\footnote{By the Riesz representation theorem, this is equivalent to $Z(M)$ yielding continuous functionals on physical states.} the propagator-like convolution \eqref{eq:ZPZgluingconv} implies that the continuum amplitude associated to a manifold $M$ different from the cylinder is not necessarily a continuous functional on physical states. 

\subsection{\texorpdfstring{$Z(M)$}{Z(M)} are Distributions on \texorpdfstring{$\Hphys$}{\mathcal H\^phys\_Sigma}}

The gluing properties \eqref{eq:ZPZgluingconv} and \eqref{eq:PSigmaconvolution} have an important consequence for the limit $Z(M)$ associated to a $d$-dimensional manifold $M$.~First of all, the following convolution property between $Z(M)$ and $P_{\del M}$ holds.
\begin{proposition}\label{prop:ZconvP}
    For any d-dimensional manifold $M$, and any state $\psi\in\mathcal D_{\del M}$, we have
 \begin{equation}
     Z(M)[\psi]=\sum_{\alpha\in\mathscr I}Z(M)[\varphi_\alpha]P_{\del M}(\varphi_\alpha)[\psi]\;,
 \end{equation}
 where $\{\varphi_\alpha\}_{\alpha\in\mathscr I}$ is an orthonormal basis of $\mathcal{D}_{\del M}$.
\end{proposition}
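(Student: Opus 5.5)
The plan is to obtain the identity as a special case of the gluing Conjecture~\ref{conj:gluing}, using the diffeomorphism $M\simeq M\cup_{\del M}(\del M\times I)$. Set $\Sigma\coloneqq\del M$. Collar the boundary: glue the cylinder $N=\Sigma\times I$ onto $M$ along $\Sigma$. Since $M\cup_\Sigma(\Sigma\times I)\simeq M$, and $Z(M)$ is defined in \eqref{dLim} as a limit over $\mathcal T_M$, which depends only on the diffeomorphism class of $M$, we get $Z(M\cup_\Sigma(\Sigma\times I))=Z(M)$ as elements of $\cD_\Sigma'$. We are in the setting of the conjecture with $\Sigma_1=\varnothing$ and $\Sigma_2=\Sigma$. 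By the normalisation results $\cD_\varnothing=\C$, so we may take $\psi_1=1$, and every $\psi\in\cD_\Sigma$ is trivially decomposable as $1\otimes\psi$.

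Applying \eqref{eq:ZPZgluingconv} then gives
\[
Z(M)[\psi]=\sum_{\alpha,\beta\in\mathscr I}Z(M)[\varphi_\alpha]\,P_\Sigma(\varphi_\alpha)[\varphi_\beta]\,Z(\Sigma\times I)[R_\Sigma(\varphi_\beta)\otimes\psi].
\]
By the definition \eqref{def:PSigma} of the rigging map, the last factor equals $P_\Sigma(\varphi_\beta)[\psi]$. The next step is to carry out the $\beta$-sum first. By the convolution property \eqref{eq:PSigmaconvolution} with $\phi=\varphi_\alpha$,
\[
\sum_\beta P_\Sigma(\varphi_\alpha)[\varphi_\beta]\,P_\Sigma(\varphi_\beta)[\psi]=P_\Sigma(\varphi_\alpha)[\psi].
\]
Substituting this back yields exactly the claimed formula $Z(M)[\psi]=\sum_\alpha Z(M)[\varphi_\alpha]P_{\del M}(\varphi_\alpha)[\psi]$. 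Should $\del M$ be disconnected, nothing changes: $\Sigma$ is simply a possibly disconnected closed manifold, and $\cD_\Sigma$ factorises by Assumption~\ref{Assumption2}.

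The main obstacle is justifying the rearrangement of the double sum into an iterated sum with the $\beta$-sum performed first. The conjecture only asserts convergence of the double sum as written. I would first argue that only countably many terms are nonzero, since the truncated $\tilde Z_\Delta$ are Hilbert--Schmidt, and then try to establish absolute convergence, which would allow Fubini-type reordering. If this cannot be secured in general, the fallback is to interpret \eqref{eq:ZPZgluingconv} as the iterated sum, with the $\beta$-sum innermost, consistent with its derivation in Appendix~\ref{AppSec:gluing}. The remaining point, the identification $Z(M\cup_\Sigma(\Sigma\times I))=Z(M)$, is routine because both limits run over the same directed set $\mathcal T_M$ up to diffeomorphism.
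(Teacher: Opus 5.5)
Your proposal is correct and matches the paper's proof: you write $Z(M)[\psi]=Z(M\cup_{\partial M}(\partial M\times I))[1\otimes\psi]$, apply the gluing convolution \eqref{eq:ZPZgluingconv}, identify the cylinder factor as $P_{\partial M}(\varphi_\beta)[\psi]$ via \eqref{def:PSigma}, and collapse the $\beta$-sum using \eqref{eq:PSigmaconvolution}. The paper carries out the inner $\beta$-summation without comment, so your point that the reordering of the double sum needs justification (or that the conjecture should be read as an iterated sum) is a legitimate extra precaution, not a change of route.
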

\begin{proof}
    The proof is straightforward.~Recalling in fact that $M\cup_{\partial M}(\partial M\times I)\simeq M$, the convolution property \eqref{eq:ZPZgluingconv} implies that
    \begin{align*}
        Z(M)[\psi]&=Z(M\cup_{\partial M}\partial M\times I)[1\otimes\psi]\\
        &=\sum_{\alpha,\beta\in\mathscr I}Z(M)[1\otimes \varphi_\alpha]P_{\partial M}(\varphi_\alpha)[\varphi_\beta]Z(\partial M\times I)[R_\Sigma(\varphi_\beta)\otimes\psi]\\
        &=\sum_{\alpha,\beta\in\mathscr I}Z(M)[\varphi_\alpha]P_{\partial M}(\varphi_\alpha)[\varphi_\beta]P_{\partial M}(\varphi_\beta)[\psi]\\
        &=\sum_{\alpha\in\mathscr I}Z(M)[\varphi_\alpha]P_{\partial M}(\varphi_\alpha)[\psi]\;,
    \end{align*}
    where, in the third equality, we used the definition \eqref{def:PSigma} of the rigging map and, in the last equality, we used the convolution property \eqref{eq:PSigmaconvolution}.
\end{proof}
\noindent
As an immediate consequence of Proposition~\ref{prop:ZconvP}, we have then the following corollary on the functional nature of the $Z(M)$.
\begin{corollary}\label{cor:ZMdistrib}
    For each $M$, with $\partial M=\Sigma$, $Z(M)$ defines a distribution on $\Hphys$.
\end{corollary}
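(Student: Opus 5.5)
The plan is to show that $Z(M)$ descends to a well-defined linear functional on the quotient $\cD_{\Sigma}/\ker(P_{\Sigma})$, and is therefore defined on the dense subspace of $\Hphys$ given by the equivalence classes $\psi_{\rm phys}$ of elements of $\cD_\Sigma$. In other words, $Z(M)$ is to be regarded as an element of the algebraic dual of this dense domain, and not necessarily a continuous functional on it.

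The key step is to show that $Z(M)$ vanishes on $\ker(P_{\Sigma})$. Take $\psi\in\ker(P_\Sigma)$. By the Reality property of Theorem~\ref{Thm:riggingmap}, $P_\Sigma(\varphi_\alpha)[\psi]=\overline{P_\Sigma(\psi)[\varphi_\alpha]}=0$ for every basis vector $\varphi_\alpha$. Every term in the sum of Proposition~\ref{prop:ZconvP} then vanishes, so $Z(M)[\psi]=0$.

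Given this, I define
\[
Z(M)[\psi_{\rm phys}]\coloneqq Z(M)[\psi].
\]
The representative independence comes from linearity of $Z(M)$ (as an element of $\cD_\Sigma'$) together with the vanishing on the kernel. Linearity on the quotient is then inherited directly.

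The remaining task is to clarify in what sense this is a \emph{distribution} on $\Hphys$. I would take the space of test states to be the dense subspace $\cD_\Sigma/\ker(P_\Sigma)\subset\Hphys$ and the distributions to be its algebraic dual, in the same spirit as the Gelfand triple of the kinematical construction. This gives a triple of the form $\cD_\Sigma/\ker P_\Sigma\subset\Hphys\subset(\cD_\Sigma/\ker P_\Sigma)'$, and $Z(M)$ lives in the outer space.

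Proposition~\ref{prop:ZconvP} also rewrites $Z(M)[\psi]$ as a formal pairing $\sum_\alpha Z(M)[\varphi_\alpha]\,(\varphi_\alpha,\psi)_\Sigma$ with the physical inner product. This supports reading $Z(M)$ as a formal physical bra, which need not be normalisable.

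The main obstacle is not the algebra, which is immediate, but being honest about continuity. Nothing guarantees that $|Z(M)[\psi]|\le C\,\|\psi_{\rm phys}\|$, so I will not claim $Z(M)\in(\Hphys)^*$; only the weaker distributional statement is asserted. A subtle point is the convergence of the series in Proposition~\ref{prop:ZconvP}. I would only use it for $\psi$ in the kernel, where every term is zero, so no convergence issue arises.
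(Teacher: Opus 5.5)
Your proposal is correct and follows the paper's proof: the paper likewise uses Proposition~\ref{prop:ZconvP} to get $Z(M)[\psi]=0$ for $\psi\in\ker P_\Sigma$, and concludes that $Z(M)$ descends to a linear functional on $\cD_\Sigma/\ker(P_\Sigma)$. Your appeal to the Reality property, $P_\Sigma(\varphi_\alpha)[\psi]=\overline{P_\Sigma(\psi)[\varphi_\alpha]}=0$, only spells out a step the paper leaves implicit; note that, like the paper's, your argument is conditional on Proposition~\ref{prop:ZconvP}, which itself rests on Conjecture~\ref{conj:gluing}.
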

\begin{proof}
    By Proposition~\ref{prop:ZconvP}, if $\psi\in\ker P_{\Sigma}$, then $Z(M)[\psi]=0$.~Hence, $Z(M)$ defines a linear functional on $\cD_{\Sigma}/\ker(P_{\Sigma})$.
\end{proof}
This characterises the limits $Z(M)$ as physical in a precise sense: they are algebraic linear functionals on $\Hphys$ whose common dense domain is the projection of $\mathcal{D}_{\Sigma}$.~This distributional nature of $Z(M)$ represents a key difference with the case of a TQFT like the Ponzano-Regge model for 3-dimensional quantum gravity.~Indeed, as anticipated at the end of the previous subsection, unlike the case of a TQFT where the $Z(M)$ are elements of the physical Hilbert space (see also Sec.~\ref{Sec:TQFT}) --- or, equivalently, continuous linear functionals on it --- in our present setting, the $Z(M)$ are not necessarily continuous functionals on physical states.~Borrowing intuition from ordinary constrained quantum systems, whose physical Hilbert space may admit a basis of improper eigenstates of physical observables, the functionals $Z(M)$ can be thought of as the improper eigenstates of some operators on $\Hil_{\del M}^{\rm phys}$.~This is in line with the second level of Gelfand triple to define physical observables as discussed in \cite{Thiemann_2007} (cfr.~Ch.~30), which now involves the physical Hilbert space itself, namely
\be
\mathcal D_{\Sigma}^{\rm phys}\subset\Hphys\subset(\mathcal D_{\Sigma}^{\rm phys})'\;,
\ee
with $\mathcal D_{\Sigma}^{\rm phys}=\cD_{\Sigma}/\ker(P_{\Sigma})$ and $(\mathcal D_{\Sigma}^{\rm phys})'$ its algebraic dual.~An explicit interpretation in terms of Dirac observables would of course require a throughout analysis of the observable algebras and their properties from the discrete theory to the continuum.~We leave this to future investigations and comment on this point in the outlook of Sec.~\ref{Sec:conclusion}.

The possibility of capturing the presence of local physical observables as a direct consequence of the gluing properties \eqref{eq:ZPZgluingconv} and \eqref{eq:PSigmaconvolution} further underscores the role of the latter as a distinguishing feature from topological theories.~In this respect, let us close this section by noticing that the distributional nature of the $Z(M)$ allows the property \eqref{eq:ZPZgluingconv} to descend on physical states as
\be\label{eq:convonphys}
Z(M\cup_\Sigma N)[\psi_1\otimes\psi_2]=\sum_{a,b\in\mathcal I}\mu_{ab}Z(M)[\psi_1\otimes e_a]Z(N)[\mathfrak{R}_\Sigma(e_b)\otimes\psi_2]\;,
\ee
where $\psi_1\otimes \psi_2\in\mathcal{H}_{\Sigma_1^*\sqcup\Sigma_2}^{\rm phys}$, $\mathfrak{R}_{\Sigma}$ denotes the Riesz map on $\Hphys$, $\{e_a\}_{a\in\mathcal I}$ is an orthonormal basis of $\Hphys$, and $\mu_{ab}$ are complex coefficients obtained by collecting together the components (in the basis $\{e_a\}$) of the system of generators of $\Hphys$ resulting from the basis elements of $\mathcal D_{\Sigma}$ in \eqref{eq:ZPZgluingconv} and of the intermediate cylinder amplitudes in \eqref{eq:ZPZgluingconv}.~Thus, the convolution-type gluing can be entirely expressed at the level of the physical Hilbert space.

\section{Conclusion and Outlook}\label{Sec:conclusion}

In this work we have proposed an axiomatic framework for spin foam models, within which the structure of the continuum limit can be analysed in a systematic way, independent of any specific model.

Inspired by Atiyah’s axioms for Topological Quantum Field Theories, we defined a spin foam model as a rule assigning algebraic data (Hilbert spaces and states) to combinatorial and topological data (triangulations), subject to a set of axioms (cfr.~Def.~\ref{def:truncSF}).~More precisely, to suitable $(d-1)$-dimensional abstract simplicial complexes $\delta$, which we referred to as \emph{boundary triangulations}, we associate a Hilbert space $\mathcal{H}_\delta$, while to $d$-dimensional abstract simplicial complexes $\Delta$, which we referred to as \emph{triangulations}, we associate a vector $Z(\Delta)\in\mathcal{H}_{\partial\Delta}$.~In contrast with Atiyah's framework, the Hilbert spaces appearing here are generally infinite-dimensional.~This feature originates from the presence of additional combinatorial data and from the consequent absence of a genuine identity morphism in the category of triangulations.

Starting from this axiomatic formulation, we introduced a notion of continuum limit by equipping the set of triangulations with suitable partial orders.~We first considered a particularly simple preorder, which we called the \emph{weak order}.~This order allows to take limits of bulk triangulations while keeping the boundary triangulation fixed.~Within this setting, we obtained a first general result according to which, if the amplitudes of a spin foam model are invariant under Pachner moves, then the model necessarily describes a Topological Quantum Field Theory (cfr.~Props.~\ref{prop:w-gluing},~\ref{prop:w-range}, and Cor.~\ref{cor:PachnerInv}).~However, this order does not faithfully capture the physical intuition of a refinement of triangulations encoding more and more degrees of freedom of the theory, and the requirement of convergence with respect to this order turns out to be too restrictive.

We therefore introduced the notion of \emph{refinement}, namely an order on the set $\mathcal{T}_M$ of triangulations of a manifold $M$ generated by stellar subdivisions and, more generally, Alexander moves. This construction turns $\mathcal{T}_M$ into a directed partially ordered set. Since this order does not preserve the boundary triangulation, we constructed a Hilbert space associated with a boundary topology $\Sigma$ by means of the direct limit
\[
\mathcal{H}^{\mathrm{ind}}_\Sigma \coloneqq 
\lim_{\longrightarrow}\mathcal{H}_\delta
=
\bigsqcup_{\delta\in \mathcal{T}_\Sigma}\mathcal{H}_\delta \Big/ \sim .
\]
For each $d$-dimensional manifold $M$, a spin foam model then defines a net 
$\tilde{Z}_\bullet : \mathcal{T}_M \to \mathcal{H}^{\mathrm{ind}}_{\partial M}$.~Assuming then that every such net converges to a limit point inside this Hilbert space, the resulting continuum theory is necessarily topological (Thm.~\ref{Thm:no-go}).~This yields a no-go theorem showing that strong notions of convergence inevitably lead to Topological Quantum Field Theories and therefore cannot describe a physical theory of quantum gravity.

Motivated by this observation, we considered a weaker notion of convergence inspired by Refined Algebraic Quantisation.~This can be implemented rigorously using the framework of generalised \emph{Gelfand triples}.~In this setting, we assumed that the nets $\tilde{Z}_\bullet$ converge as linear functionals in the algebraic dual $\mathcal{D}'_{\partial M}$ of a dense subspace $\mathcal{D}_{\partial M}\subset\mathcal{H}^{\mathrm{ind}}_{\partial M}$, namely
\[
Z(M)[\psi]
=
\lim_{\Delta\in\mathcal{T}_M}
\langle \tilde{Z}_\Delta \mid \psi \rangle_{\mathcal{H}^{\mathrm{ind}}_{\partial M}},
\qquad
\forall\,\psi\in\mathcal{D}_{\partial M}.
\]
Remarkably, by solely combining the assumption on the distributional convergence of the amplitudes together with the axioms in the discrete, we obtained the main result of the paper in Thm.~\ref{Thm:riggingmap}, according to which the functional associated with the cylinder $Z(\Sigma\times I)$ defines a rigging map
\[
P_\Sigma(\phi)[\psi]
\coloneqq
Z(\Sigma\times I)[R_\Sigma(\phi)\otimes\psi],
\qquad
\phi,\psi\in\mathcal{D}_\Sigma\;.
\]
This allowed us to construct the physical Hilbert space $\Hphys$ using the standard techniques of Refined Algebraic Quantisation.~The properties of the rigging map and of the resulting physical Hilbert space inherit structural features from the discretised formulation and resemble the factorisation and involution axioms appearing in Atiyah’s framework (cfr.~Props.~\ref{prop:physdual},~\ref{prop:factphys}).~Importantly, our axiomatic framework together with the distributional nature of the continuum limit prevents the definition of a TQFT-type of gluing.~Inspired by the propagator-like behaviour of the path integral, we therefore conjectured a weaker gluing property (Eq.~\eqref{eq:ZPZgluingconv}), which takes the form of a convolution and reduces to the TQFT-gluing with triangulation independence.~With this gluing convolution, the continuum amplitudes $Z(M)$ admit a precise physical interpretation as distributions (algebraic functionals) on the physical Hilbert space (Prop.~\ref{prop:ZconvP} and Cor.~\ref{cor:ZMdistrib}).

Taken together, these results indicate that the natural continuum description of spin foam models is not given by ordinary Hilbert-space limits, but rather by distributional amplitudes among which the cylinder amplitude implements the physical inner product through a rigging map.~The structural properties of the resulting continuum theory are summarised in the box below where, despite of its interpretation as the physical Hilbert space, we use the notation $\mathcal{H}_\Sigma$ rather than $\Hphys$ to keep the discussion independent of its explicit construction.
\smallskip

\begin{tcolorbox}[breakable,boxrule=1pt,rounded corners,colframe=gray!65,colback=white]
\textbf{Summary (Continuum spin foam).}~\emph{The continuum limit of a spin foam model assigns:
\begin{itemize}
    \item to any closed $(d-1)$-dimensional manifold $\Sigma$, up to orientation-preserving diffeomorphisms, a Hilbert space $\mathcal H_{\Sigma}$; and
    \item to the equivalence class of diffeomorphic $d$-dimensional manifolds $M$ a densely defined functional $Z(M)$ on the Hilbert space $\mathcal H_{\partial M}$.
\end{itemize}
These data satisfy the following properties:
\begin{enumerate}
    \item\textbf{Involution:}~$\mathcal H_{\Sigma^*}\simeq \mathcal H_{\Sigma}^*$, with $\Sigma^*$ the manifold with opposite orientation to $\Sigma$ and $\Hil_{\Sigma}^*$ the strong continuous dual space of $\Hil_{\Sigma}$.
    \item\textbf{Factorisation:}~$\mathcal H_{\Sigma_1\sqcup\Sigma_2}\simeq \mathcal H_{\Sigma_1}\otimes\mathcal H_{\Sigma_2}$.
    \item\textbf{Normalisation:}~If $\Sigma=\varnothing$, then $\mathcal H_{\varnothing}\simeq \mathbb{C}$.
    \item\textbf{Gluing:}~For any d-dimensional manifolds $M,N$, such that $\partial M=\Sigma_1^*\sqcup \Sigma$ and $\partial N=\Sigma^*\sqcup \Sigma_2$, and for any decomposable vector $\psi_1\otimes \psi_2\in\mathcal{H}_{\Sigma_1^*\sqcup\Sigma_2}$,
    \[Z(M\cup_\Sigma N)[\psi_1\otimes\psi_2]=\sum_{a,b\in\mathcal I}\mu_{ab}Z(M)[\psi_1\otimes e_a]Z(N)[\mathfrak{R}_\Sigma(e_b)\otimes\psi_2]\;,\]
    where $\{e_a\}_{a\in\mathcal I}$ is an orthonormal basis of $\mathcal{H}_{\Sigma}$, $\mathfrak{R}_\Sigma:\mathcal H_{\Sigma}\to\mathcal H_{\Sigma}^*$ is the Riesz map on $\mathcal H_{\Sigma}$, and $\mu_{ab}$ some complex weights.
    \item\textbf{Inner product:}~For any $\psi_1\in\mathcal H_\Sigma^*$ and $\psi_2\in\mathcal H_\Sigma$ such that $\psi_1\otimes\psi_2\in\mathrm{Dom}(Z(\Sigma\times I))$,
    \[
    Z(\Sigma\times I)[\psi_1\otimes\psi_2]
    =
    \langle \mathfrak{R}_\Sigma(\psi_1)\mid\psi_2\rangle_{\mathcal H_\Sigma}.
    \]
    \item\textbf{Normalisation:}~If $M=\varnothing$, then $Z(\varnothing)=1\in\mathbb{C}$.
\end{enumerate}
}
\end{tcolorbox}
\smallskip

Several directions for future work naturally emerge from our analysis.~First, it would be interesting to reformulate the framework developed in the present work from a $C^*$-algebraic perspective.~Such an approach could allow to define an algebra of physical observables and derive its properties directly from the truncated theory, potentially providing an independent derivation of the physical interpretation of the amplitudes $Z(M)$ without relying on the conjectured gluing property.~This perspective may also open the way to the study of the associated von Neumann algebras and to a precise notion of subsystems.

Another natural direction, not unrelated to the previous one, is the extension of the axiomatic framework to include manifolds with $(d-2)$-dimensional strata.~Similarly to the guiding role of Atiyah's framework in our analysis, once-extended Topological Quantum Field Theories, which require the language of higher categories, provides a useful starting point from a mathematical perspective.~From a physical perspective, such an extension would allow to incorporate corners in a systematic way, whose role in gravitational theories has received increasing attention in recent years (see e.g.~\cite{Donnelly_Freidel_2016,Speranza:2017gxd,Freidel_Geiller_Pranzetti_2020a,Freidel_Geiller_Pranzetti_2020b,Freidel_Geiller_Pranzetti_2021,Ciambelli:2021nmv,Carrozza:2022xut} in the context of classical gravity, and \cite{Freidel:2016bxd,Freidel:2018pbr,Freidel:2019ees,Freidel:2023bnj} for discussions in Loop Quantum Gravity and related contexts).

The constructions presented in this paper may also offer a first step towards a rigorous formulation of the sum over topologies.~Since amplitudes associated with manifolds sharing the same boundary naturally belong to the same vector space, it becomes possible to consider linear combinations of cobordisms in a mathematically well-defined way.~This could provide a functional-analytic interpretation for partition functions of the type appearing in Group Field Theory.

A generalisation along these directions would also enable a comparison of recent results on gravitational entropy across different quantum gravity approaches.~In the recent work~\cite{Colafranceschi_Dong_Marolf_Wang_2024}, holographic entropy is derived from a small set of axioms for the Euclidean gravitational path integral, which defines algebras of observables composed along $(d-2)$-boundaries.~Extending our framework as discussed above, together with an algebraic reformulation of the amplitudes and the appropriate continuum limit, would allow for a direct comparison with the axiomatic framework and algebraic constructions of~\cite{Colafranceschi_Dong_Marolf_Wang_2024}.~This could clarify whether these axioms, and the resulting holographic entropy, should be expected to hold in theories where the gravitational path integral arises as the continuum limit of discretised geometries.

Finally, on a purely mathematical side, it would be interesting to investigate whether the continuum structure emerging in this work admits a functorial formulation analogous to that of Topological Quantum Field Theories.~In particular, it would be interesting to investigate possible connections with the recent framework of Compositional Quantum Field Theories \cite{Oeckl:2022mvg} which refines the general boundary formulation proposed in \cite{Oeckl_2003a}.~More generally, one may ask whether spin foam amplitudes define a functorial quantum field theory once the continuum limit is taken in the distributional sense described above.

\section*{Acknowledgments}
\addcontentsline{toc}{section}{Acknowledgments}

We thank Eugenio Bianchi, Bianca Dittrich, Jonathan Engle, Domenico Fiorenza, Laurent Freidel, Muxin Han, Wojciech Kami\'nski, Don Marolf, Daniele Oriti, and Francesca Vidotto for helpful discussions.~We also thank the organisers of the International Loop Quantum Gravity Seminars for the opportunity to present an early version of this work in Spring 2025, as well as the participants whose questions and comments provided a valuable source of inspiration in the later stages of the work.~MB acknowledges support by grant ID\# 63132 from the John Templeton Foundation, managed by the Center for SpaceTime and the Quantum.~MB also acknowledges Western University for hospitality and the QISS cross-consortium visit program for financial support during two visits where part of this work was developed.~EC is currently supported by the ATRAE project PR28/23 ATR2023-145735 of the Agencia Estatal de Investigación (Spain).~FMM's work at Louisiana State University is supported by the NSF grants PHY-2409402, PHY-2110273.~FMM's research at Western University was supported by Francesca Vidotto’s Canada Research Chair in the Foundation of Physics, and NSERC Discovery Grant ``Loop Quantum Gravity: from Computation to Phenomenology''.~Western University is located in the traditional territories of Anishinaabek, Haudenosaunee, L\=unaap\'eewak and Chonnonton Nations.~This project/publication was also made possible through the support of the ID\# 62312 grant from the John Templeton Foundation, as part of the \href{https://www.templeton.org/grant/the-quantum-information-structure-of-spacetime-qiss-second-phase}{``The Quantum Information Structure of Spacetime'' Project (QISS)}, and through the support of the \href{https://withoutspacetime.org}{WOST, WithOut SpaceTime project}, led by the Center for Spacetime and the Quantum (CSTQ), and supported by Grant ID\# 63683 from the John Templeton Foundation.~The opinions expressed in this project/publication are those of the author(s) and do not necessarily reflect the views of the John Templeton Foundation.

\appendix

\section{Triangulations and Nets}
\label{App:Triang}
In this Appendix, we review some definition and well-known results about triangulations, nets, and their limits.~We shall mainly focus on those aspects which provide the technical background to the main body of the manuscript.~Specifically, in the first part of this appendix --- which follows the exposition in Willard's book~\cite{Willard_2004} and to which we refer for further details --- we focus on the generalities of the theory of nets and their limit.~In the second part, we focus on triangulations and the different order relations they can be equipped with.
\begin{definition}[Preorder]
    A preorder is a binary relation $\leq$ that is reflexive and transitive.
\end{definition}

\begin{definition}[Directed set]
    A directed set is a nonempty set $P$ together with a preorder $\leq$, and the property that every pair of elements has an upper bound, i.e. $\forall\,i,j\in P,\,\exists\,k\in P\ \mathrm{s.t.}\ i\leq k, j\leq k.$ 
\end{definition}

\begin{definition}[Poset]
    A partially ordered set (or poset) is a set $P$ equipped with a preorder $\leq$ that is antisymmetric. Namely, $i\leq j$ and $j\leq i$ implies $i=j$.
\end{definition}
Note that, in a directed set or even in a poset, not every pair of elements needs to be comparable.~When this is the case, the set is called \textit{total}.

On a directed set, it is possible to define a net (a generalisation of a sequence) as a map $x_{\bullet}:P\to X$, where $X$ is a topological space, or equivalently as a collection $\{x_i\}_{i\in P}$, and perform limits. 
\begin{definition}[Limit of a net]
    A point $x\in X$ is a limit point of the net $x_{\bullet}$ if, for every neighbourhood $U$ of $x$, there exists some $i\in P$ such that $x_j\in U$ for every $j\in P$ with $i\leq j$.
\end{definition}
If $X$ is Hausdorff, the limit point, if it exists, is \emph{unique} and we indicate it as $\lim_{i\in P}x_i$.\\
A net $x_i$ in a metric space $X$ with distance $d$ is said to be a \emph{Cauchy net} if, for every $\varepsilon>0$, there exists $k\in P$ such that, for all $i,j\geq k$, one has $d(x_i,x_j)\leq\varepsilon$.~A metric space is complete if and only if every Cauchy net is convergent.

An important notion often used in proving results is that of a subnet of a net.
 \begin{definition}[Subnet]
 \label{def:subnet}
     Let $P$ and $I$ be two directed sets and let $x_{\bullet}:P\to X$ and $s_{\bullet}:I\to X$ be two nets over them, $s_{\bullet}$ is a \emph{subnet} of $x_{\bullet}$ if there exists an $h:I\to P$ such that $s_i=x_{h(i)},\,\forall i\in I$, with the following properties:
     \begin{enumerate}
         \item\emph{order-preserving:} if $i\leq j$, then $h(i)\leq h(j)$;
         \item\emph{cofinal:} for every $a\in P$, there exists some $b\in h(I)$ such that $a\leq b$.
     \end{enumerate}
 \end{definition}

 \noindent
 A key property relating the limits of the subnets of a given net to its limit is the following necessary and sufficient condition:~a net $x_{\bullet}$ converges to $x$ \emph{if and only if} every subnet converges to $x$ (see \cite{Willard_2004}, Thm.~11.5).~This property allows us to prove the following useful result.
\begin{lemma}
\label{lem:contProd}
    Let $a_{\bullet}:A\to X$ and $b_{\bullet}:B\to Y$ be two nets, with $A,B$ directed posets and $X,Y$ Hausdorff topological spaces, such that $a=\lim_A a_{\bullet}$ in $X$ and $b=\lim_B b_{\bullet}$ in $Y$.~Let $f:X\times Y\to Z$ be a continuous function from $X\times Y$ to a Hausdorff topological space $Z$.~Then, $f(a,b)=\lim_{A\times B}f(a_{\bullet},b_{\bullet})$.
\end{lemma}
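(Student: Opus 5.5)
The statement to prove is Lemma~\ref{lem:contProd}. The natural route is to view the pair $(a_\bullet,b_\bullet)$ as a single net on the product directed set. Then apply continuity of $f$, using the standard fact that continuous maps preserve limits of nets.

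First, I equip $A\times B$ with the product order: $(\alpha,\beta)\le(\alpha',\beta')$ iff $\alpha\le\alpha'$ and $\beta\le\beta'$. This is again a directed set, since an upper bound of $(\alpha_1,\beta_1)$ and $(\alpha_2,\beta_2)$ is obtained componentwise from upper bounds in $A$ and $B$. On it I consider the net $c_{(\alpha,\beta)}\coloneqq(a_\alpha,b_\beta)\in X\times Y$.

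Second, I show $c_\bullet\to(a,b)$ in the product topology. It suffices to check basic neighbourhoods $U\times V$, with $U\ni a$ open in $X$ and $V\ni b$ open in $Y$. By convergence of $a_\bullet$ there is $\alpha_0$ with $a_\alpha\in U$ for all $\alpha\ge\alpha_0$. Likewise there is $\beta_0$ with $b_\beta\in V$ for all $\beta\ge\beta_0$. Then every $(\alpha,\beta)\ge(\alpha_0,\beta_0)$ satisfies $c_{(\alpha,\beta)}\in U\times V$. Any neighbourhood $W$ of $(a,b)$ contains such a $U\times V$, so the tail of $c_\bullet$ lies in $W$.

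Third, I use continuity of $f$. Given a neighbourhood $O$ of $f(a,b)$ in $Z$, the set $f^{-1}(O)$ is a neighbourhood of $(a,b)$. By the previous step, $c_{(\alpha,\beta)}$ eventually lies in $f^{-1}(O)$, so $f(a_\alpha,b_\beta)$ eventually lies in $O$. Hence $f(a,b)$ is a limit point of $f(a_\bullet,b_\bullet)$. Since $Z$ is Hausdorff, this limit is unique, which gives $f(a,b)=\lim_{A\times B}f(a_\bullet,b_\bullet)$.

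There is no real obstacle here. The only point needing care is to index by the product directed set $A\times B$ rather than a diagonal, so that the two tails can be chosen independently. The Hausdorff assumption on $Z$ serves only to make the notation $\lim$ well defined.
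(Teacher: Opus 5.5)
Your proposal is correct and follows essentially the same route as the paper: equip $A\times B$ with the product order, show the net $(a_\alpha,b_\beta)$ converges to $(a,b)$ in $X\times Y$, and conclude by continuity of $f$. The only difference is in how convergence of the product net is established. The paper notes that the projections $A\times B\to A$ and $A\times B\to B$ are order-preserving and cofinal, so each component is a subnet, and then invokes the componentwise criterion for convergence in a product. You instead verify it directly on basic neighbourhoods $U\times V$.
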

\begin{proof}
    $A\times B$ with the product order is a directed poset. The product order is $(i,j)\geq (i',j')$ if and only if $i\geq i'$ and $j\geq j'$. We are going to consider the net $(a_i,b_j):A\times B\to X\times Y$. Notice that the projections $\mathrm{pr}_A:A\times B\to A$ and $\mathrm{pr}_B:A\times B\to B$ are order-preserving and cofinal. Thus,
    \begin{align*}
        &a=\lim_{A}a_{i}=\lim_{A\times B}a_{\mathrm{pr}_A (i,j)}=\lim_{A\times B}\pi_X(a_{i},b_j),\\
        &b=\lim_{B}b_{i}=\lim_{A\times B}b_{\mathrm{pr}_B (i,j)}=\lim_{A\times B}\pi_Y(a_{i},b_j).
    \end{align*}
    Where $\pi_{X/Y}$ are the projectors on the two factors of $X\times Y$. Since a net in a product space converges if and only if each of its projections converges, and the limits of the projections are given above, we obtain \[\lim_{A\times B}(a_i,b_j)=(a,b).\]
    By continuity of $f$, the thesis follows.
\end{proof}

The directed set of interest in this work is that of triangulations with a suitable order.~In the remainder of this appendix, let us then recall their definition, together with some useful properties.~This will further clarify the notation used in the main text.~We refer the reader to Hudson's book~\cite{Hudson_1969} for a detailed exposition.~Roughly speaking, a  triangulation is a regularly-enough simplicial complex.~To make this more precise, let us first introduce the notion of abstract simplicial complex, followed by that of two important subcollections of it, namely stars and links.
\begin{definition}[Abstract simplicial complex]
An abstract simplicial complex $\Delta$ is a collection of non-empty finite subsets of some set such that for every set $X$ in $\Delta$, and every non-empty subset $Y\subset X$, the set $Y$ also belongs to $\Delta$.~An element of $\Delta$ is called a face (or simplex) and its dimension is its cardinality minus 1.~The dimension of the complex is the largest dimension of any of its faces.~A vertex is an element of the union of all the faces of $\Delta$.
\end{definition}
\begin{definition}[Star and link]
The star of a face is a set containing every face of $\Delta$ such that the union is a face in $\Delta$.~The link of a vertex is a set containing every face of $\Delta$ that does not contain the vertex but whose union is a face of $\Delta$.    
\end{definition}
\noindent
Given these notions, a triangulation is defined as follows.
\begin{definition}[Triangulation]
    A $d$-dimensional triangulation is an abstract simplicial complex with dimension $d$ such that the link of each vertex is a $(d-1)$ sphere.
\end{definition}
An abstract simplicial complex $\Delta$ carries a natural (and unique) topology so that its geometric realisation $|\Delta|$ is a topological space.~As per the above definition, the geometric realisation of a triangulation $|\Delta|$ is a $d$-dimensional topological manifold and admits a piecewise linear structure (PL-structure).~Notice that every smooth manifold admits a triangulation, and so a (unique, once the triangulation is fixed) PL-structure~\cite{Whitehead_1940}.~The converse is not true in general but it turns out to be true in $d\leq 4$ dimensions, that is our case of interest.~Thus, even though, formally, we should speak in terms of a piecewise linear category, we can consider all \textit{smoothable}.

Of particular interest in the present work will be the set of the triangulations of a given PL-manifold $M$ which is defined as follows.
\begin{definition}[Triangulations of a PL-manifold]
    $\mathcal{T}_M=\left\{\text{triangulation}\ \Delta \ \text{s.t.}\ |\Delta|\simeq M\right\}$ is the set of all triangulations PL-homeomorphic to a compact PL-manifold $M$, 
\end{definition}
When $|\Delta|$ is orientable, we can consider $M$ an oriented manifold and $\Delta$ carrying an according orientation. We indicate with $\Delta^*$ the triangulation with inverse orientation.~As for the manifold, we would like to define the gluing of two triangulations $\Delta_1,\Delta_2$ by the identification face by face along a common boundary triangulation $\delta$.~More precisely, let $\Delta_1$ and $\Delta_2$ be such that $\partial\Delta_1=\delta_1\sqcup\delta$ and $\del\Delta_2=\delta^*\sqcup \delta_2$.~In analogy with the case of manifolds, we consider a orientation-inverting simplicial isomorphism $f:\delta\to\delta^*$ and the gluing given by $\Delta_1\cup_{\delta}\Delta_2=\Delta_1\sqcup\Delta_2/\sim_f$.

Triangulations can be manipulated without altering the PL-homeomorphism class via the so-called \textit{Pachner moves}~\cite{Pachner_1991} (for a review see~\cite{Lickorish_1999}).~If $M$ has a boundary $\Sigma=\partial M$, we can properly talk about a boundary triangulation $\partial\Delta$, which is a triangulation PL-homeomorphic to $\Sigma$.~The Pachner moves maintain the boundary triangulation fixed, hence it makes sense to define the set of all triangulations PL-homeomorphic to a manifold $M$ with given boundary triangulation $\delta$.~We shall denote it by $\mathcal{T}_M^{\delta}$. 
It is not totally trivial that such sets are non-empty for every triangulation of the boundary, however, this turns out ot be true:
\begin{lemma}
\label{lem:nonemptyT}
For any triangulation $\delta$ of $\partial M$, the set 
$\mathcal{T}_M^\delta$ is non-empty.
\end{lemma}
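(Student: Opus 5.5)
Given a triangulation $\delta$ of $\partial M$, the plan is to start from an arbitrary triangulation of $M$ and repair it near the boundary. By Whitehead's theorem, recalled above, $M$ admits some triangulation $\Delta_0$. Its restriction $\delta_0=\partial\Delta_0$ triangulates the same PL manifold $\partial M$ as $\delta$, although the two complexes may be combinatorially different. The idea is to interpolate between $\delta_0$ and $\delta$ inside a collar of the boundary.

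\textbf{Step 1 (collar).} By the PL collaring theorem (see Hudson), $\partial M$ has a collar $\partial M\times[0,1]\subset M$. Hence $M\simeq M'\cup_{\partial M}(\partial M\times[0,1])$ with $M'\simeq M$. After a subdivision we may take $\Delta_0$ to triangulate $M'$, with boundary $\delta_0$.

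\textbf{Step 2 (interpolating cylinder).} It remains to build a triangulation of $\partial M\times[0,1]$ whose boundary is $\delta_0^*\sqcup\delta$. Since $|\delta_0|$ and $|\delta|$ are PL homeomorphic, they have a common subdivision $\delta''$ (the Hauptvermutung holds trivially here, because both triangulate the same PL manifold). The cylinder is then assembled from three pieces:
\begin{itemize}
\item[(a)] a triangulation of $\partial M\times[0,\tfrac12]$ restricting to $\delta_0$ at one end and $\delta''$ at the other;
\item[(b)] a similar triangulation of $\partial M\times[\tfrac12,1]$ from $\delta''$ to $\delta$;
\item[(c)] a gluing of (a) and (b) along $\delta''$.
\end{itemize}
For (a), I would use the standard prism construction. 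Triangulate the product cell complex $\delta''\times[0,1]$ without new vertices by ordering vertices and applying the staircase triangulation of each prism $\sigma\times I$. Then cone off the bottom cells: each simplex of $\delta_0$ is subdivided in $\delta''$, so replace the bottom layer by cones over the subdivided simplices from their barycentres. Equivalently, the mapping cylinder of the simplicial approximation $\delta''\to\delta_0$ of the identity triangulates the cylinder with the required ends. The same construction handles (b), and gluing the pieces along $\delta''$ is legitimate by the gluing of triangulations defined above.

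\textbf{Step 3 (glue).} Setting $\Delta=\Delta_0\cup_{\delta_0}C$, where $C$ is the cylinder from Step 2, gives a complex whose realisation is $M'\cup(\partial M\times I)\simeq M$ and whose boundary is $\delta$. The link condition at vertices should be checked. At interior vertices of the gluing interface it holds because links in the two pieces are complementary balls whose union is a sphere. At boundary vertices, with the convention used in $\mathcal{T}_M^\delta$, links are balls.

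The main obstacle is Step 2(a). One must ensure that the mapping-cylinder or prism construction yields a genuine simplicial complex, with no two simplices sharing the same vertex set, and not merely a cell complex. This is why I would pass to a barycentric or derived subdivision before taking the staircase triangulation. Alternatively, one can cite the relative extension result in Hudson or Rourke–Sanderson: a triangulation of a subpolyhedron extends to a triangulation of the whole polyhedron after subdividing only away from it. This directly yields a triangulation of $M$ restricting to $\delta$ on $\partial M$, and I would fall back on that citation if the explicit prism construction becomes too laborious.
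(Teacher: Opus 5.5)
Your argument has the same skeleton as the paper's. You start from some $\Delta_0$ with $\partial\Delta_0=\delta_0$, produce a triangulation $C$ of the collar $\partial M\times I$ with $\partial C=\delta_0^*\sqcup\delta$, glue along $\delta_0$, and use $M\cup_{\partial M}(\partial M\times I)\simeq M$; your Step 1 is just this observation read inside $M$. You differ from the paper only in how you show that $C$ exists.

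The paper obtains $C$ by appeal to Pachner's theorem: $\delta_0$ and $\delta$ are related by bistellar moves, which are implicitly realised as $d$-simplices stacked onto the collar. This is short and fits the paper's later use of Pachner moves. However, it rests on a deep theorem and leaves implicit that no stacked simplex duplicates an existing interior face.

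You instead use only that PL homeomorphic complexes have isomorphic subdivisions, plus an explicit relative triangulation of a product. This is more elementary and lets one check simpliciality and the link condition by hand. Your interface-link argument is correct.

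Two points in Step 2(a) need repair.
\begin{itemize}
\item The ``equivalently'' is false as stated. A simplicial approximation of the identity can fold. For example, subdivide an edge $[a,b]$ of $\delta_0$ at $c_1,c_2,c_3$ and send $a,c_1,c_2,c_3,b$ to $a,b,a,b,b$; the mapping cylinder of this map is not a product collar.
\item Coning the subdivided simplices from their own barycentres does not produce a layer whose bottom is $\delta_0$.
\end{itemize}

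What works, with no vertex ordering, staircase layer or derived subdivision, is coning over prisms. Treat $\sigma\in\delta_0$ in order of increasing dimension, with $v\times I$ a single edge for each vertex $v$. The sphere $\partial(\sigma\times I)$ is then already triangulated by:
\begin{itemize}
\item the simplex $\sigma\times 0$;
\item $\delta''|_\sigma\times 1$;
\item the previously built triangulations of $\tau\times I$ for the proper faces $\tau$ of $\sigma$.
\end{itemize}
Cone this from an interior point of the convex cell $\sigma\times I$.

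The result is a genuine simplicial complex with ends $\delta_0$ and $\delta''$. Piece (b) is the same construction with $\delta$ in place of $\delta_0$, turned upside down. Each end is a full subcomplex, since every other simplex contains a cone point or a vertex of the opposite end. Hence the gluings along $\delta''$ and along $\delta_0$ create no duplicated simplices, even when $\delta_0$ is not full in $\Delta_0$.
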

\begin{proof}
We start by noticing that, given a triangulation $\Delta_0$ of $M$, this induces a triangulation $\delta_0 \coloneqq  \partial \Delta_0$ of $\partial M$ with non-empty $\mathcal T_M^{\delta_0}$ by construction. The non-trivial step is to show that this is the case for any other boundary triangulation $\delta$. To this aim, given any triangulation $\delta$ of $\partial M$, let us consider the cylinder $\partial M \times I$ and choose a triangulation $\Delta_C$ of it such that 
$\partial \Delta_C = \delta_0^* \sqcup \delta$; this is always possible since any two triangulations of the same manifold are related by a sequence of Pachner moves~\cite{Hellmann_2011,Barrett_Westbury_1996}. Gluing $\Delta_0$ and $\Delta_C$ along their common boundary triangulation $\delta_0$ yields a triangulation 
\[
\Delta \coloneqq  \Delta_0 \cup_{\delta_0} \Delta_C
\]
of the manifold
\[
M' \coloneqq  M \cup_{\partial M} (\partial M \times I),
\]
whose boundary is $\delta$. Since $M' \simeq M$, we may regard $\Delta$ as a triangulation of $M$ with boundary $\delta$, hence $\Delta \in \mathcal{T}_M^\delta$. Thus, $\mathcal{T}_M^\delta\neq\varnothing$ for any triangulation $\delta$ of $\partial M$.
\end{proof}
\noindent
For given $M$ and $\delta$, the set $\mathcal{T}_M^{\delta}$ can be equipped with a preorder which we refer to as \emph{weak order}.
\begin{definition}[Weak order]
    Let $\Delta\in \mathcal{T}_M^{\delta}$ be a triangulation of a $d$-dimensional PL-manifold $M$, with boundary triangulation $\delta$, and denote by $N_d(\Delta)$ the number of $d$-simplices in $\Delta$.~The preorder $\leq$ on $\mathcal{T}_M^{\delta}$ is defined as $\Delta_1\leq\Delta_2$ if and only if $N_d(\Delta_1)\leq N_d(\Delta)$ and is called \emph{weak order}.
\end{definition}
Since every two triangulations in $\mathcal{T}_M^{\delta}$ can be related to one another by a sequence of Pachner moves~\cite{Pachner_1991,Lickorish_1999}, it follows that:
\begin{proposition}
\label{prop:w-ord}
    The couple $(\mathcal{T}_M^{\delta},\leq)$ is a directed set.
\end{proposition}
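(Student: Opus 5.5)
To show that $(\mathcal{T}_M^{\delta},\leq)$ is a directed set, three things must hold: $\mathcal{T}_M^{\delta}$ is non-empty, $\leq$ is a preorder, and any two elements have an upper bound. Non-emptiness is exactly Lemma~\ref{lem:nonemptyT}. Reflexivity and transitivity of $\leq$ are inherited from the total order on $\mathbb{N}$, because $\Delta_1\leq\Delta_2$ is defined by $N_d(\Delta_1)\leq N_d(\Delta_2)$. Antisymmetry is not required, and it fails, since distinct triangulations can have the same number of $d$-simplices.

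For the upper bound, I take $\Delta_1,\Delta_2\in\mathcal{T}_M^{\delta}$. The preorder is the pullback of a total preorder on $\mathbb{N}$ along $N_d$, so any two elements are comparable. Hence either $N_d(\Delta_1)\leq N_d(\Delta_2)$, in which case $\Delta_2$ is an upper bound, or the reverse holds and $\Delta_1$ is an upper bound. Directedness therefore follows from totality alone, and this remains true for any function $N_d:\mathcal{T}_M^{\delta}\to\mathbb{N}$.

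The sentence preceding the statement cites Pachner moves. I expect their real role is to make the weak order meaningful for limits rather than to prove directedness: one wants $N_d$ to be unbounded on $\mathcal{T}_M^{\delta}$, so that the net does not simply stabilise on a maximal-size class. For that I would note that a $1\to(d+1)$ Pachner move applied to any interior $d$-simplex preserves $\delta$ and the PL class and raises $N_d$ by $d$. Iterating it produces elements of arbitrarily large $N_d$, which stays inside $\mathcal{T}_M^{\delta}$ because Pachner moves keep the boundary fixed. This gives cofinal sequences, useful later, for instance in Lemma~\ref{lem:subnet}.

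The only real subtlety is non-emptiness, and it is already settled by Lemma~\ref{lem:nonemptyT}. Everything else is immediate, so I expect no genuine obstacle.
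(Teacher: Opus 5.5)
Your proof is correct and follows the paper's route: non-emptiness from Lemma~\ref{lem:nonemptyT}, the preorder inherited from $\mathbb{N}$, and directedness from totality. The one difference is that the paper exhibits the upper bound as a $1\to d+1$ Pachner move applied to the larger triangulation, whereas you take the larger triangulation itself, which already suffices.
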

\begin{proof}
    By the previous Lemma, the sets $\mathcal{T}_M^{\delta}$ are non-empty for any triangulation $\delta$ with $|\delta|\simeq \del M$.~The binary relation $\leq$ on $\mathcal{T}_M^{\delta}$ is reflexive, i.e. $\Delta\leq \Delta$, and transitive, i.e.~$\Delta_1\leq \Delta_2$ and $\Delta_2\leq \Delta_3$, then $\Delta_1\leq \Delta_3$.~Hence, $\mathcal{T}_M^{\delta}$ is a preordered set.~It is also total in that, given $\Delta_1$ and $\Delta_2$, we can always compare $N_d(\Delta_1)$ and $N_d(\Delta_2)$ because they are natural numbers. Since it is total, $\mathcal{T}_M^{\delta}$ is a directed set.~Explicitly, given $\Delta_1, \Delta_2$, we can always compare them, so let's suppose $\Delta_1\leq \Delta_2$, an upper bound $\Delta$ is given by a single Pachner move $1\to d+1$ on $\Delta_2$.
\end{proof}
A central role in this work is played by \emph{subdivisions}.~Subdivisions provide a notion of refinement that is closer to our physical intuition, as they allow to increase the resolution of a triangulation (including the boundary triangulation) while preserving its underlying geometric and topological structure.~Through subdivisions, it is possible to equip the set $\mathcal{T}_M$ of triangulations of $M$ with a more suitable partial order, which has been extensively studied in the mathematical literature.~This refined ordering better captures the idea of progressively resolving finer degrees of freedom and will be essential for the analysis that follows.
\begin{definition}[Subdivision]
\label{def:subdiv}
    A triangulation $\Delta$ is a subdivision of $\Delta'$ if $|\Delta|=|\Delta'|$ and every simplex of $\Delta$ is a subset of some simplex of $\Delta'$ or, equivalently, every simplex of $\Delta'$ is a union of simplices of $\Delta$.
\end{definition}
In particular, in this work, we are taking in consideration the stellar subdivisions~\cite{Hudson_1969}, already present in the literature of Loop Quantum Gravity, and known as refining Alexander moves~\cite{Dittrich_Geiller_2015,Bahr_Dittrich_Geiller_2021}.
\begin{definition}[Refinement]
    Let $\Delta_1,\Delta_2\in\mathcal{T}_M$, $\Delta_1$ is a \textit{refinement} of $\Delta_2$ if $\Delta_1$ is a stellar subdivision of $\Delta_2$ (namely, if it can be obtained via a sequence of refining Alexander moves).
\end{definition}

\begin{figure}[t!]
    \centering
    \includegraphics[width=0.9\linewidth]{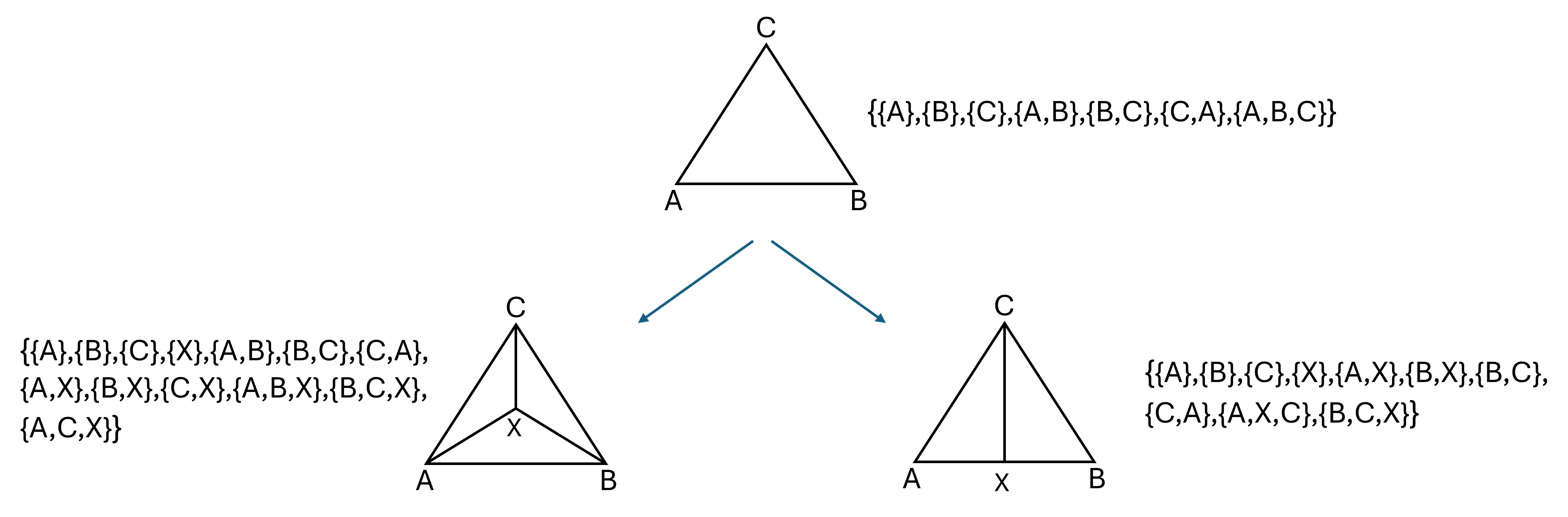}
    \caption{Two possible subdivisions by starring of a 2-simplex for different positions of an arbitrary point $\textsf X$.~These two moves are the refining Alexander moves for the 2-simplex.~The abstract form of the simplices is reported together with their geometric realisation.}
    \label{fig:alex}
\end{figure}
\noindent
A simple example of stellar subdivision is illustrated in Fig.~\ref{fig:alex}.~Notice that Alexander moves may alter the boundary triangulation~\cite{Alexander_1926,Alexander_1930}.

The notion of refinement induces an order on $\mathcal{T}_M$, which we refer to as \emph{strong order}, and equips it with the structure of a directed poset.
\begin{definition}[Strong order]
\label{refin}
    The strong order (or refinement) $\preceq$ on $\mathcal{T}_M$ is $\Delta_1\preceq\Delta_2$ if and only if $\Delta_2$ is a refinement of $\Delta_1$.
\end{definition}
\begin{proposition}\label{refin:poset}
    The couple $(\mathcal{T}_M,\preceq)$ is a directed partially ordered set.
\end{proposition}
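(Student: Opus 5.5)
The plan is to verify the three requirements separately: $\preceq$ is a preorder, it is antisymmetric, and any two elements of $\mathcal{T}_M$ have a common upper bound.

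\emph{Preorder.} Reflexivity is immediate, since the empty sequence of refining Alexander moves takes $\Delta$ to itself. Transitivity also follows directly: if $\Delta_2$ is obtained from $\Delta_1$ by a finite sequence of refining Alexander moves, and $\Delta_3$ from $\Delta_2$, then concatenating the two sequences produces $\Delta_3$ from $\Delta_1$.

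\emph{Antisymmetry.} Every refining Alexander move (starring at a point of a simplex) strictly increases the number of vertices of the triangulation. If $\Delta_1\preceq\Delta_2$ and $\Delta_2\preceq\Delta_1$, the vertex counts satisfy $N_0(\Delta_1)\le N_0(\Delta_2)\le N_0(\Delta_1)$. Both inequalities must therefore be equalities, which is only possible if both sequences of moves are empty, so $\Delta_1=\Delta_2$. Triangulations are understood up to simplicial isomorphism compatible with the identification $|\Delta|\simeq M$, as the paper implicitly does.

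\emph{Directedness.} This is the main obstacle. Given $\Delta_1,\Delta_2\in\mathcal{T}_M$, we need a common stellar subdivision. Since $|\Delta_1|\simeq|\Delta_2|\simeq M$ as PL manifolds, they have a common subdivision $\Delta'$ by the standard PL result (Hudson~\cite{Hudson_1969}): realise both rectilinearly via the PL homeomorphism and intersect simplices, with boundaries handled simultaneously. However, $\Delta'$ is only a subdivision, not necessarily a stellar one.

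To fix this, invoke the classical fact that every subdivision of a simplicial complex has a further subdivision that is stellar. Concretely, any subdivision $\Delta'$ of $\Delta_i$ is refined by a sufficiently high iterated derived (barycentric-type) subdivision, and derived subdivisions are stellar, since they are obtained by starring simplices in order of decreasing dimension. Alternatively, one can use the theorem of Alexander and Newman, cited as \cite{Alexander_1930}, that subdivisions of a complex are stellarly equivalent. Taking a stellar subdivision $\Delta''$ of $\Delta'$ that is simultaneously stellar over $\Delta_1$ and over $\Delta_2$ then provides the upper bound: $\Delta_1\preceq\Delta''$ and $\Delta_2\preceq\Delta''$.

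The delicate point is ensuring that a single $\Delta''$ is stellar relative to both $\Delta_1$ and $\Delta_2$. The first approach would be to take iterated derived subdivisions of $\Delta'$ and argue that, for $\Delta'$ a subdivision of $\Delta_i$, some derived subdivision $\Delta'^{(r)}$ is a stellar subdivision of $\Delta_i$ (Hudson, Ch.~I). Taking $r$ large enough to work for both $i=1,2$ then gives $\Delta''=\Delta'^{(r)}$. Boundary simplices are treated by the same starring operations restricted to $\partial M$, so $\Delta''$ remains in $\mathcal{T}_M$.
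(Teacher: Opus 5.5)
Your preorder and antisymmetry arguments are fine; the vertex-count argument is in fact more explicit than the paper's proof, which only asserts antisymmetry. The gap is in directedness. Everything rests on your claim that, for a subdivision $\Delta'$ of $\Delta_i$, some iterated derived subdivision $(\Delta')^{(r)}$ is a \emph{stellar} subdivision of $\Delta_i$, which you attribute to Hudson, Ch.~I. No such classical result exists. A derived subdivision of $\Delta'$ is stellar over $\Delta'$, but $\Delta'$ is an arbitrary subdivision of $\Delta_i$, and nothing classical expresses it, or a refinement of it, as the outcome of starrings of $\Delta_i$.

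Your claim is in fact strong enough to give, immediately, that any two PL-homeomorphic complexes have isomorphic stellar subdivisions. To see this, take isomorphic subdivisions $\Delta_1'\cong\Delta_2'$ and use that the combinatorial type of an $r$-th derived subdivision does not depend on choices. That conclusion is the Tietze--Alexander conjecture, open from 1908/1930 until \cite{Adiprasito_Pak_2024}. So the hard content of the proposition has been assumed, under a misattribution, rather than proved.

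Your fallbacks do not close the gap either. The Alexander--Newman theorem gives only stellar \emph{equivalence}, i.e.\ a zig-zag of starrings and their inverses (welds). That provides no common upper bound for $\preceq$, and turning such a zig-zag into an upper bound is exactly the confluence statement of the conjecture. Similarly, ``every subdivision has a stellar refinement'', even if granted, yields one refinement stellar over $\Delta_1$ and another stellar over $\Delta_2$. As you noticed yourself, these need not be comparable. The paper obtains directedness by citing the Adiprasito--Pak theorem directly. Replacing your last two paragraphs with that citation, applied to $\Delta_1$ and $\Delta_2$, would give a complete proof.
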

\begin{proof}
    The order $\preceq$ is clearly reflexive and transitive. Moreover, it is antisymmetric, indeed if $\Delta_1\preceq\Delta_2$ and $\Delta_2\preceq\Delta_1$, then $\Delta_1=\Delta_2$, for all $\Delta_1,\Delta_2\in\mathcal{T}_M$. The existence of an upper bound follows from the former Alexander conjecture (originally proposed by H.~Tietze in~\cite{Tietze_1908}): if two complexes are piecewise linearly homeomorphic then they admit isomorphic stellar subdivisions. This conjecture was recently proved in~\cite{Adiprasito_Pak_2024}
\end{proof}

\noindent
We close this appendix by enouncing two properties, which are a direct consequence of Lemma 1.3 in~\cite{Hudson_1969} and will turn useful at various points in proving some of the results in the main text.~The first property characterises the refinement of triangulations glued along a common boundary, while the second property relates boundary and bulk refinements.
\begin{lemma}
\label{lem:guess}
    Let $\Delta\in\mathcal{T}_M$, then
    \begin{enumerate}
        \item for $M=M_1\cup_{\Sigma}M_2$ and $\Delta$ given by the union $\Delta_1\cup_{\delta}\Delta_2$ of two triangulations $\Delta_1\in\mathcal{T}_{M_1},\,\Delta_2\in\mathcal{T}_{M_2},$ along a common triangulated boundary $\delta\in\mathcal{T}_{\Sigma}$, each refinement $\Delta'$ of $\Delta$ is of the form $\Delta_1'\cup_{\delta'}\Delta'_2$, for some $\Delta_1'\succeq\Delta_1\in\mathcal{T}_{M_1},\,\Delta_2'\succeq\Delta_2\in\mathcal{T}_{M_2},$ and $\delta'\succeq\delta\in\mathcal{T}_{\Sigma}$.
        \item each refinement $\delta'\in\mathcal{T}_{\del M}$ of the boundary triangulation $\delta=\del\Delta\in\mathcal{T}_{\del M}$ is the boundary of a refined triangulation $\Delta'\succeq\Delta\in\mathcal{T}_M$.
    \end{enumerate}
\end{lemma}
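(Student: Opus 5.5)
The statement is said in the paper to follow directly from Lemma 1.3 of Hudson's book. That lemma says stellar moves (Alexander moves) act locally on stars of simplices, so a refining move on a complex restricts to refining moves, or to identity moves, on any subcomplex. My plan is to prove each of the two items by induction on the length of the sequence of refining Alexander moves that produces the refinement. For a single move, I would look at its effect on the relevant subcomplexes.

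\textbf{Item 1.} Suppose $\Delta'$ is obtained from $\Delta=\Delta_1\cup_\delta\Delta_2$ by a single refining Alexander move, namely starring a simplex $\sigma$ at an interior point $\mathsf X$. There are two cases.

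\emph{Case (a): $\sigma\notin\delta$.} Then $\sigma$ lies in exactly one of $\Delta_1,\Delta_2$. Its open star does not meet $\delta$ except through faces that are left unchanged, so the move is a refining move of that piece alone, and $\delta'=\delta$.

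\emph{Case (b): $\sigma\in\delta$.} The stellar subdivision $\Delta'=(\Delta\setminus\mathrm{st}(\sigma))\cup \mathsf X\ast\partial\sigma\ast\mathrm{lk}(\sigma)$ splits along the decomposition $\mathrm{lk}_\Delta(\sigma)\cap\Delta_i$, since $\mathrm{st}_\Delta(\sigma)=\mathrm{st}_{\Delta_1}(\sigma)\cup_{\mathrm{st}_\delta(\sigma)}\mathrm{st}_{\Delta_2}(\sigma)$. Hence $\Delta'=\Delta_1'\cup_{\delta'}\Delta_2'$, where $\Delta_i'$ is $\Delta_i$ starred at $\sigma$ and $\delta'$ is $\delta$ starred at $\sigma$. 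Each of these is a refining move in the corresponding complex.

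In both cases the restricted complexes remain triangulations of $M_1$, $M_2$, $\Sigma$, because stellar moves preserve PL type. Iterating over the sequence of moves, and using transitivity of $\preceq$, gives the claim.

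\textbf{Item 2.} Again I would induct on the number of boundary moves. Let $\delta'$ be obtained from $\delta=\partial\Delta$ by starring a simplex $\tau\in\delta$ at a point $\mathsf X$. I would perform the same stellar move on $\Delta$ itself, starring $\tau$, now viewed as a simplex of $\Delta$, at the same point. This is a refining Alexander move on $\Delta$, so it yields $\Delta'\succeq\Delta$. Its restriction to the boundary is exactly the starring of $\tau$ in $\delta$, because $\mathrm{lk}_{\partial\Delta}(\tau)=\mathrm{lk}_\Delta(\tau)\cap\partial\Delta$ and $\partial(\mathsf X\ast\partial\tau\ast\mathrm{lk}_\Delta\tau)\cap\partial\Delta'=\mathsf X\ast\partial\tau\ast\mathrm{lk}_\delta\tau$. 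Iterating this along the sequence $\delta\to\delta'$, where each boundary simplex exists in the current bulk triangulation, produces the required $\Delta'$.

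\textbf{Main obstacle.} The step I expect to need the most care is the identity relating links in the bulk to links on the boundary, $\mathrm{lk}_{\partial\Delta}(\tau)=\mathrm{lk}_\Delta(\tau)\cap\partial\Delta$, and, in item 1, the compatibility of stars with the gluing. Both rest on the fact that for boundary simplices of a PL manifold the link is a ball whose boundary is the boundary link; this is exactly the content of Hudson's Lemma 1.3, which I would invoke rather than reprove. A second point to check in item 1 is that refinements produced by a general subdivision, rather than a single move, still decompose. Since $\preceq$ is defined via sequences of moves, the induction handles this automatically.
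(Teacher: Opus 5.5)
Your overall strategy is the right one. You induct on single refining Alexander moves; for item~1 you restrict each bulk starring to $\Delta_1$, $\Delta_2$, $\delta$; for item~2 you realise each boundary starring by starring the same simplex in the bulk. This is what the paper's bare citation of Hudson's Lemma~1.3 has to supply, since the paper gives no argument of its own.

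\textbf{The gap} is in the single-move step: the identities you use there are false. A link taken inside a subcomplex $L\subseteq K$ is $\mathrm{lk}_L(\tau)=\{\rho\in L:\rho\cap\tau=\varnothing,\ \rho\cup\tau\in L\}$. This agrees with $\mathrm{lk}_K(\tau)\cap L$ only when $L$ is a full subcomplex, and boundaries and gluing interfaces generally are not full.

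\emph{Counterexample for item~2.} Take the bipyramid $\Delta$ with tetrahedra $\{1,2,3,4\}$, $\{1,2,3,5\}$, and let $\tau=\{1,2\}$.
\begin{itemize}
\item $\mathrm{lk}_\Delta(\tau)$ is the path $4$--$3$--$5$, and every simplex of it lies in $\partial\Delta$.
\item But $\mathrm{lk}_{\partial\Delta}(\tau)=\{4\}\sqcup\{5\}$.
\item Your recipe $\mathsf X\ast\partial\tau\ast(\mathrm{lk}_\Delta\tau\cap\partial\Delta)$ would make $\{\mathsf X,1,3\}$ a boundary simplex. It is in fact interior, being a face of both $\{\mathsf X,1,3,4\}$ and $\{\mathsf X,1,3,5\}$.
\item Your second identity also fails here: its left-hand side contains the untouched boundary triangle $\{1,3,4\}$.
\end{itemize}
The PL fact you cite (the link is a ball whose boundary is the boundary link) gives $\mathrm{lk}_{\partial\Delta}(\tau)=\partial\,\mathrm{lk}_\Delta(\tau)$, which is not your formula.

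\emph{Same slip in item~1.} The splitting must use $\mathrm{lk}_{\Delta_i}(\sigma)$, not $\mathrm{lk}_\Delta(\sigma)\cap\Delta_i$. Suppose $\sigma,\rho$ are vertices of $\delta$ joined by an edge of $\Delta_2$ that is not in $\delta$. Then $\rho\in\mathrm{lk}_\Delta(\sigma)\cap\Delta_1$. Your decomposition would then put the edge $\{\mathsf X,\rho\}$ into both pieces, hence into the interface, which is wrong.

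\textbf{The repair} is elementary and needs no manifold input. Write $(\tau,\mathsf X)L$ for the starring of $L$ at $\tau$, with the convention that it equals $L$ when $\tau\notin L$.
\begin{itemize}
\item Starring $\tau$ at $\mathsf X$ deletes exactly the simplices containing $\tau$. It adds exactly the simplices of $\mathsf X\ast\partial\tau\ast\mathrm{lk}_K(\tau)$ that contain $\mathsf X$.
\item Hence for every subcomplex $L$, the complex $(\tau,\mathsf X)L$, built from $\mathrm{lk}_L(\tau)$, is a subcomplex of $(\tau,\mathsf X)K$.
\item If $K=L_1\cup L_2$, then $\mathrm{lk}_K\tau=\mathrm{lk}_{L_1}\tau\cup\mathrm{lk}_{L_2}\tau$ and $\mathrm{lk}_{L_1}\tau\cap\mathrm{lk}_{L_2}\tau=\mathrm{lk}_{L_1\cap L_2}\tau$. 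So starring commutes with unions and intersections of subcomplexes.
\end{itemize}

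With $L_i=\Delta_i$ and $L_1\cap L_2=\delta$, this gives item~1 for a single move, covering both of your cases at once.

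For item~2, take $L=\partial\Delta$. Then $(\tau,\mathsf X)\partial\Delta$ is a subcomplex of $\Delta'$ whose polyhedron is $\partial M$. Every point lies in the relative interior of a unique simplex, so a subcomplex consists of all simplices contained in its polyhedron. Hence this subcomplex is exactly $\partial\Delta'$.

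With this single-move statement in place, your two inductions go through as written.
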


\section{Proofs of Sec.~\ref{Sec:Continuum1}}\label{App:Proofs1}
In this appendix we collect the proofs of the theorems and lemmas of Sec.~\ref{Sec:Continuum1}, reported in the same order as they appear in the main text.

\noindent\textbf{Lemma~\ref{lem:subnet}.}\emph{
    Given $ M\simeq N_1\cup_{\Sigma}N_2$ with $\partial N_1=\Sigma_1^*\sqcup\Sigma$, $N_2=\Sigma^*\sqcup\Sigma_2$, and $\partial  M=\Sigma_1^*\sqcup \Sigma_2$. The subset of $\mathcal{T}_{M}^{\delta_1^* \sqcup \delta_2}$ composed by triangulations that can be seen as the gluing of two triangulations of $N_1$ and $N_2$ along a triangulation $\delta$ of the boundary component $\Sigma$ is the image of the map
    \[
    \begin{matrix}
        \gamma: & \mathcal{T}_{N_1}^{\delta_1^* \sqcup \delta}\times \mathcal{T}_{N_2}^{\delta^* \sqcup \delta_2} & \to & \mathcal{T}_{M}^{\delta_1^* \sqcup \delta_2};\\
        & (\Delta_1,\Delta_2) & \mapsto & \Delta_1\cup_\delta \Delta_2.
    \end{matrix}
    \]
    The map $\gamma$ is order-preserving and cofinal. As a consequence, $Z_{\gamma(\bullet)}$ is a subnet of $Z_{\bullet}$.
}
\begin{proof}
    The set $\mathcal{T}_{N_1}^{\delta_1^* \sqcup \delta}\times \mathcal{T}_{N_2}^{\delta^* \sqcup \delta_2}$ is a directed product set, and it is naturally equipped with a binary relation defined as follows: $(\Delta_1,\Delta_2) \leq (\Delta_1', \Delta_2')$ if $\Delta_1 \leq \Delta_1'$ and $\Delta_2 \leq \Delta_2'$.~The gluing map $\gamma:\mathcal{T}_{N_1}^{\delta_1^* \sqcup \delta}\times \mathcal{T}_{N_2}^{\delta^* \sqcup \delta_2}\to \mathcal{T}_M^{\delta_1^*\sqcup\delta_2}$ satisfies the properties of Definition~\ref{def:subnet}.~In fact, the property \textit{1.} can be verified straight forwardly:~if $(\Delta_1,\Delta_2) \leq (\Delta_1', \Delta_2')$ then $N_d(\Delta_1) \leq N_d(\Delta_1')$ and $N_d(\Delta_2) \leq N_d(\Delta_2')$.~Since $N_d(\Delta_1\cup_\delta\Delta_2)=N_d(\Delta_1)+N_d(\Delta_2)$, we obtain $N_d(\Delta_1\cup_\delta\Delta_2)\leq N_d(\Delta_1'\cup_\delta\Delta_2')$.~Thus, the map $\gamma$ is order-preserving. The property \textit{2.} can be proved as follows.~Let $\Delta \in  \mathcal{T}_M^{\delta_1^*\sqcup\delta_2}$ and suppose that $N_d(\Delta)=N$.~Since the image of $\gamma$ is non-empty, there exists a triangulation $\Delta''=\Delta_1\cup_\delta\Delta_2\in {\rm Ran}(\gamma)\subset  \mathcal{T}_M^{\delta_1^*\sqcup\delta_2}$, whose $N_d(\Delta'')=N''$.~If $N''\geq N$, then $\Delta'' \geq \Delta$.~If $N''< N$, from $\Delta''$, via a sequence $k$ of Pachner moves $1\to d+1$ on $\Delta_1$, such that $kd>N-N''$, we obtain a $\Delta'\in {\rm Ran}(\gamma)$ such that $\Delta \leq \Delta'$. Hence, ${\rm Ran}(\gamma)$ is cofinal in $ \mathcal{T}_M^{\delta_1^*\sqcup\delta_2}$.
\end{proof}
\noindent
\textbf{Proposition~\ref{prop:w-gluing}.}\emph{
    Let $M=N_1\cup_\Sigma N_2$, with $N_1=\Sigma_1^*\sqcup\Sigma$ and $N_1=\Sigma^*\sqcup\Sigma_2$. Given $\delta,\delta_1,\delta_2$ triangulations of $\Sigma,\Sigma_1,\Sigma_2$, respectively; then, the following holds:
    \begin{equation*}
        Z_{N_1\cup_\Sigma N_2; \delta_1^* \sqcup \delta_2}= \langle Z_{N_1; \delta_1^* \sqcup \delta}, Z_{N_2; \delta^* \sqcup \delta_2} \rangle_\delta\,.
    \end{equation*}
}
\begin{proof}
    We want to show that, for any triangulation $\delta$ of $\Sigma$,
\begin{equation}
\label{eq:Agluing}
    Z_{N_1\cup_\Sigma N_2; \delta_1^* \sqcup \delta_2}= \langle Z_{N_1; \delta_1^* \sqcup \delta}, Z_{N_2; \delta^* \sqcup \delta_2} \rangle_\delta\;.
\end{equation}
By definition,
\begin{equation}
    \begin{split}
        Z_{N_1\cup_\Sigma N_2; \delta_1^* \sqcup \delta_2} \coloneqq & \lim_{\Delta \in \mathcal{T}_{N_1\cup_\Sigma N_2}^{\delta_1^* \sqcup \delta_2}} Z_{\Delta} \\ =& \lim_{(\Delta_1,\Delta_2) \in \mathcal{T}_{N_1}^{\delta_1^* \sqcup \delta}\times \mathcal{T}_{N_2}^{\delta^* \sqcup \delta_2}} Z_{\Delta_1\cup_\delta\Delta_2} \\ =& \lim_{(\Delta_1,\Delta_2) \in \mathcal{T}_{N_1}^{\delta_1^* \sqcup \delta}\times \mathcal{T}_{N_2}^{\delta^* \sqcup \delta_2}} \langle Z_{\Delta_1},Z_{\Delta_2}\rangle_\delta\;,
    \end{split}
\end{equation}
where the second equality uses Proposition~\ref{lem:subnet}, and the third follows from the gluing axiom $Z_{\Delta_1\cup_\delta\Delta_2}=\langle Z_{\Delta_1},Z_{\Delta_2}\rangle_\delta$ (cfr.~property \emph{4.} in Def.~\ref{def:truncSF}). Thus, to establish~\eqref{eq:Agluing} it suffices to prove that 
\begin{equation}
    \lim_{(\Delta_1,\Delta_2) \in \mathcal{T}_{N_1}^{\delta_1^* \sqcup \delta}\times \mathcal{T}_{N_2}^{\delta^* \sqcup \delta_2}} \langle Z_{\Delta_1},Z_{\Delta_2}\rangle_\delta= \langle Z_{N_1; \delta_1^* \sqcup \delta}, Z_{N_2; \delta^* \sqcup \delta_2} \rangle_\delta\;.
\end{equation}
This follows from Lemma~\ref{lem:contProd} which states that, for a continuous map of two nets, the action of the map commutes with taking the limit along the product of their directed sets; in our case, the continuous map is the pairing $\langle\cdot,\cdot\rangle_\delta$, the two nets $Z_{\bullet}$ are indexed by $\mathcal{T}_{N_1}^{\delta_1^* \sqcup \delta}$ and $\mathcal{T}_{N_2}^{\delta^* \sqcup \delta_2}$, and the limit is taken along the product of these directed sets, namely $\mathcal{T}_{N_1}^{\delta_1^* \sqcup \delta}\times \mathcal{T}_{N_2}^{\delta^* \sqcup \delta_2}$.
\end{proof}
\begin{remark}\label{rmk:loc-ref}
    The previous result can be obtained also performing the limit on just one side of the splitting. Define
\[
s^2_{\bullet}\coloneqq \braket{Z_{\Delta_1},Z_{\bullet}}_\delta:
\mathcal{T}_{N_2}^{\delta^* \sqcup \delta_2}\to \mathcal{H}_{\delta_1^*\sqcup\delta_2},
\]
for a fixed $\Delta_1\in\mathcal{T}_{N_1}^{\delta_1^*\sqcup\delta}$, and consider the map
\begin{equation}
    \begin{matrix}
         h_{\Delta_1}:&\mathcal{T}_{N_2}^{\delta^* \sqcup \delta_2}&\to&\mathcal{T}_{N_1}^{\delta_1^* \sqcup \delta}\times \mathcal{T}_{N_2}^{\delta^* \sqcup \delta_2}\,;\\
    &\Delta &\mapsto &(\Delta_1,\Delta)\,.
    \end{matrix}
\end{equation}
Clearly, $s^2_{\bullet}=Z_{\gamma\circ h_{\Delta_1}(\bullet)}$, and so $s^2_{\bullet}$ is a subnet of $Z_{\bullet}$. Hence, by existence and uniqueness of the limit,
\begin{equation}
    \lim_{\Delta_2\in \mathcal{T}_{N_2}^{\delta^* \sqcup \delta_2}} \langle Z_{\Delta_1},Z_{\Delta_2}\rangle_\delta=Z_{N_1\cup_\Sigma N_2; \delta_1^* \sqcup \delta_2}\,.
\end{equation}
Here, the result of the limit does not depend on $\Delta_1$. Moreover, changing the limit, the result does not change. Indeed, we can define \[s^1_{\bullet}\coloneqq \braket{Z_{\bullet},Z_{\Delta_2}}_\delta:\mathcal{T}_{N_1}^{\delta_1^* \sqcup \delta}\to \mathcal{H}_{\delta_1^*\sqcup\delta_2}\] and, given $\Delta_2\in\mathcal{T}_{N_2}^{\delta^* \sqcup \delta_2}$, the map 
\begin{equation}
    \begin{matrix}
        h_{\Delta_2}:&\mathcal{T}_{N_1}^{\delta_1^* \sqcup \delta}&\to&\mathcal{T}_{N_1}^{\delta_1^* \sqcup \delta}\times \mathcal{T}_{N_2}^{\delta^* \sqcup \delta_2}\,;\\
    &\Delta &\mapsto &(\Delta,\Delta_2)\,.
    \end{matrix}
\end{equation}
Again, $s^1_{\bullet}=Z_{\gamma \circ h_{\Delta_2}(\bullet)}$, and so $s^1_{\bullet}$ is a subnet of $Z_{\bullet}$. Hence,
\begin{equation}
    \lim_{\Delta_1\in \mathcal{T}_{N_1}^{\delta_1^* \sqcup \delta}} \langle Z_{\Delta_1},Z_{\Delta_2}\rangle_\delta=Z_{N_1\cup_\Sigma N_2; \delta_1^* \sqcup \delta_2}
\end{equation}
As before, the result of the limit does not depend on $\Delta_2$, so taking the second limit does not change the result.
\end{remark}
\noindent
\textbf{Proposition~\ref{prop:w-range}} \emph{The ranges of the operators $Z_{\Sigma\times I;\delta^* \sqcup \delta}:\mathcal{H}_{\delta}\to\mathcal{H}_{\delta}$ for different triangulations $\delta$ of $\Sigma$ are all isomorphic.~That is,  
   \begin{equation}
   {\rm Ran}(Z_{\Sigma\times I;\delta^* \sqcup \delta})\simeq {\rm Ran}(Z_{\Sigma\times I;\delta'^* \sqcup \delta'})\;,
   \end{equation}
   for every pair $\delta,\delta'$ of triangulations of $\Sigma$.}
\begin{proof}
    Let $V\coloneqq \mathcal{H}_{\delta}$,~$W\coloneqq \mathcal{H}_{\delta'}$,~$P\coloneqq Z_{\Sigma\times I;\delta^* \sqcup \delta}:V\to V$,~and $P'\coloneqq Z_{\Sigma\times I;\delta'^* \sqcup \delta'}:W\to W$. Notice that $P$ and $P'$ are projections, i.e. $P\circ P=P$, because of Corollary~\ref{cor:w-proj}. \\
    Define $F\coloneqq Z_{\Sigma\times I;\delta^* \sqcup \delta'}:V\to W$ and $H\coloneqq Z_{\Sigma\times I;\delta'^* \sqcup \delta}:W\to V$.~Then, by Prop.~\ref{prop:w-gluing}, the following diagram
    \begin{equation}
        \begin{tikzcd}
    V \arrow[r, "P"] \arrow[d, "F" left] & V \arrow[d, "F"]\\
    W \arrow[ur, "H"] \arrow[r,"P'"] & W
\end{tikzcd}
    \end{equation}
    commutes.~This allows us to prove that ${\rm Ran}(P)\subset V$ and ${\rm Ran}(P')\subset W$ are isomorphic via $F$ as follows.
    \begin{itemize}
        \item Injectivity: Let $v_1,v_2 \in {\rm Ran}(P)$ be such that $F(v_1)=F(v_2)$.~Then, we have
        \[H\circ F(v_1)=H\circ F(v_2)\quad\implies\quad Pv_1=Pv_2\quad\implies\quad v_1=v_2\,.\]
        \item Surjectivity: Let $w\in {\rm Ran}(P')$.~Then, there always exists $v\in {\rm Ran}(P)$ such that $F(v)=w$.~This is defined by $v=H(w)$.~Indeed, we have \[F(v)=F\circ H(w)=P'w=w\,.\]
    \end{itemize}
\end{proof}

\section{Proofs of Sec.~\ref{Sec:Continuum2}}\label{App:Proofs2}

In this appendix we collect the proofs of the theorems and lemmas of Sec.~\ref{Sec:Continuum2}, reported in the same order as they appear (or are referred to) in the main text.

\begin{lemma}\label{rm:lim} The inductive limit $\h_{\Sigma}^{\rm ind}=\lim_{\delta\in\mathcal{T}_{\Sigma}}\h_{\delta}$ is a Hilbert space.
\end{lemma}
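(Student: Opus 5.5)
The plan is to build the inner product on the algebraic inductive limit directly from the inner products on the $\h_\delta$, using that the connecting maps $\iota_{\delta'\delta}$ are isometric, and then to handle completeness separately. Let $V\coloneqq\bigsqcup_{\delta\in\mathcal T_\Sigma}\h_\delta/\sim$ be the algebraic inductive limit of Definition~\ref{def:inductive}.

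First I will give $V$ a vector-space structure. Given classes $[x,\delta_1]$ and $[y,\delta_2]$, Prop.~\ref{refin:poset} supplies a common refinement $\delta\succeq\delta_1,\delta_2$, since $\mathcal T_\Sigma$ is directed. I then set
\[
[x,\delta_1]+\lambda[y,\delta_2]\coloneqq[\iota_{\delta\delta_1}x+\lambda\,\iota_{\delta\delta_2}y,\delta].
\]
Independence of the choice of $\delta$ and of the representatives follows from linearity of the $\iota$'s and the compatibility $\iota_{\delta''\delta'}\circ\iota_{\delta'\delta}=\iota_{\delta''\delta}$.

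Second, I will define the inner product in the same way:
\[
\langle[x,\delta_1],[y,\delta_2]\rangle\coloneqq\langle\iota_{\delta\delta_1}x,\iota_{\delta\delta_2}y\rangle_{\h_\delta}.
\]
This is independent of $\delta$ because the maps are isometric, hence preserve inner products by polarisation. Sesquilinearity and Hermiticity are inherited from $\h_\delta$. Positive definiteness holds because $\|[x,\delta_1]\|=\|x\|_{\h_{\delta_1}}$, and $[x,\delta_1]=0$ if and only if $\iota_{\delta\delta_1}x=0$ for some $\delta$, which by injectivity forces $x=0$. Each canonical map $\iota_\delta$ is then a linear isometry, and $V=\bigcup_\delta\iota_\delta(\h_\delta)$ is an increasing, directed union of complete subspaces.

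The main obstacle is completeness. A directed union of Hilbert spaces is in general only a pre-Hilbert space; for example, the union of $\C^n$ inside $\ell^2$ is not complete. A Cauchy sequence can leave every $\iota_\delta(\h_\delta)$, since $\mathcal T_\Sigma$ is uncountable with no cofinal countable chain guaranteed to capture it. I therefore expect the honest statement to be that the inductive limit in the category of Hilbert spaces is the completion $\overline V$. I will take $\h_\Sigma^{\rm ind}\coloneqq\overline V$ and show that it satisfies the universal property: any compatible family of contractions $\h_\delta\to K$ into a Hilbert space $K$ extends uniquely, first to $V$ by the construction above and then to $\overline V$ by continuity. I will remark that this is the standard meaning of the inductive limit in the Hilbert category. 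If the authors intend the purely algebraic limit to be already complete, that would require extra finiteness assumptions that I would not try to prove. With the completion, the later statements (Props.~\ref{prop:factor} and \ref{prop:dual}) go through, since tensor products and Riesz maps are defined on the dense subspace $V$ and extend by continuity.
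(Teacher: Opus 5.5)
Your vector-space and inner-product steps are exactly the paper's proof. The paper defines linear combinations and the scalar product through a common refinement $\delta''$, checks independence of $\delta''$ and of representatives using that the $\iota_{\delta'\delta}$ are isometric, and stops there. It never addresses completeness, so strictly speaking it only shows that the algebraic limit of Definition~\ref{def:inductive} is a pre-Hilbert space.

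Your third step therefore supplies what is missing, and defining $\h_\Sigma^{\rm ind}\coloneqq\overline V$, with the universal property for compatible contractions, is the right repair. It is also what the paper tacitly relies on later. The proof of Prop.~\ref{prop:factor} extends $L$ by continuity from a dense subspace. The remark after Thm.~\ref{Thm:no-go}, that a non-topological theory requires some net to be non-Cauchy, only makes sense if $\h^{\rm ind}_{\partial M}$ is complete. The paper's version is shorter; yours actually produces a Hilbert space.

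One correction to your justification: the obstruction is not the uncountability of $\mathcal T_\Sigma$. Each refining Alexander move adds a vertex, so no triangulation lies above an infinite strictly increasing chain of refinements. Hence a Cauchy sequence need not sit inside any single $\iota_\delta(\h_\delta)$.

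Countability in fact gives a clean proof that completeness fails. Suppose $e_1\preceq e_2\preceq\cdots$ is cofinal in $\mathcal T_\Sigma$, as when triangulations are counted up to isomorphism. Then $V=\bigcup_n\iota_{e_n}(\h_{e_n})$ is an increasing countable union of closed subspaces. By Baire's theorem, $V$ is complete only if $V=\iota_{e_n}(\h_{e_n})$ for some $n$, i.e.\ only if all subsequent embeddings are unitary. In the spin-network example this fails: the algebraic limit is essentially the space of cylindrical functions, which is not complete.
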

\begin{proof}
    Recalling the notation $[\psi_\delta,\delta]$ for the equivalence class of $\psi_\delta\in\h_\delta$ (cfr.~Def.~\ref{def:inductive}), we define the linear combinations as
    \[\alpha [\psi_\delta,\delta]+\beta[\phi_{\delta'},\delta']=[\alpha \iota_{\delta''\delta}(\psi_{\delta})+\beta \iota_{\delta''\delta'}(\phi_{\delta'}),\delta''],\]
    where $\delta''\succeq\delta,\delta'$ exists because $\mathcal{T}_{\Sigma}$ is a directed poset. This definition does not depend on the choice of $\delta''$. Indeed, consider $\delta'''\succeq\delta''$, then
    \[[\alpha \iota_{\delta'''\delta}(\psi_{\delta})+\beta \iota_{\delta'''\delta'}(\phi_{\delta'}),\delta''']=[\iota_{\delta'''\delta''}(\alpha \iota_{\delta''\delta}(\psi_{\delta})+\beta \iota_{\delta''\delta'}(\phi_{\delta'})),\delta''']=[\alpha \iota_{\delta''\delta}(\psi_{\delta})+\beta \iota_{\delta''\delta'}(\phi_{\delta'}),\delta''].\]
    Similarly, it does not depend on the choice of the representatives of the equivalence classes.\\
    If the injection maps are also isometries, then there exists a canonical scalar product
    \[\braket{[\psi_{\delta},\delta]|[\phi_{\delta'},\delta']}_{\Hind}=\braket{\iota_{\delta''\delta}(\psi_{\delta})|\iota_{\delta''\delta'}(\phi_{\delta'})}_{\Hil_{\delta''}}\;,\]
    which does not depend either on the choice of $\delta''$ or on the representatives.
\end{proof}

\noindent To prove Prop.~\ref{prop:factor}, we first need to construct the inductive family in the case in which $\Sigma$ has finite connected components. This is given by the following lemma.
\begin{lemma}
\label{connIndF}
    Let $\Sigma=\Sigma_1\sqcup\Sigma_2$ with $\Sigma_{1,2}$ connected. Then, $\mathcal{T}_{\Sigma}$ is a directed poset equivalent to $\mathcal{T}_{\Sigma_1}\times\mathcal{T}_{\Sigma_2}$. Moreover, \[(\{\h_{\delta_1\sqcup\delta_2}=\h_{\delta_1}\otimes\h_{\delta_2}\}_{\delta_1\sqcup\delta_2\in \mathcal{T}_{\Sigma}},\{\iota_{\delta_1'\delta_2',\delta_1\delta_2}=\iota_{\delta_1'\delta_1}\otimes\iota_{\delta'_2\delta_2}\}_{\delta'_1\sqcup\delta'_2\succeq\delta_1\sqcup\delta_2})\] with $(\delta_1,\delta_2)\in\mathcal{T}_{\Sigma_1}\times\mathcal{T}_{\Sigma_2}$ is an inductive family.
\end{lemma}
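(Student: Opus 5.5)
We prove the two claims of Lemma~\ref{connIndF} separately: first the structure of $\mathcal{T}_{\Sigma}$ as a directed poset, then the axioms of Definition~\ref{def:inductive} for the proposed family.

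\textbf{The poset $\mathcal{T}_\Sigma$.} Take $\delta\in\mathcal{T}_{\Sigma}$ with $\Sigma=\Sigma_1\sqcup\Sigma_2$ and $\Sigma_{1,2}$ connected. The geometric realisation $|\delta|$ then has exactly two connected components, so $\delta$ splits uniquely as $\delta_1\sqcup\delta_2$ with $\delta_i\in\mathcal{T}_{\Sigma_i}$. Here $\delta_i$ is the subcomplex whose simplices lie in the component PL-homeomorphic to $\Sigma_i$. We assume $\Sigma_1\not\simeq\Sigma_2$ for now; otherwise we fix a labelling of the components through the homeomorphism $|\delta|\simeq\Sigma$. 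This gives a bijection $\Phi:\mathcal{T}_\Sigma\to\mathcal{T}_{\Sigma_1}\times\mathcal{T}_{\Sigma_2}$, with inverse the disjoint union.

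Next we show that $\Phi$ is an order isomorphism onto the product order. A refining Alexander move stars a single simplex, and that simplex lies in exactly one component. Hence a sequence of such moves on $\delta_1\sqcup\delta_2$ splits into two subsequences, one acting on $\delta_1$ and one on $\delta_2$, and these commute. This shows that $\delta\preceq\delta'$ if and only if $\delta_1\preceq\delta_1'$ and $\delta_2\preceq\delta_2'$. The product of two directed posets (Prop.~\ref{refin:poset}) is again a directed poset, so $\mathcal{T}_\Sigma$ is a directed poset.

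\textbf{The inductive family.} By the Factorisation axiom of Def.~\ref{def:truncSF}, $\h_{\delta_1\sqcup\delta_2}=\h_{\delta_1}\otimes\h_{\delta_2}$. We define $\iota_{\delta_1'\delta_2',\delta_1\delta_2}\coloneqq\iota_{\delta_1'\delta_1}\otimes\iota_{\delta_2'\delta_2}$ as the Hilbert-space tensor product of bounded maps. Each factor is a linear isometry, so on simple tensors we get $\langle \iota\phi_1\otimes\iota\phi_2,\iota\psi_1\otimes\iota\psi_2\rangle=\langle\phi_1,\psi_1\rangle\langle\phi_2,\psi_2\rangle$. This identity extends by linearity and continuity to a well-defined isometry on the completed tensor product, and an isometry is injective. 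The axioms of Def.~\ref{def:inductive} then follow factorwise. First, $\iota_{\delta_1\delta_1}\otimes\iota_{\delta_2\delta_2}=\mathbbm{1}\otimes\mathbbm{1}=\mathbbm{1}$. Second, $(\iota_{\delta_1''\delta_1'}\otimes\iota_{\delta_2''\delta_2'})\circ(\iota_{\delta_1'\delta_1}\otimes\iota_{\delta_2'\delta_2})=\iota_{\delta_1''\delta_1}\otimes\iota_{\delta_2''\delta_2}$. Both identities are checked on simple tensors and extended by density.

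\textbf{Main obstacle.} The step needing most care is the order-isomorphism claim. In particular, one must check that a stellar subdivision of a disconnected complex never mixes components. It does not, since starring is local to the closed star of one simplex, which is connected. One must also fix the labelling of the components when $\Sigma_1\simeq\Sigma_2$. The rest is routine functional analysis of tensor products of isometries.
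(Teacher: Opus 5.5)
Your proposal is correct and follows the paper's proof: the paper likewise identifies $\mathcal{T}_{\Sigma}$ with $\mathcal{T}_{\Sigma_1}\times\mathcal{T}_{\Sigma_2}$ via $\delta_1\sqcup\delta_2\mapsto(\delta_1,\delta_2)$, observing that refinement of a disjoint union is componentwise, and then checks the two axioms of Definition~\ref{def:inductive} factorwise. Your additional remarks fill in details the paper leaves implicit: that each stellar move is confined to one component because it acts on a connected closed star, the labelling convention needed when $\Sigma_1\simeq\Sigma_2$, and the injectivity of $\iota_{\delta_1'\delta_1}\otimes\iota_{\delta_2'\delta_2}$ as a tensor product of isometries.
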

\begin{proof}
    The equivalence between $\mathcal T_{\Sigma}$ and $\mathcal{T}_{\Sigma_1}\times\mathcal{T}_{\Sigma_2}$ as posets is a direct consequence of the two components being disconnected.~Indeed, we have $\delta'_1\sqcup\delta'_2\succeq\delta_1\sqcup\delta_2$ iff $\delta'_1\succeq\delta_1$ and $\delta'_2\succeq\delta_2$.~The mapping $\delta_1\sqcup\delta_2\mapsto (\delta_1,\delta_2)$ is thus a bijective map between posets and preserve the ordering.\\
    The second part of the claim can be proved by direct inspection.~Specifically, let us verify the two properties in the definition \eqref{def:inductive} of an inductive family with $\delta_1\preceq\delta'_1\preceq\delta''_1\in\mathcal{T}_{\Sigma_1}$ and $\delta_2\preceq\delta'_2\preceq\delta''_2\in\mathcal{T}_{\Sigma_2}$:
    \begin{align*}
        &1)\ \ \iota_{\delta_1\delta_1}\otimes\iota_{\delta_2\delta_2}=\mathbbm{1}_{\h_{\delta_1}}\otimes\mathbbm{1}_{\h_{\delta_2}}=\mathbbm{1}_{\h_{\delta_1}\otimes\h_{\delta_2}};\\
        &2)\ \ (\iota_{\delta''_1\delta'_1}\otimes\iota_{\delta''_2\delta'_2})\circ(\iota_{\delta'_1\delta_1}\otimes\iota_{\delta'_2\delta_2})=(\iota_{\delta''_1\delta'_1}\circ\iota_{\delta'_1\delta_1})\otimes(\iota_{\delta''_2\delta'_2}\circ\iota_{\delta'_2\delta_2})=(\iota_{\delta''_1\delta_1})\otimes(\iota_{\delta''_2\delta_2}).
    \end{align*}
\end{proof}

\noindent\textbf{Proposition~\ref{prop:factor}.}~\emph{
    $\h_{\Sigma_1\sqcup\Sigma_2}^{\rm ind}$ and $\h_{\Sigma_1}^{\rm ind}\otimes\h_{\Sigma_2}^{\rm ind}$ are isomorphic as Hilbert spaces.}
\begin{proof}
    Let us first notice that the span of vectors of the type $[\psi_{\delta_1}\otimes\phi_{\delta_2} ,\delta_1\sqcup \delta_2]$ is dense in $\h_{\Sigma_1\sqcup\Sigma_2}^{\rm ind}$. The inner product on this kind of vectors is given by
    \begin{align*}
        &\braket{[\psi_{\delta_1}\otimes\phi_{\delta_2},\delta_1\sqcup\delta_2]|[\psi_{\delta'_1}\otimes\phi_{\delta'_2},\delta'_1\sqcup\delta'_2]}_{\h_{\Sigma_1\sqcup\Sigma_2}^{\rm ind}}\\
        &\qquad\qquad\qquad\qquad=\braket{[\iota_{\delta''_1\delta_1}(\psi_{\delta_1})\otimes\iota_{\delta''_2\delta_2}(\phi_{\delta_2}),\delta''_1\sqcup\delta''_2]|[\iota_{\delta''_1\delta'_1}(\psi_{\delta'_1})\otimes\iota_{\delta''_2\delta'_2}(\phi_{\delta'_2}),\delta''_1\sqcup\delta''_2]}_{\h_{\Sigma_1\sqcup\Sigma_2}^{\rm ind}}\\
        &\qquad\qquad\qquad\qquad=\braket{\iota_{\delta''_1\delta_1}(\psi_{\delta_1})\otimes\iota_{\delta''_2\delta_2}(\phi_{\delta_2})|\iota_{\delta''_1\delta'_1}(\psi_{\delta'_1})\otimes\iota_{\delta''_2\delta'_2}(\phi_{\delta'_2})}_{\h_{\delta''_1\sqcup\delta''_2}}\\
        &\qquad\qquad\qquad\qquad=\braket{\iota_{\delta''_1\delta_1}(\psi_{\delta_1}),\iota_{\delta''_1\delta'_1}(\phi_{\delta'_1})}_{\h_{\delta''_1}}\braket{\iota_{\delta''_2\delta_2}(\phi_{\delta_2}),\iota_{\delta''_2\delta'_2}(\phi_{\delta'_2})}_{\h_{\delta''_2}},
    \end{align*}
    where $\delta''_1\succeq\delta_1,\delta_1'$ and $\delta''_2\succeq\delta_2,\delta_2'$.

Let now $L$ be the linear map from the span of decomposable tensors in $\h_{\Sigma_1}^{\rm ind}\otimes\h_{\Sigma_2}^{\rm ind}$ to $\h_{\Sigma_1\sqcup\Sigma_2}^{\rm ind}$ given by
$$
L([\psi_{\delta_1},\delta_1]\otimes[\phi_{\delta_2},\delta_2])=[\psi_{\delta_1}\otimes\phi_{\delta_2} ,\delta_1\sqcup \delta_2]\;.
$$
This clearly does not depend on the representatives.~Moreover, this map is an isometry.~Indeed, the scalar product on the decomposable tensors in $\h_{\Sigma_1}^{\rm ind}\otimes\h_{\Sigma_2}^{\rm ind}$ is
    \begin{align*}
        \braket{[\psi_{\delta_1},\delta_1]\otimes[\phi_{\delta_2},\delta_2],[\psi_{\delta'_1},\delta'_1]\otimes[\phi_{\delta'_2},\delta'_2]}_{\h_{\Sigma_1}^{\rm ind}\otimes\h_{\Sigma_2}^{\rm ind}}&=\braket{[\psi_{\delta_1},\delta_1],[\psi_{\delta'_1},\delta'_1]}_{\h_{\Sigma_1}^{\rm ind}}\braket{[\phi_{\delta_2},\delta_2],[\phi_{\delta'_2},\delta'_2]}_{\h_{\Sigma_2}^{\rm ind}}\\
        &=\braket{\iota_{\delta''_1\delta_1}(\psi_{\delta_1}),\iota_{\delta''_1\delta'_1}(\phi_{\delta'_1})}_{\h_{\delta''_1}}\braket{\iota_{\delta''_2\delta_2}(\phi_{\delta_2}),\iota_{\delta''_2\delta'_2}(\phi_{\delta'_2})}_{\h_{\delta''_2}},
    \end{align*}
    here $\delta''_1,\delta''_2$ are the same as before. From this, it follows that
    \begin{align*}
        \braket{[\psi_{\delta_1},\delta_1]\otimes[\phi_{\delta_2},\delta_2],[\psi_{\delta'_1},\delta'_1]\otimes[\phi_{\delta'_2},\delta'_2]}_{\h_{\Sigma_1}^{\rm ind}\otimes\h_{\Sigma_2}^{\rm ind}}&=\braket{[\psi_{\delta_1}\otimes\phi_{\delta_2},\delta_1\sqcup\delta_2]|[\psi_{\delta'_1}\otimes\phi_{\delta'_2},\delta'_1\sqcup\delta'_2]}_{\h_{\Sigma_1\sqcup\Sigma_2}^{\rm ind}}\\
        &=\braket{L([\psi_{\delta_1},\delta_1]\otimes[\phi_{\delta_2},\delta_2]),L([\psi_{\delta'_1},\delta'_1]\otimes[\phi_{\delta'_2},\delta'_2])}_{\h_{\Sigma_1\sqcup\Sigma_2}^{\rm ind}}.
    \end{align*}
    Hence, $L$ is a bounded injective linear map, admitting a unique extension on $\h_{\Sigma_1}^{\rm ind}\otimes\h_{\Sigma_2}^{\rm ind}$.~Its range coincides with the span of vectors of the type $[\psi_{\delta_1}\otimes\phi_{\delta_2} ,\delta_1\sqcup \delta_2]$ because the injective maps that define the inductive family factorise.~The inverse of $L$ is then also well-defined and is an isometry.~Thus, the map $L$ is unitary.
\end{proof}

\noindent\textbf{Proposition~\ref{prop:dual}.}~\emph{$\h_{\Sigma^*}^{\rm ind}= (\h_{\Sigma}^{\rm ind})^*$, and the following diagram}
 \begin{equation}
    \label{app:dualcd}
    \begin{tikzcd} 
      & \h_{\delta_1} \arrow{ld}[swap]{\iota_{\delta_1}} \arrow[r, "R_{\delta_1}"] &  \h_{\delta_1}^* \arrow[rd, "\iota_{\delta_1^*}"] & \\ 
      \h_{\Sigma}^{\rm ind} \arrow{rrr}[swap]{R_{\Sigma}}& & & (\h_{\Sigma}^{\rm ind})^*\\
       & \h_{\delta_2} \arrow[from=uu, crossing over, "\iota_{\delta_2\delta_1}" near start]\arrow[lu, "\iota_{\delta_2}"]\arrow{r}[swap]{R_{\delta_2}} &  \h_{\delta_2}^* \arrow{ru}[swap]{\iota_{\delta_2^*}} \arrow[from=uu, crossing over, "\iota_{\delta_2^*\delta_1^*}" near start]& 
    \end{tikzcd}
\end{equation}    
    \emph{commutes. Furthermore, $\h_{\varnothing}^{\rm ind}\simeq \C$.}
\begin{proof}
    The injective maps $\iota_{\delta_2\delta_1}$ are such that the following diagram
\[\begin{tikzcd} 
     \h_{\delta_1} \arrow[r, "\iota_{\delta_2\delta_1}"] \arrow{d}[swap]{R_{\delta_1}} & \h_{\delta_2} \arrow[d, "R_{\delta_2}"]\\ 
      \h_{\delta_1}^* \arrow{r}[swap]{\iota_{\delta_2^*\delta_1^*}} &  \h_{\delta_2}^*
    \end{tikzcd}\]
commutes.~This ensures that $\h_{\Sigma^*}^{\rm ind}= (\h_{\Sigma}^{\rm ind})^*$. Indeed, let $[f,\delta^*]\in\h_{\Sigma^*}^{\rm ind}$, with $f=R_\delta\varphi$, for some $\varphi\in\h_\delta$. Then, a pairing with $[\psi,\delta']\in \h_{\Sigma}^{\rm ind}$ can be defined as follows ($\delta_1\succeq\delta,\delta'$)
\begin{align*}
    [f,\delta^*][[\psi,\delta']]&=\iota_{\delta_1^*\delta^*}(f)[\iota_{\delta_1\delta'}(\psi)]\\
    &=R_{\delta_1}\circ\iota_{\delta_1\delta}(\varphi)[\iota_{\delta_1\delta'}(\psi)]\\
    &=\braket{\iota_{\delta_1\delta}(\varphi)|\iota_{\delta_1\delta'}(\psi)}_{\h_{\delta_1}}\\
    &=\braket{[\varphi,\delta]|[\psi,\delta']}_{\h_{\Sigma}^{\rm ind}}\;.
\end{align*}
This recovers the scalar product and the Riesz map on $\h_{\Sigma}^{\rm ind}$, hence $[f,\delta^*]\in(\Hind)^*$. As a consequence, the diagram \ref{app:dualcd} commutes.~Finally, the property $\h_{\varnothing}^{\rm ind}\simeq \C$ simply follows from $\mathcal{T}_{\varnothing}=\varnothing$.
\end{proof}
\noindent
The following lemma and its corollary are needed to prove Theorem~\ref{Thm:no-go}.
\noindent
\begin{lemma}
    \label{lem:cutsubnet}
    Let $M=M_1\cup_{\Sigma}M_2$, with $\partial M_1=\Sigma^*_1\sqcup\Sigma,\,\del M_2=\Sigma^*\sqcup\Sigma_2$, and define
    $$
    S_{M_1,M_2,\Sigma}=\{(\Delta_1,\Delta_2)\in\mathcal{T}_{M_1}\times\mathcal{T}_{M_2}\ |\ \partial\Delta_1=\delta_1^*\sqcup\delta,\,\partial\Delta_2=\delta^*\sqcup\delta_2\  \text{for some}\ \delta_1\in\mathcal{T}_{\Sigma_1},\delta_2\in\mathcal{T}_{\Sigma_2},\delta\in\mathcal{T}_{\Sigma}\}.
    $$
    The subsposet $S_{M_1,M_2,\Sigma}$ is directed.~Furthermore,
    \begin{enumerate}
        \item the map $\gamma_{M_1,M_2,\Sigma}:S_{M_1,M_2,\Sigma}\to\mathcal{T}_{M_1\cup_{\Sigma}M_2}$ by $(\Delta_1,\Delta_2)\mapsto\Delta_1\cup_{\delta}\Delta_2$ is order-preserving and cofinal;
        \item the inclusion map $S_{M_1,M_2,\Sigma}\hookrightarrow\mathcal{T}_{M_1}\times\mathcal{T}_{M_2}$ is order-preserving and cofinal.
    \end{enumerate}
\end{lemma}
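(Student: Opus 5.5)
The plan is to work directly with the strong order $\preceq$ on $\mathcal{T}_{M_1}\times\mathcal{T}_{M_2}$, taken componentwise, and to rely on the two parts of Lemma~\ref{lem:guess} together with the directedness of each $\mathcal{T}_{M_i}$ and $\mathcal{T}_\Sigma$ (Prop.~\ref{refin:poset}).

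\emph{Directedness of $S_{M_1,M_2,\Sigma}$.} Take $(\Delta_1,\Delta_2),(\Delta_1',\Delta_2')\in S_{M_1,M_2,\Sigma}$, whose interfaces are $\delta$ and $\delta'$. By Prop.~\ref{refin:poset} there is a common refinement $\Delta_1''\succeq\Delta_1,\Delta_1'$ in $\mathcal{T}_{M_1}$. Its restriction $\delta''$ to $\Sigma$ refines $\delta$; this is the restriction form of Lemma~\ref{lem:guess}, since Alexander moves restrict to boundary components. The problem is that the interface of $\Delta_2$-type upper bounds need not match $\delta''$. We therefore pick a common refinement $\Delta_2''$ of $\Delta_2,\Delta_2'$ and let $\hat\delta$ be a common refinement in $\mathcal{T}_\Sigma$ of the two interfaces, namely $\partial\Delta_1''|_\Sigma$ and $\partial\Delta_2''|_\Sigma$. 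By Lemma~\ref{lem:guess}(2), applied to the boundary refinement $\hat\delta\sqcup(\text{unchanged }\delta_1)$, each $\Delta_i''$ can be further refined to $\hat\Delta_i$ with interface $\hat\delta$. The pair $(\hat\Delta_1,\hat\Delta_2)$ is then an upper bound lying in $S$.

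\emph{Claim 2.} Order preservation of the inclusion is trivial. For cofinality, start from any $(\Delta_1,\Delta_2)$ and apply the same matching trick: choose a common refinement $\hat\delta$ of the two interface triangulations and use Lemma~\ref{lem:guess}(2) to extend it to refinements $\hat\Delta_1\succeq\Delta_1$ and $\hat\Delta_2\succeq\Delta_2$. One also has to match the orientation identification $\delta\leftrightarrow\delta^*$, which can be handled by transporting $\hat\delta$ through the gluing isomorphism.

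\emph{Claim 1.} Order preservation follows because refining Alexander moves performed on $\Delta_1$ and $\Delta_2$ that agree on the common interface $\delta\to\delta'$ glue to a sequence of Alexander moves on $\Delta_1\cup_\delta\Delta_2$. A move on a boundary simplex of $\Delta_1$ that also lies in $\Delta_2$ is a single stellar move on the glued complex, starring that simplex in both stars. Hence $\Delta_1\cup_\delta\Delta_2\preceq\Delta_1'\cup_{\delta'}\Delta_2'$.

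For cofinality, given $\Delta\in\mathcal{T}_M$, fix any element $\Delta_1^0\cup_{\delta^0}\Delta_2^0$ of the image and take a common refinement $\Delta'$ of it and $\Delta$ (Prop.~\ref{refin:poset}). By Lemma~\ref{lem:guess}(1), $\Delta'=\Delta_1'\cup_{\delta'}\Delta_2'$, so $\Delta'$ lies in the image and dominates $\Delta$.

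The main obstacle is the interface-matching step in directedness and in Claim 2. It requires that refining the boundary component $\Sigma$ alone can be extended into the bulk while leaving the other boundary components unchanged, and that the stellar moves on the two sides are compatible. I would justify this with Lemma~\ref{lem:guess}(2) plus the locality of Alexander moves. If needed, I would use the fact that a stellar move on a boundary simplex only affects its star, so the moves on $\Sigma$ do not touch $\Sigma_1$ or $\Sigma_2$.
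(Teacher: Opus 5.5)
Your directedness argument and both cofinality arguments follow the paper's proof exactly. The steps are: common refinements in each factor, then a common refinement of the two interfaces in $\mathcal{T}_\Sigma$, pushed into both bulks by Lemma~\ref{lem:guess}(2), and Lemma~\ref{lem:guess}(1) for cofinality of $\gamma$. The interface matching you call the main obstacle is harmless, because Lemma~\ref{lem:guess}(2) covers any refinement of the whole boundary, including one that leaves $\delta_1$ fixed.

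\textbf{The gap: order-preservation of $\gamma$.} The paper only asserts this step, and your gluing-of-moves argument does not establish it. The product order on $S_{M_1,M_2,\Sigma}$ says only two things: each $\Delta_i'$ is a stellar subdivision of $\Delta_i$, and both end on the same interface $\delta'$. It does not say that the two sequences of Alexander moves star the same interface simplices in the same order. Interface starrings interleaved differently with interior starrings cannot in general be merged into one sequence on $\Delta_1\cup_\delta\Delta_2$.

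\textbf{The claim already fails for $d=2$.} Let $\{a,b,c\}\in\Delta_1$ and $\{a,b,d\}\in\Delta_2$ meet along an interface edge $ab$, and pick points $m_1,m_2$ on $ab$ in the order $a,m_1,m_2,b$.
\begin{itemize}
\item In $\Delta_1$: star $ab$ at $m_1$, then $\{m_1,b,c\}$ at a generic $p$, then $\{m_1,b\}$ at $m_2$.
\item In $\Delta_2$: star $ab$ at $m_2$, then $\{a,m_2,d\}$ at a generic $q$, then $\{a,m_2\}$ at $m_1$.
\end{itemize}
Here generic means $p\notin[m_2,c]$ and $q\notin[m_1,d]$. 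Both results have interface $a\,m_1\,m_2\,b$, so the new pair dominates $(\Delta_1,\Delta_2)$ in $S_{M_1,M_2,\Sigma}$.

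Yet the glued complex is not a stellar subdivision of $\Delta_1\cup_\delta\Delta_2$. A starring at a point interior to $A$ removes an edge only when $A$ is that edge, and it replaces it by its two halves. Hence every segment ever created must end up as a union of final edges. Whichever of $p,q,m_1,m_2$ is inserted first, by starring its carrier, creates $[a,p]$, $[b,q]$, $[m_1,d]$ or $[m_2,c]$ respectively. None of these is a union of edges of the final complex.

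\textbf{A repair that keeps the rest of your proof.} Order $S_{M_1,M_2,\Sigma}$ by pulling back $\preceq$ along $\gamma$: $(\Delta_1,\Delta_2)\le(\Delta_1',\Delta_2')$ iff $\Delta_1\cup_\delta\Delta_2\preceq\Delta_1'\cup_{\delta'}\Delta_2'$. Then:
\begin{itemize}
\item $\gamma$ is order-preserving by construction.
\item The set is directed: a common refinement in $\mathcal{T}_M$ of two glued triangulations exists by Prop.~\ref{refin:poset}, and it is again glued by Lemma~\ref{lem:guess}(1).
\item This order implies the product order (restrict the moves to $M_1$ and $M_2$), so the inclusion stays order-preserving.
\item Cofinality of both maps concerns only their images, so your arguments apply unchanged.
\end{itemize}
Alternatively, replace stellar subdivisions by arbitrary subdivisions, for which gluing is trivially monotone.
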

\begin{proof}
The set $S_{M_1,M_2,\Sigma}$ is a poset because it inherits the product order of $\mathcal{T}_{M_1}\times\mathcal{T}_{M_2}$. Let now $(\Delta_1,\Delta_2)$ and $(\Delta'_1,\Delta'_2)$ in $\mathcal{T}_{M_1}\times\mathcal{T}_{M_2}$ be such that
\[\partial\Delta_1=\delta_1^*\sqcup\delta\,,\,\partial\Delta_2=\delta^*\sqcup\delta_2\,,\,\partial\Delta'_1=(\delta'_1)^*\sqcup\delta'\,,\,\partial\Delta'_2=(\delta')^*\sqcup\delta'_2\,,\]
and let us consider the common refinement $\Delta''_1\succeq\Delta_1,\Delta'_1$ and $\Delta''_2\succeq\Delta_2,\Delta'_2$ with $\partial\Delta''_1=(\delta''_1)^*\sqcup\delta_a,\,\partial\Delta_2=(\delta_b)^*\sqcup\delta''_2$. By Lemma~\ref{lem:guess}, $\delta_a\succeq\delta$ and $\delta_b\succeq \delta$ both in $\mathcal{T}_{\Sigma}$. Since $\mathcal{T}_{\Sigma}$ is a directed set, there exists $\delta''\succeq \delta_a,\delta_b$. So now, by Lemma~\ref{lem:guess}, there exist $\Delta'''_1\succeq\Delta''_1$ and $\Delta'''_2\succeq\Delta''_2$ with $\partial\Delta'''_1=(\delta''_1)^*\sqcup\delta'',\,\partial\Delta'''_2=(\delta'')^*\sqcup\delta''_2$. Then, $(\Delta'''_1,\Delta'''_2)$ is the majorant of $(\Delta_1,\Delta_2)$ and $(\Delta'_1,\Delta'_2)$ in $S_{M_1,M_2,\Sigma}$. Therefore, $S_{M_1,M_2,\Sigma}$ is directed.
\smallskip

\noindent
We now prove statements \emph{1} and \emph{2}.
\begin{itemize}
    \item[\emph{1.}] The map $\gamma_{M_1,M_2,\Sigma}$ is order-preserving since, if $(\Delta'_1,\Delta'_2)\succeq(\Delta_1,\Delta_2)\in S_{M_1,M_2,\Sigma}$, then $\Delta'_1\cup_{\delta'}\Delta'_2\succeq\Delta_1\cup_{\delta}\Delta_2\in\mathcal{T}_M$.~The map is cofinal because of Lemma~\ref{lem:guess}.1.~In fact, let $\Delta\in\mathcal{T}_M$, and let us consider a triangulation $\gamma_{M_1,M_2,\Sigma}(\Delta_1,\Delta_2)\in\mathcal{T}_M$ image of a element in $S_{M_1,M_2,\Sigma}$.~There exists a common refinement $\Delta'\succeq\Delta,\gamma_{M_1,M_2,\Sigma}(\Delta_1,\Delta_2)$, but, by Lemma~\ref{lem:guess}.1, $\Delta'\in S_{M_1,M_2,\Sigma}$.
    \item[\emph{2.}] The map is order-preserving since the order on $S_{M_1,M_2,\Sigma}$ is induced by the product order so that $S_{M_1,M_2,\Sigma}$ is a proper subposet of $\mathcal{T}_{M_1}\times\mathcal{T}_{M_2}$.~It is cofinal because of Lemma~\ref{lem:guess}.2.~In fact, let $\Delta_1$ and $\Delta_2$ have on the common boundary $\Sigma$ two different triangulations $\tilde\delta_1$ and $\tilde\delta_2$.~Considering then a refinement $\delta \succeq\tilde\delta_1,\tilde\delta_2$ of the boundary, by Lemma~\ref{lem:guess}.2, it is associated to refinements $\Delta_1'\succeq\Delta_1$ and $\Delta_2'\succeq\Delta_2$, but $(\Delta_1',\Delta_2')\in S_{M_1,M_2,\Sigma}$ by definition.
\end{itemize}
\end{proof}

\begin{corollary}
    \label{diagonalcut}
    Considering the diagonal subset $D_M=\{(\Delta,\Delta^*)\in\mathcal{T}_{M}\times\mathcal{T}_{M^*}\}$, with $\del M=\Sigma_1^*\sqcup\Sigma$.~Then, the following statements hold true.
    \begin{enumerate}
        \item The map $\gamma_{M,\Sigma}:D_{M}\to\mathcal{T}_{M\cup_{\Sigma}M^*}$ by $(\Delta,\Delta^*)\mapsto\Delta\cup_{\delta}\Delta^*$, with $\delta$ the triangulation on $\Sigma$ induced by $\Delta$, is order-preserving and cofinal.
        \item The inclusion map $D_{M}\hookrightarrow\mathcal{T}_{M}\times\mathcal{T}_{M^*}$ is order-preserving and cofinal.
    \end{enumerate}
\end{corollary}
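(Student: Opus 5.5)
The plan is to obtain both statements from Lemma~\ref{lem:cutsubnet} specialised to $M_1=M$, $M_2=M^*$ and the gluing surface $\Sigma$, restricted further to the ``diagonal'' pairs. The first point to settle is what $\Delta^*$ means here: it is the orientation-reversed copy of $\Delta$, which is a triangulation of $M^*$ with boundary $\delta_1\sqcup\delta^*$. Hence for every $\Delta\in\mathcal{T}_M$ with $\partial\Delta=\delta_1^*\sqcup\delta$, the pair $(\Delta,\Delta^*)$ automatically lies in $S_{M,M^*,\Sigma}$, since it shares the triangulation $\delta$ on $\Sigma$. Thus $D_M\subset S_{M,M^*,\Sigma}$, and $\gamma_{M,\Sigma}$ is the restriction of $\gamma_{M,M^*,\Sigma}$. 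Since $\Delta\mapsto(\Delta,\Delta^*)$ is an order-isomorphism $\mathcal{T}_M\to D_M$ (reversing orientation commutes with Alexander moves), $D_M$ is a directed poset.

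For the order-preserving parts I will argue directly. In item 2 the order on $D_M$ is induced from the product order, so the inclusion preserves order. In item 1, if $\Delta'\succeq\Delta$ then $\Delta'^*\succeq\Delta^*$, and the glued complex $\Delta'\cup_{\delta'}\Delta'^*$ is obtained from $\Delta\cup_\delta\Delta^*$ by performing the same sequence of Alexander moves on both sides. Moves touching $\Sigma$ are done symmetrically, and this compatibility is exactly what Lemma~\ref{lem:guess}.1 encodes.

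Cofinality of the inclusion (item 2) is the easy half. Given $(\Delta_1,\Delta_2)\in\mathcal{T}_M\times\mathcal{T}_{M^*}$, the plan is:
\begin{itemize}
\item[(a)] regard $\Delta_2^*\in\mathcal{T}_M$;
\item[(b)] take a common refinement $\Delta\succeq\Delta_1,\Delta_2^*$ in $\mathcal{T}_M$, using Proposition~\ref{refin:poset};
\item[(c)] observe that $\Delta^*\succeq\Delta_2$, so that $(\Delta,\Delta^*)\in D_M$ dominates $(\Delta_1,\Delta_2)$.
\end{itemize}

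Cofinality of $\gamma_{M,\Sigma}$ (item 1) is the main obstacle. The reason is that an arbitrary triangulation of the double $M\cup_\Sigma M^*$ need not be symmetric under the reflection. I would proceed as follows.
\begin{itemize}
\item[(a)] Given $\Delta''\in\mathcal{T}_{M\cup_\Sigma M^*}$, use cofinality of $\gamma_{M,M^*,\Sigma}$ from Lemma~\ref{lem:cutsubnet} to find a split refinement $\Delta_1\cup_\delta\Delta_2\succeq\Delta''$.
\item[(b)] Use item 2 just proven to find $(\Delta,\Delta^*)\in D_M$ with $\Delta\succeq\Delta_1$ and $\Delta^*\succeq\Delta_2$, after first arranging through Lemma~\ref{lem:guess}.2 that the boundary triangulations on $\Sigma$ agree.
\item[(c)] Conclude that $\Delta\cup\Delta^*\succeq\Delta_1\cup_\delta\Delta_2\succeq\Delta''$, using Lemma~\ref{lem:guess}.1 to assemble the two refinements into a refinement of the glued complex.
\end{itemize}
The delicate point is step (c): I must check that refining each half separately, with matching refinements of the common $\Sigma$-triangulation, yields a stellar subdivision of the glued complex. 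My first attempt will be to show that any refining Alexander move on a boundary simplex of $\Sigma$ in one half can be mirrored on the other half, so that the glued sequence of moves is a legitimate stellar subdivision. This is the converse direction of Lemma~\ref{lem:guess}.1, and I expect it to follow from the same Lemma~1.3 of Hudson.
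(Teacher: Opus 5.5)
Your overall structure matches the paper's proof, but step (c) needs to be closed by citation, because the argument you sketch for it would not work. The paper factors $\gamma_{M,\Sigma}$ as $D_M\hookrightarrow S_{M,M^*,\Sigma}\xrightarrow{\gamma_{M,M^*,\Sigma}}\mathcal{T}_{M\cup_\Sigma M^*}$, and the inclusion as $D_M\hookrightarrow S_{M,M^*,\Sigma}\hookrightarrow\mathcal{T}_M\times\mathcal{T}_{M^*}$. It proves $D_M\hookrightarrow S_{M,M^*,\Sigma}$ cofinal by your common-refinement trick: $\Delta\succeq\Delta_1,\Delta_2^*$ gives $(\Delta,\Delta^*)\succeq(\Delta_1,\Delta_2)$. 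It then concludes from Lemma~\ref{lem:cutsubnet}, since composites of order-preserving cofinal maps are order-preserving and cofinal. Your steps (a), (b), (c) for item 1 are exactly this composite, unrolled: cofinality of $\gamma_{M,M^*,\Sigma}$, cofinality of $D_M$ in $S_{M,M^*,\Sigma}$, and order-preservation of $\gamma_{M,M^*,\Sigma}$. Your direct proof of item 2 is fine, since that common-refinement argument never uses membership in $S_{M,M^*,\Sigma}$. Your remark that $D_M\cong\mathcal{T}_M$ is directed supplies a point the paper leaves implicit.

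The step you flag as delicate needs no new argument. The claim $\Delta\cup_{\delta'}\Delta^*\succeq\Delta_1\cup_\delta\Delta_2$ whenever $(\Delta,\Delta^*)\succeq(\Delta_1,\Delta_2)$ in $S_{M,M^*,\Sigma}$ is precisely the order-preserving half of statement 1 of Lemma~\ref{lem:cutsubnet}. You may cite it, just as you cite its cofinal half in (a).

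The mirroring strategy you propose would not establish it. In (c) the refinements $\Delta_1\to\Delta$ and $\Delta_2\to\Delta^*$ are unrelated sequences of Alexander moves. Propagating the moves of the first sequence on simplices of $\Sigma$ into the other half gives some stellar subdivision of $\Delta_2$ inducing the right triangulation of $\Sigma$. That subdivision need not be $\Delta^*$, nor need $\Delta^*$ be reachable from it by moves away from $\Sigma$. Starrings of simplices lying in a common simplex do not commute, so you cannot reorder $\Delta_2\to\Delta^*$ to absorb the propagated moves.

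Mirroring is valid only when both halves are refined by the same sequence. That is why your argument that $\gamma_{M,\Sigma}$ itself is order-preserving on $D_M$ is sound. Finally, the appeal to Lemma~\ref{lem:guess}.2 in (b) is superfluous: a diagonal pair $(\Delta,\Delta^*)$ automatically induces matching triangulations $\delta,\delta^*$ on $\Sigma$.
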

\begin{proof}
We first notice that $\gamma_{M,\Sigma}:D_M\hookrightarrow S_{M,M^*,\Sigma}\xrightarrow{\gamma_{M,M^*,\Sigma}}\mathcal{T}_{M\cup_{\Sigma}M^*}$ and also the inclusion map is a composition of immersions $D_M\hookrightarrow S_{M,M^*,\Sigma}\hookrightarrow\mathcal{T}_{M}\times\mathcal{T}_{M^*}.$ The thesis follows by the fact that composition of order-preserving and cofinal maps is an order-preserving and cofinal map.\\
We just need to show that $D_M\hookrightarrow S_{M,M^*,\Sigma}$ is a order-preserving and cofinal map. The map is order-preserving since the order on $D_{M}$ is induced by the product order, thus $D_{M}$ is a proper subposet of $S_{M,M^*,\Sigma}$. It is cofinal because, for every couple $(\Delta_1,\Delta_2^*)\in S_{M,M^*,\Sigma}$, there exists a common refinement $\Delta\succeq\Delta_1,\Delta_2\in\mathcal{T}_{M}$. Then, $(\Delta,\Delta^*)\succeq(\Delta_1,\Delta_2^*)$.
\end{proof}
\noindent
We are now ready to prove Theorem~\ref{Thm:no-go}.
\smallskip

\noindent
\textbf{Theorem~\ref{Thm:no-go}}~\textbf{(No-go theorem).}~\emph{Suppose that, for each $M$, the net $\tilde{Z}_\bullet: \mathcal{T}_M\to \Hil^{\rm ind}_{\del M}$ converges to an element
    \begin{equation}
        Z(M)=\lim_{\Delta\in\mathcal{T}_M}\tilde{Z}_{\Delta}\in\Hil^{\rm ind}_{\del M}.
    \end{equation}
    Then, the map 
    \begin{equation*}
        \begin{matrix}
            \Sigma\mapsto\Hind,\\
            M\mapsto Z(M)
        \end{matrix}
    \end{equation*}
    defines a Topological Quantum Field Theory.}
    \begin{proof}
        We have to check that the map satisfies the Atiyah's axioms as given in Def~\ref{def:TQFT}. Propositions \ref{prop:factor} and \ref{prop:dual} ensure that $\Sigma\mapsto\Hind$ satisfies Involution, Factorisation, and Normalisation. Hence, we just need to verify the properties of the limit $Z(M)$.
        Normalisation is immediate:
        \[Z(\varnothing)=\lim_{\Delta\in\mathcal{T}_{\varnothing}}\tilde{Z}_{\Delta}=\tilde{Z}_{\varnothing}=\iota_{\varnothing}(Z(\varnothing))=1.\]
        The Dagger property follows from the continuity of the Riesz map and the Dagger property of $\tilde{Z}$, namely
        \[Z(M^*)=\lim_{\Delta\in\mathcal{T}_{M^*}}\tilde{Z}_{\Delta}=\lim_{\Delta\in\mathcal{T}_{M}}\tilde{Z}_{\Delta^*}=\lim_{\Delta\in\mathcal{T}_{M}}R_{\Sigma}(\tilde{Z}_{\Delta})=R_{\Sigma}(\lim_{\Delta\in\mathcal{T}_{M}}\tilde{Z}_{\Delta})=R_{\Sigma}(Z(M)).\]
        The gluing property requires a bit more discussion. Consider $M=M_1\cup_{\Sigma}M_2$, with $\del M_1=\Sigma_1^*\sqcup\Sigma$ and $\del M_2=\Sigma^*\sqcup\Sigma_2$, then
        \begin{align*}
    Z(M)=\lim_{\Delta\in\mathcal{T}_M}\tilde{Z}_{\Delta}=\lim_{(\Delta_1,\Delta_2)\in S_{M_1,M_2,\Sigma}}\tilde{Z}_{\Delta_1\cup_{\delta}\Delta_2}=\lim_{(\Delta_1,\Delta_2)\in S_{M_1,M_2,\Sigma}}\braket{\tilde{Z}_{\Delta_1},\tilde{Z}_{\Delta_2}}_{\Sigma}\;,
\end{align*}
where, in the second equality, we used Lemma~\ref{lem:cutsubnet} together with the equality of the limits of a net and of all of its subnets; while, in the last equality, we used the gluing property of $\tilde Z_{\bullet}$ from the main text (cfr.~property 3 below Eq.~\eqref{eq:ind-net}).
    Let us now look at the product $\braket{Z(M_1),Z(M_2)}_{\Sigma}$. Using Lemma~\ref{lem:contProd}, with the continuous function given the natural pairing $\braket{\cdot,\cdot}_{\Sigma}$, and Lemma~\ref{lem:cutsubnet}, we have
    \begin{align*}
\braket{Z(M_1),Z(M_2)}_{\Sigma}&=\left\langle\lim_{\Delta_1\in\mathcal{T}_{M_1}}\tilde{Z}_{\Delta_1},\lim_{\Delta_2\in\mathcal{T}_{M_2}}\tilde{Z}_{\Delta_2}\right\rangle_{\Sigma}\\
    &=\lim_{(\Delta_1,\Delta_2)\in\mathcal{T}_{M_1}\times\mathcal{T}_{M_2}}\braket{\tilde{Z}_{\Delta_1},\tilde{Z}_{\Delta_2}}_{\Sigma}\\
    &=\lim_{(\Delta_1,\Delta_2)\in S_{M_1,M_2,\Sigma}}\braket{\tilde{Z}_{\Delta_1},\tilde{Z}_{\Delta_2}}_{\Sigma}
\end{align*}
Then, from the unicity of the limit, we conclude that
\[Z(M_1\cup_{\Sigma}M_2)=\braket{Z(M_1),Z(M_2)}_{\Sigma}.\]
This holds true also for the cylinder $M_{1,2}=\Sigma\times I$, from which it follows that $Z(\Sigma\times I)$ is a projector: 
\[Z(\Sigma\times I)=\braket{Z(\Sigma\times I),Z(\Sigma\times I)}_{\Sigma}.\]
Thus, the theory specified by the map under consideration defines a semi-functor for $\mathsf{Cob}_{\rm d}^{\rm or}$ to $\mathsf{Hilb}$. Turning the semi-functor into functor by the standard procedure, we can just consider as Hilbert space $\mathrm{Ran}(Z(\Sigma\times I))$ and get also the Identity property of a TQFT.
\end{proof}

\noindent\textbf{Theorem~\ref{Thm:riggingmap}}~\textbf{(Rigging map).}~\emph{Let $\Sigma$ be a closed, oriented $(d-1)$-dimensional manifold.~The map $P_{\Sigma}:\mathcal D_{\Sigma}\to\mathcal D_{\Sigma}'$ given by}
\be\label{app:PSigma}
    P_{\Sigma}(\phi)[\psi]\coloneqq Z(\Sigma\times I)[R_{\Sigma}(\phi)\otimes\psi]\;,\qquad \phi,\psi\in\mathcal D_{\Sigma}\;,
\ee
\emph{defines a \emph{rigging map}.~That is, it satisfies the following properties:}
\begin{enumerate}
    \item[\emph{i.}] Anti-linearity: $P_{\Sigma}(\alpha\phi_1+\beta\phi_2)[\psi]=\left(\overline{\alpha}P_{\Sigma}(\phi_1)+\overline{\beta}P_{\Sigma}(\phi_2)\right)[\psi]$\emph{, for any $\phi_1,\phi_2,\psi\in\mathcal D_{\Sigma}$ and $\alpha,\beta\in\mathbb C$};
    \item[\emph{ii.}] Reality: $P_{\Sigma}(\phi)[\psi]=\overline{P_{\Sigma}(\psi)[\phi]}$\emph{, for any $\phi,\psi\in\mathcal D_{\Sigma}$};
    \item[\emph{iii.}]Positive semi-definiteness: $P_{\Sigma}(\psi)[\psi]\geq0$\emph{, for any $\psi\in\mathcal D_{\Sigma}$}.
\end{enumerate}

\begin{proof}
    Property $i.$ straightforwardly follows from the anti-linearity of the Riesz map $R_{\Sigma}$ and the linearity of $Z(\Sigma\times I)$.~As for property $ii.$, recalling the definition~\eqref{dLim} of $Z(M)$ and invoking the Riesz representation theorem, we have that
    \be
    P_{\Sigma}(\phi)[\psi]=\lim_{\Delta\in\mathcal{T}_{\Sigma\times I}}\braket{\phi|\tilde{Z}_{\Delta}\psi}_{\Hind}\;,
    \ee
    where, for $\del\Delta=\delta_1^*\sqcup\delta_2$, $\tilde{Z}_{\Delta}=\iota_{\delta_1^*\sqcup\delta_2}Z(\Delta)\in\Hil_{\Sigma^*\sqcup\Sigma}^{\rm ind}\simeq(\Hind)^*\otimes\Hind\simeq\mathcal B_{\rm HS}(\Hind,\Hind)$.~Therefore, proving reality of $P_{\Sigma}$ amounts to show that
    \be\label{eq:reality}
    \lim_{\Delta\in\mathcal{T}_{\Sigma\times I}}\braket{\phi|\tilde{Z}_{\Delta}\psi}_{\Hind}=\lim_{\Delta\in\mathcal{T}_{\Sigma\times I}}\overline{\braket{\psi|\tilde{Z}_{\Delta}\phi}}_{\Hind}\;,
    \ee
for any $\phi,\psi\in\mathcal \cD_{\Sigma}$.

Our strategy will be to show that there exists a subnet for which~\eqref{eq:reality} holds true so that, owing to the uniqueness of the limit of a net for Hausdorff spaces and the fact that each of its subnets admits the same limit (cfr.~App.~\ref{App:Triang}), property \textit{ii.} is proved.

To this aim, let us consider the subset $\tilde{Z}_{\gamma_{\Sigma\times I}(\bullet)}:\cD_{\Sigma\times I}\to\Hind$.~By Corollary~\ref{diagonalcut}, it is a subnet of $\tilde{Z}_{\bullet}:\mathcal T_{\Sigma\times I}\to\Hind$ and, since $\Hind$ is a Hilbert space, hence Hausdorff, the limit of the net is unique and all the subnets converge to the same limit.~Therefore, the LHS of Eq.~\eqref{eq:reality} gives
\be\label{reality:subnet}
\lim_{\Delta\in\mathcal{T}_{\Sigma\times I}}\braket{\phi|\tilde{Z}_{\Delta}\psi}_{\Hind}=\lim_{(\Delta,\Delta^*)\in D_{\Sigma\times I}}\braket{\phi|\tilde{Z}_{\\\gamma_{\Sigma\times I,\Sigma}(\Delta,\Delta^*)}\psi}_{\Hind}=\lim_{\Delta\cup_{\delta}\Delta^*\in\mathcal{T}_{\Sigma\times I}}\braket{\phi|\tilde{Z}_{\Delta\cup_{\delta}\Delta^*}\psi}_{\Hind}\;.
\ee
Let us now look at the quantity in the limit on the RHS.~We have
\begin{align}
\braket{\phi|\tilde{Z}_{\Delta\cup_{\delta}\Delta^*}\psi}_{\Hind}&=\braket{\phi|(\iota_{\delta_1^*\sqcup\delta_1}Z(\Delta\cup_{\delta}\Delta^*))\psi}_{\Hind}\nonumber\\
&=\braket{\phi|\iota_{\delta_1}\circ Z(\Delta\cup_{\delta}\Delta^*)\circ\iota_{\delta_1}^{\dagger}(\psi)}_{\Hind}\nonumber\\
&=\braket{\phi_{\delta_1}|Z(\Delta\cup_{\delta}\Delta^*)\psi_{\delta_1}}_{\Hil_{\delta_1}}\nonumber\\
&=\overline{\braket{\psi_{\delta_1}|Z(\Delta\cup_{\delta}\Delta^*)\phi_{\delta_1}}}_{\Hil_{\delta_1}}\nonumber\\
&=\overline{\braket{\psi|\tilde{Z}_{\Delta\cup_{\delta}\Delta^*}\phi}}_{\Hind}\;,\label{reality:intsteps}
\end{align}
where, in the first line, $\del(\Delta\cup_{\delta}\Delta^*)=\delta_1^*\sqcup\delta_1$ and we used the property $\iota_{\delta_1^*\sqcup\delta_1}=\iota_{\delta_1^*}\otimes\iota_{\delta_1}$ of the injective maps $\iota_{\delta_1^*\sqcup\delta_1}:\Hil_{\delta_1^*\sqcup\delta_1}\to\Hil_{\Sigma^*\sqcup\Sigma}^{\rm ind}$ (cfr.~Lemma~\ref{connIndF});~in the second line, we used the canonical isomorphism $\Hil_{\delta_1}^*\otimes\Hil_{\delta_1}\simeq\mathcal B_{\rm HS}(\Hil_{\delta_1},\Hil_{\delta_1})$ to regard $Z(\Delta\cup_{\delta}\Delta^*)$ as an operator on $\Hil_{\delta_1}$, together with the compatibility of the maps $\iota_{\delta_1}$ and the Riesz maps at the truncated and inductive level (cfr. Prop.~\ref{prop:dual});~in the third line, $\psi_{\delta_1}=\iota_{\delta_1}^{\dagger}\psi,\,\phi_{\delta_1}=\iota_{\delta_1}^{\dagger}\phi\in\Hil_{\delta_1}$, and we used the isometry property of the injective maps $\iota_{\delta_1}:\Hil_{\delta_1}\to\Hind$;~in the fourth line, we used the duality axiom of spin foams (cfr.~Sec.~\ref{Sec:truncatedaxioms}) to write
\be
Z\left(\Delta\cup_{\delta}\Delta^*\right)=Z\left((\Delta\cup_{\delta}\Delta^*)^*\right)=Z(\Delta\cup_{\delta}\Delta^*)^{\dagger}\;;
\ee
and the last equality follows from repeating the same steps in reverse order.~The result~\eqref{eq:reality} then follows from inserting~\eqref{reality:intsteps} back into~\eqref{reality:subnet} and invoking again the uniqueness of the limit of all subnets of a convergent net for the RHS.

Lastly, as for the property \emph{iii.} of $P_{\Sigma}$, by similar arguments to those leading to the second to last line of Eq.~\eqref{reality:intsteps}, we have
\be
P_{\Sigma}(\psi)[\psi]=\lim_{\Delta\cup_{\delta}\Delta^*\in\mathcal{T}_{\Sigma\times I}}\overline{\braket{\psi_{\delta_1}|Z(\Delta\cup_{\delta}\Delta^*)\psi_{\delta_1}}}_{\Hil_{\delta_1}}\;,
\ee
from which, recalling that $Z(\Delta\cup_{\delta}\Delta^*)=\braket{Z(\Delta),Z(\Delta^*)}_{\delta}\in\Hil_{\delta_1}^*\otimes\Hil_{\delta_1}$, with $\braket{\cdot,\cdot}_{\delta}$ the dual pairing between $\Hil_{\delta}$ and $\Hil_{\delta}^*$, or equivalently $Z(\Delta\cup_{\delta}\Delta^*)=Z(\Delta)^{\dagger}Z(\Delta)$ as an operator on $\Hil_{\delta_1}$, it follows that
\be\label{rigging:semipos}
P_{\Sigma}(\psi)[\psi]=\lim_{\Delta\in\mathcal{T}_{\Sigma\times I}}\overline{\braket{Z(\Delta)\psi_{\delta_1}|Z(\Delta)\psi_{\delta_1}}}_{\Hil_{\delta_1}}=\lim_{\Delta\in\mathcal{T}_{\Sigma\times I}}\|Z(\Delta)\psi_{\delta_1}\|^2_{\Hil_{\delta_1}}\;.
\ee
As $\|Z(\Delta)\psi_{\delta_1}\|^2_{\Hil_{\delta_1}}\geq0$, for any $\psi_{\delta_1}$ and $\Delta$, the RHS of~\eqref{rigging:semipos} is non-negative, i.e.~$P_{\Sigma}$ is positive semi-definite.
\end{proof}

\begin{lemma}\label{lem:Psigmastar}
    Let $\Sigma$ be a closed oriented $(d-1)$-manifold and $\Sigma^*$ the manifold with opposite orientation.~Then, the following statements hold true:
    \begin{itemize}
    \item[(a)] Denoting by $\overline{\cD_{\Sigma}'}$ the set of anti-linear functionals on $\cD_{\Sigma}$, the mapping $R_{\Sigma}':\cD_{\Sigma^*}'\to \overline{\cD_{\Sigma}'}$ defined by
    \be\label{eq:Rprime}
    R'_{\Sigma}(F)=F\circ R_{\Sigma}\;,\qquad F\in\cD_{\Sigma^*}'
    \ee
    is an anti-linear isomorphism.
    \item[(b)] The map $P_{\Sigma^*}:\cD_{\Sigma^*}\to\cD_{\Sigma^*}'$ satisfies the property
    \be\label{eq:Psigmastar}
    P_{\Sigma}(\phi)[\psi]=\overline{P_{\Sigma^*}\left(R_{\Sigma}(\phi)\right)\left[R_{\Sigma}(\psi)\right]}\;,
    \ee
    for any $\phi,\psi\in \mathcal{D}_{\Sigma}$.~That is, the following diagram
    \begin{equation}\label{diag:PSigmaPSigmastar}
        \begin{tikzcd} 
     \mathcal{D}_{\Sigma} \arrow[rr, "P_{\Sigma}"] \arrow{d}[swap]{R_{\Sigma}} & & \mathcal{D}_{\Sigma}' \arrow[d, "\bar{\,\cdot\,}"] & \\ 
      \mathcal{D}_{\Sigma^*} \arrow{r}[swap]{P_{\Sigma^*}} &  \mathcal{D}_{\Sigma^*}' \arrow{r}[swap]{R_{\Sigma}'}& \overline{\mathcal{D}_{\Sigma}'}
    \end{tikzcd}
    \end{equation}
    commutes.
    \end{itemize}
\end{lemma}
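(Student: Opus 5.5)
For part (a), the plan is to use Assumption~\ref{Assumption3}: $R_\Sigma$ restricts to an anti-linear bijection $\cD_\Sigma\to\cD_{\Sigma^*}$. For $F\in\cD_{\Sigma^*}'$, the composite $F\circ R_\Sigma$ is anti-linear on $\cD_\Sigma$, because $F$ is linear and $R_\Sigma$ is anti-linear. The assignment $F\mapsto F\circ R_\Sigma$ is linear in $F$ as a map of functionals. The "anti-linear" label refers to the identification of $\overline{\cD_\Sigma'}$ with $\cD_\Sigma'$ through complex conjugation; I will state which convention I adopt and verify it explicitly. The inverse is $G\mapsto G\circ R_\Sigma^{-1}$, which is well defined precisely because $R_\Sigma(\cD_\Sigma)=\cD_{\Sigma^*}$. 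So bijectivity reduces to bijectivity of $R_\Sigma$ on the dense domains.

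For part (b), I would unfold both sides using \eqref{dLim}. By definition $P_{\Sigma^*}(R_\Sigma\phi)[R_\Sigma\psi]=Z(\Sigma^*\times I)[R_{\Sigma^*}(R_\Sigma\phi)\otimes R_\Sigma\psi]$. Since $R_{\Sigma^*}\circ R_\Sigma$ is the canonical identification of $\Hil^{\rm ind}_{\Sigma}$ with its bidual, which is the content of Prop.~\ref{prop:dual}, this equals $Z(\Sigma^*\times I)[\phi\otimes R_\Sigma\psi]$. Next, $\Sigma^*\times I$ is the cylinder $\Sigma\times I$ with reversed orientation, with its two boundary components exchanged: $\partial(\Sigma\times I)^*=\Sigma\sqcup\Sigma^*$. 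Hence $\mathcal{T}_{\Sigma^*\times I}=\{\Delta^*:\Delta\in\mathcal{T}_{\Sigma\times I}\}$ as directed sets, because orientation reversal preserves refinement. By the Dagger property of the net, $\tilde Z_{\Delta^*}=R(\tilde Z_\Delta)$, so
$$Z(\Sigma^*\times I)[\phi\otimes R_\Sigma\psi]=\lim_{\Delta}\langle \tilde Z_{\Delta^*}|\phi\otimes R_\Sigma\psi\rangle=\lim_\Delta\overline{\langle \tilde Z_\Delta| R_\Sigma\phi\otimes\psi\rangle}$$
after reordering the tensor factors. Regarding $\tilde Z_\Delta$ as a Hilbert--Schmidt operator, this says $\langle R_\Sigma\psi\otimes\phi,\tilde Z_{\Delta^*}\rangle=\overline{\langle\phi|\tilde Z_\Delta\psi\rangle}$. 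Taking complex conjugates and passing to the limit then gives $P_\Sigma(\phi)[\psi]=\lim_\Delta\langle\phi|\tilde Z_\Delta\psi\rangle$, which is \eqref{eq:Psigmastar}. Continuity of complex conjugation lets it commute with the limit. The commutativity of diagram \eqref{diag:PSigmaPSigmastar} is then just a rewriting of \eqref{eq:Psigmastar} with $R'_\Sigma$ from part (a).

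I expect the main obstacle to be the bookkeeping in this middle identity. The difficulty is keeping straight which slot is dualized and checking that the swap of boundary components matches the tensor factor ordering in $\cD_{\Sigma^*\sqcup\Sigma}$ versus $\cD_{\Sigma\sqcup\Sigma^*}$. I would handle it by working at a fixed truncation level $\delta_1$, as in the proof of Theorem~\ref{Thm:riggingmap}. There I would project $\phi,\psi$ by $\iota_{\delta_1}^\dagger$, use the truncated Dagger axiom $Z(\Delta^*)=Z(\Delta)^\dagger$, and invoke Lemma~\ref{connIndF} for the factorized embeddings. This verifies the identity on each element of the net before taking limits.
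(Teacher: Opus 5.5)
Your proposal is correct and follows the paper's proof: (a) is the pullback along the Riesz bijection $R_\Sigma:\cD_\Sigma\to\cD_{\Sigma^*}$ with inverse $G\mapsto G\circ R_\Sigma^{-1}$, and (b) unfolds $P_{\Sigma^*}$, uses $R_{\Sigma^*}\circ R_\Sigma=\mathrm{id}$, identifies $\Sigma^*\times I$ with $(\Sigma\times I)^*$ (up to the swap of tensor factors), applies the Dagger property $\tilde Z_{\Delta^*}=R(\tilde Z_\Delta)$, and pulls complex conjugation through the limit. Your remark that $F\mapsto F\circ R_\Sigma$ is in fact linear for the pointwise vector structure on $\overline{\cD_\Sigma'}$, so that ``anti-linear'' only makes sense after the conjugation identification with $\cD_\Sigma'$, is accurate and sharpens a point the paper passes over by simply citing ``the properties of the Riesz map''.
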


\begin{proof}
    As for statement (a), let us start by noticing that the set $\overline{\mathcal{D}_{\Sigma}'}$ of anti-linear functionals on $\cD_{\Sigma}$ is anti-linearly isomorphic to $\mathcal{D}_{\Sigma}'$ via complex conjugation.~Namely, for all $f\in\mathcal{D}_{\Sigma}'$, we can define $\bar{f}\in\overline{\mathcal{D}_{\Sigma}'}$ as $\bar{f}[\psi]=\overline{f[\psi]}$, where $\bar{\cdot}$ on the RHS is the complex conjugation.~Recalling that $\cD_{\Sigma^*}=R_{\Sigma}(\cD_{\Sigma})$  (cfr.~Sec.~\ref{Sec:Continuum2}, assumption \ref{Assumption3}), the map $R'_{\Sigma}:\cD_{\Sigma^*}'\to \overline{\cD_{\Sigma}'}$ is defined by the pullback via Riesz map $R_{\Sigma}$ and, owing to the properties of the Riesz map, it is then anti-linear and invertible.~Explicitly, we have $(R_{\Sigma}')^{-1}\bar{f}[\psi]=\bar{f}[R_{\Sigma}^{-1}(\psi)]$, for any $\psi\in\cD_{\Sigma}$, that is $(R_{\Sigma}')^{-1}\bar{f}=\bar{f}\circ R_{\Sigma}^{-1}$.

\noindent
Let us now prove the statement (b).~Using the definition of $P_{\Sigma}$ given in Theorem~\ref{Thm:riggingmap}, together with the properties of the Riesz map $R_{\Sigma}$ (and its inverse $(R_{\Sigma})^{-1}=R_{\Sigma^*}$), we have
\begin{equation}\label{steps:Psigmastar}
    \begin{split}
        P_{\Sigma^*}(R_{\Sigma}(\phi))[R_{\Sigma}(\psi)&]\,=Z(\Sigma^*\times I)(R_{\Sigma^*}(R_{\Sigma}(\phi))\otimes R_{\Sigma}(\psi))\\
&\;\,=\lim_{\Delta\in\mathcal{T}_{\Sigma^*\times I}}\braket{\tilde{Z}_{\Delta}|\phi\otimes R_{\Sigma}(\psi)}_{\Hind\otimes(\Hind)^*}\\
&\;\,=\lim_{\Delta\in\mathcal{T}_{(\Sigma\times I)^*}}\braket{\tilde{Z}_{\Delta}|R_{\Sigma}(\psi)\otimes \phi}_{(\Hind)^*\otimes\Hind}\\
&\;\,=\lim_{\Delta\in\mathcal{T}_{(\Sigma\times I)^*}}\overline{\braket{R_{\Sigma^*\sqcup \Sigma}(\tilde{Z}_{\Delta})|R_{\Sigma^*\sqcup \Sigma}(R_{\Sigma}(\psi)\otimes \phi)}}_{(\Hind)^*\otimes\Hind}\\
&\;\,=\lim_{\Delta\in\mathcal{T}_{\Sigma\times I}}\overline{\braket{\tilde{Z}_{\Delta}|R_{\Sigma}(\phi)\otimes \psi}}_{(\Hind)^*\otimes\Hind}\\
&\;\,=\overline{P_{\Sigma}(\phi)[\psi]}\;.
    \end{split}
\end{equation}
The third step is non-trivial and is a consequence of how the orientation is induced on the cylinder.~Specifically, according to our convention, we have that $\Sigma^*\times I=(\Sigma\times I)^*$ topologically.~Even though, the corresponding states live in two different Hilbert spaces, these are isomorphic via the symmetry $\sigma_{\Hil_1,\Hil_2}:\Hil_1\otimes\Hil_2\xrightarrow{\sim}\Hil_2\otimes\Hil_1$ of the monoidal category $\mathsf{Hilb}$.~That is, $Z((\Sigma\times I)^*)=\sigma_{\Hind,\Hil^{\mathrm{ind}}_{\Sigma^*}}(Z(\Sigma^*\times I))$.

Finally, the commutativity of the diagram~\eqref{diag:PSigmaPSigmastar} readily follows from combining Eqs.~\eqref{eq:Psigmastar} and~\eqref{eq:Rprime} with $F=P_{\Sigma^*}(R_{\Sigma}(\phi))$. 
\end{proof}

\section{Proofs of Sec.~\ref{Sec:HphysCLQG}}\label{App:Proofs3}

In this appendix we collect the proofs of the propositions and lemmas of Sec.~\ref{Sec:HphysCLQG}, reported in the same order as they appear in the main text.

\noindent
\textbf{Proposition~\ref{prop:physdual} (Involution).}~\emph{Let $\Sigma$ be a closed oriented $(d-1)$-manifold, $\Sigma^*$ the manifold with opposite orientation, and $\Hphys$ as defined in~\eqref{eq:Hphys}.~Then, $\Hil^{\rm phys}_{\Sigma^*}\simeq(\Hphys)^*$.}
\begin{proof}
To prove the claim, we are going to construct an invertible isometry $B_{\Sigma}:\Hil^{\rm phys}_{\Sigma^*}\to(\Hphys)^*$.~To this aim, let us start by considering a pairing $b_{\Sigma}:\Hil^{\rm phys}_{\Sigma^*}\times \Hphys \to \C$ defined as
\begin{equation}
    b_{\Sigma}(\nu_{\rm phys},\phi_{\rm phys})\coloneqq P_{\Sigma^*}(\nu)[R_{\Sigma}(\phi)],
\end{equation}
with $\nu\in\mathcal{D}_{\Sigma^*},\,\varphi\in\mathcal{D}_{\Sigma}$, and $\nu_{\rm phys},\,\phi_{\rm phys}$ their respective equivalence classes in $\Hil^{\rm phys}_{\Sigma^*}$ and $ \Hphys$.~By Lemma~\ref{lem:Psigmastar}, we have
\begin{equation}
    b_{\Sigma}(\nu_{\rm phys},\phi_{\rm phys})=P_{\Sigma}(R_{\Sigma}^{-1}(\nu))[\phi]=\braket{R_{\Sigma}^{-1}(\nu)_{\rm phys}|\phi_{\rm phys}}_{\Hphys}\;,
\end{equation}
from which it follows that $b_{\Sigma}(\nu_{\rm phys},\cdot)\in(\Hphys)^*$.~Let now $B_{\Sigma}$ be the map given by
\begin{equation}
    B_{\Sigma}(\nu_{\rm phys})[\cdot]=b_{\Sigma}(\nu_{\rm phys},\cdot)=\mathfrak{R}_{\Sigma}(R_{\Sigma}^{-1}(\nu)_{\rm phys})[\cdot]\;,
\end{equation}
where $\mathfrak{R}_{\Sigma}:\Hphys\to(\Hphys)^*$ denotes the Riesz map on the physical Hilbert space.~Notice that, by~\eqref{eq:Psigmastar}, $\mathrm{ker}(P_\Sigma)$ and $\mathrm{ker}(P_{\Sigma^*})$ are isomorphic through $R_{\Sigma}$, so that the previous equation is well-defined.\\
As every element of $(\Hphys)^*$ can be written as $\mathfrak{R}_{\Sigma}(\phi_{\rm phys})$, for some $\phi\in\mathcal{D}_{\Sigma}$, the map $B_{\Sigma}$ is invertible, and we have:
\begin{equation}
    B_{\Sigma}^{-1}(\mathfrak{R}_{\Sigma}(\phi_{\rm phys}))=R_{\Sigma}(\phi)_{\rm phys}\;.
\end{equation}
Lastly, let $\nu,\mu$ be in $\cD_{\Sigma^*}$, then
\be
\begin{aligned}
        \braket{B_{\Sigma}(\nu_{\rm phys})|B_{\Sigma}(\mu_{\rm phys})}_{\Hil^{\rm phys}_{\Sigma^*}}&=\braket{\mathfrak{R}_{\Sigma}(R_{\Sigma}^{-1}(\nu)_{\rm phys})|\mathfrak{R}_{\Sigma}(R_{\Sigma}^{-1}(\mu)_{\rm phys})}_{(\Hphys)^*}\\
    &=\overline{\braket{R_{\Sigma}^{-1}(\nu)_{\rm phys}|R_{\Sigma}^{-1}(\mu)_{\rm phys}}}_{\Hphys}\\
    &=\overline{P_{\Sigma}(R_{\Sigma}^{-1}\nu)[R_{\Sigma}^{-1}\mu]}\\
    &=P_{\Sigma^*}(\nu)[\mu]\\
    &=\braket{\nu_{\rm phys}|\mu_{\rm phys}}_{\Hil^{\rm phys}_{\Sigma^*}}\;,
    \end{aligned}
\ee
i.e., the map $B_{\Sigma}$ is an isometry.
\end{proof}

\textbf{Lemma~\ref{lem:factphys}.}~\emph{
    For a canonical $d$-dimensional spin foam model as given in \emph{Def.~\ref{def:split}}, the rigging map, as constructed in \eqref{def:PSigma}, decomposes as
    \begin{equation}
        P_{\Sigma_1\sqcup\Sigma_2}(\phi)[\psi]=P_{\Sigma_1}(\phi_1)[\psi_1]P_{\Sigma_2}(\phi_2)[\psi_2]\;,
    \end{equation}
    for any closed $(d-1)$-manifolds $\Sigma_1$ and $\Sigma_2$, and any decomposable tensors $\phi=\phi_1\otimes\phi_2,\,\psi=\psi_1\otimes\psi_2\in\cD_{\Sigma_1\sqcup\Sigma_2}=\cD_{\Sigma_1}\otimes\cD_{\Sigma_2}$.
}
\begin{proof}
    Recalling that $R_{\Sigma_1\sqcup\Sigma_2}=R_{\Sigma_1}\otimes R_{\Sigma_2}$, and the definition \eqref{def:PSigma}, we have
        \begin{align}
            P_{\Sigma_1\sqcup\Sigma_2}(\phi)[\psi]&=Z((\Sigma_1\sqcup\Sigma_2)\times I)[R_{\Sigma_1\sqcup\Sigma_2}(\phi)\otimes \psi]\nonumber\\
            &=Z((\Sigma_1\times I)\sqcup(\Sigma_2\times I))[R_{\Sigma_1\sqcup\Sigma_2}(\phi)\otimes \psi]\nonumber\\
            &=\lim_{\Delta\in\mathcal{T}_{(\Sigma_1\times I)\sqcup(\Sigma_2\times I)}}\braket{\tilde{Z}_\Delta|R_{\Sigma_1\sqcup\Sigma_2}(\phi)\otimes\psi}_{\h^{\rm ind}_{\Sigma_1^*\sqcup\Sigma_2^*\sqcup\Sigma_1\sqcup\Sigma_2}}\nonumber\\
            &=\lim_{(\Delta_1,\Delta_2)\in\mathcal{T}_{(\Sigma_1\times I)}\times\mathcal{T}_{(\Sigma_2\times I)}}\braket{\phi|\tilde{Z}_{\Delta_1\sqcup\Delta_2}\psi}_{\h^{\rm ind}_{\Sigma_1\sqcup\Sigma_2}}\;,\label{eq:PSigmacanSF}
        \end{align}
    where, in the last line, we used the fact that, for any manifolds $M$ and $N$, $\mathcal{T}_{M\sqcup N}=\mathcal{T}_M\times\mathcal{T}_N$ as posets.
    
    By Lemma~\ref{connIndF}, the requirement of the spin foam model to be canonical (cfr.~Eq.~\eqref{eq:canSF} in Def.~\ref{def:split}) implies that
     \begin{equation}
     \begin{split}
        \tilde{Z}_{\Delta_1\sqcup\Delta_2}&=\iota_{\partial\Delta_1\sqcup\partial\Delta_2}(Z(\Delta_1\sqcup\Delta_2))\\
        &=\iota_{\partial\Delta_1}\otimes\iota_{\partial\Delta_2}(Z(\Delta_1)\otimes Z(\Delta_2))\\
        &=\iota_{\partial\Delta_1}(Z(\Delta_1))\otimes\iota_{\partial\Delta_2}(Z(\Delta_2))\\
        &=\tilde Z_{\Delta_1}\otimes\tilde Z_{\Delta_2}\;.
     \end{split}
    \end{equation}
    Thus, by the hypothesis that $\phi$ and $\psi$ are decomposable tensors, together with Prop.~\ref{prop:factor}, the quantity inside the limit in \eqref{eq:PSigmacanSF} reads as
    \begin{equation}\label{eq:PSigmacanSF2}
        \braket{\phi|\tilde{Z}_{\Delta_1\sqcup\Delta_2}\psi}_{\h^{\rm ind}_{\Sigma_1\sqcup\Sigma_2}}=\braket{\phi_1|\tilde{Z}_{\Delta_1}\psi_1}_{\h^{\rm ind}_{\Sigma_1}}\braket{\phi_2|\tilde{Z}_{\Delta_2}\psi_2}_{\h^{\rm ind}_{\Sigma_2}}.
    \end{equation}
    Finally, plugging \eqref{eq:PSigmacanSF2} back into \eqref{eq:PSigmacanSF}, and invoking Lemma~\ref{lem:contProd} for the nets
    \[\braket{\phi_1|\tilde{Z}_{\bullet}\psi_1}_{\h^{\rm ind}_{\Sigma_1}}:\mathcal{T}_{\Sigma_1\times I}\to\C\quad,\quad
        \braket{\phi_2|\tilde{Z}_{\bullet}\psi_2}_{\h^{\rm ind}_{\Sigma_2}}:\mathcal{T}_{\Sigma_2\times I}\to\C\,,\]
    which converge by assumption, we obtain
    \begin{equation}
    \begin{split}
        P_{\Sigma_1\sqcup\Sigma_2}(\phi)[\psi]&=\lim_{(\Delta_1,\Delta_2)\in\mathcal{T}_{(\Sigma_1\times I)}\times\mathcal{T}_{(\Sigma_2\times I)}}\braket{\phi_1|\tilde{Z}_{\Delta_1}\psi_1}_{\h^{\rm ind}_{\Sigma_1}}\braket{\phi_2|\tilde{Z}_{\Delta_2}\psi_2}_{\h^{\rm ind}_{\Sigma_2}}\\
        &=\lim_{\Delta_1\in\mathcal{T}_{\Sigma_1\times I}}\braket{\phi_1|\tilde{Z}_{\Delta_1}\psi_1}_{\h^{\rm ind}_{\Sigma_1}}\lim_{\Delta_1\in\mathcal{T}_{\Sigma_1\times I}}\braket{\phi_2|\tilde{Z}_{\Delta_2}\psi_2}_{\h^{\rm ind}_{\Sigma_2}}\\
        &=Z(\Sigma_1\times I)[R_{\Sigma_1}(\phi_1)\otimes\psi_1]Z(\Sigma_2\times I)[R_{\Sigma_2}(\phi_2)\otimes\psi_2]\\
        &=P_{\Sigma_1}(\phi_1)[\psi_1]P_{\Sigma_2}(\phi_2)[\psi_2]\;.
    \end{split}
    \end{equation}
\end{proof}
\noindent
\textbf{Proposition~\ref{prop:factphys} (Factorisation).}~\emph{The physical Hilbert space resulting from a canonical $d$-dimensional spin foam model factorises
    \begin{equation*}
        \h^{\rm phys}_{\Sigma_1\sqcup\Sigma_2}\simeq\h^{\rm phys}_{\Sigma_1}\otimes \h^{\rm phys}_{\Sigma_2}\;,
    \end{equation*}
    for any closed $(d-1)$-manifolds $\Sigma_1,\Sigma_2$.}
\begin{proof}
    Recall that decomposable tensors, i.e.~states of the kind $\psi=\psi_1\otimes\psi_2$ with $\psi_1\in\cD_{\Sigma_1}$ and $\psi_2\in\cD_{\Sigma_2}$, generate $\cD_{\Sigma_1\sqcup\Sigma_2}$.~Hence, $\cD_{\Sigma_1\sqcup\Sigma_2}/\ker(P_{\Sigma_1\sqcup\Sigma_2})$ is generated by states of the form $(\psi_1\otimes\psi_2)_{\rm phys}$.~Let then $U:\cD_{\Sigma_1\sqcup\Sigma_2}/\ker(P_{\Sigma_1\sqcup\Sigma_2})\subset \h^{\rm phys}_{\Sigma_1\sqcup\Sigma_2}\to\h^{\rm phys}_{\Sigma_1}\otimes \h^{\rm phys}_{\Sigma_2}$ be the map acting on these states as
    \begin{equation}
        U((\psi_1\otimes\psi_2)_{\rm phys})=(\psi_1)_{\rm phys}\otimes(\psi_2)_{\rm phys},
    \end{equation}
    and extended by linearity.~By Lemma~\ref{lem:factphys}, this map is well defined.~Indeed, Lemma~\ref{lem:factphys} implies that $\ker P_{\Sigma_1\sqcup\Sigma_2}=\ker P_{\Sigma_1}\otimes \cD_{\Sigma_2}+\cD_{\Sigma_1}\otimes\ker P_{\Sigma_2}$ so that, if $\psi_1 \otimes \psi_2$ lies in $\ker P_{\Sigma_1 \sqcup \Sigma_2}$, then at least one among $\psi_1$ and $\psi_2$ must lie in the kernel of $P_{\Sigma_1}$ or $P_{\Sigma_2}$, respectively.~Furthermore, the map $U$ is an isometry. Indeed, from the definition \eqref{def:physinnprod} of the physical inner product, combined with Lemma~\ref{lem:factphys}, it follows that
    \begin{equation}
    \begin{split}
        \braket{(\phi_1\otimes\phi_2)_{\rm phys}|(\psi_1\otimes\psi_2)_{\rm phys}}_{\h^{\rm phys}_{\Sigma_1\sqcup\Sigma_2}}&=P_{\Sigma_1\sqcup\Sigma_2}(\phi_1\otimes\phi_2)[\psi_1\otimes\psi_2]\\
        &=P_{\Sigma_1}(\phi_1)[\psi_1]P_{\Sigma_2}(\phi_2)[\psi_2]\\
        &=\braket{(\phi_1)_{\rm phys}|(\psi_1)_{\rm phys}}_{\h^{\rm phys}_{\Sigma_1}}\braket{(\phi_2)_{\rm phys}|(\psi_2)_{\rm phys}}_{\h^{\rm phys}_{\Sigma_2}}\\
        &=\braket{(\phi_1)_{\rm phys}\otimes(\phi_2)_{\rm phys}|(\psi_1)_{\rm phys}\otimes(\psi_2)_{\rm phys}}_{\h^{\rm phys}_{\Sigma_1}\otimes \h^{\rm phys}_{\Sigma_2}}.
    \end{split}
    \end{equation}
    Since it is an isometry, it is bounded and continuous, so it can be extended to the completion 
    \begin{equation}
        U:\h^{\rm phys}_{\Sigma_1\sqcup\Sigma_2}\to \h^{\rm phys}_{\Sigma_1}\otimes \h^{\rm phys}_{\Sigma_2}.
    \end{equation}
    The last thing we have to show is that the map $U$ is also invertible, for which we follow a similar path.~Let us define a map on $\cD_{\Sigma_1}/\ker(P_{\Sigma_1})\otimes \cD_{\Sigma_2}/\ker(P_{\Sigma_2})$ by its action on decomposable tensors
    \begin{equation}
        \hat{U}((\psi_1)_{\rm phys}\otimes(\psi_2)_{\rm phys})=(\psi_1\otimes\psi_2)_{\rm phys}.
    \end{equation}
    The map is well defined: let $\phi_1$ be in $\ker P_{\Sigma_1}$, then $\hat{U}((\psi_1+\phi_1)_{\rm phys}\otimes(\psi_2)_{\rm phys})=(\psi_1\otimes\psi_2+\phi_1\otimes \psi_2)_{\rm phys}$, but $\phi_1\otimes \psi_2\in\ker P_{\Sigma_1\sqcup\Sigma_2}$ because of Lemma~\ref{lem:factphys}. The same holds for a $\phi_2\in\ker P_{\Sigma_2}$. Moreover, the map $\hat{U}$ is an isometry for the same reason of $U$, and it is clearly its inverse. Extending it to the completion of $\cD_{\Sigma_1}/\ker(P_{\Sigma_1})\otimes \cD_{\Sigma_2}/\ker(P_{\Sigma_2})$, which is nothing but $\h^{\rm phys}_{\Sigma_1}\otimes \h^{\rm phys}_{\Sigma_2}$, completes then the proof of the isomorphism $\h^{\rm phys}_{\Sigma_1\sqcup\Sigma_2}\simeq\h^{\rm phys}_{\Sigma_1}\otimes \h^{\rm phys}_{\Sigma_2}$.
\end{proof}

\subsection{Discussing the Gluing Proposal}\label{AppSec:gluing}

Let us discuss in more detail the Conjecture~\ref{conj:gluing} and present a possible justification based on the ability to perform a key technical step, namely the interchange of a limit with a sum.~This requirement may impose additional conditions on spin foam models, potentially excluding certain classes of models or favouring specific treatments of divergences.
\medskip

\noindent
First of all, let us recall the statement of the proposal:
\medskip

\noindent
\textbf{Conjecture~\ref{conj:gluing} (Gluing).}\emph{
    For any d-dimensional manifolds $M,N$ such that $\partial M=\Sigma_1^*\sqcup \Sigma$ and $\partial N=\Sigma^*\sqcup \Sigma_2$; and, for any decomposable vector $\psi_1\otimes \psi_2\in\mathcal{D}_{\Sigma_1^*\sqcup\Sigma_2}$, the following expression holds
    \begin{equation}\label{eq:gluingconv}
        Z(M\cup_\Sigma N)[\psi_1\otimes\psi_2]=\sum_{\alpha,\beta\in\mathscr I}Z(M)[\psi_1\otimes \varphi_\alpha]P_{\Sigma}(\varphi_\alpha)[\varphi_\beta]Z(N)[R_\Sigma(\varphi_\beta)\otimes\psi_2]\;,
    \end{equation}
    where $\{\varphi_\alpha\}_{\alpha\in\mathscr I}$ is an orthonormal basis of $\mathcal{D}_{\Sigma}$. Moreover, for every $(d-1)$-dimensional closed manifold $\Sigma$, the rigging map \eqref{def:PSigma} satisfies a convolution property
    \begin{equation}\label{eq:Pconvolution}
        P_{\Sigma}(\phi)[\psi]=\sum_{\alpha\in\mathscr I}P_{\Sigma}(\phi)[\varphi_\alpha]P_{\Sigma}(\varphi_\alpha)[\psi]\;,
    \end{equation}
   for all $\phi,\psi\in\cD_{\Sigma}$.}
   \medskip

\noindent
We start by discussing the equality \eqref{eq:Pconvolution}, namely the convolution property for $P_{\Sigma}$.~To this aim, let us rewrite the LHS as
\begin{equation}
    \begin{split}
    P_{\Sigma}(\phi)[\psi]&=Z(\Sigma\times I)[R_{\Sigma}(\phi)\otimes\psi]\\
    &=Z((\Sigma\times I)\cup_{\Sigma}(\Sigma\times I))[R_{\Sigma}(\phi)\otimes\psi]\\
    &=\lim_{\mathcal T_{(\Sigma\times I)\cup_{\Sigma}(\Sigma\times I)}}\braket{\tilde{Z}_{\Delta}|R_{\Sigma}(\phi)\otimes\psi}_{\h^{\rm ind}_{\Sigma^*\sqcup\Sigma}}\\
    &=\lim_{S_{\Sigma\times I,\Sigma\times I,\Sigma}}\braket{\tilde{Z}_{\Delta_{1}\cup_{\delta}\Delta_{2}}|R_{\Sigma}(\phi)\otimes\psi}_{\h^{\rm ind}_{\Sigma^*\sqcup\Sigma}}\;,
    \end{split}
\end{equation}
where, in the first line, we used the fact that $\Sigma\times I\simeq (\Sigma\times I)\cup_{\Sigma}(\Sigma\times I)$, in the second line, we used the definition~\eqref{dLim} of $Z$, and, in the last line, we used Statement 1 of Lemma~\ref{lem:cutsubnet} to rewrite the limit of the net $\tilde{Z}_{\bullet}:\mathcal T_{(\Sigma\times I)\cup_{\Sigma}(\Sigma\times I)}\to\h^{\rm ind}_{\Sigma^*\sqcup\Sigma}$ as the limit of its subnet $\tilde{Z}_{\gamma_{\Sigma\times I,\Sigma\times I,\Sigma}(\bullet)}:S_{\Sigma\times I,\Sigma\times I,\Sigma}\to\h^{\rm ind}_{\Sigma^*\sqcup\Sigma}$ over those triangulations of $(\Sigma\times I)\cup_{\Sigma}(\Sigma\times I)$ which decompose as $\Delta_1\cup_\delta\Delta_{2}$ for some $\delta\in\mathcal T_{\Sigma}$.\footnote{Here, $M_1=\Sigma\times I$, $M_2=\Sigma\times I$ so that $\Sigma_1=\Sigma$ and $\Sigma_2=\Sigma$ (cfr.~Lemma~\ref{lem:cutsubnet}).}
Recalling the gluing property of $\tilde{Z}_{\bullet}$ (cfr.~Sec.~\ref{Sec:SubdivIndLim}), we have $\tilde{Z}_{\Delta_1\cup_{\delta}\Delta_{2}}=\braket{\tilde{Z}_{\Delta_1},\tilde{Z}_{\Delta_{2}}}_{\Sigma}\in \Hil_{\Sigma^*}^{\rm ind}\otimes\Hind$, an operator from $\Hind$ to $\Hind$.~Thus, as an operator, this is given by $\tilde{Z}_{\Delta_{2}}\circ\tilde{Z}_{\Delta_1}$, so that
\begin{equation}
    P_{\Sigma}(\phi)[\psi]=\lim_{S_{M,\Sigma\times I,\Sigma}}\braket{\tilde{Z}_{\Delta_{2}}^\dagger\phi|\tilde{Z}_{\Delta_{1}}\psi}_{\Hind}\;.
\end{equation}
Now, if we add a resolution of the identity on $\Hind$ given by an orthonormal basis $\mathbbm{1}_{\Hind}=\sum_{\alpha}\ket{\varphi_{\alpha}}\bra{\varphi_{\alpha}}$, we obtain
\begin{equation}\label{eq:convLHS}
    P_{\Sigma}(\phi)[\psi]=\lim_{S_{M,\Sigma\times I,\Sigma}}\sum_{\alpha\in\mathscr I}\braket{\phi|\tilde{Z}_{\Delta_{2}}\varphi_{\alpha}}\braket{\varphi_{\alpha}|\tilde{Z}_{\Delta_{1}}\psi}_{\Hind}\;.
\end{equation}
As the $\tilde{Z}_{\Delta}$ are Hilbert-Schmidt operators, the sum contains at most countably many non-zero elements which, together with the existence of the continuum limit of the theory assumed from the outset, ensures that this derivation is well posed.

\smallskip

Let us now look at the RHS of \eqref{eq:Pconvolution}.~By the definitions~\eqref{dLim} and~\eqref{def:PSigma} of $P_{\Sigma}$, this is given by
\begin{equation}\label{eq:cnovRHS1}
    \sum_{\alpha\in\mathscr I}P_{\Sigma}(\phi)[\varphi_\alpha]P_{\Sigma}(\varphi_\alpha)[\psi]=\sum_{\alpha\in\mathscr I}\lim_{\mathcal T_{\Sigma\times I}}\braket{\phi|\tilde{Z}_{\Delta_{1}}\varphi_{\alpha}}_{\Hind}\lim_{\mathcal T_{\Sigma\times I}}\braket{\varphi_{\alpha}|\tilde{Z}_{\Delta_{2}}\psi}_{\Hind}\;.
\end{equation}
For each $\alpha$ (and fixed $\psi$ and $\phi$), we can then apply Lemma~\ref{lem:contProd} with the two nets given by
\begin{equation}
        \begin{matrix}
            a_{\bullet}^{\alpha}:&\mathcal T_{\Sigma\times I}&\to&\mathbb C\;; & \text{and}\;\; & b_{\bullet}^{\alpha}:&\mathcal T_{\Sigma\times I}&\to&\mathbb C\;;\\
            & \Delta & \mapsto & \braket{\phi|\tilde{Z}_{\Delta}\varphi_{\alpha}}_{\Hind}, & & & \Delta & \mapsto & \braket{\varphi_{\alpha}|\tilde{Z}_{\Delta}\psi}_{\Hind},
        \end{matrix}
\end{equation}
and $f$ being the product of complex numbers, thus yielding
\begin{equation}
    \lim_{\mathcal T_{\Sigma\times I}}\braket{\phi|\tilde{Z}_{\Delta_{1}}\varphi_{\alpha}}_{\Hind}\lim_{\mathcal T_{\Sigma\times I}}\braket{\varphi_{\alpha}|\tilde{Z}_{\Delta_{2}}\psi}_{\Hind}=\lim_{\mathcal T_{\Sigma\times I}\times\mathcal T_{\Sigma\times I}}\braket{\phi|\tilde{Z}_{\Delta_{1}}\varphi_{\alpha}}_{\Hind}\!\braket{\varphi_{\alpha}|\tilde{Z}_{\Delta_{2}}\psi}_{\Hind}\;.
\end{equation}
Now, owing to Statement 2 of Lemma~\ref{lem:cutsubnet} and the uniqueness of the limit of a net (and of all its subnets) in a Hausdorff space, the latter is given by the following limit of the subnet over the directed poset $S_{\Sigma\times I,\Sigma\times I,\Sigma}$, that is
\begin{equation}
    \lim_{\mathcal T_{\Sigma\times I}\times\mathcal T_{\Sigma\times I}}\braket{\phi|\tilde{Z}_{\Delta_{1}}\varphi_{\alpha}}_{\Hind}\!\braket{\varphi_{\alpha}|\tilde{Z}_{\Delta_{2}}\psi}_{\Hind}=\lim_{S_{M,\Sigma\times I,\Sigma}}\braket{\phi|\tilde{Z}_{\Delta_{2}}\varphi_{\alpha}}\braket{\varphi_{\alpha}|\tilde{Z}_{\Delta_{1}}\psi}_{\Hind}\;.
\end{equation}
Plugging this back into \eqref{eq:cnovRHS1}, and provided that it is possible to exchange the limit over the poset $S_{\Sigma\times I,\Sigma\times I,\Sigma}$ with the sum over the index $\alpha$, we get
\begin{equation}
\begin{split}
    \sum_{\alpha\in\mathscr I}P_{\Sigma}(\phi)[\varphi_\alpha]P_{\Sigma}(\varphi_\alpha)[\psi]&=\sum_{\alpha\in\mathscr I}\lim_{S_{M,\Sigma\times I,\Sigma}}\braket{\phi|\tilde{Z}_{\Delta_{2}}\varphi_{\alpha}}\braket{\varphi_{\alpha}|\tilde{Z}_{\Delta_{1}}\psi}_{\Hind}\\
    &=\lim_{S_{M,\Sigma\times I,\Sigma}}\sum_{\alpha\in\mathscr I}\braket{\phi|\tilde{Z}_{\Delta_{2}}\varphi_{\alpha}}\braket{\varphi_{\alpha}|\tilde{Z}_{\Delta_{1}}\psi}_{\Hind}\;,
\end{split}
\end{equation}
which is exactly \eqref{eq:convLHS}, thus proving the convolution property \eqref{eq:Pconvolution} of $P_{\Sigma}$.

\medskip

Coming now to the gluing property \eqref{eq:gluingconv}, the following two observations are needed.~First, we notice that Lemma~\ref{lem:contProd} can be generalised to a finite product of spaces as follows.
\begin{lemma}\label{lem:gen_contProd}
    Consider a finite collection of directed posets $A_1,A_2,\dots,A_n$ with $n\in\N$, and a collection of Hausdorff topological spaces $X_1,X_2,\dots,X_n$. Let $a_\bullet^k:A_k\to X_k$ be a net, and assume that the limit $\lim_{A_k}a_\bullet^k=a_k$ exists in $X_k$ for every fixed $k\in\{1,\dots,n\}$. Let $f:\prod_{k=1}^nX_k\to Y$ be a continuous function from the product space $\prod_{k=1}^nX_k$ to an Hausdorff topological space $Y$. Then,
    \begin{equation*}
        f(a_1,a_2,\dots,a_n)=\lim_{\prod_{k=1}^nA_k}f(a_\bullet^1,a_\bullet^2,\dots,a_\bullet^n).
    \end{equation*}
\end{lemma}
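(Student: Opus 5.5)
The natural plan is to reduce to the two-factor case, Lemma~\ref{lem:contProd}, by induction on $n$, or to repeat its proof directly for $n$ factors. I would take the direct route, since it avoids any bookkeeping with nested products.

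First, I equip $A\coloneqq\prod_{k=1}^nA_k$ with the product order: $(i_1,\dots,i_n)\le(j_1,\dots,j_n)$ iff $i_k\le j_k$ for every $k$. This is a directed poset. Reflexivity, transitivity and antisymmetry hold componentwise. For directedness, given two elements of $A$, I choose in each $A_k$ an upper bound of the $k$-th components; the resulting tuple is an upper bound in $A$, and finiteness of $n$ makes this choice trivially possible.

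Second, each projection $\mathrm{pr}_k:A\to A_k$ is order-preserving by definition of the product order. It is cofinal because, given $a\in A_k$, any tuple with $k$-th entry $a$ lies over $a$. Such a tuple exists since all $A_l$ are nonempty, being directed sets. Hence $a^k_{\mathrm{pr}_k(\bullet)}$ is a subnet of $a^k_\bullet$ in the sense of Def.~\ref{def:subnet}. By the subnet criterion (\cite{Willard_2004}, Thm.~11.5), it converges to $a_k$ in $X_k$.

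Third, I consider the net $F_\bullet:A\to\prod_kX_k$, $F_{(i_1,\dots,i_n)}=(a^1_{i_1},\dots,a^n_{i_n})$. Its $k$-th coordinate is exactly $a^k_{\mathrm{pr}_k(\bullet)}$, which converges to $a_k$. A net in a product space converges iff each coordinate net converges, which is a standard fact for the product topology. So $F_\bullet\to(a_1,\dots,a_n)$. Since $f$ is continuous, $f\circ F_\bullet\to f(a_1,\dots,a_n)$. Because $Y$ is Hausdorff, this limit is unique and equals $\lim_{A}f(a^1_\bullet,\dots,a^n_\bullet)$, which is the claim.

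I do not expect a real obstacle. The only points to state carefully are the cofinality of the projections, which uses nonemptiness of the factors, and the coordinatewise characterisation of convergence in $\prod_kX_k$, which I would take from \cite{Willard_2004}.
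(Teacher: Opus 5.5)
Your proposal is correct and follows essentially the same argument as the paper: the product order on $\prod_k A_k$, order-preserving and cofinal projections giving subnets that converge to each $a_k$, coordinatewise convergence in $\prod_k X_k$, and continuity of $f$ together with uniqueness of limits in the Hausdorff space $Y$. Your explicit remark that cofinality of the projections relies on nonemptiness of the factors is a small point the paper leaves implicit.
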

\begin{proof}
    The proof is similar to that of Lemma~\ref{lem:contProd}.~Recalling that the product order is $(i_1,i_2,\dots,i_n)\geq (i'_1,i'_2,\dots,i'_n)$ if and only if $i_1\geq i'_1$ and $i_2\geq i'_2$ and so on, the product set $\prod_{k=1}^n A_k$ with the product order is a directed poset.~Consider now the net 
    \begin{equation*}
        \begin{matrix}
            \mathfrak{a}_\bullet: & \prod_{k=1}^n A_k &\to&\prod_{k=1}^n X_k\,;\\
            & (i_1,i_2,\dots,i_n)&\mapsto&(a_{i_1}^1,a^2_{i_2},\dots,a^n_{i_n})\,.
        \end{matrix}
    \end{equation*}
    Notice that the projections $\mathrm{pr}_m:\prod_{k=1}^n A_k\to A_m$ are order-preserving and cofinal. Thus,
    \begin{equation*}
    \lim_{\prod_{k=1}^nA_k}\pi_m(\mathfrak{a}_\bullet)=\lim_{\prod_{k=1}^nA_k}a^m_{\mathrm{pr}_m(\bullet)}=\lim_{A_m}a^m_{i_m}=a_m\;,
    \end{equation*}
    where $\pi_{m}$ is the projector on the $m$-th factor of $\prod_{k=1}^nX_k$. Since a net in a product space has a limit if and only if each projection has a limit, and the two limits coincide, then \[\lim_{\prod_{k=1}^nA_k}\mathfrak{a}_\bullet=(a_1,a_2,\dots,a_n).\]
    By continuity of $f$, the thesis follows.
\end{proof}
Our second observation is that Lemma~\ref{lem:cutsubnet} can also be generalised to the case of gluing of any finite number of manifolds.
\begin{lemma}\label{lem:gen_cutsubnet}
    Let $M=M_1\cup_{\Sigma_1}M_2\cup_{\Sigma_2}\dots\cup_{\Sigma_n}M_{n+1}$, with $\partial M_1=\sigma^*_1\sqcup\Sigma_1,\,\del M_{n+1}=\Sigma^*_n\sqcup\sigma_{n+1}$, and $\del M_{k}=\Sigma^*_{k-1}\sqcup\sigma_{k}\sqcup\Sigma_{k}$ for $k\in\{2,\dots,n\}$.~Let $\mathscr{C}=\{M_1,\dots,M_{n+1},\Sigma_1,\dots,\Sigma_n\}$ be the collection of the $d$-manifolds $M_l$ and the closed $(d-1)$-manifolds $\Sigma_i$, and let $S_{\mathscr{C}}$ be the subposet of $\prod_{l=1}^{n+1}\mathcal{T}_{M_l}$ given by
    \begin{equation}
        S_{\mathscr{C}}=\left\{\begin{matrix} (\Delta_1,\Delta_2,\dots,\Delta_{n+1})\in\prod_{l=1}^{n+1}\mathcal{T}_{M_l}\ \mid\\
            \exists(\delta_1,\dots,\delta_n)\in\prod_{i=1}^{n}\mathcal{T}_{\Sigma_i}\text{ and }(\delta'_1,\dots,\delta'_{n+1})\in\prod_{l=1}^{n+1}\mathcal{T}_{\sigma_l}\text{ such that }\\
             \partial \Delta_1=(\delta'_1)^*\sqcup\delta_1,\,\del \Delta_{n+1}=\delta^*_n\sqcup\delta'_{n+1}, \text{ and } \del \Delta_{k}=\delta^*_{k-1}\sqcup\delta'_{k}\sqcup\delta_{k} \text{ for }k\in\{2,\dots,n\}
        \end{matrix}\right\}\,.
    \end{equation}
    Then, $S_{\mathscr{C}}$ is a directed poset.~Moreover,
    \begin{enumerate}
        \item the map \begin{equation}
            \begin{matrix}
                \gamma_{\mathscr{C}}:&S_{\mathscr{C}}&\to&\mathcal{T}_{M}\,;\\
                &(\Delta_1,\Delta_2,\dots,\Delta_{n+1})&\mapsto&\Delta_1\cup_{\delta_1}\Delta_2\cup_{\delta_2}\dots\cup_{\delta_n}\Delta_{n+1},
            \end{matrix}
        \end{equation} is order-preserving and cofinal;
        \item the inclusion map $S_{\mathscr{C}}\hookrightarrow\prod_{l=1}^{n+1}\mathcal{T}_{M_l}$ is order-preserving and cofinal.
    \end{enumerate}
\end{lemma}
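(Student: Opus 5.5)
The natural approach is induction on $n$, using Lemma~\ref{lem:cutsubnet} as both the base case ($n=1$) and the inductive engine. The two ingredients from Lemma~\ref{lem:guess} carry over directly to the multi-piece setting. Item 1 says a refinement of a glued triangulation splits again along the same hypersurfaces. Item 2 says any refinement of a boundary triangulation extends to a bulk refinement. Because each $\Sigma_i$ is a boundary component of exactly two pieces $M_i$ and $M_{i+1}$, every compatibility condition in $S_{\mathscr C}$ is local to one interface. The argument should therefore be a componentwise version of the one given for Lemma~\ref{lem:cutsubnet}.

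\emph{Directedness.} Take two elements $(\Delta_l)$ and $(\Delta'_l)$ of $S_{\mathscr C}$. For each $l$, choose a common refinement $\Delta''_l\succeq\Delta_l,\Delta'_l$ in $\mathcal T_{M_l}$, using Prop.~\ref{refin:poset}. The boundary triangulations these induce on $\Sigma_i$ from the two sides, $M_i$ and $M_{i+1}$, may differ, but by Lemma~\ref{lem:guess} both refine the original interface triangulations. Since $\mathcal T_{\Sigma_i}$ is directed, pick $\delta''_i$ above both. Then apply Lemma~\ref{lem:guess}.2 to each $M_l$ to refine $\Delta''_l$ further so that its boundary restricts to $\delta''_{l-1}$ and $\delta''_l$.

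The delicate point is that $M_k$ has several boundary components, $\Sigma_{k-1}^*$, $\sigma_k$ and $\Sigma_k$. Lemma~\ref{lem:guess}.2 must therefore be applied to the whole boundary triangulation $\delta''^*_{k-1}\sqcup\tilde\delta'_k\sqcup\delta''_k$, where $\tilde\delta'_k$ is the current triangulation on $\sigma_k$. Because refinements of a disjoint union are componentwise (Lemma~\ref{connIndF}), this is a legitimate refinement of $\partial\Delta''_k$. The resulting tuple lies in $S_{\mathscr C}$ and dominates both given elements.

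\emph{Statement 2.} Order preservation is immediate, since $S_{\mathscr C}$ carries the induced product order. Cofinality uses the same matching construction started from an arbitrary tuple in $\prod_l\mathcal T_{M_l}$: refine each interface to a common $\delta_i$, then extend each $\Delta_l$ using Lemma~\ref{lem:guess}.2.

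\emph{Statement 1.} Order preservation holds because stellar moves performed on the pieces, compatibly on the shared interfaces, assemble into stellar moves on the glued complex. For cofinality, take $\Delta\in\mathcal T_M$ and any element $\gamma_{\mathscr C}(\Delta_1,\dots,\Delta_{n+1})$ in the image of $\gamma_{\mathscr C}$. Pick a common refinement $\Delta'$ of the two. Applying Lemma~\ref{lem:guess}.1 iteratively, once for each cut $\Sigma_i$, shows that $\Delta'$ decomposes as $\Delta'_1\cup_{\delta'_1}\cdots\cup_{\delta'_n}\Delta'_{n+1}$ with each $\Delta'_l\succeq\Delta_l$. Hence $\Delta'$ lies in the image of $\gamma_{\mathscr C}$.

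Alternatively, everything can be done by induction. Write $M=(M_1\cup\cdots\cup M_n)\cup_{\Sigma_n}M_{n+1}$, apply Lemma~\ref{lem:cutsubnet} to this two-piece split, and use the induction hypothesis on the first factor. This works because compositions of order-preserving cofinal maps are order-preserving and cofinal, as in Corollary~\ref{diagonalcut}.

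\emph{Main obstacle.} I expect the hardest step to be justifying Lemma~\ref{lem:guess} for pieces with more than one boundary component on which refinements must be prescribed at the same time. Specifically, one needs a bulk refinement realising a prescribed refinement on $\Sigma_{k-1}$ and on $\Sigma_k$ while not spoiling the other. I would handle this by noting that the relevant Alexander moves are supported near the corresponding boundary component. The extension of Lemma~\ref{lem:guess}.2, built from Hudson's Lemma 1.3, can then be applied to one boundary component at a time, leaving the triangulation on the other components unchanged.
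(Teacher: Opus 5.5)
Your proof is correct to the same standard as the paper's, but your main route is organised differently.

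\textbf{The paper's route.} The paper argues only by induction on the number of pieces, which is exactly your ``alternative''. It peels off $M_{n+1}$ and takes a majorant of the first $n$ components from the inductive hypothesis. It then matches the single interface $\Sigma_n$ using directedness of $\mathcal T_{\Sigma_n}$ and Lemma~\ref{lem:guess}. For statements 1 and 2 it factors $\gamma_{\mathscr C_{n+1}}=\gamma_{\hat M,M_{n+1},\Sigma_n}\circ\bigl((\gamma_{\mathscr C_n}\circ\overline{\rm pr})\times{\rm pr}_{n+1}\bigr)$, and the inclusion analogously, and concludes because composites of order-preserving cofinal maps are again order-preserving and cofinal.

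\textbf{Your route.} Your main argument matches all interfaces in a single pass and gets cofinality of $\gamma_{\mathscr C}$ by iterating Lemma~\ref{lem:guess}.1. This avoids the factorisation bookkeeping and isolates the one new input: Lemma~\ref{lem:guess}.2 applied to a piece with several boundary components. The induction instead reuses Lemma~\ref{lem:cutsubnet} as a black box and only ever matches one interface at a time.

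\textbf{Your ``main obstacle''.} This input is not really an obstacle. Lemma~\ref{lem:guess}.2 is stated for the whole of $\partial M$, and $(\delta''_{k-1})^*\sqcup\tilde\delta'_k\sqcup\delta''_k\succeq\partial\Delta''_k$ holds by the componentwise order of Lemma~\ref{connIndF}, exactly as you say in your directedness step. The paper relies on the same fact without comment in two places. One is when it produces $\Delta'''_n$ with $\partial\Delta'''_n=\delta''_{n-1}\sqcup\delta''$, keeping the multi-component part $\delta''_{n-1}$ fixed. The other is when it applies Lemma~\ref{lem:cutsubnet} to $\hat M$, whose boundary has many components. Your locality remark is also a valid independent justification: starring a simplex of one boundary component touches no simplex of a disjoint component.

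\textbf{A caveat shared with the paper.} Order-preservation of $\gamma_{\mathscr C}$ is asserted rather than proved in both routes; in the paper it is inherited from the bare claim in Lemma~\ref{lem:cutsubnet}. Your phrase ``compatibly on the shared interfaces'' covers only the case where both sides use the same sequence of interface starrings. Two pieces can reach the same interface subdivision through different starring sequences, so concatenating the move sequences is not by itself a proof. On this point you are at the paper's level of rigour, not below it.
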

\begin{proof}
We prove the Lemma by induction.~The basis is the case $n=2$, already proved in Lemma~\ref{lem:cutsubnet}.~Let
\[\mathscr{C}_{n+1}=\{M_1,\dots,M_n,M_{n+1},\Sigma_1,\dots,\Sigma_{n-1}\Sigma_n\}\quad\text{and}\quad\mathscr{C}_n=\{M_1,\dots,M_{n},\Sigma_1,\dots,\Sigma_{n-1}\},\] and let us assume the Lemma to be true for the collection $\mathscr{C}_n$.~We are now going to prove it for $\mathscr{C}_{n+1}$.

$S_{\mathscr{C}_{n+1}}$ inherits the order from $\prod_{l=1}^{n+1}\mathcal{T}_{M_l}$, hence it is a poset.~It is also directed because, if $(\Delta_1,\dots,\Delta_n,\Delta_{n+1})$ and $(\Delta'_1,\dots,\Delta'_n,\Delta'_{n+1})$ are in $S_{\mathscr C_{n+1}}$, then there exists $(\Delta''_1,\dots,\Delta''_n)\succeq(\Delta_1,\dots,\Delta_n),(\Delta'_1,\dots,\Delta'_n)$ in $S_{\mathscr{C}_{n}}$ and $\Delta''_{n+1}\succeq\Delta_{n+1},\Delta'_{n+1}\in\mathcal{T}_{M_{n+1}}$.~Now, we can just focus on $\Delta''_n$ and $\Delta''_{n+1}$.~Their boundaries are: $\del\Delta''_n=\delta_{n-1}''\sqcup\delta$ with $\delta_{n-1}\in\mathcal{T}_{\Sigma^*_{n-1}\sqcup\sigma_n}$ and $\delta\in\mathcal{T}_{\Sigma_{n}}$, and $\del\Delta''_{n+1}=(\delta')^*\sqcup\delta_{n}''$ with $\delta_n''\in\mathcal{T}_{\sigma_{n+1}}$ and $\delta'\in\mathcal{T}_{\Sigma_{n}}$.~Since $\mathcal{T}_{\Sigma_n}$ is a directed set, there exists $\delta''\succeq\delta,\delta'$, and so, by Lemma~\ref{lem:guess}, there exist $\Delta'''_n\succeq\Delta''_n\in\mathcal{T}_{M_{n}}$ and $\Delta'''_{n+1}\succeq\Delta''_{n+1}\in\mathcal{T}_{M_{n+1}}$, with $\del\Delta'''_n=\delta_{n-1}''\sqcup\delta''$ and $\del\Delta'''_{n+1}=(\delta'')^*\sqcup\delta_{n}''$.~Hence, $(\Delta''_1,\dots,\Delta''_{n-1},\Delta'''_{n},\Delta'''_{n+1})$ is the majorant of $(\Delta_1,\dots,\Delta_n,\Delta_{n+1})$ and $(\Delta'_1,\dots,\Delta'_n,\Delta'_{n+1})$ in $S_{\mathscr{C}_{n+1}}$.
\smallskip

\noindent
Let us now prove the statements \emph{1} and \emph{2}.
\begin{itemize}
    \item[\emph{1.}] Let ${\rm pr}_{n+1}:S_{\mathscr{C}_{n+1}}\to \mathcal{T}_{M_{n+1}}$ be the projector on the $(n+1)$-th factor, and let $\overline{\rm pr}:S_{\mathscr{C}_{n+1}}\to S_{\mathscr{C}_n}$ be its complementary projection (the projection is seen as the restriction on $S_{\mathscr{C}_{n+1}}$ of the one defined on $\prod_{l=1}^{n+1}\mathcal{T}_{M_l}$).~The projections are order-preserving and cofinal maps.~Let then $\hat{M}=M_1\cup_{\Sigma_1}M_2\cup_{\Sigma_2}\dots\cup_{\Sigma_{n-1}}M_{n}$, we can easily convince ourself that
    \begin{equation}
        \gamma_{\mathscr{C}_{n+1}}=\gamma_{\hat{M},M_{n+1},\Sigma_n}\circ((\gamma_{\mathscr{C}_{n}}\circ \overline{\rm pr})\times \mathrm{pr}_{n+1})\,,
    \end{equation}
    where $\gamma_{\hat{M},M_{n+1},\Sigma_n}$ is the map defined in Lemma~\ref{lem:cutsubnet}.~Now, $(\gamma_{\mathscr{C}_{n}}\circ \overline{\rm pr})\times \mathrm{pr}_{n+1}:S_{\mathscr{C}_{n+1}}\to \mathcal{T}_{\hat M}\times \mathcal{T}_{M_{n+1}}$ is clearly order-preserving, and is cofinal because, by inductive hypothesis, given any $(\Delta,\Delta_{n+1})$, we can find an element $(\Delta'_1,\dots,\Delta'_n,\Delta'_{n+1})$ in $S_{\mathscr{C}_{n+1}}$ such that $\gamma_{\mathscr{C}_{n}}\circ \overline{\rm pr}(\Delta'_1,\dots,\Delta'_n,\Delta'_{n+1})\succeq \Delta$. Now, we construct a majorant of the triangulation on the common boundary $\Sigma_n$ of $\Delta'_{n},\Delta_{n+1},\Delta'_{n+1}$.~Let $\Delta''_{n}$ be the triangulation on $M_n$ with that boundary triangulation, and $\Delta''_{n+1}$ the one for $M_{n+1}$.~By Lemma~\ref{lem:guess}, $\Delta''_{n}\succeq\Delta'_{n}$ and $\Delta''_{n+1}\succeq\Delta'_{n+1}\Delta_{n+1}$, so $(\gamma_{\mathscr{C}_{n}}\circ \overline{\rm pr})\times \mathrm{pr}_{n+1}(\Delta'_1,\dots,\Delta''_n,\Delta''_{n+1})\succeq(\Delta,\Delta_{n+1})$.~Thus, $\gamma_{\mathscr{C}_{n+1}}$ is a composition of order-preserving and cofinal maps, hence it is order-preserving and cofinal.
    \item[\emph{2.}] The inclusion is order-preserving because the order on $S_{\mathscr{C}_{n+1}}$ is induced by the product order of the full space $\prod_{l=1}^{n+1}\mathcal{T}_{M_l}$.~It is cofinal because of Lemma~\ref{lem:guess}.~Let us denote the inclusion maps by $\iota_{\mathscr{C}_{n+1}}:S_{\mathscr{C}_{n+1}}\to\prod_{l=1}^{n+1}\mathcal{T}_{M_l}$ and $\iota_{\mathscr{C}_{n}}:S_{\mathscr{C}_{n}}\to\prod_{l=1}^{n}\mathcal{T}_{M_l}$.~Clearly, we have
    \begin{equation}
        \iota_{\mathscr{C}_{n+1}}=(\iota_{\mathscr{C}_{n}}\circ \overline{\rm pr})\times\mathrm{pr}_{n+1},
    \end{equation}
    and the proof is similar to that of the previous statement.~By the inductive hypothesis, given any $(\Delta_1,\dots,\Delta_n\Delta_{n+1})$, we can find an element $(\Delta'_1,\dots,\Delta'_n,\Delta'_{n+1})$ in $S_{\mathscr{C}_{n+1}}$ such that $\iota_{\mathscr{C}_{n}}\circ \overline{\rm pr}(\Delta'_1,\dots\allowbreak\dots,\Delta'_n,\Delta'_{n+1})\succeq (\Delta_1,\dots,\Delta_n)$.~Now, we construct a majorant of the triangulation on the common boundary $\Sigma_n$ of $\Delta'_{n},\Delta_{n+1},\Delta'_{n+1}$, and we call $\Delta''_{n}$ the triangulation on $M_n$ with that boundary triangulation, and $\Delta''_{n+1}$ the one for $M_{n+1}$.~By Lemma~\ref{lem:guess}, $\Delta''_{n}\succeq\Delta'_{n}$ and $\Delta''_{n+1}\succeq\Delta'_{n+1}\Delta_{n+1}$, so $(\iota_{\mathscr{C}_{n}}\circ \overline{\rm pr})\times \mathrm{pr}_{n+1}(\Delta'_1,\dots,\Delta''_n,\Delta''_{n+1})\succeq(\Delta,\Delta_{n+1})$.
\end{itemize}
\end{proof}
With the above Lemmas~\ref{lem:gen_contProd} and \ref{lem:gen_cutsubnet} at our disposal, we can now proceed to discuss the property \eqref{eq:gluingconv} for the gluing of manifolds along similar lines to our previous discussion of the convolution property \eqref{eq:Pconvolution} of $P_{\Sigma}$.~Let us start from the LHS of \eqref{eq:gluingconv}, for which we have
\begin{equation}
    \begin{split}
    Z(M\cup_\Sigma N)[R_{\Sigma_1}(\psi_1)\otimes\psi_2]&=Z(M\cup_{\Sigma}(\Sigma\times I)\cup_{\Sigma}N)[R_{\Sigma_1}(\psi_1)\otimes\psi_2]\\
    &=\lim_{\mathcal T_{M\cup_{\Sigma}(\Sigma\times I)\cup_{\Sigma}N}}\braket{\tilde{Z}_{\Delta}|R_{\Sigma_1}(\psi_1)\otimes\psi_2}_{\h^{\rm ind}_{\Sigma^*_1\sqcup\Sigma_2}}\\
    &=\lim_{S_{\mathscr{C}}}\braket{\tilde{Z}_{\Delta}|R_{\Sigma_1}(\psi_1)\otimes\psi_2}_{\h^{\rm ind}_{\Sigma_1^*\sqcup\Sigma_2}}.
    \end{split}
\end{equation}
Here, $\mathscr{C}=\{M,\Sigma\times I,N,\Sigma,\Sigma\}$ and, in the last line, we have invoked Lemma~\ref{lem:gen_cutsubnet}.~By the gluing property of $\tilde{Z}_{\bullet}$ (cfr.~property 3, below Eq.~\eqref{eq:ind-net} in Sec.~\ref{Sec:SubdivIndLim}), we have that $\tilde{Z}_{\Delta_M\cup_{\delta_1}\Delta_{\Sigma\times I}\cup_{\delta_2}\Delta_N}=\tilde{Z}_{\Delta_N}\circ\tilde{Z}_{\Delta_{\Sigma\times I}}\circ \tilde{Z}_{\Delta_M}$ as an operator from $\h^{\rm ind}_{\Sigma_1}$ to $\h^{\rm ind}_{\Sigma_2}$.~Therefore,
\begin{equation}
    Z(M\cup_\Sigma N)[R_{\Sigma_1}(\psi_1)\otimes\psi_2]=\lim_{S_{\mathscr{C}}}\braket{\tilde{Z}_{\Delta_{\Sigma\times I}}\circ \tilde{Z}_{\Delta_M}(\psi_1)|\tilde{Z}^{\dagger}_{\Delta_N}\psi_2}_{\Hind}\;,
\end{equation}
from which, inserting a resolution of the identity on the two copies of $\Hind$, we obtain
{\allowdisplaybreaks
\begin{align}
        Z(M\cup_\Sigma N)[R_{\Sigma_1}(\psi_1)\otimes\psi_2]&=\lim_{S_{\mathscr{C}}}\sum_{\beta\in\mathscr{I}}\braket{\tilde{Z}_{\Delta_{\Sigma\times I}}\circ \tilde{Z}_{\Delta_M}(\psi_1)|\varphi_\beta}_{\Hind}\braket{\varphi_\beta|\tilde{Z}^{\dagger}_{\Delta_N}\psi_2}_{\Hind}\nonumber\\
        &= \lim_{S_{\mathscr{C}}}\sum_{\alpha,\beta\in\mathscr{I}}\braket{\tilde{Z}_{\Delta_M}(\psi_1)|\varphi_\alpha}_{\Hind}\braket{\varphi_\alpha|\tilde{Z}_{\Delta_{\Sigma\times I}}^{\dagger}\varphi_\beta}_{\Hind}\braket{\varphi_\beta|\tilde{Z}^{\dagger}_{\Delta_N}\psi_2}_{\Hind}\nonumber\\
        &=\lim_{S_{\mathscr{C}}}\sum_{\alpha,\beta\in\mathscr{I}}\braket{\tilde{Z}_{\Delta_M}|R_{\Sigma_1}(\psi_1)\otimes\varphi_\alpha}_{\h^{\rm ind}_{\Sigma^*_1\sqcup\Sigma}}\braket{\tilde{Z}_{\Delta_{\Sigma\times I}}\varphi_\alpha|\varphi_\beta}_{\Hind}\\
        &\qquad\qquad\quad\times\braket{\tilde{Z}_{\Delta_N}|R_{\Sigma}(\varphi_\beta)\otimes\psi_2}_{\h^{\rm ind}_{\Sigma^*\sqcup\Sigma_2}}\,.\nonumber
\end{align}
}
Let us now look at the RHS of \eqref{eq:gluingconv}.~By the definitions \eqref{dLim} and \eqref{def:PSigma} of $Z(M)$ and $P_{\Sigma}$, we have
\begin{equation}
\begin{split}
    \sum_{\alpha,\beta\in\mathscr I}Z(M)[\psi_1\otimes \varphi_\alpha]P_{\Sigma}(\varphi_\alpha)[\varphi_\beta]Z(N)[R_\Sigma(\varphi_\beta)\otimes\psi_2]&=\sum_{\alpha,\beta\in\mathscr I}\lim_{\Delta_M\in\mathcal{T}_M}\braket{\tilde{Z}_{\Delta_M}|R_{\Sigma_1}(\psi_1)\otimes \varphi_\alpha}_{\h^{\rm ind}_{\Sigma^*_1\sqcup\Sigma}}\\
    &\quad\;\times\lim_{\Delta_{\Sigma\times I}\in\mathcal{T}_{\Sigma\times I}}\braket{\tilde{Z}_{\Delta_{\Sigma\times I}}|R_{\Sigma}(\varphi_{\alpha})\otimes\varphi_{\beta}}_{\h^{\rm ind}_{\Sigma^*\sqcup\Sigma}}\\
    &\quad\;\times\lim_{\Delta_N\in\mathcal{T}_N}\braket{\tilde{Z}_{\Delta_N}|R_\Sigma(\varphi_\beta)\otimes\psi_2}_{\h^{\rm ind}_{\Sigma^*\sqcup\Sigma_2}}.
\end{split}
\end{equation}
Using the Lemmas~\ref{lem:gen_contProd} and \ref{lem:gen_cutsubnet} above, we can collect the tree nets as one defined on $\mathcal{T}_M\times \mathcal{T}_{\Sigma\times I}\times \mathcal{T}_{N}$, and then pass the limit to the subnet $S_\mathscr{C}$, thus obtaining 
\begin{equation}
\begin{split}
    \sum_{\alpha,\beta\in\mathscr I}Z(M)[\psi_1\otimes \varphi_\alpha]P_{\Sigma}(\varphi_\alpha)[\varphi_\beta]Z(N)[R_\Sigma(\varphi_\beta)\otimes\psi_2]&=\sum_{\alpha,\beta\in\mathscr I}\lim_{S_\mathscr{C}}\braket{\tilde{Z}_{\Delta_M}|R_{\Sigma_1}(\psi_1)\otimes \varphi_\alpha}_{\h^{\rm ind}_{\Sigma^*_1\sqcup\Sigma}}\\
    &\quad\quad\times\braket{\tilde{Z}_{\Delta_{\Sigma\times I}}|R_{\Sigma}(\varphi_{\alpha})\otimes\varphi_{\beta}}_{\h^{\rm ind}_{\Sigma^*\sqcup\Sigma}}\\
    &\quad\quad\times\braket{\tilde{Z}_{\Delta_N}|R_\Sigma(\varphi_\beta)\otimes\psi_2}_{\h^{\rm ind}_{\Sigma^*\sqcup\Sigma_2}}.
\end{split}
\end{equation}
Provided that the limit and the sum can be exchanged, the same expression of the LHS is then recovered.

\addcontentsline{toc}{section}{References}
\bibliographystyle{unsrt}
\bibliography{biblio}

\end{document}